\newtheorem{theorem}{Theorem}
\newtheorem{lemma}{Lemma}
\newtheorem{definition}{Definition}
\newtheorem{assumption}{Assumption}
\newtheorem{observation}{Observation}
\newcommand{\minitab}[2][l]{\begin{tabular}{#1}#2\end{tabular}}
\begin{document}
\title{Auction-Based Coopetition between \\LTE Unlicensed and Wi-Fi}





\author{Haoran Yu, George Iosifidis, Jianwei Huang, \IEEEmembership{Fellow, IEEE}, and Leandros Tassiulas, \IEEEmembership{Fellow, IEEE}\vspace{-1cm}
\thanks{Manuscript received April 29, 2016; revised October 16, 2016; accepted November 7, 2016. 
This work is supported by the General Research Fund CUHK 14202814 established under the University Grant Committee of the Hong Kong Special Administrative Region, China, the CUHK Global Scholarship Programme for Research Excellence, and the CUHK Overseas Research Attachment Programme. 
This paper was presented in part at IEEE WiOpt, Tempe, AZ, USA, May 2016 \cite{haoran2016wiopt} (Corresponding author: Jianwei Huang).}
\thanks{H. Yu and J. Huang (\{hryu, jwhuang\}@ie.cuhk.edu.hk) are with the Department of Information Engineering, the Chinese University of Hong Kong, Hong Kong, China.} 
\thanks{G. Iosifidis (iosifidg@tcd.ie) is with the School of Computer Science and Statistics, and CONNECT, Trinity College Dublin, Ireland.} 
\thanks{L. Tassiulas (leandros.tassiulas@yale.edu) is with the Department of Electrical Engineering and the Yale Institute for Network Science, Yale University, USA.}
}

\normalsize %

\maketitle
\thispagestyle{empty}

\setcounter{page}{1}

\begin{abstract}
Motivated by the recent efforts in extending LTE to the unlicensed spectrum, we propose a novel spectrum sharing framework for the \emph{coopetition} (\emph{i.e.}, cooperation and competition) between LTE and Wi-Fi in the unlicensed band. Basically, the LTE network can choose to work in one of the two modes: in the \emph{competition mode}, it randomly accesses an unlicensed channel, and interferes with the Wi-Fi access point using the same channel; in the \emph{cooperation mode}, it delivers traffic for the Wi-Fi users in exchange for the {{exclusive}} access of the corresponding channel. Because the LTE network works in an interference-free manner in the \emph{cooperation mode}, {{it can achieve a much larger data rate than that in the \emph{competition mode},}} which allows it to effectively serve both its own users and the Wi-Fi users. 
{{We design a {{second-price}} reverse auction mechanism, which enables the LTE provider and the Wi-Fi access point owners (APOs) to effectively negotiate the operation mode. 
Specifically, the LTE provider is the auctioneer (buyer), and the APOs are the bidders (sellers) who compete to sell their channel access opportunities to the LTE provider. 
In Stage I of the auction, the LTE provider announces a reserve rate, which is the maximum data rate that it is willing to allocate to the APOs in the \emph{cooperation mode}. In Stage II of the auction, the APOs submit their bids, which indicate the data rates that they would like the LTE provider to offer in the \emph{cooperation mode}.}}
We show that the auction involves \emph{allocative externalities}, \emph{i.e.}, the cooperation between the LTE provider and one APO benefits other APOs who are not directly involved in this cooperation. As a result, 
{{a particular APO's willingness to cooperate is affected by its belief about other APOs' willingness to cooperate.}} This makes our analysis much more challenging than that of the conventional second-price auction, where bidding truthfully is a weakly dominant strategy. 
{{We show that the APOs have a unique form of the equilibrium bidding strategies in Stage II, based on which we analyze the LTE provider's optimal reserve rate in Stage I.}} 
Numerical results show that our framework improves the payoffs of both the LTE provider and the APOs comparing with a benchmark scheme. {{In particular, our framework increases the LTE provider's payoff by $70\%$ on average when the LTE provider has a large throughput and a small data rate discounting factor.}} Moreover, our framework leads to a close-to-optimal social welfare under a large LTE throughput.

\end{abstract}

\begin{IEEEkeywords}
Coexistence of LTE and Wi-Fi in unlicensed band, auction with allocative externalities, symmetric Bayesian Nash equilibrium.
\end{IEEEkeywords}

\normalsize %



\section{Introduction}
\subsection{Motivations}
\IEEEPARstart{T}{he} proliferation of mobile devices is leading to an explosion of global mobile traffic, which is estimated to reach 30.6 exabytes per month by 2020 \cite{Cisco2016}. To accommodate this {{rapidly}} growing mobile traffic, 3GPP has been working on proposals to enable LTE to operate in the unlicensed 5GHz band \cite{3GPP}.{\footnote{{The LTE unlicensed technology can also work in the 3.5GHz band \cite{kim2015design}. However, since the available spectrum resources for the LTE technology in the 5GHz band ($500$MHz) are much more than those in the 3.5GHz band ($80$MHz), we focus on the interaction between the LTE and Wi-Fi in the 5GHz in this paper.}}} By extending LTE to the unlicensed spectrum, the LTE provider can significantly expand its network capacity, and tightly integrate its control over the licensed and unlicensed bands \cite{Senza}. Furthermore, since the LTE technology has an efficient framework of traffic management (\emph{e.g.,} congestion control), it is capable of achieving a much higher spectral efficiency than Wi-Fi networks in the unlicensed spectrum, if there is no competition between these two technologies \cite{forum}.
Key market players, such as AT\&T, Verizon, T-Mobile, Qualcomm, and Ericsson, have already demonstrated the potential of LTE in the unlicensed band through experiments \cite{forum}, and have formed several forums (\emph{e.g.}, LTE-U Forum \cite{LTEforum} and EVOLVE \cite{EVOLVE}) to promote this promising LTE unlicensed technology.

A key technical challenge for LTE working in the unlicensed spectrum is that it can significantly degrade the Wi-Fi network performance if there is no effective co-channel interference avoidance mechanism. 
To address this issue, industries have proposed two major mechanisms for LTE/Wi-Fi coexistence: 
(a) Qualcomm's carrier-sensing adaptive transmission (CSAT) scheme \cite{Qualcomm}, where the LTE transmission follows a periodic on/off pattern creating interference-free zones for Wi-Fi during certain periods, and (b) Ericsson's  ``Listen-Before-Talk'' (LBT) scheme \cite{Ericsson}, where LTE transmits only when it senses the channel being idle for at least certain duration. 
{{However, field tests revealed that these solutions often perform below expectations in practice.}} 
In particular, a series of experiments by Google revealed that both mechanisms severely affect the performance of Wi-Fi \cite{Google}: for the CSAT mechanism, since Wi-Fi is not designed in anticipation of LTE's activity, it cannot respond well to LTE's on-off cycling, and its transmission is severely affected; for the LBT mechanism, it is challenging to choose the proper backoff time and transmission length for LTE to fairly coexist with Wi-Fi. 
{{Therefore, beyond these coexistence mechanisms, there is a need for a novel framework that can effectively explore the potential cooperation opportunity between LTE and Wi-Fi to directly avoid the co-channel interference. This motivates our study in this work.}}

\subsection{Contributions}
{{Unlike previous solely technical coexistence mechanisms that focused on the fair competition between LTE and Wi-Fi, we design a novel coopetition framework. The basic idea is that the two types of networks (LTE and Wi-Fi) should explore the potential benefits of cooperation before deciding whether to enter head-to-head competition. 
Under certain conditions (\emph{e.g.}, the co-channel interference heavily reduces the data rates of both LTE and Wi-Fi), it would be more beneficial for both types of networks to reach an agreement on the cooperation; 
otherwise, they will compete with each other based on a typical coexistence mechanism (\emph{e.g.}, CSAT or LBT).}} 

{{In our coopetition framework,}} 
the LTE network works in either the \emph{competition mode} or the \emph{cooperation mode}. For the \emph{competition mode}, the LTE network simply shares the access of a channel with the corresponding Wi-Fi access point.{\footnote{We consider a general coexistence scheme between LTE and Wi-Fi. Hence, our model applies to both the CSAT and the LBT mechanisms.}} 
For the \emph{cooperation mode}, the LTE network \emph{exclusively} occupies a Wi-Fi access point's channel and the corresponding Wi-Fi access point does not transmit, which avoids the co-channel interference and hence generates a high LTE data rate. Meanwhile, the Wi-Fi access point \emph{onloads} its users to the LTE network,{\footnote{{With the industrial standardization efforts (\emph{e.g.}, Hotspot 2.0 \cite{4Ginteg}), there is a trend of tightly integrating the Wi-Fi technology with the cellular networks. This enables various forms of cooperations between the cellular and Wi-Fi network providers. A successful example is Wi-Fi data offloading, where the cellular network providers offload their cellular traffic to the third-party Wi-Fi networks to relieve the cellular congestion \cite{iosifidis2015iterative,gao2014bargaining,dong2014ideal,lu2014easybid}. In terms of the practical implementation, one advantage of the data onloading over the Wi-Fi data offloading is that the data onloading can be more secure and better protect the mobile users' privacy. This is because the cellular networks usually provide better security guarantees than the Wi-Fi networks.}}} which serves the Wi-Fi access point's users with some data rates based on the access point's request. 
Since LTE usually achieves a much higher spectral efficiency than Wi-Fi \cite{forum,canousing}, such a cooperation {{can potentially}} lead to a \emph{win-win} situation for both networks.

{{In our work, we want to answer the following two questions: (1) \emph{How would LTE and Wi-Fi negotiate over which mode (competition mode or cooperation mode) that LTE would use?} (2) \emph{If the LTE network works in the cooperation mode, how much Wi-Fi traffic should it serve?} Addressing these questions is challenging because of the following reasons: (i) given the increasingly large penetration of Wi-Fi technology, there are usually {{multiple}} Wi-Fi networks in range. {{As we will show in our analysis, the cooperation between the LTE network and one Wi-Fi network imposes a positive externality to other Wi-Fi networks not involved in the cooperation}}; (ii) there is no centralized decision maker in such a system, and different networks have conflicting interests as each of them wants to maximize the total data rate received by its own users; (iii) the throughput of a network (LTE or Wi-Fi) when it exclusively occupies a channel is its private information not known by others, which makes the coordination difficult.}}


{{To address these issues, in Section \ref{sec:model}, we design a mechanism that operates with minimum signaling and computations, and can be implemented in an almost real-time fashion. Specifically, the mechanism is based on}} 
a reverse auction where the LTE provider is the auctioneer (buyer) and wants to \emph{exclusively} obtain the channel from one of the Wi-Fi access point owners {{(APOs, sellers)}}.{\footnote{We consider one LTE network and multiple Wi-Fi access points, since the LTE network has a larger coverage than the Wi-Fi access points, and the Wi-Fi access points are already very popular and exist in many areas.}} 
{{We define the payoff of a network (LTE or Wi-Fi) as the total data rate received by its users.}} 
{{In Stage I of the auction,}} the LTE provider announces the maximum data rate (\emph{i.e.,} reserve rate) that it is willing to allocate for serving users of the winning APO. By optimizing the reserve rate, the LTE provider can affect the APOs' willingness of cooperation, and hence maximize its expected payoff. 
{{In Stage II of the auction,}} given the reserve rate, the APOs report whether they are willing to cooperate and what are the data rates that they request from the LTE provider. 
{{Different APOs may have different requests, since they can have different data rates when exclusively occupying their channels.}} 
If no APO wants to cooperate, the LTE network works in the \emph{competition mode}, and randomly accesses an APO's channel {{(based on a coexistence mechanism like CSAT or LBT)}}; otherwise, it works in the \emph{cooperation mode}, {{and cooperates with the APO that requests the lowest data rate from the LTE provider}}. Such an auction mechanism is particularly challenging to analyze since it induces \emph{positive allocative externalities} \cite{jehiel2000auctions}: the cooperation between the LTE provider and one APO will benefit other APOs not involved in this collaboration, because other APOs can avoid the potential interference generated by the LTE network under the \emph{competition mode}.

{{In Section \ref{sec:stageII:APO}, we analyze the APOs' equilibrium strategies in Stage II of the auction, given the LTE provider's reserve rate in Stage I. We show that an APO always has a unique form of the bidding strategy at the equilibrium under a given reserve rate. However, such a unique form of the bidding strategy may have different closed-form expressions based on different intervals of the reserve rate.}} 
Furthermore, our study shows that for some APOs, the data rates they request from the LTE provider are lower than the rates they can obtain by themselves without the LTE's interference. Intuitively, such a low request motivates the LTE network to work in the \emph{cooperation mode} rather than the \emph{competition mode}. In the latter case, the APOs may receive even lower data rates due to the potential co-channel interference from the LTE network.

{{In Section \ref{sec:stageI:LTE}, we analyze the LTE provider's equilibrium choice of reserve rate in Stage I of the auction, by anticipating the APOs' equilibrium strategies in Stage II.}} 
The LTE network's expected payoff has different closed-form function forms, over different intervals of the reserve rate. We analyze the optimal reserve rate by jointly considering all the reserve rate intervals. We show that when the LTE network's throughout exceeds a threshold, it will choose a reasonably large reserve rate and cooperate with some APOs; otherwise, it will restrict the reserve rate to a small value, and eventually work in the \emph{competition mode}.

The main contributions of this work are as follows:
\begin{itemize}
\item \emph{Proposal of the LTE/Wi-Fi coopetition framework:} 
{{We propose a coopetition framework that explores the cooperation opportunity between LTE and Wi-Fi in order to determine whether they should directly compete with each other. Unlike previously proposed LTE/Wi-Fi coexistence mechanisms, our framework can avoid the data rate reduction when there is a cooperation opportunity between LTE and Wi-Fi. 
Furthermore, our framework can be implemented without revealing the private throughput information of the networks.}} 
\item \emph{Equilibrium analysis of the auction with allocative externalities:} {{We provide rigorous analysis for an auction mechanism with positive allocative externalities that involves more than two bidders. To the best of our knowledge, this is the first work studying such a mechanism in auction theory. Moreover, our work introduces a methodology for modeling and analyzing the allocative dependencies that arise increasingly often in wireless systems.}}
\item \emph{Characterization of the optimal reserve rate:} We analyze the reserve rate that maximizes the LTE network's payoff, and investigate its relation with the LTE throughput. Through simulation, we show that the optimal reserve rate is non-increasing in the LTE's data rate discounting factor, and non-decreasing in the LTE throughput, the number of APOs, and the APOs' data rate discounting factor. 
\item \emph{Performance evaluation of the LTE/Wi-Fi coopetition framework:} Numerical results show that our framework achieves larger LTE's and APOs' payoffs comparing with {{a state-of-the-art benchmark scheme, which only considers the competition between LTE and APOs.}} In particular, our framework increases the LTE's payoff by $70\%$ on average when the LTE has a large throughput and a small data rate discounting factor. Furthermore, our framework leads to a close-to-optimal social welfare for a large LTE throughput.


\end{itemize}
\subsection{Related Work}

{{This paper is an extension of our conference paper \cite{haoran2016wiopt}, where we considered a basic model with two APOs. 
In this paper, we generalize the model by considering an arbitrary number of APOs, which substantially extends the scope of the paper and the applicability of the results, but also significantly complicates the analysis. 
Furthermore, in this paper, we investigate the impact of the number of APOs on the LTE provider's and the APOs' strategies, and compare our auction-based scheme with a state-of-the-art benchmark scheme through simulation. We also extensively discuss the generalization of our work to more complicated scenarios (\emph{e.g.}, multi-LTE scenario).}}

{{Several recent studies focused on the spectrum sharing problems for the LTE unlicensed technology.}} 
{{Cano \emph{et al.} in \cite{canousing} and Zhang \emph{et al.} in \cite{zhanglte} discussed the major challenges for the LTE/Wi-Fi coexistence.}} 
{{References \cite{cavalcante2013performance,rupasinghe2014licensed} provided performance evaluations for the LTE/Wi-Fi coexistence.}} 
{{Li \emph{et al.} in \cite{li2015modelingJ} applied stochastic geometry to characterize the main performance metrics (\emph{e.g.,} SINR coverage probability) for the neighboring LTE and Wi-Fi networks in the unlicensed spectrum. 
Jeon \emph{et al.} in \cite{jeon2014lte} applied a fluid network model to analyze the interference between the LTE and Wi-Fi. 
Chen \emph{et al.} in \cite{chen2016cellular} jointly considered the Wi-Fi data offloading and the spectrum sharing between the LTE and Wi-Fi. 
Cano \emph{et al.} in \cite{cano2016fair} addressed the fair coexistence problem for general scheduled and random access transmitters that share the same channel.}} 
{{Cano \emph{et al.} in \cite{cano2015coexistence} studied the LTE network's channel access probability in the CSAT mechanism to ensure the fairness between LTE and Wi-Fi.}} Zhang \emph{et al.} in \cite{zhangmodeling} proposed a new LBT-based MAC protocol that allows LTE to friendly coexist with Wi-Fi. 
{{Guan \emph{et al.} in \cite{guan4cu}} investigated the LTE provider's joint channel selection and fractional spectrum access problem with the consideration of the fairness between LTE and Wi-Fi.} 
{{Zhang \emph{et al.} in \cite{zhang2015hierarchical} analyzed the spectrum sharing among multiple LTE providers in the unlicensed spectrum through a hierarchical game.}} 
{{However, these studies did not consider the cooperation between LTE and Wi-Fi. We include the existing studies on LTE/Wi-Fi coexistence like \cite{cano2015coexistence,zhangmodeling} as part of our framework (\emph{i.e.}, in the competition mode), and also consider the new possibility of cooperation between LTE and Wi-Fi (\emph{i.e.}, in the cooperation mode).}}

In terms of the auction with \emph{allocative externalities}, the most relevant works are \cite{jehiel2000auctions} and \cite{bagwell2003case}. Jehiel and Moldovanu in \cite{jehiel2000auctions} provided a systematic study of the second-price forward auction with allocative externalities. They characterized the bidders' bidding strategies at the equilibrium for general payoff functions. However, they did not prove the uniqueness of the equilibrium strategies. Bagwell \emph{et al.} in \cite{bagwell2003case} studied a special example in the WTO system, where the retaliation rights were allocated through a first-price forward auction among different countries. The auction involves positive allocative externalities, and the authors showed the uniqueness of the countries' bidding strategies. 
Both \cite{jehiel2000auctions} and \cite{bagwell2003case} only studied two bidders in the auction. In contrast, we consider an auction with an arbitrary number of bidders, and show the impact of the number of bidders on the auction outcome. 
Furthermore, the bidders' equilibrium strategies have different expressions under different reserve rates announced by the auctioneer, which makes our analysis of the optimal reserve rate {{much more challenging than}} \cite{jehiel2000auctions} and \cite{bagwell2003case}.
\section{System Model}\label{sec:model}
\subsection{Basic Settings}
We consider a time-slotted system, where the length of each time slot corresponds to several minutes. We assume that the system is quasi-static, \emph{i.e.}, the system parameters (which involve mostly time average values) remain constant during each time slot, but can change over time slots. Our analysis focuses on the interaction between LTE and Wi-Fi networks in a single generic time slot.{\footnote{{Since the LTE unlicensed technology (time-division duplex mode) supports both the uplink and downlink transmissions \cite{zhanglte,Senza}, the LTE network is able to onload both the APOs' uplink and downlink traffic. Our framework works for both the uplink scenario (the networks only have uplink traffic) and downlink scenario (the networks only have downlink traffic). For example, in the uplink scenario, all throughputs in our model correspond to the networks' uplink throughputs. For the most general scenario, where the networks serve uplink and downlink traffic simultaneously, each network should choose its strategy by considering both the uplink and downlink transmissions, and we leave the analysis of this scenario as our future work.}}} 
We consider one LTE small cell network and a set ${\cal K}\triangleq \left\{1,2,\ldots,K\right\}$ ($K\ge2$) of Wi-Fi access points. 
The LTE small cell network is owned by an LTE provider,{\footnote{{In Section \ref{subsec:discussion:multiLTE}, we will discuss the extension to the scenario where there are multiple LTE providers.}}} and the $k$-th ($k\in{\cal K}$) Wi-Fi access point is owned by APO $k$. We assume that the APOs occupy different unlicensed channels so that they do not interfere with each other. We use channel $k$ to represent the channel occupied by APO $k$. 
The LTE small cell network has a larger coverage area than the Wi-Fi access points \cite{Qualcomm,forum}. Furthermore, it can work in one of the $K$ channels, and cause interference to the corresponding access point in the channel.{\footnote{For ease of exposition, we use ``LTE provider'' and ``LTE network'' interchangeably. Similarly, we use ``APO'' and ``access point'' interchangeably.}} The assumption that the APOs occupy different channels simplifies the problem and helps us gain key insights into the proposed auction framework. {{In Section \ref{subsec:discussion:APOshare}, we will discuss the extension to the scenario where different APOs can share the same channel.}} 

{\bf APOs' Rates:} {{We consider fully loaded APOs,{\footnote{{Since the length of each time slot corresponds to several minutes, we assume that a network has enough traffic to serve during a time slot and will not complete its service within a time slot. This assumption simplifies the problem, and helps us understand the fundamental benefit of organizing an auction to onload the Wi-Fi traffic to the LTE network. Many papers made similar saturation assumptions to analyze the network performance \cite{bianchi2000performance,cali2000dynamic,wu2002performance}. {{In the future work, we will study the scenario where the networks do not have full loads, and can precisely predict their traffic loads in the next few minutes.}}}}} and use $r_k$ to denote the throughput that APO $k\in{\cal K}$ can achieve to serve its users when it \emph{exclusively} occupies channel $k$ (without the interference from the LTE network).}} 
{{The value of $r_k$ in the time slot that we are interested in is the private information of APO $k$. The LTE provider and the other $K-1$ APOs only know the probability distribution of $r_k$. Specifically, we assume that $r_k$ is a continuous random variable drawn from interval $\left[r_{\min},r_{\max}\right]$ ($r_{\min},r_{\max}\ge0$), and follows a probability distribution function (PDF) $f\left(\cdot\right)$ and a cumulative distribution function (CDF) $F\left(\cdot\right)$.{\footnote{We assume that all $r_k$ ($k\in{\cal K}$) follow the same distribution, and hence both functions $f\left(\cdot\right)$ and $F\left(\cdot\right)$ are independent of index $k$. We will study problem with the non-identical variable $r_k$ in our future work.}} Moreover, we assume that $f\left(\cdot\right)>0$ for all $r\in\left[r_{\min},r_{\max}\right]$.}} 


{\bf LTE's Dual Modes:} {{We consider a fully loaded LTE network,}} and assume that it achieves a channel independent throughput of $R_{\rm LTE}>0$ when it \emph{exclusively} occupies one of the $K$ channels (without the interference from the APOs).{\footnote{{As we will show in the analysis, the APOs make their decisions based on the LTE provider's reserve rate $C$ instead of the throughput $R_{\rm LTE}$. In other words, the APOs do not need to know the value of $R_{\rm LTE}$. Therefore, we do not need to assume a probability distribution of $R_{\rm LTE}$ to model the APOs' knowledge of $R_{\rm LTE}$.}}} The LTE provider can operate its network in one of the following modes:

-\emph{competition mode:} the LTE provider randomly chooses each channel $k\in{\cal K}$ with an equal probability and coexists with APO $k$. {{Since our main focus is the design of the auction framework, the LTE provider simply coexists with APO $k$ based on a typical coexistence mechanism (\emph{e.g.}, CSAT or LBT) and setting (\emph{e.g.}, the LTE's backoff time and transmission length in the LBT)}}. The co-channel interference decreases both the data rates of the LTE provider and the corresponding APO. We use ${\delta^{\rm LTE}}\in\left(0,1\right)$ and ${\eta^{\rm APO}}\in\left(0,1\right)$ to denote the LTE's and the APO's data rate discounting factors, respectively;{\footnote{{Based on \cite{Google,cavalcante2013performance,rupasinghe2014licensed}, the data rate reduction of the APO due to the co-channel interference is much heavier than that of the LTE. Hence, factor ${\eta^{\rm APO}}$ is usually smaller than ${\delta^{\rm LTE}}$. The values of ${\eta^{\rm APO}}$ and ${\delta^{\rm LTE}}$ depend on the concrete coexistence mechanisms and settings. For example, the study in \cite{Google} showed that ${\eta^{\rm APO}}$ ranges from $0.1$ to $0.5$ given different LTE off time under the CSAT mechanism. 
In this work, we assume that the LTE provider adopts the same mechanism (\emph{e.g.}, CSAT or LBT) and settings (\emph{e.g.}, LTE off time in CSAT) when coexisting with any APO. Hence, the LTE provider has the same discounting factor ${\delta^{\rm LTE}}$ for the coexistence with any APO, and the APOs have the same discounting factor ${\eta^{\rm APO}}$.}}}

-\emph{cooperation mode:} the LTE provider reaches an agreement with APO $k\in{\cal K}$, where APO $k$ stops transmission and the LTE provider exclusively occupies channel $k$. In this case, there is no co-channel interference, and the LTE provider's data rate is simply $R_{\rm LTE}$. As a compensation, the LTE provider will serve APO $k$'s users with a guaranteed data rate $r_{\rm pay}\in\left[0,R_{\rm LTE}\right]$.{\footnote{{{{{According to \cite{Cisco2016}, in 2015, $88$\% of global mobile devices are the mobile phones (including the smartphones, non-smartphones, and phablets), which have the cellular interfaces. Only $12$\% of global mobile devices are the tablets, laptops, and other devices that may not have the cellular interfaces. Therefore, we assume that all the mobile devices served by the APOs (during the considered time slot) have the cellular interfaces and hence can be onloaded to the LTE network if needed. In Section \ref{subsec:discussion:dumb}, we will discuss the extension to the scenario where some mobile devices (\emph{e.g.}, laptops) do not have the cellular interfaces.}}}}}} 
The remaining $K-1$ APOs occupy their own channels, and are not interfered by the LTE provider. Which APO the LTE provider chooses to cooperate with and what the value $r_{\rm pay}$ should be will be determined through a reverse auction design in the next subsection.

\subsection{Second-Price Reverse Auction Design}\label{subsec:auctionframe}
We design a second-price reverse auction, where the LTE provider is the auctioneer (buyer) and the APOs are the bidders (sellers). 
The auction is held at the beginning of each time slot. 
The private type of APO $k$ is $r_k$ (\emph{i.e.}, the data rate when it exclusively occupies channel $k$), {{and APO $k$'s item for sale is the right of onloading APO $k$'s traffic. When the LTE provider obtains the item from APO $k$, the LTE provider can onload APO $k$'s traffic and exclusively occupy channel $k$.}} 
Since we assume that the LTE provider cannot occupy more than one channel at the same time, the LTE provider is only interested in obtaining one item from one of the APOs.{\footnote{{Since the LTE unlicensed technology is still in an early stage of development, the existing relevant experiments and studies focused on the situation where the LTE network can only utilize a single unlicensed channel \cite{Google,forum,Qualcomm}. In the future, it is likely that the LTE network can aggregate multiple unlicensed channels through the carrier aggregation technology \cite{zhang2010carrier}.}}} Different from the conventional reverse auction where the auctioneer pays the winner money to obtain the item, here the LTE serves the winning APO's users with the rate $r_{\rm pay}$ as the payment.

\begin{figure}[t]
  \centering
  \includegraphics[scale=0.5]{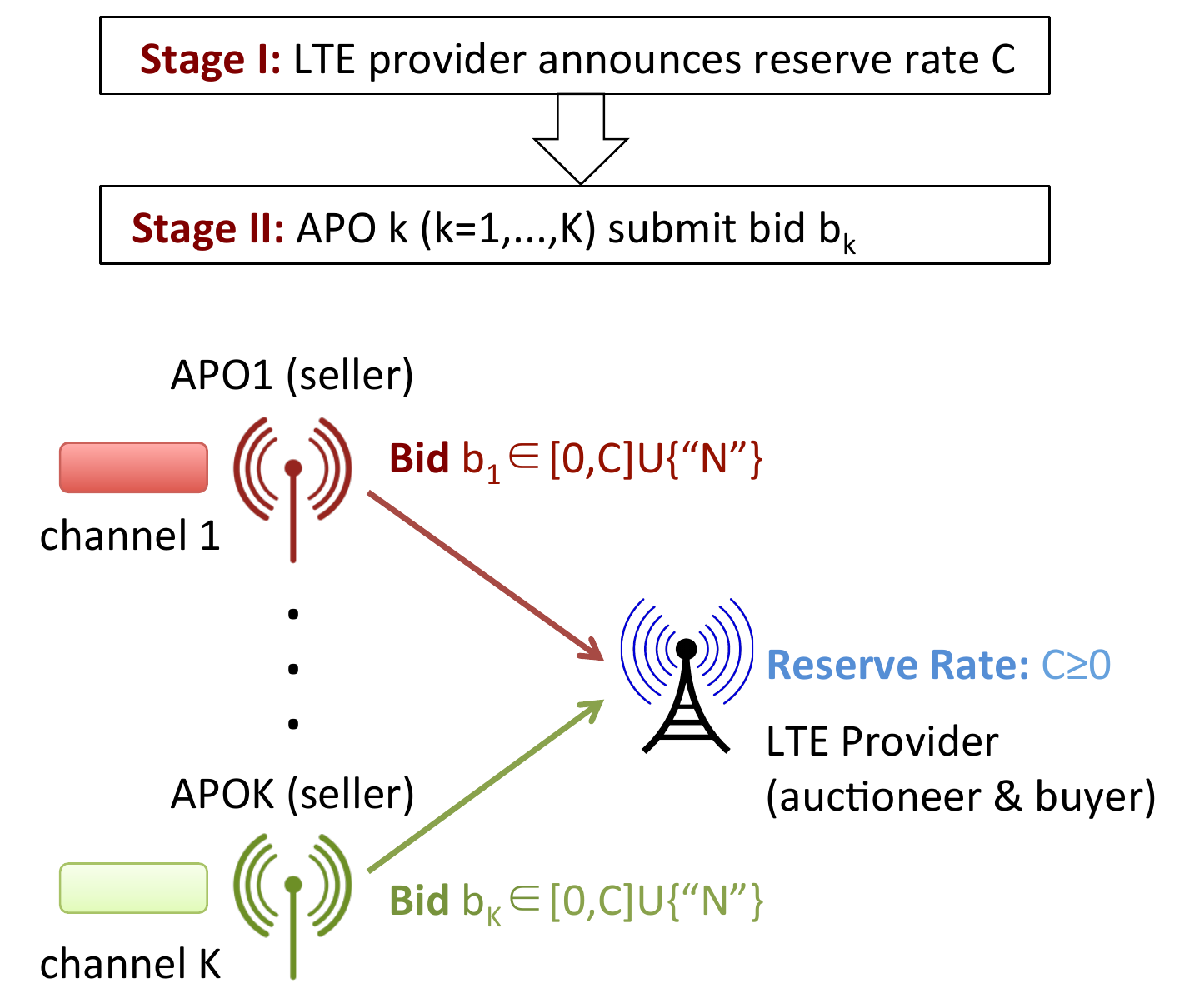}
  \vspace{-0.2cm}
  \caption{Illustration of The Reverse Auction.}
  \label{fig:auction}
  \vspace{-0.7cm}
\end{figure}

{\bf Reserve Rate and Bids:} {{In Stage I of the auction,}} the LTE provider announces its reserve rate $C\in\left[0,\infty\right)$, which corresponds to the maximum data rate that it is willing to accept to serve the winning APO's users. {{In Stage II of the auction,}} after observing the reserve rate $C$, APO $k$ submits a bid $b_k\in\left[0,C\right]\cup\left\{``\rm N \textquotedblright\right\}$: (a) $b_k\in\left[0,C\right]$ indicates the data rate that APO $k$ requests the LTE provider to serve APO $k$'s users; (b) $b_k=``\rm N \textquotedblright$ means that APO $k$ does not want to sell its item (\emph{i.e.}, the right of {{onloading APO $k$'s traffic}}) to the LTE provider.{\footnote{{{If APO $k$ bids any value greater than the reserve rate $C$, the LTE provider will not cooperate with APO $k$ based on the definition of $C$. Hence, any bid greater than $C$ leads to the same result to APO $k$. In order to facilitate the description, we use $``\rm N \textquotedblright$ to represent any bid greater than $C$.}} Intuitively, if the reserve rate $C$ is very small, APO $k$ is more likely to bid $``\rm N \textquotedblright$. In this case, APO $k$ can achieve an expected data rate {{(considering all possible auction results)}} higher than that when onloading the users to the LTE provider.}} We define the vector of APOs' bids as ${\bm b}\triangleq \left(b_k,\forall k\in{\cal K}\right)$. 
The auction design is illustrated in Fig. \ref{fig:auction}.


{\bf Auction Outcomes:} Next we discuss the auction outcomes based on the different values of ${\bm b}$ and $C$. For ease of exposition, we define the comparison between $``\rm N \textquotedblright$ and any bid $b_k$ as
\begin{align}
\min\left\{{``\rm N \textquotedblright},b_k\right\}=\left\{\begin{array}{ll}
{b_k,} & {\rm if~}{b_k\in\left[0,C\right],}\\
{{``\rm N \textquotedblright},} & {{\rm if~}{b_k={``\rm N \textquotedblright}}.}
\end{array} \right.
\end{align} 
Furthermore, we use ${\cal I}_{\min}$ to denote the set of APOs with the minimum bid, and define it as 
\begin{align}
{\cal I}_{\min}\triangleq \left\{i\in{\cal K}:i=\arg\min_{k\in{\cal K}}{b_k} \right\}.
\end{align}

The auction has the following possible outcomes:

(a) When $\left|{\cal I}_{\min}\right|=1$,{\footnote{Condition $\left|{\cal I}_{\min}\right|=1$ implies $\min_{k\in{\cal K}}{b_k} \in \left[0,C\right]$ as we have $K\ge2$ APOs.}} then APO $i= \arg\min_{k\in{\cal K}}{b_k}$ is the winner, and leaves channel $i$ to the LTE provider. The LTE provider works in the \emph{cooperation mode} and exclusively occupies channel $i$. Furthermore, the LTE serves APO $i$'s users with a rate $r_{\rm pay}=\min\left\{C,b_1,\ldots,b_{i-1},b_{i+1},\ldots,b_{K}\right\}$, which is the lowest rate among the reserve rate and all the {\emph{other}} APOs' bids, based on the rule of the second-price auction. {{In this case, the allocated rate $r_{\rm pay}$ is greater than the winning APO's bid (\emph{i.e.}, $\min_{k\in{\cal K}}{b_k}$)}};

(b) When $\min_{k\in{\cal K}}{b_k} \in \left[0,C\right]$ and $\left|{\cal I}_{\min}\right|>1$, the LTE provider works in the \emph{cooperation mode}, randomly chooses an APO from set ${\cal I}_{\min}$ with the probability $\textstyle \frac{1}{\left|{\cal I}_{\min}\right|}$ to exclusively occupy the corresponding channel, and serves the APO's users with a rate $r_{\rm pay}=\min_{k\in{\cal K}}{b_k}$. {{In this case, the allocated rate $r_{\rm pay}$ equals the winning APO's bid}};

(c) When $\min_{k\in{\cal K}}{b_k} =``\rm N \textquotedblright$,{\footnote{In this case, all APOs bid $``\rm N \textquotedblright$.}} the LTE provider works in the \emph{competition mode}, randomly chooses one of the $K$ channels with the probability $\textstyle\frac{1}{K}$, and shares the channel with the corresponding APO.{\footnote{{{Because the LTE provider does not have the private information $r_k$, it cannot differentiate the channels.}} We consider a specific protocol where the LTE provider randomly accesses each channel with an equal probability in the \emph{competition mode}.}}
\vspace{-0.4cm}
\subsection{LTE Provider's Payoff}\label{subsec:LTEpayoff}
Based on the summary of auction outcomes in the last subsection, we can write $r_{\rm pay}$ as a function of $\bm b$ and $C$:


\vspace{-0.2cm}
\begin{align}
\nonumber
& r_{\rm pay}\left({{\bm b},C}\right)= \\
& \left\{\begin{array}{ll}
{\min\!\left\{\!C,\!\min_{k\ne i,k\in{\cal K}}b_k\right\}\!,} &  {\rm if~}{\left|{\cal I}_{\min}\right|=1,}\\
{\min_{k\in{\cal K}}{b_k},} &  {\rm if~}{\min_{k\in{\cal K}}\!{b_k}\! \in\!\left[0,\!C\right] {~\!\rm and~\!\!}\left|{\cal I}_{\min}\right|\!\!>\!\!1,}\\
{0,} & {\rm if~}{\min_{k\in{\cal K}}{b_k} =``\rm N \textquotedblright.}
\end{array} \right.\label{equ:rpay}
\end{align}

We define the LTE provider's payoff as the data rate that it can allocate to its own users, and compute it as:{\footnote{Notice that $\min_{k\in{\cal K}}{b_k} \in \left[0,C\right]$ contains two possible situations: (i) $\left|{\cal I}_{\min}\right|=1$; (ii) $\min_{k\in{\cal K}}{b_k} \in\left[0,C\right] {~\rm and~}\left|{\cal I}_{\min}\right|>1$.}} 
\begin{align}
\Pi^{\rm LTE}\left({\bm b},C\right)=\left\{\begin{array}{ll}
{R_{\rm LTE}-r_{\rm pay}\left({{\bm b},C}\right),} & {\rm if~}{\min_{k\in{\cal K}}{b_k} \in\left[0,C\right],}\\
{{\delta^{\rm LTE}} R_{\rm LTE},} & {\rm if~}{\min_{k\in{\cal K}}{b_k} =``\rm N \textquotedblright.}
\end{array} \right.\label{equ:LTEpayoff}
\end{align}
Equation (\ref{equ:LTEpayoff}) captures two possible situations: 
(a) when the minimum bid lies in $\left[0,C\right]$, the LTE provider works in the \emph{cooperation mode}, exclusively occupies a channel, and obtains a total data rate of $R_{\rm LTE}$. Since the LTE provider needs to allocate a rate of $r_{\rm pay}\left({{\bm b},C}\right)$ to the winning APO's users, its payoff is $R_{\rm LTE}-r_{\rm pay}\left({{\bm b},C}\right)$; 
(b) when all APOs bid $``{\rm N} \textquotedblright$, the LTE provider works in the \emph{competition mode}, and ${\delta^{\rm LTE}}\in\left(0,1\right)$ captures the discount in the LTE provider's data rate due to the interference from the Wi-Fi APO in the same channel.
\vspace{-0.4cm}
\subsection{APOs' Payoffs and Allocative Externalities}
We define the payoff of APO $k\in{\cal K}$ as the data rate that its users receive: when APO $k$ cooperates with the LTE provider, these users are served by the LTE provider; otherwise, they are served by APO $k$. Based on the summary of auction outcomes in Section \ref{subsec:auctionframe} and the definition of $r_{\rm pay}\left({\bm b},C\right)$ in (\ref{equ:rpay}), we summarize APO $k$'s expected payoff as follows:
\begin{align}
\nonumber
& \Pi^{\rm APO}_{k}\left({\bm b},C\right)=\\
& \left\{\begin{array}{ll}
{r_k,} & {\rm if~}{b_k>\min_{j\in{\cal K}}{b_j},}\\
{\frac{1}{\left|{\cal I}_{\min}\right|}r_{\rm pay}\left({\bm b},C\right)\!+\!\frac{\left|{\cal I}_{\min}\right|-1}{\left|{\cal I}_{\min}\right|}r_k,} & {\rm if~}{b_k\!=\! \min_{j\in{\cal K}}{b_j}\!\in\!\left[0,C\right],}\\
{\frac{K-1+{\eta^{\rm APO}}}{K}r_k,} & {\rm if~}{\min_{j\in{\cal K}}{b_j}=``{\rm N} \textquotedblright.}
\end{array} \right.\label{equ:APOpayoff}
\end{align}

Equation (\ref{equ:APOpayoff}) summarizes three possible situations: (a) when $b_k>\min_{j\in{\cal K}}{b_j}$, the LTE provider exclusively occupies a channel from one of the APOs (other than APO $k$) with the minimum bid. As a result, APO $k$ can exclusively occupy its own channel $k$, and serve its users with rate $r_k$; (b) when $b_k= \min_{j\in{\cal K}}{b_j}\in\left[0,C\right]$, the LTE provider cooperates with APO $k$ and one of the other APOs with the minimum bid with the probability ${\frac{1}{\left|{\cal I}_{\min}\right|}}$ and the probability $1-{\frac{1}{\left|{\cal I}_{\min}\right|}}$ ($1\le {\left|{\cal I}_{\min}\right|}\le K$), respectively. Hence, APO $k$'s users receive rate $r_{\rm pay}\left({\bm b},C\right)$ and rate $r_k$ with the probability ${\frac{1}{\left|{\cal I}_{\min}\right|}}$ and the probability $1-{\frac{1}{\left|{\cal I}_{\min}\right|}}$, respectively. In this case, the expected data rate that APO $k$'s users receive is ${\frac{1}{\left|{\cal I}_{\min}\right|}r_{\rm pay}\left({\bm b},C\right)+\frac{\left|{\cal I}_{\min}\right|-1}{\left|{\cal I}_{\min}\right|}r_k,}$; (c) when $\min_{j\in{\cal K}}{b_j}=``{\rm N} \textquotedblright$, there is no winner in the auction, and the LTE provider randomly chooses one of the $K$ channels to coexist with the corresponding APO. With the probability ${\frac{1}{K}}$, APO $k$ coexists with the LTE provider and has a data rate of ${{\eta^{\rm APO}}}r_k$; with the probability $1-\frac{1}{K}$, APO $k$ has a data rate of $r_k$ by exclusively occupying channel $k$. In this case, the expected data rate that APO $k$'s users receive is $\frac{K-1+{\eta^{\rm APO}}}{K}r_k$. 

We note that APO $k$ does not win the auction in either of the following two cases: $b_k>\min_{j\in{\cal K}}{b_j}$ and $\min_{j\in{\cal K}}{b_j}=``{\rm N} \textquotedblright$. However, the APO $k$'s payoff is different in these two cases: it obtains a payoff of $r_k$ when $b_k>\min_{j\in{\cal K}}{b_j}$, and achieves a smaller payoff of $\frac{K-1+{\eta^{\rm APO}}}{K}r_k$ when $\min_{j\in{\cal K}}{b_j}=``{\rm N} \textquotedblright$. That is to say, even if APO $k$ does not win the auction, it is more willing to see the other APOs winning (\emph{i.e.}, $b_k>\min_{j\in{\cal K}}{b_j}$) rather than losing the auction (\emph{i.e.}, $\min_{j\in{\cal K}}{b_j}=``{\rm N} \textquotedblright$). This shows \emph{positive allocative externalities} of the auction, which make our problem substantially different from conventional auction problems. At the equilibrium of the conventional second-price auction, bidders bid truthfully according to their private values, regardless of other bidders' valuations. With allocative externalities in our problem, when APO $k$ evaluates its payoff when losing the auction, it needs to consider whether the other APOs win the auction or not. Hence, the distributions of the other APOs' valuations (types) affect APO $k$'s strategy. As we will show in the following sections, this leads to a special structure of APOs' bidding strategies at the equilibrium, and bidding truthfully is no longer a dominate strategy.

\begin{table}[t]\small
\centering
\caption{Main Notations}\label{table:notation}
\begin{tabular}{|c|p{6cm}|}
\hline
{\minitab[c]{${\cal K},K$}} & {The set of APOs and its cardinality}\\
\hline
{\minitab[c]{$r_k$}} & {APO $k$'s {throughput without interference (private valuation, also called \emph{type})}}\\
\hline
{\minitab[c]{$r_{\min}$, $r_{\max}$}} & {Lower and upper bounds of $r_k$, $k\in{\cal K}$}\\
\hline
{\minitab[c]{$f\left(\cdot\right)$, $F\left(\cdot\right)$}} & {PDF and CDF of $r_k$, $k\in{\cal K}$}\\
\hline
{\minitab[c]{$R_{\rm LTE}$}} & {LTE provider's throughput without interference}\\
\hline
{\minitab[c]{${\eta^{\rm APO}}$}} & {APOs' data rate discounting factor}\\
\hline
{\minitab[c]{${\delta^{\rm LTE}}$}} & {LTE provider's data rate discounting factor}\\
\hline
{\minitab[c]{$C$}} & {LTE provider's reserve rate (\emph{decision variable})}\\
\hline
{\minitab[c]{$b_k$}} & {APO $k$'s bid (\emph{decision variable})}\\
\hline
{\minitab[c]{${\cal I}_{\min}$}} & {The set of APOs with the minimum bid}\\
\hline
{\minitab[c]{$\Pi^{\rm LTE}\left({\bm b},C\right)$}} & {LTE provider's payoff}\\
\hline
{\minitab[c]{$r_{\rm pay}\left({\bm b},C\right)$}} & {Data rate LTE allocates to the winning APO}\\
\hline
{\minitab[c]{$\Pi^{\rm APO}_{k}\left({\bm b},C\right)$}} & {APO $k$'s payoff}\\
\hline
\end{tabular}
\vspace{-0.5cm}
\end{table}


We summarize the main notations in Table \ref{table:notation}. For the parameters and distributions that characterize the APOs, $r_k$ is APO $k$'s private information, and the remaining information, \emph{i.e.}, $K,r_{\min},r_{\max},f\left(\cdot\right),F\left(\cdot\right),$ and ${\eta^{\rm APO}}$, is publicly known to all the APOs and the LTE provider. For the parameters that characterize the LTE provider, \emph{i.e.}, $R_{\rm LTE}$ and ${\delta^{\rm LTE}}$, as we will see in later sections, they will not affect the APOs' strategies. Therefore, they can be either known or unknown to the APOs.

{{Next we analyze the auction by backward induction. In Section \ref{sec:stageII:APO}, we analyze the APOs' equilibrium strategies in Stage II, given the LTE provider's reserve rate $C$ in Stage I. In Section \ref{sec:stageI:LTE}, we analyze the LTE provider's equilibrium reserve rate $C^*$ in Stage I by anticipating the APOs' equilibrium strategies in Stage II.}}

\vspace{-0.2cm}
\section{{{Stage II: APOs' Equilibrium Bidding Strategies}}}\label{sec:stageII:APO}
{{In this section, we assume that the reserve rate $C$ of the LTE provider in Stage I is given, and analyze the APOs' equilibrium strategies in Stage II. 
In Section \ref{subsec:definitionEQ}, we define the equilibrium for the APOs under a given $C$. 
In Sections \ref{subsec:stageII:1}, \ref{subsec:lowC}, \ref{subsec:stageII:3}, and \ref{subsec:stageII:4}, we analyze the APOs' equilibrium strategies by considering different intervals of $C$. 
In Section \ref{sec:comparison}, we summarize the results for the APOs' equilibrium strategies.}}
\vspace{-0.3cm}
\subsection{{Definition of Symmetric Bayesian Nash Equilibrium}}\label{subsec:definitionEQ}
We focus on the symmetric Bayesian Nash equilibrium (SBNE), which is defined as follows. 
\begin{definition}
Under a reserve rate $C$, a bidding strategy function $b^*\left(r,C\right)$, $r\in\left[r_{\min},r_{\max}\right]$, constitutes a symmetric Bayesian Nash equilibrium if 
relation (\ref{equ:defineEQ}) holds 
for all $s_k\in \left[0,C\right]\cup\left\{{``{\rm N}\textquotedblright}\right\}$, all $r_k\in  \left[r_{\min},r_{\max}\right]$, and all $k\in{\cal K}$.
\end{definition}

\begin{figure*}[t]
\begin{align}
\nonumber
& \mathbb{E}_{{\bm r}_{-k}}\left\{\Pi^{\rm APO}_k\left(\left(b^*\left(r_1,C\right),\ldots,b^*\left(r_{k-1},C\right),b^*\left(r_k,C\right),b^*\left(r_{k+1},C\right),\ldots,b^*\left(r_K,C\right)\right),C\right)|r_k\right\}\ge\\
& \mathbb{E}_{{\bm r}_{-k}}\left\{\Pi^{\rm APO}_k\left(\left(b^*\left(r_1,C\right),\ldots,b^*\left(r_{k-1},C\right),s_k,b^*\left(r_{k+1},C\right),\ldots,b^*\left(r_K,C\right)\right),C\right)|r_k\right\},\label{equ:defineEQ}
\end{align}
\hrulefill
\end{figure*}

Since it is the symmetric equilibrium, all the APOs apply the same bidding strategy function $b^*\left(r,C\right)$ at the equilibrium. The left hand side of inequality (\ref{equ:defineEQ}) stands for APO $k$'s expected payoff when it bids $b^*\left(r_k,C\right)$. The expectation is taken with respect to ${\bm r}_{-k}\triangleq \left(r_j,\forall j\ne k,j\in{\cal K}\right)$, which denotes all the other APOs' types and is unknown to APO $k$. Inequality (\ref{equ:defineEQ}) implies that APO $k\in{\cal K}$ cannot improve its expected payoff by unilaterally changing its bid from $b^*\left(r_k,C\right)$ to any $s_k\in \left[0,C\right]\cup\left\{{``{\rm N}\textquotedblright}\right\}$.
\vspace{-0.4cm}
\subsection{{{APOs' Equilibrium When $C\in\left[r_{\min},r_{\max}\right)$}}}\label{subsec:stageII:1}
We assume that the reserve rate $C$ is given from $\left[r_{\min},r_{\max}\right)$,{\footnote{{We first analyze the case where $C\in\left[r_{\min},r_{\max}\right)$, because it has the most complicated equilibrium analysis. We can apply a similar analysis approach in this section to the other cases.}}} and show the unique form of bidding strategy that constitutes an SBNE. We first introduce the following lemma (the proofs of all lemmas and theorems can be found in the appendix{{}}).
\begin{lemma}\label{lemma:rT} 
The following equation admits at least one solution $r$ in $\left(C,r_{\max}\right)$:
\begin{multline}
\sum_{n=1}^{K-1}{\binom{K-1}{n} \left(F\left(r\right)-F\left(C\right)\right)^n\left(1-F\left(r\right)\right)^{K-1-n}\frac{C-r}{n+1}} \\
+\left(1-F\left(r\right)\right)^{K-1}\left(C-\frac{K-1+{\eta^{\rm APO}}}{K}r\right)=0,\label{equ:rT}
\end{multline}
where $F\left(\cdot\right)$ is the CDF of random variable $r_k$, $k\in{\cal K}$. We denote the solutions $r$ in $\left(C,r_{\max}\right)$ as $r_1^t\left(C\right), r_2^t\left(C\right),\ldots, r_M^t\left(C\right)$, where $M=1,2,\ldots,$ is the number of solutions.
\end{lemma}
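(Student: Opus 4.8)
The plan is to establish existence by the Intermediate Value Theorem. Denote by $g(r)$ the left-hand side of (\ref{equ:rT}), regarded as a function of $r$ on $\left[C,r_{\max}\right]$ with $C$ fixed. Since $r_k$ has a strictly positive density $f$ on $\left[r_{\min},r_{\max}\right]$, its CDF $F$ is continuous (indeed continuously differentiable) there, and $g$ is a finite sum of products of powers of $F(r)$ and $1-F(r)$ with the affine factors $C-r$ and $C-\frac{K-1+{\eta^{\rm APO}}}{K}r$; hence $g$ is continuous on $\left[C,r_{\max}\right]$. It therefore suffices to show that $g$ takes strictly opposite signs at the two endpoints, after which a root in the open interval follows immediately.

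First I would evaluate $g$ at the lower endpoint $r=C$. Every summand of the first sum carries the factor $\left(F(C)-F(C)\right)^{n}=0$ for $n\ge 1$, so the entire sum vanishes and only the second term survives, giving $g(C)=\left(1-F(C)\right)^{K-1}\,C\,\frac{1-{\eta^{\rm APO}}}{K}$. Because $C<r_{\max}$ and $f>0$ force $F(C)<1$, and because ${\eta^{\rm APO}}\in\left(0,1\right)$, this quantity is strictly positive whenever $C>0$. Next I would evaluate $g$ at the upper endpoint by letting $r\uparrow r_{\max}$, which both sidesteps the $0^{0}$ convention and uses continuity: as $F(r)\to 1$, every factor $\left(1-F(r)\right)^{K-1-n}$ in the sum tends to $0$ except for $n=K-1$, whose exponent is $0$, and the second term carries $\left(1-F(r)\right)^{K-1}\to 0$ and drops out. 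The surviving contribution is $g(r_{\max})=\left(1-F(C)\right)^{K-1}\frac{C-r_{\max}}{K}$, which is strictly negative since $C<r_{\max}$.

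With $g(C)>0>g(r_{\max})$ and $g$ continuous, the Intermediate Value Theorem produces a zero of $g$ between $C$ and $r_{\max}$; the strictness of both endpoint signs guarantees that this zero lies in the open interval $\left(C,r_{\max}\right)$, which is the assertion of the lemma. Nothing in the argument forces uniqueness, consistent with the statement allowing several solutions $r_1^t(C),\ldots,r_M^t(C)$.

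The main obstacle is not the Intermediate Value Theorem step but the bookkeeping that makes the endpoint evaluations collapse cleanly: one must track precisely which powers of $F(r)$ and $1-F(r)$ survive at $r=C$ and at $r=r_{\max}$, and handle the $n=K-1$ boundary term via the one-sided limit. The more delicate point is the lower-endpoint sign, since $g(C)>0$ relies on $C>0$. In the degenerate subcase $C=0$ (possible only when $r_{\min}=0$) one checks directly that $g(r)$ is a sum of strictly negative terms for every $r\in\left(0,r_{\max}\right)$, so the existence claim should be read in the regime $C>0$; I would state this restriction explicitly rather than gloss over it.
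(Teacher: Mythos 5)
Your proof is correct and follows essentially the same route as the paper's: both define the left-hand side as a function of $r$, verify its continuity, evaluate it at the endpoints (where it collapses to $\left(1-F\left(C\right)\right)^{K-1}\frac{1-\eta^{\rm APO}}{K}C>0$ at $r=C$ and to $\left(1-F\left(C\right)\right)^{K-1}\frac{C-r_{\max}}{K}<0$ at $r=r_{\max}$, using $F\left(C\right)<1$, which follows from $f>0$ and $C<r_{\max}$), and conclude by the Intermediate Value Theorem. Your closing caveat about the degenerate case $C=0$ (possible only when $r_{\min}=0$) is in fact a refinement of the published argument: the paper asserts $H\left(C\right)>0$ without noting that this sign requires $C>0$, so making that restriction explicit patches a small oversight rather than introducing a gap.
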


Based on the definition of $r_1^t\left(C\right), r_2^t\left(C\right),\ldots, r_M^t\left(C\right)$ in Lemma \ref{lemma:rT}, we introduce the following theorem.
\begin{theorem}\label{theorem:equilibrium}
Consider an $r_T\left(C\right)\in\left(C,r_{\max}\right)$ that belongs to the set of $\left\{r_1^t\left(C\right), r_2^t\left(C\right),\ldots, r_M^t\left(C\right)\right\}$, then the following bidding strategy $b^*$ constitutes an SBNE:
\begin{align}
{b^*}\left(r_k,C\right)=\left\{\begin{array}{ll}
{{\rm any\!~value\!~in} \left[0,\!r_{\min}\right]\!,} & {\rm if~}{r_k=r_{\min},}\\
r_k, & {{\rm if~}{r_k\in\left(r_{\min},C\right]},}\\
C, & {{\rm if~}{ r_k\in\left(C,r_T\left(C\right)\right)},}\\
C{\rm~or~}``{\rm N}\textquotedblright, &  {\rm if~}{r_k=r_T\left(C\right),}\\
``{\rm N} \textquotedblright, & {{\rm if~} r_k\in\left(r_T\left(C\right),r_{\max}\right],}
\end{array} \right.\label{equ:equilibrium}
\end{align}
for all $k\in{\cal K}$.
\end{theorem}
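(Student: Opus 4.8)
The plan is to verify the equilibrium inequality (\ref{equ:defineEQ}) directly. I would fix an arbitrary APO $k$ with type $r_k$, assume every competitor $j\ne k$ uses $b^*(\cdot,C)$, and study the function $U(s)$ defined as the expected payoff $\mathbb{E}_{{\bm r}_{-k}}\{\Pi^{\rm APO}_k\mid r_k\}$ that appears in (\ref{equ:defineEQ}) when APO $k$ bids $s_k=s$; checking (\ref{equ:defineEQ}) then amounts to showing the prescribed bid maximizes $U$ over $\left[0,C\right]\cup\left\{``{\rm N}\textquotedblright\right\}$. Under $b^*$, a single competitor's bid has a known law: it bids its own type with the continuous law of $r_j$ on $[r_{\min},C]$, it places an atom of mass $F(r_T(C))-F(C)$ at $C$ (the types in $(C,r_T(C))$), and it bids $``{\rm N}\textquotedblright$ with probability $1-F(r_T(C))$. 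The statistic governing APO $k$'s outcome is $B\triangleq\min_{j\ne k}b_j$ (with $``{\rm N}\textquotedblright=+\infty$), whose distribution over the $K-1$ competitors follows from the above. I would then verify optimality interval by interval.

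For $r_k\in(r_{\min},C]$ I would invoke the classical second-price comparison. By (\ref{equ:APOpayoff}), if APO $k$ wins it is served at $r_{\rm pay}=\min\{C,B\}$, independent of its own bid, while if it loses it is served at $r_k$. Comparing $s=r_k$ with any other $s'\in[0,C]$, the two differ only when $B$ falls between $s'$ and $r_k$, and bidding $r_k$ wins precisely when $\min\{C,B\}>r_k$, so it is weakly optimal. The positive externality does not break this argument because the losing payoff is always $r_k$, regardless of which competitor wins. It remains to exclude the deviation to $``{\rm N}\textquotedblright$; since $U(``{\rm N}\textquotedblright)=r_k-(1-F(r_T(C)))^{K-1}\tfrac{1-\eta^{\rm APO}}{K}r_k<r_k\le U(r_k)$, this deviation is strictly worse.

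For $r_k>C$ I would argue in two steps. First, among bids in $[0,C]$, every winning outcome serves APO $k$ at $r_{\rm pay}=\min\{C,B\}\le C<r_k$, whereas losing yields $r_k$, so APO $k$ always prefers to lose; since raising its bid can only shrink the probability of being the lowest, $s=C$ maximizes $U$ over $[0,C]$. Second, I would compare $s=C$ with $s=``{\rm N}\textquotedblright$. The key computation is to evaluate $U(C)$ while correctly handling the atom at $C$ and the $1/|{\cal I}_{\min}|$ tie-splitting of (\ref{equ:APOpayoff}): conditioning on the number $n$ of competitors that also bid $C$ (with the remaining $K-1-n$ bidding $``{\rm N}\textquotedblright$ and none below $C$) gives $U(C)=r_k+\sum_{n=0}^{K-1}\binom{K-1}{n}(F(r_T(C))-F(C))^n(1-F(r_T(C)))^{K-1-n}\tfrac{C-r_k}{n+1}$. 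Forming $\Delta(r_k)\triangleq U(C)-U(``{\rm N}\textquotedblright)$ and merging the $n=0$ term with the $``{\rm N}\textquotedblright$ term, $\Delta(r_k)$ reduces exactly to the left-hand side of (\ref{equ:rT}) evaluated with the running type $r_k$ in place of $r$ but with the threshold frozen at $r_T(C)$.

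The decisive observation is that, with $F(r_T(C))$ held fixed, $\Delta(r_k)$ is affine in $r_k$ with strictly negative slope (every coefficient of $r_k$ is negative because $\eta^{\rm APO}\in(0,1)$ and $F(r_T(C))>F(C)$). Since $r_T(C)$ is a root of (\ref{equ:rT}) supplied by Lemma \ref{lemma:rT}, the indifference $\Delta(r_T(C))=0$ holds, so monotonicity yields $\Delta(r_k)>0$ on $(C,r_T(C))$ (bid $C$ strictly optimal) and $\Delta(r_k)<0$ on $(r_T(C),r_{\max}]$ (where $``{\rm N}\textquotedblright$ dominates $C$, hence all of $[0,C]$), with both optimal at $r_k=r_T(C)$; this holds for any root in $\{r_1^t(C),\ldots,r_M^t(C)\}$. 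The boundary type $r_k=r_{\min}$ is a probability-zero event for the competitors, so any bid in $[0,r_{\min}]$ leaves APO $k$ the almost-sure lowest bidder with identical payoff $\mathbb{E}[\min\{C,B\}]$, explaining the indeterminacy there. I expect the main obstacle to be the bookkeeping in $U(C)$ — getting the atom at $C$ and the tie-splitting right — and then recognizing that the resulting $\Delta(r_k)$ coincides with (\ref{equ:rT}) and is monotone in $r_k$; once these are in hand, the threshold structure and all deviation checks follow.
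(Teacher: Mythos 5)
Your proposal is correct and follows essentially the same route as the paper's proof: fix one APO, assume the others play $b^*$, use the classical second-price argument (plus ruling out $``{\rm N}\textquotedblright$) on $\left(r_{\min},C\right]$, show bid $C$ dominates all lower bids for types above $C$, and then compare bid $C$ against $``{\rm N}\textquotedblright$ via the expected-payoff difference, which is exactly the paper's $\Pi_a-\Pi_b$ in (\ref{equ:diffpiapib}) — the left-hand side of (\ref{equ:rT}) with the threshold frozen at $r_T\left(C\right)$ — strictly decreasing (affine) in $r_k$ and vanishing at $r_T\left(C\right)$, yielding the same sign pattern across $\left(C,r_T\left(C\right)\right)$, $r_T\left(C\right)$, and $\left(r_T\left(C\right),r_{\max}\right]$. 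The only differences are cosmetic (merging the $n=0$ term rather than listing it separately, and grouping the case analysis into two regions instead of four parts), so no further changes are needed.
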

We illustrate the structure of strategy $b^*$ in Fig. \ref{fig:S1}, in which we notice that

\vspace{-3mm}
\begin{figure}[h]
  \centering
  \includegraphics[scale=0.4]{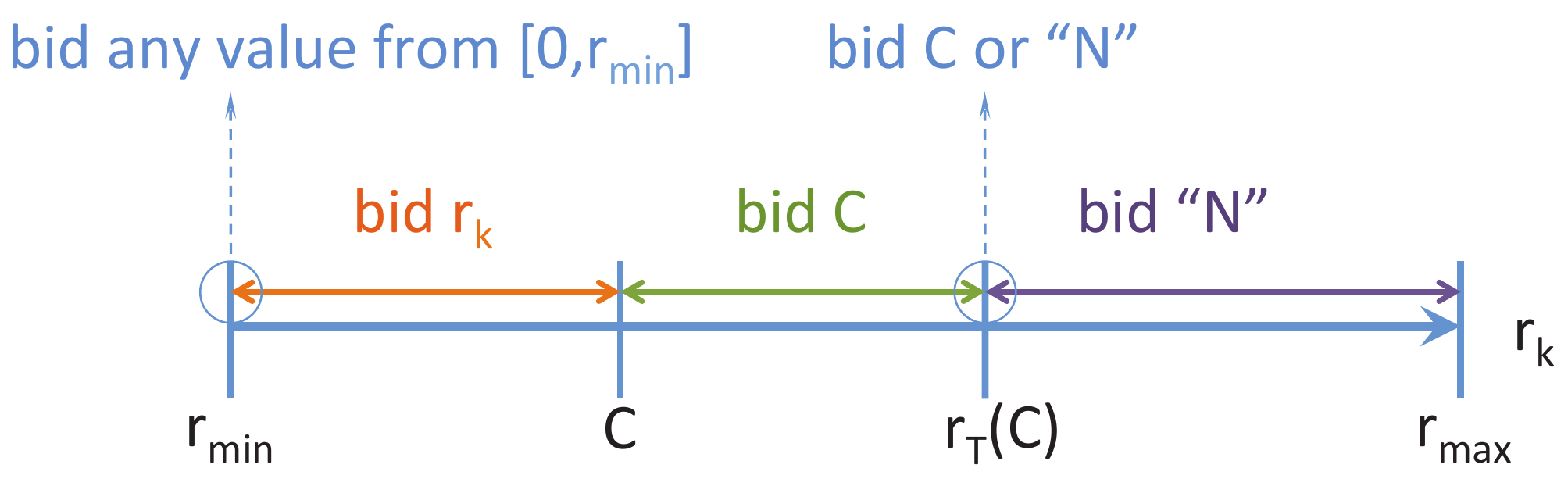}
  \vspace{-2mm}
  \caption{Bidding Strategy Structure at SBNE When $C\in\left[r_{\min},r_{\max}\right)$.}
  \label{fig:S1}
  \vspace{-4mm}
\end{figure}

(a) For an APO $k$ with type $r_k\in\left(r_{\min},C\right]$, it bids $r_k$. In other words, APO $k$ requests the LTE provider to serve APO $k$'s users with at least the rate that APO $k$ can achieve by exclusively occupying channel $k$;

(b) For an APO $k$ with type $r_k\in\left(C,r_T\left(C\right)\right)$, it bids $C$. Since $C<r_k$, the data rate APO $k$ requests from the LTE provider is smaller than the rate that APO $k$ achieves by exclusively occupying channel $k$. Recall that the feasible bid should be from $\left[0,C\right]\cup \left\{``{\rm N} \textquotedblright\right\}$. If APO $k$ bids $``{\rm N} \textquotedblright$, there is a chance that all the other APOs also bid $``{\rm N} \textquotedblright$, which makes the LTE provider work in the \emph{competition mode} and leads to a payoff of $\frac{K-1+{\eta^{\rm APO}}}{K}r_k$ to APO $k$ based on (\ref{equ:APOpayoff}). In order to avoid such a situation, APO $k$ would bid $C$, and ensure that its payoff is at least $C$;{\footnote{Specifically, based on (\ref{equ:rpay}), if APO $k$ bids $C$ and wins the auction, its payoff will be $C$; if APO $k$ bids $C$ but loses the auction, its payoff will be $r_k>C$.}}

(c) For an APO $k$ with type $r_k\in\left(r_T\left(C\right),r_{\max}\right]$, it bids $``{\rm N} \textquotedblright$. Similar as case (b), there is a chance that all the other APOs also bid $``{\rm N} \textquotedblright$, and APO $k$ obtains a payoff of $\frac{K-1+{\eta^{\rm APO}}}{K}r_k$. However, with $r_k\in\left(r_T\left(C\right),r_{\max}\right]$, the value $\frac{K-1+{\eta^{\rm APO}}}{K}r_k$ is already large enough so that there is no need for APO $k$ to lower its bid from $``{\rm N} \textquotedblright$ to any value from $\left[0,C\right]$.

There are two special points in (\ref{equ:equilibrium}): 

(d) For an APO $k$ with $r_k=r_{\min}$, it has the same payoff if it bids any value from $\left[0,r_{\min}\right]$. This is because with probability one, APO $k$ wins the auction.{\footnote{Notice that for any APO $j\ne k, j\in{\cal K}$, the probability that $r_j=r_{\min}$ is zero based on the continuous distribution of $r_j$. In other words, with probability one, $r_j$ is from the interval $\left(r_{\min},r_{\max}\right]$. Based on (\ref{equ:equilibrium}), APO $j\ne k$ bids from $\left(r_{\min},C\right]\cup \left\{``{\rm N} \textquotedblright\right\}$ and APO $k$ wins the auction.}} From (\ref{equ:rpay}) and (\ref{equ:APOpayoff}), APO $k$'s payoff is $\min\left\{C,b_1,\ldots,b_{k-1},b_{k+1},\ldots,b_{K}\right\}$, which does not depend on APO $k$'s bid $b_k$ {and is always no smaller than $r_{\min}$};

(e) For an APO $k$ with $r_k=r_T\left(C\right)$, it has the same expected payoff under bids $C$ and $``{\rm N} \textquotedblright$.

It is easy to show that $b^*\left(r_k,C\right)$ in (\ref{equ:equilibrium}) is not a dominant strategy for the APOs. For example, if APO $k$'s type $r_k\in\left(C,r_T\left(C\right)\right)$ and $\min_{j\in{\cal K},j\ne k}b_j=C$, bidding $``{\rm N} \textquotedblright$ generates a larger payoff to APO $k$ than bidding $b^*\left(r_k,C\right)=C$. This result is different from that of the conventional second-price auction, where bidding the truthful valuation constitutes an equilibrium, and is also the weakly dominant strategy for the bidders.

Notice that equation (\ref{equ:rT}) may admit multiple solutions, \emph{i.e.}, $M>1$. Based on Theorem \ref{theorem:equilibrium}, each solution $r_m^t$, $m=1,2,\ldots,M$, corresponds to a strategy $b^*$ defined in (\ref{equ:equilibrium}).

In the following theorem, we show the unique form of bidding strategy under an SBNE.

\begin{theorem}\label{theorem:combine:unique}
The strategy function in (\ref{equ:equilibrium}) is the unique form of bidding strategy that constitutes an SBNE.
\end{theorem}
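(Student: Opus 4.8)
The plan is to prove uniqueness by showing that \emph{every} strategy function $\tilde b(\cdot,C)$ that constitutes an SBNE must coincide with (\ref{equ:equilibrium}) up to the measure-zero ambiguities already present in the statement (the value at $r_{\min}$, the value at the threshold type, and the selection of a root of (\ref{equ:rT})). Since Theorem \ref{theorem:equilibrium} already exhibits (\ref{equ:equilibrium}) as an SBNE, it suffices to establish a chain of necessary conditions that pin down any SBNE. I would organize this as four lemmas: (i) monotonicity of the equilibrium bid in the type, (ii) a positive measure of types refusing to cooperate, (iii) a resulting three-region structure with two cut points $r_B\le r_A$, and (iv) identification $r_B=C$ and $r_A=r_T(C)$.

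First I would prove \textbf{monotonicity}: ordering the bid space $[0,C]\cup\{``{\rm N}\textquotedblright\}$ by placing $``{\rm N}\textquotedblright$ above every numeric bid, I would show $r_L<r_H$ implies $\tilde b(r_L,C)\preceq\tilde b(r_H,C)$ by summing the two incentive constraints (type $r_L$ weakly prefers its own bid to $r_H$'s equilibrium bid, and conversely) and invoking a single-crossing property of $\Pi^{\rm APO}_k$ in (type, bid). The nonstandard feature relative to a textbook Vickrey argument is the allocative externality in (\ref{equ:APOpayoff}): a losing APO earns $r_k$ when a rival cooperates but only $\tfrac{K-1+{\eta^{\rm APO}}}{K}r_k$ when all refuse, so the interim payoff from a fixed bid is a type-dependent convex combination whose sensitivity I must track carefully. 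Next I would argue that a positive measure of types bid $``{\rm N}\textquotedblright$: otherwise a top type could profitably deviate to $``{\rm N}\textquotedblright$, since for $r$ near $r_{\max}$ the guaranteed downside of cooperating ($r_{\rm pay}\le C$) is dominated by the externality payoff. Monotonicity together with feasibility ($b_k\in[0,C]\cup\{``{\rm N}\textquotedblright\}$) then forces the three-region shape, with cut points $r_B\le r_A$: bid-your-type on $(r_{\min},r_B)$, bid $C$ on $(r_B,r_A)$, and $``{\rm N}\textquotedblright$ on $(r_A,r_{\max}]$.

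The crux is \textbf{pinning the thresholds}. For $r<C$, truthful bidding $r$ is optimal within $[0,C]$ by the standard second-price logic (winning pays the second-lowest bid, which is $\ge r$, and losing yields $r$ regardless of the externality), which forces $r_B=C$. The marginal type $r_A$ must be indifferent between bidding $C$ and refusing. Computing this indifference is the key step: I would condition on the event that no rival bids strictly below $C$ and split by the number $n$ of rivals tied at $C$, which produces the binomial weights $\binom{K-1}{n}(F(r)-F(C))^n(1-F(r))^{K-1-n}$. In each configuration with $n\ge1$, bidding $C$ yields $\tfrac{1}{n+1}C+\tfrac{n}{n+1}r$ (the tie of size $n+1$ is broken uniformly, paying the second price $C$) while refusing yields $r$, a gain of $\tfrac{C-r}{n+1}$; in the $n=0$ configuration, bidding $C$ yields $C$ (sole cooperator) while refusing triggers competition mode with payoff $\tfrac{K-1+{\eta^{\rm APO}}}{K}r$. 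Setting the probability-weighted total gain to zero reproduces (\ref{equ:rT}) exactly, so $r_A=r_T(C)\in\{r_1^t(C),\ldots,r_M^t(C)\}$, and Lemma \ref{lemma:rT} guarantees such a root lies in $(C,r_{\max})$. This leaves no freedom beyond the selection of a root and the stated measure-zero choices, giving uniqueness of the form.

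The main obstacle is the monotonicity lemma, because the externality couples each APO's interim payoff to the entire distribution of rivals' bids, so the clean single-crossing of the conventional second-price auction is not automatic; I would need to verify that the cross-differences of expected payoffs keep a consistent sign across all three candidate bid types and across all tie patterns. A secondary difficulty is making the threshold-indifference computation fully rigorous while handling ties at $C$ and the boundary behavior at $r_{\min}$ and $r_{\max}$, in particular confirming that the numeric region genuinely lands in $[0,C]$ and that the $``{\rm N}\textquotedblright$ region is nonempty, so that the case split underlying (\ref{equ:rT}) is exhaustive.
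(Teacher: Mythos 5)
Your proposal follows essentially the same route as the paper's own proof: the paper establishes exactly your chain of necessary conditions---monotonicity of the equilibrium bid (proved, as you suggest, by summing the two incentive constraints), a positive measure of types bidding $``{\rm N}\textquotedblright$, the resulting three-region structure with cut points $r_B\le r_A$, and the identification $r_B=C$ and $r_A=r_T\left(C\right)$ via the same tie-splitting indifference computation that reproduces (\ref{equ:rT})---and then invokes Theorem \ref{theorem:equilibrium} for sufficiency. The difficulties you flag (single-crossing under the externality, ruling out pooling at bids other than $C$, and forcing truthful bidding below $C$) are precisely what the paper's intermediate lemmas resolve, so your plan is sound and matches the paper's argument.
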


The sketch of the proof is as follows: first, we show the necessary conditions that a bidding strategy needs to satisfy to constitute an SBNE; second, we show that the function in (\ref{equ:equilibrium}) is the only function that satisfies all these conditions. We leave the detailed proof in Appendices \ref{appendix:sec:preliminary} and \ref{appendix:sec:theorem2}. 

\subsection{APOs' Equilibrium When $C\in\left[0,\frac{K-1+{\eta^{\rm APO}}}{K}r_{\min}\right]$}\label{subsec:lowC}
We assume that the reserve rate $C$ is given from interval $\left[0,\frac{K-1+{\eta^{\rm APO}}}{K}r_{\min}\right]$, and summarize the form of the bidding strategy at the SBNE in the following theorem.

\begin{theorem}\label{theorem:lowC}
When $C\in\left[0,\frac{K-1+{\eta^{\rm APO}}}{K}r_{\min}\right)$, there is a unique SBNE, where ${b^*}\left(r_k,C\right)=``{\rm N}\textquotedblright$, $k\in{\cal K}$, for all $r_k\in\left[r_{\min},r_{\max}\right]$; when $C=\frac{K-1+{\eta^{\rm APO}}}{K}r_{\min}$, 
a strategy function constitutes an SBNE if and only if it is in the following form:
\begin{align}
{b^*}\left(r_k,C\right)\!=\! \left\{\begin{array}{ll}
{\rm \!any\!~value\!~in} \left[0,\!C\right]{\rm{~\!or~\!}}``{\rm N} \textquotedblright\!, & {{\rm if~}{ r_k=r_{\min}},}\\
``{\rm N}\textquotedblright, & {{\rm if~} r_k\!\in\left(r_{\min},\!r_{\max}\right].}
\end{array} \right.\label{equ:lowCbid}
\end{align}
\end{theorem}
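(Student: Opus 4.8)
The plan is to prove both parts of Theorem~\ref{theorem:lowC} by establishing that, under such a small reserve rate, submitting $``{\rm N}\textquotedblright$ weakly dominates every feasible bid in $[0,C]$, with the domination becoming strict precisely on the event that all rival APOs also bid $``{\rm N}\textquotedblright$. The starting observation is that $\frac{K-1+\eta^{\rm APO}}{K}<1$ because $\eta^{\rm APO}\in(0,1)$, so in both parts of the theorem we have $C\le\frac{K-1+\eta^{\rm APO}}{K}r_{\min}<r_{\min}\le r_k$ for every type $r_k$. Hence any bid in $[0,C]$ lies strictly below the lowest possible valuation, which means that whenever APO $k$ actually wins the auction the allocated rate $r_{\rm pay}\le C$ is strictly smaller than the rate $r_k$ it could secure on its own. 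Winning is therefore never attractive, and the only real comparison is between the two ways of ``losing'' (competition versus letting a rival win) and the payoff $C$ obtained by winning.

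Next I would carry out the key pointwise comparison. Fix APO $k$ with type $r_k$ and an arbitrary realization of the other APOs' bids, and let $m\triangleq\min_{j\ne k}b_j$ denote the smallest rival bid. I distinguish two cases. In Case~A all rivals bid $``{\rm N}\textquotedblright$: bidding $``{\rm N}\textquotedblright$ yields the competition payoff $\frac{K-1+\eta^{\rm APO}}{K}r_k$ by (\ref{equ:APOpayoff}), whereas any $b\in[0,C]$ makes APO $k$ the sole winner with $r_{\rm pay}=C$ by (\ref{equ:rpay}), giving payoff exactly $C$; since $C<\frac{K-1+\eta^{\rm APO}}{K}r_{\min}\le\frac{K-1+\eta^{\rm APO}}{K}r_k$ in the strict case (with equality only at $r_k=r_{\min}$ in the boundary case), $``{\rm N}\textquotedblright$ is strictly better. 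In Case~B at least one rival bids in $[0,C]$, so $m\in[0,C]$: bidding $``{\rm N}\textquotedblright$ makes APO $k$ a loser with payoff $r_k$, while a bid $b\in[0,C]$ yields $r_k$ if $b>m$, and a convex combination of $r_{\rm pay}\le C$ and $r_k$ (hence at most $r_k$, and strictly less when $b\le m$) otherwise. Thus $``{\rm N}\textquotedblright$ weakly dominates every $b\in[0,C]$ for every realization, strictly so on Case~A.

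I would then convert this domination into the equilibrium statements. For existence, if all rivals bid $``{\rm N}\textquotedblright$ then only Case~A is relevant, and the comparison above shows $``{\rm N}\textquotedblright$ is a best response for every $r_k$ in the strict case (and for every $r_k>r_{\min}$, with indifference at $r_k=r_{\min}$, in the boundary case), verifying that the profiles in Theorem~\ref{theorem:lowC} and in (\ref{equ:lowCbid}) are SBNE. For uniqueness, let $q$ be the probability that a single APO following a candidate SBNE bids $``{\rm N}\textquotedblright$. If $q>0$, then Case~A occurs with probability $q^{K-1}>0$, so the strict part of the domination forces $``{\rm N}\textquotedblright$ to be the unique best response of every type $r_k$ (every $r_k>r_{\min}$ in the boundary case), whence $q=1$; this pins down the stated strategy. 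It remains to exclude $q=0$: if no type ever bids $``{\rm N}\textquotedblright$, then the minimum bid is always attained by some APO, so by symmetry each APO wins or ties with probability at least $1/K>0$; but a winning or tying APO earns at most $r_{\rm pay}\le C<r_k$, whereas deviating to $``{\rm N}\textquotedblright$ (which keeps us in Case~B) secures $r_k$, a profitable deviation and hence a contradiction.

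Finally, the boundary case $C=\frac{K-1+\eta^{\rm APO}}{K}r_{\min}$ needs only a minor adjustment: the Case~A comparison becomes an equality exactly at $r_k=r_{\min}$, so type $r_{\min}$ is indifferent among all of $[0,C]\cup\left\{``{\rm N}\textquotedblright\right\}$ while every higher type still strictly prefers $``{\rm N}\textquotedblright$; since $\{r_{\min}\}$ has measure zero it does not affect the rivals' distribution of bids, and the same ``$q>0\Rightarrow q=1$'' and ``$q\ne0$'' arguments yield exactly the if-and-only-if characterization (\ref{equ:lowCbid}). I expect the main obstacle to be the $q=0$ sub-case of uniqueness: ruling it out cleanly without invoking a separately proved monotonicity lemma requires the pigeonhole observation that the minimum bid is always realized, so that some type necessarily wins with positive probability and would strictly gain by switching to $``{\rm N}\textquotedblright$.
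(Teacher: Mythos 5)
Your proof is correct, and it takes a genuinely different route from the paper's. The paper gives no self-contained argument for this theorem: it proves the case $C\in\left[r_{\min},r_{\max}\right)$ in detail (existence by direct verification, uniqueness through a chain of structural lemmas --- monotonicity of the equilibrium bid in the type, the fact that a positive measure of types bids $``{\rm N}\textquotedblright$, and the three-region decomposition with thresholds $r_B=C$ and $r_A=r_T\left(C\right)$) and then declares that the low-$C$ and high-$C$ theorems are degenerate special cases of that machinery, ``details omitted.'' You instead exploit what is special about the regime $C\le\frac{K-1+{\eta^{\rm APO}}}{K}r_{\min}<r_{\min}$: every feasible bid lies below every possible valuation, so winning never pays more than $C<r_k$. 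This yields pointwise weak dominance of $``{\rm N}\textquotedblright$ over every bid in $\left[0,C\right]$, strict exactly on the event that all rivals bid $``{\rm N}\textquotedblright$, and uniqueness then reduces to a clean dichotomy on the probability $q$ that a type bids $``{\rm N}\textquotedblright$: $q>0$ forces $q=1$ via the strict part of the dominance, and $q=0$ is excluded by the pigeonhole observation that someone must then win or tie with positive probability, earning strictly less than $r_k$, while deviating to $``{\rm N}\textquotedblright$ would secure $r_k$. Your argument is more elementary and self-contained --- it needs neither the monotonicity lemma nor the general region structure --- whereas the paper's approach buys uniformity, one lemma chain covering all four reserve-rate intervals; for this particular interval your proof is arguably more rigorous than what the paper actually supplies.

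Two small points to tighten, neither of which affects validity. First, in the $q=0$ step the phrase ``a winning or tying APO earns at most $r_{\rm pay}\le C$'' is inaccurate for ties: a tying APO earns the convex combination $\frac{1}{\left|{\cal I}_{\min}\right|}r_{\rm pay}+\frac{\left|{\cal I}_{\min}\right|-1}{\left|{\cal I}_{\min}\right|}r_k$, which can exceed $C$; what you need, and what your own Case~B analysis already establishes, is only that this payoff is strictly below $r_k$. Second, the bound $P\left(k\in{\cal I}_{\min}\right)\ge 1/K$ is an ex ante probability averaged over APO $k$'s own type, while the SBNE condition in (\ref{equ:defineEQ}) is type-by-type; you should add the one-line remark that a positive ex ante probability forces $P\left(k\in{\cal I}_{\min}\mid r_k\right)>0$ on a positive measure set of types, and every such type then has the strictly profitable deviation to $``{\rm N}\textquotedblright$, which is the contradiction you want.
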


When $C\in\left[0,\frac{K-1+{\eta^{\rm APO}}}{K}r_{\min}\right]$, the LTE provider only wants to allocate a limited data rate to the winning APO's users. In this case, the APOs bid $``{\rm N} \textquotedblright$ with probability one.{\footnote{In Theorem \ref{theorem:lowC}, when $C=\frac{K-1+{\eta^{\rm APO}}}{K}r_{\min}$, the APO with type $r_{\min}$ can bid any value. However, the probability for an APO to have the type $r_{\min}$ is zero due to the continuous distribution of $r$.}}

\vspace{-0.2cm}
\subsection{APOs' Equilibrium When $C\in\left(\frac{K-1+{\eta^{\rm APO}}}{K}r_{\min},r_{\min}\right)$}\label{subsec:stageII:3}
We assume that the reserve rate $C$ is given from interval $\left(\frac{K-1+{\eta^{\rm APO}}}{K}r_{\min},r_{\min}\right)$, and show that the bidding strategy that constitutes an SBNE has a unique form. First, we introduce the following lemma.

\begin{lemma}\label{lemma:rX} 
The following equation admits at least one solution $r$ in $\left(r_{\min},r_{\max}\right)$:
\begin{multline}
\sum_{n=1}^{K-1}{\binom{K-1}{n} F^n\left(r\right)\left(1-F\left(r\right)\right)^{K-1-n}\frac{C-r}{n+1}}\\
+\left(1-F\left(r\right)\right)^{K-1}\left(C-\frac{K-1+{\eta^{\rm APO}}}{K}r\right)=0,\label{equ:rX}
\end{multline}
where $F\left(\cdot\right)$ is the CDF of random variable $r_k$, $k\in{\cal K}$. We denote the solutions $r$ in $\left(r_{\min},r_{\max}\right)$ as $r_1^x\left(C\right),r_2^x\left(C\right),\ldots,r_L^x\left(C\right)$, where $L=1,2,\ldots,$ is the number of solutions.
\end{lemma}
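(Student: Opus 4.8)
The plan is to establish existence via the Intermediate Value Theorem, exactly mirroring the argument behind Lemma \ref{lemma:rT}. First I would define the auxiliary function $g:\left[r_{\min},r_{\max}\right]\to\mathbb{R}$ to be the left-hand side of (\ref{equ:rX}), namely
\begin{align}
g\left(r\right)\triangleq \sum_{n=1}^{K-1}\binom{K-1}{n}F^n\left(r\right)\left(1-F\left(r\right)\right)^{K-1-n}\frac{C-r}{n+1}+\left(1-F\left(r\right)\right)^{K-1}\left(C-\frac{K-1+{\eta^{\rm APO}}}{K}r\right).
\end{align}
Since $r_k$ is a continuous random variable, its CDF $F\left(\cdot\right)$ is continuous on $\left[r_{\min},r_{\max}\right]$, so $g$ is continuous there, being a finite sum of products of powers of $F$, powers of $1-F$, and affine functions of $r$. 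It then suffices to show that $g$ takes values of opposite sign at the two endpoints.

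The key step is to evaluate the endpoints. At $r=r_{\min}$ we have $F\left(r_{\min}\right)=0$, so every term of the summation carries a factor $F^n\left(r\right)$ with $n\ge1$ and therefore vanishes, while $\left(1-F\left(r_{\min}\right)\right)^{K-1}=1$; hence $g\left(r_{\min}\right)=C-\frac{K-1+{\eta^{\rm APO}}}{K}r_{\min}$, which is strictly positive because the lemma assumes $C>\frac{K-1+{\eta^{\rm APO}}}{K}r_{\min}$. At $r=r_{\max}$ we have $F\left(r_{\max}\right)=1$, so the factor $\left(1-F\left(r_{\max}\right)\right)^{K-1}=0$ (using $K\ge2$) eliminates the second term, and within the summation every factor $\left(1-F\left(r\right)\right)^{K-1-n}$ vanishes except for the index $n=K-1$, whose exponent is zero; the surviving $n=K-1$ term equals $\frac{C-r_{\max}}{K}$. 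Thus $g\left(r_{\max}\right)=\frac{C-r_{\max}}{K}<0$, since $C<r_{\min}\le r_{\max}$.

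Finally, because $g\left(r_{\min}\right)>0>g\left(r_{\max}\right)$ and $g$ is continuous, the Intermediate Value Theorem yields at least one $r$ with $g\left(r\right)=0$; as $g$ is nonzero at both endpoints, this root lies strictly inside $\left(r_{\min},r_{\max}\right)$, which is the claim. I do not expect any deep obstacle here: the argument is a routine sign-change application of the Intermediate Value Theorem, and the only point requiring care is the boundary bookkeeping of the binomial sum at $r=r_{\max}$, where one must track the convention $0^0=1$ for the $n=K-1$ term to see that a single term survives. I would also note that the passage from the $\left(F\left(r\right)-F\left(C\right)\right)^n$ appearing in (\ref{equ:rT}) to the $F^n\left(r\right)$ appearing in (\ref{equ:rX}) is simply the specialization $F\left(C\right)=0$, valid here because $C<r_{\min}$ places $C$ below the support of $r_k$.
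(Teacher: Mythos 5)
Your proof is correct and follows essentially the same route as the paper's own argument: define the left-hand side of (\ref{equ:rX}) as a continuous function, verify that it equals $C-\frac{K-1+{\eta^{\rm APO}}}{K}r_{\min}>0$ at $r_{\min}$ and $\frac{C-r_{\max}}{K}<0$ at $r_{\max}$, and conclude by the Intermediate Value Theorem. Your write-up is in fact more detailed than the paper's (which asserts the endpoint values without the boundary bookkeeping you supply), but the approach and the key computations are identical.
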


Based on the definition of $r_1^x\left(C\right),r_2^x\left(C\right),\ldots,r_L^x\left(C\right)$ in Lemma \ref{lemma:rX}, we introduce the following theorem.

\begin{theorem}\label{theorem:middleC}
When $C\in\left(\frac{K-1+{\eta^{\rm APO}}}{K}r_{\min},r_{\min}\right)$, consider an $r_X\left(C\right)\in\left(r_{\min},r_{\max}\right)$ that belongs to the set of $\left\{r_1^x\left(C\right),r_2^x\left(C\right),\ldots,r_L^x\left(C\right)\right\}$, then the following bidding strategy $b^*$ constitutes an SBNE:
\begin{align}
{b^*}\left(r_k,C\right)=\left\{\begin{array}{ll}
C, & {{\rm if~}{ r_k\in\left[r_{\min},r_X\left(C\right)\right)},}\\
C{\rm~or~}``{\rm N} \textquotedblright, &  {\rm if~}{r_k=r_X\left(C\right),}\\
``{\rm N} \textquotedblright, & {{\rm if~} r_k\in\left(r_X\left(C\right),r_{\max}\right],}
\end{array} \right.\label{equ:simplyequilibrium}
\end{align}
where $k\in{\cal K}$. Furthermore, such a bidding strategy $b^*$ is the unique form of bidding strategy that constitutes an SBNE.

\end{theorem}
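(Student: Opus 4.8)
The plan is to establish the two directions separately: first that the strategy in (\ref{equ:simplyequilibrium}) constitutes an SBNE (sufficiency), and then that every SBNE must take this two‑region threshold form (uniqueness). The crucial simplification relative to Theorem \ref{theorem:equilibrium} is that here $C<r_{\min}$, so \emph{every} type satisfies $r_k>C$; consequently the ``bid your own type'' region present in (\ref{equ:equilibrium}) collapses, and only the two regions ``bid $C$'' and ``bid $``{\rm N}\textquotedblright$'' survive. I therefore expect this proof to be a (simpler) specialization of the argument used for the case $C\in[r_{\min},r_{\max})$.

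For sufficiency, I would fix the candidate strategy for the other $K-1$ APOs and let $p\triangleq F\left(r_X\left(C\right)\right)$ denote the probability that a given rival bids $C$. For APO $k$ with type $r_k$ I would compare its three relevant options. (i) Any bid $s\in[0,C)$ is strictly below every rival bid (which lies in $\{C\}\cup\{``{\rm N}\textquotedblright\}$), so APO $k$ wins as the unique minimum and, by the second-price rule, pays $r_{\rm pay}=C$, giving payoff exactly $C$. (ii) Bidding $C$ ties APO $k$ with the $n$ rivals (distributed $\mathrm{Binomial}(K-1,p)$) that also bid $C$, yielding expected payoff $E_C$. (iii) Bidding $``{\rm N}\textquotedblright$ yields $E_N$, where APO $k$ obtains $r_k$ unless all rivals also bid $``{\rm N}\textquotedblright$ (probability $\left(1-p\right)^{K-1}$), in which case it obtains $\frac{K-1+\eta^{\rm APO}}{K}r_k$. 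A direct computation gives
\begin{align}
E_C-E_N=\sum_{n=1}^{K-1}\binom{K-1}{n}p^n\left(1-p\right)^{K-1-n}\frac{C-r_k}{n+1}+\left(1-p\right)^{K-1}\left(C-\frac{K-1+\eta^{\rm APO}}{K}r_k\right).\nonumber
\end{align}
Since $p=F\left(r_X\left(C\right)\right)$, the right-hand side is exactly the left-hand side of (\ref{equ:rX}) evaluated at $r=r_X\left(C\right)$ when $r_k=r_X\left(C\right)$, so it vanishes there; moreover it is strictly decreasing in $r_k$ (its $r_k$-coefficient is manifestly negative). Hence $E_C>E_N$ for $r_k<r_X\left(C\right)$ and $E_C<E_N$ for $r_k>r_X\left(C\right)$. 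Finally, $E_C-C=\left(r_k-C\right)\big(1-\left(1-p\right)^{K-1}-\sum_{n=1}^{K-1}\binom{K-1}{n}p^n\left(1-p\right)^{K-1-n}\tfrac{1}{n+1}\big)\ge 0$ for all $r_k>C$, so bidding $C$ also dominates every $s\in[0,C)$. Combining, the prescribed bids are best responses and (\ref{equ:simplyequilibrium}) is an SBNE.

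For uniqueness I would reuse the chain of necessary conditions developed for Theorem \ref{theorem:combine:unique}, adapted to $C<r_{\min}$: (a) monotonicity of any equilibrium strategy in the type (with the convention that $``{\rm N}\textquotedblright$ dominates every value in $[0,C]$), obtained by the standard revealed-preference/single-crossing argument on the affine-in-$r_k$ payoffs; (b) a positive measure of types must bid $``{\rm N}\textquotedblright$, since otherwise rivals bid in $[0,C]$ almost surely, and a near-$r_{\max}$ type could deviate to $``{\rm N}\textquotedblright$, lose with probability one, and secure $r_k$, which strictly exceeds any payoff attainable by a bid in $[0,C]$; (c) because no type lies in $[r_{\min},C]$, every non-$``{\rm N}\textquotedblright$ type must bid exactly $C$, as bidding $C$ strictly dominates any $s<C$ when $r_k>C$ and $p>0$ (and $p=F\left(r_X\left(C\right)\right)>0$ since Lemma \ref{lemma:rX} places $r_X\left(C\right)$ in $\left(r_{\min},r_{\max}\right)$). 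Together these force a single cutoff $r_X$ with types below bidding $C$ and types above bidding $``{\rm N}\textquotedblright$; indifference at the cutoff reduces precisely to (\ref{equ:rX}), so $r_X\in\{r_1^x\left(C\right),\ldots,r_L^x\left(C\right)\}$, establishing the unique form.

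The main obstacle is the uniqueness direction, specifically establishing monotonicity and thereby ruling out non-threshold (or effectively mixed) equilibrium behavior; the sufficiency direction is a finite computation anchored at the root of (\ref{equ:rX}). I would also emphasize that ``unique form'' refers to the functional shape (a two-region $C$/$``{\rm N}\textquotedblright$ threshold rule), not a unique numerical cutoff, since (\ref{equ:rX}) may admit several roots $r_X\left(C\right)$, each generating a distinct SBNE.
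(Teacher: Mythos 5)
Your proposal is correct and takes essentially the same route as the paper: the paper's own proof of this theorem merely states that it is a special case of the arguments for Theorems \ref{theorem:equilibrium} and \ref{theorem:combine:unique} (sufficiency via the binomial payoff comparison anchored at a root of (\ref{equ:rX}), uniqueness via the monotonicity, positive-measure-of-$``{\rm N}\textquotedblright$, and threshold lemmas adapted to $C<r_{\min}$) with the details omitted, and your proposal is precisely that specialization written out. Your closing observation that ``unique form'' refers to the two-region shape rather than a unique numerical cutoff also matches the paper's own reading of the statement.
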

The bidding strategy in (\ref{equ:simplyequilibrium}) is similar to that in (\ref{equ:equilibrium}), except that here it only has two regions instead of three regions. Specifically, here there are no APOs that bid their types $r_k$. This is because here the reserve rate $C$ is smaller than $r_{\min}$, hence bidding any type $r_k\in\left[r_{\min},r_{\max}\right]$ is not feasible. We illustrate the structure of strategy function $b^*$ in Fig. \ref{fig:S2}.

\begin{figure}[h]
  \centering
  \includegraphics[scale=0.4]{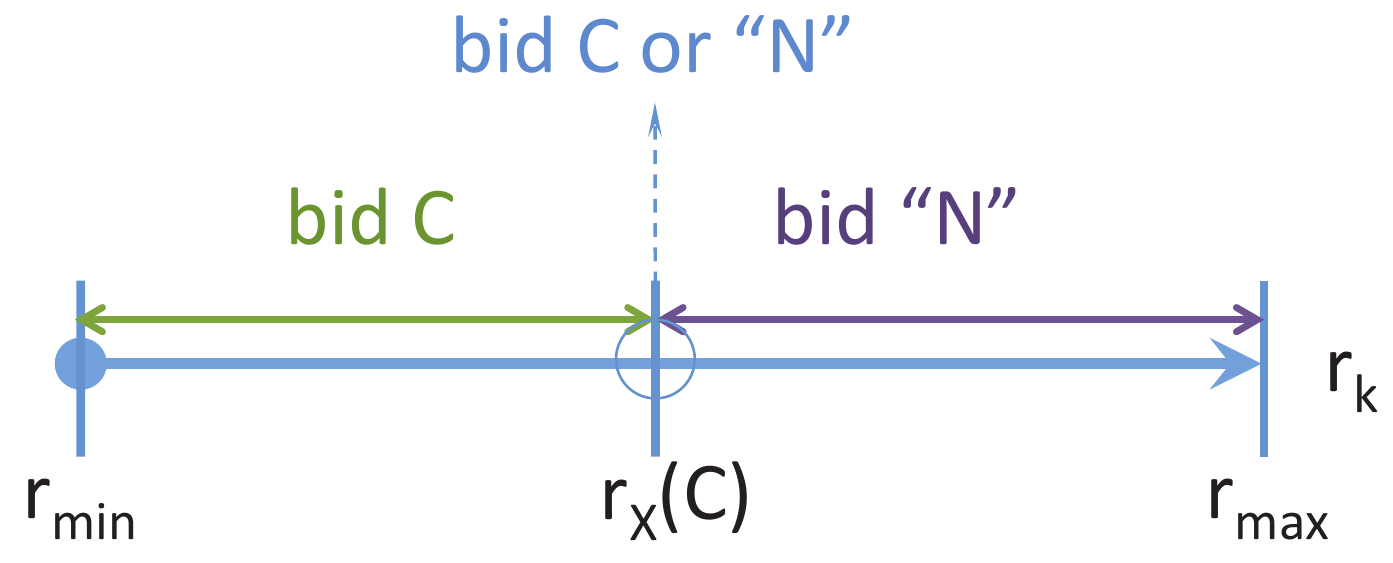}
  \caption{Bidding Strategy Structure at SBNE When $C\in\left(\frac{K-1+{\eta^{\rm APO}}}{K}r_{\min},r_{\min}\right)$.}
  \label{fig:S2}
  \vspace{-4mm}
\end{figure}

Similar as the equilibrium analysis for $C\in\left[r_{\min},r_{\max}\right)$, here equation (\ref{equ:rX}) may admit multiple solutions, \emph{i.e.}, $L>1$, in which case each solution $r_l^x$, $l=1,2,\ldots,L$, corresponds to a strategy $b^*$ defined in (\ref{equ:simplyequilibrium}).
\vspace{-4mm}
\subsection{APOs' Equilibrium When $C\in\left[r_{\max},\infty\right)$}\label{subsec:stageII:4}
We assume that the reserve rate $C$ is given from interval $C\in\left[r_{\max},\infty\right)$, and show the unique form of bidding strategy that constitutes an SBNE in the following theorem.
\begin{theorem}\label{theorem:highC}
When $C\in\left[r_{\max},\infty\right)$, a strategy function constitutes an SBNE if and only if it is in the following form ($k\in{\cal K}$):
\begin{align}
{b^*}\!\left(r_k,\!C\right)\!=\!\!\left\{\begin{array}{ll}
{{\rm \!any~\!value~\!in} \left[0,\!r_{\min}\right]\!,} & {\rm \!if~\!}{r_k=r_{\min},}\\
r_k, & {{\rm \!if~\!}{r_k\!\in\!\left(r_{\min},\!r_{\max}\right)\!},}\\
{\rm \!any~\!value~\!in}\left[r_{\max},\!C\right]\!\cup\!\left\{\!``{\rm N}\textquotedblright\right\}\!, &  {\rm \!if~\!}{r_k=r_{\max}.}\\
\end{array} \right.\label{equ:EQhighC}
\end{align}
\end{theorem}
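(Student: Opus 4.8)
My plan is to prove the two implications of the ``if and only if'' separately, exploiting that when $C\ge r_{\max}$ the reserve rate never binds and the mechanism reduces to a textbook second-price reverse auction. The key preliminary fact I would establish is a \emph{pointwise} optimality statement. Fix APO $k$ with type $r_k$ and condition on an arbitrary realization of the other bids; write $\beta\triangleq\min_{j\ne k}b_j$. If $\beta\in[0,C]$, then by (\ref{equ:rpay}) and (\ref{equ:APOpayoff}) APO $k$ collects $\beta$ when its own bid lies strictly below $\beta$ and $r_k$ when it lies above (bidding ``N'' also yields $r_k$ here), so the bid $r_k$ secures $\max\{\beta,r_k\}$, the largest attainable value. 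If $\beta=\text{``N''}$, then any real bid makes APO $k$ the sole winner with $r_{\rm pay}=\min\{C,\text{``N''}\}=C\ge r_{\max}\ge r_k$, whereas ``N'' yields only $\tfrac{K-1+\eta^{\rm APO}}{K}r_k$; once more the real bid $r_k$ attains the maximum $C$. Thus $b_k=r_k$ maximizes APO $k$'s payoff in \emph{every} realization of ${\bm r}_{-k}$, i.e.\ truthful bidding is a weakly dominant strategy in this regime.

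Sufficiency then follows quickly. Suppose every APO uses (\ref{equ:EQhighC}); since the type distribution is continuous, with probability one all rivals have interior types and hence bid their own values, so from $k$'s viewpoint the rivals submit $\{r_j:j\ne k\}\subset(r_{\min},r_{\max})$. For interior $r_k$ the pointwise argument shows $b^*(r_k,C)=r_k$ is a best response, verifying (\ref{equ:defineEQ}). For $r_k=r_{\min}$ any bid in $[0,r_{\min}]$ lies below $\min_{j\ne k}r_j$ almost surely, so APO $k$ wins and collects the second-lowest bid $\ge r_{\min}$, which is optimal; for $r_k=r_{\max}$ any bid in $[r_{\max},C]\cup\{\text{``N''}\}$ exceeds $\min_{j\ne k}r_j$ almost surely, so APO $k$ loses and keeps $r_{\max}$, again optimal. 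This reproduces (\ref{equ:EQhighC}).

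For necessity I would start from an arbitrary SBNE $\tilde b$ and show it must coincide with (\ref{equ:EQhighC}) except possibly at the two boundary types. I would first prove monotonicity of $\tilde b$ (a lower type never bids strictly above a higher type, by the usual ``swap'' exchange argument), and rule out ``N'' for interior types (an interior $r_k$ playing ``N'' is strictly beaten by the real bid $r_k$ whenever $\Pr(\beta>r_k)>0$ or $\Pr(\beta=\text{``N''})>0$, both of which hold in a symmetric profile). The crux is the uniqueness step: since truthful bidding is pointwise optimal, any equilibrium bid $\tilde b(r_k)$ must also attain the pointwise maximum almost surely, so if $\tilde b(r_k)\ne r_k$ on a positive-measure set, the induced price $\beta=\min_{j\ne k}\tilde b(r_j)$ would have to place zero mass on the open interval between $\tilde b(r_k)$ and $r_k$ for every such $r_k$; otherwise the deviation-regret argument of the first paragraph renders $\tilde b(r_k)$ strictly suboptimal. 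Using monotonicity and symmetry I would show this ``gap-avoidance'' requirement is self-contradictory: the image $\tilde b((r_{\min},r_{\max}))$, being the support of a single rival's bid (hence of $\beta$), cannot simultaneously skip the gap required by every interior type, which forces $\tilde b(r)=r$ throughout $(r_{\min},r_{\max})$. The boundary flexibility in (\ref{equ:EQhighC}) is recovered exactly as in the sufficiency step, since only measure-zero type events are affected.

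I expect the necessity direction, and specifically this last uniqueness step, to be the main obstacle: truthful bidding is only \emph{weakly} dominant, so asymmetric Nash equilibria exist (e.g.\ one bidder always bidding near $r_{\min}$), and the argument must lean on \emph{symmetry} to exclude pooling and uniformly shifted bid functions. I would discharge it by combining the first-order condition $U'(x)=(r_k-x)g(x)$ for APO $k$'s expected payoff (where $g$ is the density of $\beta$ on $[0,C]$ and the ``N'' contribution is constant in $x$) with the observation that in a symmetric profile $g$ inherits full support on $\tilde b((r_{\min},r_{\max}))$ from the atomlessness of $F$; the sign change of $U'$ at $x=r_k$ then pins the best response to $r_k$ wherever the support is an interval, and the gap-avoidance contradiction above excludes the remaining pooled or shifted profiles. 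Secondary care is needed in handling the ``N'' convention, in particular the identity $\min\{C,\text{``N''}\}=C$ from the auction's comparison rule, and in checking that the two boundary types contribute nothing because $F$ has no atoms.
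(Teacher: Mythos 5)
Your approach is correct in outline but genuinely different from the paper's. The paper never writes a dedicated proof for this regime: it proves the $C\in[r_{\min},r_{\max})$ case in full (sufficiency by case-checking in Theorem \ref{theorem:equilibrium}, necessity through the chain of Lemmas \ref{lemma:monotonicity}--\ref{lemma:necessityall} on monotonicity, the measure of $``{\rm N}\textquotedblright$-bidders, the thresholds $r_B$, $r_A$, and continuity), and then declares Theorem \ref{theorem:highC} a special case of that machinery in which only the truthful region survives, omitting details. You instead exploit what is special about $C\ge r_{\max}$: the reserve never binds, $\min\{C,``{\rm N}\textquotedblright\}=C\ge r_{\max}$, so truthful bidding attains the pointwise maximum against every realization of rival bids, i.e.\ it is weakly dominant, and the mechanism collapses to a conventional second-price reverse auction. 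This makes your sufficiency argument shorter and cleaner than instantiating the paper's three-region structure, and your necessity route (an equilibrium bid must attain the dominant-strategy payoff almost surely, forcing the rival-bid distribution to be null between $\tilde b(r)$ and $r$ for every type, which symmetry plus monotonicity contradict) is a legitimate alternative to the paper's threshold lemmas.

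Two steps in your necessity sketch are stated too loosely, though both are fixable with tools you already invoke. First, the claim that $\Pr(\beta>r_k)>0$ or $\Pr(\beta=``{\rm N}\textquotedblright)>0$ ``both hold in a symmetric profile'' is not true a priori: a candidate profile in which every type bids near $r_{\min}$ has neither property, so ruling out $``{\rm N}\textquotedblright$ for interior types cannot precede the uniqueness argument and must be folded into it (if the $``{\rm N}\textquotedblright$-set has positive $F$-measure, then $\Pr(\beta=``{\rm N}\textquotedblright)>0$ and deviating to any real bid wins $C\ge r_{\max}$, a strict gain; if it is null, strict suboptimality of $``{\rm N}\textquotedblright$ follows only after a.e.\ truthfulness is established). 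Second, the first-order condition $U'(x)=(r_k-x)g(x)$ presumes $\beta$ has a density, i.e.\ that the equilibrium bid distribution is atomless -- exactly what fails in the pooling profiles you need to exclude, so it cannot be assumed. The clean patch is to kill atoms first: an atom at $c$ produces ties with positive probability, and by (\ref{equ:rpay}) and (\ref{equ:APOpayoff}) a tied type $r\ne c$ receives $\frac{1}{m}c+\frac{m-1}{m}r<\max\{c,r\}$, violating attainment of the pointwise maximum. With atoms excluded, your gap-avoidance argument applied at continuity points of the monotone function $\tilde b$ (which are dense) forces $\tilde b(r)=r$ there, monotonicity extends this to all interior types, and the boundary flexibility in (\ref{equ:EQhighC}) then falls out exactly as you describe.
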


When $C\in\left[r_{\max},\infty\right)$, the LTE provider is willing to allocate a large data rate to the winning APO's users. Based on (\ref{equ:EQhighC}), all APOs bid values from $\left[0,C\right]$ with probability one.{\footnote{Notice that the probability for an APO to have the type $r_{\max}$ is zero due to the continuous distribution of $r$.}}

\vspace{-0.5cm}
\subsection{{{Summary of APOs' Equilibriums}}}\label{sec:comparison}
\vspace{-0.2cm}

\begin{figure}[h]
  \centering
  \includegraphics[scale=0.38]{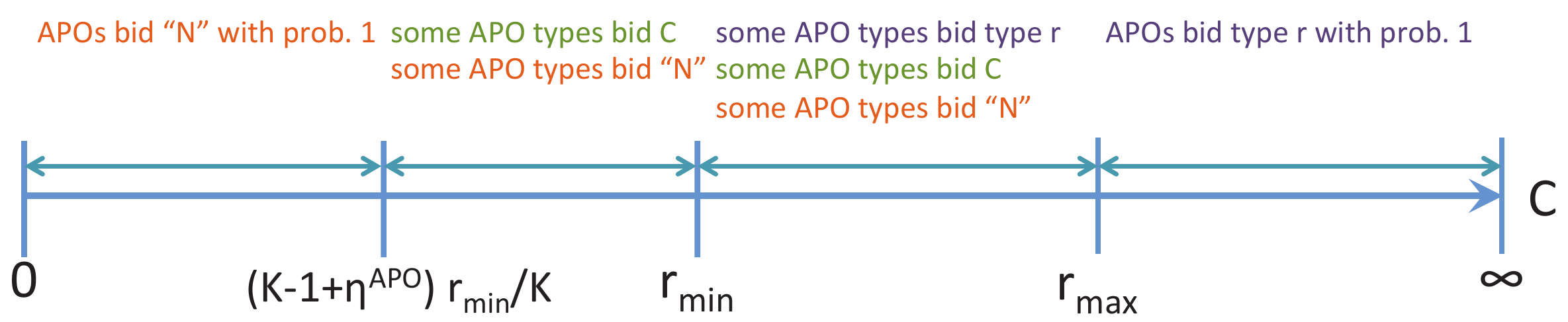}
  \caption{APOs' Strategies under Different $C$.}
  \label{fig:fourC}
\end{figure}

{{Based on Sections \ref{subsec:stageII:1}, \ref{subsec:lowC}, \ref{subsec:stageII:3}, and \ref{subsec:stageII:4}, there is always a unique form of APO $k$'s bidding strategy $b^*\left(r_k,C\right)$ at the SBNE for any reserve rate $C\in\left[0,\infty\right)$.}} 
{{We summarize the APOs' strategies under different intervals of $C$ in Fig. \ref{fig:fourC}.}}{\footnote{{{When $C\in\left[r_{\min},r_{\max}\right)$, an APO with type $r_k=r_{\min}$ can bid any value from $\left[0,r_{\min}\right]$ at the equilibrium based on (\ref{equ:equilibrium}). Since the probability for an APO to have the type $r_{\min}$ is zero, the strategy of this particular APO type is not shown in Fig. \ref{fig:fourC}.}}}} 
We find that some APO types bid the reserve rate $C$ in Fig. \ref{fig:fourC} when $C\in\left(\frac{K-1+{\eta^{\rm APO}}}{K}r_{\min},r_{\max}\right)$. This is due to the unique feature of the auction with \emph{allocative externalities}: {{first, if none of the other APOs submits its bid from interval $\left[0,C\right]$, these types of APOs prefer to cooperate with the LTE provider rather than {{to interfere with}} the LTE in the \emph{competition mode}; second, if at least one of the other APOs submits its bid from interval $\left[0,C\right]$, these types of APOs prefer to occupy their own channels rather than to cooperate with the LTE provider, as the LTE will not generate interference to their channels in this case. The first reason motivates these APO types to bid from interval $\left[0,C\right]$, and the second reason motivates these APO types to reduce their chances of winning the auction as much as possible. As a result, these APO types bid the reserve rate $C$ at the equilibrium.}}

\section{{{Stage I: LTE Provider's Reserve Rate}}}\label{sec:stageI:LTE}
{{In this section, we analyze the LTE provider's optimal reserve rate by anticipating APOs' equilibrium strategies in Stage II. 
In Section \ref{subsec:stageI:0}, we define the LTE provider's expected payoff. 
In Section \ref{subsec:stageI:1}, we compute the LTE provider's expected payoff based on different intervals of $C$. In Section \ref{subsec:stageI:2}, we formulate the LTE provider's payoff maximization problem. In Section \ref{subsec:stageI:3}, we analyze the LTE provider's optimal reserve rate $C^*$.}}

\vspace{-4mm}
\subsection{{{Definition of LTE Provider's Expected Payoff}}}\label{subsec:stageI:0}
We first make the following assumption on the CDF of an APO's type.

\begin{assumption}\label{assumption:unique}
Under the cumulative distribution function $F\left(\cdot\right)$, (a) equation (\ref{equ:rT}) has a unique solution in $\left(C,r_{\max}\right)$, {i.e.}, $M=1$, and (b) equation (\ref{equ:rX}) has a unique solution in $\left(r_{\min},r_{\max}\right)$, {i.e.}, $L=1$.
\end{assumption}
Assumption \ref{assumption:unique} implies that $r_T\left(C\right)$ and $r_X\left(C\right)$ are unique. Such an assumption is mild. 
{{When $K=2$, we have proved that Assumption \ref{assumption:unique} holds for the uniform distribution. 
For a general $K$, we have run simulation and shown that Assumption \ref{assumption:unique} holds for both the uniform distribution and truncated normal distribution. 
The details of the proof and simulation can be found in Appendices \ref{appendix:sec:K2} and \ref{appendix:sec:Kgeneral}, respectively.}}


Based on Theorem \ref{theorem:equilibrium} and Theorem \ref{theorem:combine:unique},  the uniqueness of $r_T\left(C\right)$ implies the unique expression of APOs' bidding strategy $b^*$ for $C\in\left[r_{\min},r_{\max}\right)$. Similarly, from Theorem \ref{theorem:middleC}, the uniqueness of $r_X\left(C\right)$ implies the unique expression of strategy $b^*$ for $C\in\left(\frac{K-1+{\eta^{\rm APO}}}{K}r_{\min},r_{\min}\right)$. 

We define the LTE provider's expected payoff as
\begin{align}
{\bar \Pi}^{\rm LTE}\!\left(C\right)\! \triangleq \!\mathbb{E}_{\bm r}\left\{{\Pi^{\rm LTE}\!\left(\!\left(b^*\!\left(r_1,C\right),\ldots,b^*\!\left(r_K,C\right)\!\right)\!,\!C\right)\!}\right\}\!,\label{equ:expectedpayoff}
\end{align}
where ${\bm r}\triangleq \left(r_k,\forall k\in{\cal K}\right)$ denotes the types of all APOs, and $b^*\left(r_k,C\right)$, $k\in{\cal K}$, is given in (\ref{equ:equilibrium}), (\ref{equ:lowCbid}), (\ref{equ:simplyequilibrium}), and (\ref{equ:EQhighC}) based on the different intervals of $C$. 

\vspace{-4mm}
\subsection{{{Computation of LTE Provider's Expected Payoff}}}\label{subsec:stageI:1}
{{Since $b^*\left(r_k,C\right)$ in (\ref{equ:expectedpayoff}) has different expressions for four different intervals of $C$, we characterize ${\bar \Pi}^{\rm LTE}\left(C\right)$ based on these four intervals of $C$.}}

\subsubsection{$C\in\left[0,\frac{K-1+{\eta^{\rm APO}}}{K}r_{\min}\right]$}\label{subsec:LTEpayoffI}
The APOs submit their bids according to strategy $b^*$ in (\ref{equ:lowCbid}). It is easy to find that $b^*\left(r_k,C\right)=``{\rm N}\textquotedblright$ with probability one for all $k\in{\cal K}$, and hence the LTE provider always works in the \emph{competition mode}. Based on (\ref{equ:LTEpayoff}), we can compute ${\bar \Pi}^{\rm LTE}\left(C\right)$ as
\begin{align}
{\bar \Pi}^{\rm LTE}\left(C\right)={\delta^{\rm LTE}} R_{\rm LTE}.\label{equ:LTEpayoff:smallC}
\end{align}
\subsubsection{$C\in\left(\frac{K-1+{\eta^{\rm APO}}}{K}r_{\min},r_{\min}\right)$}\label{subsec:LTEpayoffII}
The APOs' bidding strategy is summarized in (\ref{equ:simplyequilibrium}). Hence, the probabilities for an APO with a random type to bid $C$ and $``{\rm N}\textquotedblright$ are $F\left(r_X\left(C\right)\right)$ and $1-F\left(r_X\left(C\right)\right)$, respectively. 
Therefore, we can compute ${\bar \Pi}^{\rm LTE}\left(C\right)$ as
\begin{align}
\nonumber
{\bar \Pi}^{\rm LTE}\left(C\right) = &\left(1-F\left(r_X\left(C\right)\right)\right)^{K} {\delta^{\rm LTE}} R_{\rm LTE} + \\
& \left(1-\left(1-F\left(r_X\left(C\right)\right)\right)^K \right)  \left(R_{\rm LTE}-C\right).\label{equ:LTEpayoff:relow}
\end{align}
That is to say: (a) when all the APOs bid $``{\rm N}\textquotedblright$, the LTE provider works in the \emph{competition mode}, and obtains a payoff of ${\delta^{\rm LTE}} R_{\rm LTE}$; (b) when at least one APO bids $C$, the LTE provider works in the \emph{cooperation mode}, and allocates a rate of $C$ to the winning APO's users. 
\subsubsection{$C\in\left[r_{\min},r_{\max}\right)$}\label{subsec:LTEpayoffIII}
The APOs' strategy is given in (\ref{equ:equilibrium}).
We can compute ${\bar \Pi}^{\rm LTE}\left(C\right)$ as
\begin{align}
\nonumber
{\bar \Pi}^{\rm LTE}\left(C\right) = & \left(1-F\left(r_T\left(C\right)\right)\right)^K {\delta^{\rm LTE}} R_{\rm LTE} + \\
& \left(1-\left(1-F\left(r_T\left(C\right)\right)\right)^K \right)  R_{\rm LTE}- {\bar r}_{\rm pay}\left(C\right).\label{equ:LTEpayoff:typicalC}
\end{align}
Here, ${\bar r}_{\rm pay}\left(C\right)$ is defined as the expected data rate that the LTE provider allocates to the winning APO's users, and is given as (the details of computing ${\bar r}_{\rm pay}\left(C\right)$ can be found in Appendix \ref{appendix:sec:rpay})
\begin{align}
\nonumber
{\bar r}_{\rm pay}\left(C\right)\triangleq & K\left(K-1\right) \int_{r_{\min}}^{C} r f\left(r\right) F\left(r\right)  \left(1-F\left(r\right)\right)^{K-2} dr \\
\nonumber
& +KCF\left(C\right)\left(1-F\left(C\right)\right)^{K-1} \\
&+C\left(\left(1-F\left(C\right)\right)^K-\left(1-F\left(r_T\left(C\right)\right)\right)^K\right).\label{equ:rpay:multi}
\end{align}
\subsubsection{$C\in\left[r_{\max},\infty\right)$}\label{subsec:LTEpayoffIV}
Based on (\ref{equ:EQhighC}), the APOs bid values from $\left[0,C\right]$ with probability one, and the LTE provider always works in the \emph{cooperation mode}. Then we can compute ${\bar \Pi}^{\rm LTE}\left(C\right)$ as 
\begin{align}
\nonumber
& {\bar \Pi}^{\rm LTE}\left(C\right)=R_{\rm LTE}\\
& -K\left(K-1\right) \int_{r_{\min}}^{r_{\max}} r  f\left(r\right) F\left(r\right) \left(1-F\left(r\right)\right)^{K-2} dr.\label{equ:LTEpayoff:largeC}
\end{align}

\subsection{LTE Provider's Payoff Maximization Problem}\label{subsec:stageI:2}
Based on ${\bar \Pi}^{\rm LTE}\left(C\right)$ derived in Section \ref{subsec:stageI:1}, we can verify that ${\bar \Pi}^{\rm LTE}\left(C\right)$ is continuous for $C\in\left[0,\infty\right)$. The LTE provider determines the optimal reserve rate by solving
{{
\begin{align}
\max_{C\in\left[0,\infty\right)} {~\bar \Pi}^{\rm LTE}\left(C\right) {~~~~~~\rm s.t.~~} b_{\max}\left(C\right) \le R_{\rm LTE},\label{equ:optABC}
\end{align}}}
where we define 
\begin{align}
b_{\max}\left(C\right) \triangleq \max \left\{ b^*\left(r_k,C\right)\in\left[0,C\right]:{r_k\in\left[r_{\min},r_{\max}\right]} \right\},
\end{align}
which is the maximum possible bid (except $``{\rm N}\textquotedblright$) from the APOs at the SBNE under $C$. Constraint $b_{\max}\left(C\right) \le R_{\rm LTE}$ ensures that the LTE provider has enough capacity to satisfy the bid from the winning APO. 

\vspace{-0.3cm}
\subsection{LTE Provider's Optimal Reserve Rate}\label{subsec:stageI:3}
{{In the following theorem, we characterize the optimal reserve rate $C^*$ that solves problem (\ref{equ:optABC}) for a general distribution function $F\left(\cdot\right)$ that satisfies Assumption \ref{assumption:unique}.}}

\begin{theorem}\label{theorem:optimalCcase}
The LTE provider's optimal reserve rate $C^*$ satisfies the following properties:\\
(1) When $R_{\rm LTE}\le\frac{K-1+{\eta^{\rm APO}}}{K\left(1-{\delta^{\rm LTE}}\right)}r_{\min}$, {{$C^*$}} can be any value from $\left[0,\frac{K-1+{\eta^{\rm APO}}}{K}r_{\min}\right]$;\\
(2) When $\frac{K-1+{\eta^{\rm APO}}}{K\left(1-{\delta^{\rm LTE}}\right)}r_{\min}<\! R_{\rm LTE}\le \! r_{\max}$, {{$C^*$ can be chosen from}} $\left(\frac{K-1+{\eta^{\rm APO}}}{K}r_{\min},\!R_{\rm LTE}\right]$;\\
(3) When $R_{\rm LTE}\!>\!\max\left\{r_{\max},\frac{K-1+{\eta^{\rm APO}}}{K\left(1-{\delta^{\rm LTE}}\right)}r_{\min}\!\right\}$, {{$C^*$ can be chosen from}} $\left(\!\frac{K-1+{\eta^{\rm APO}}}{K}r_{\min},\!r_{\max}\!\right]$.
\end{theorem}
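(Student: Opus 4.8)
The plan is to analyze the piecewise expected payoff ${\bar \Pi}^{\rm LTE}(C)$ derived in Section \ref{subsec:stageI:1} over the feasible set of problem (\ref{equ:optABC}) and to locate its maximizers. Throughout, write $C_0 \triangleq \frac{K-1+{\eta^{\rm APO}}}{K}r_{\min}$ and $\widehat{R} \triangleq \frac{C_0}{1-{\delta^{\rm LTE}}}=\frac{K-1+{\eta^{\rm APO}}}{K\left(1-{\delta^{\rm LTE}}\right)}r_{\min}$; note $C_0<r_{\min}$ since ${\eta^{\rm APO}}<1$, and $(1-{\delta^{\rm LTE}})R_{\rm LTE}\le C_0$ is equivalent to $R_{\rm LTE}\le\widehat{R}$. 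First I would pin down the feasible set. From the equilibrium forms (\ref{equ:simplyequilibrium}), (\ref{equ:equilibrium}), and (\ref{equ:EQhighC}), the largest admissible bid other than $``{\rm N}\textquotedblright$ is $b_{\max}(C)=C$ for $C\in(C_0,r_{\max})$ and (the measure-zero type $r_{\max}$ aside) $b_{\max}(C)=r_{\max}$ for $C\ge r_{\max}$. Hence the constraint $b_{\max}(C)\le R_{\rm LTE}$ makes the feasible set $[0,R_{\rm LTE}]$ when $R_{\rm LTE}<r_{\max}$ and $[0,\infty)$ when $R_{\rm LTE}\ge r_{\max}$. Since ${\bar \Pi}^{\rm LTE}$ is constant and equal to ${\delta^{\rm LTE}}R_{\rm LTE}$ on $[0,C_0]$ (equation (\ref{equ:LTEpayoff:smallC})) and constant on $[r_{\max},\infty)$ (equation (\ref{equ:LTEpayoff:largeC})), the whole problem reduces to comparing the plateau value ${\delta^{\rm LTE}}R_{\rm LTE}$ with the behavior of ${\bar \Pi}^{\rm LTE}$ on $(C_0,r_{\max})$.

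The crux of part (1) is a pathwise (per-realization) bound. For a fixed type profile ${\bm r}$ and the induced bids $b^*(r_k,C)$, I claim the realized payment obeys $r_{\rm pay}({\bm b},C)\ge \min\{C,r_{\min}\}$ whenever cooperation occurs: on $C\in(C_0,r_{\min})$ the only non-$``{\rm N}\textquotedblright$ bid is $C$, so $r_{\rm pay}=C$ by (\ref{equ:rpay}); and on $C\in[r_{\min},\infty)$ every non-$``{\rm N}\textquotedblright$ bid is either a type value exceeding $r_{\min}$ or the value $C\ge r_{\min}$, so inspecting the three cases of (\ref{equ:rpay}) gives $r_{\rm pay}\ge r_{\min}$. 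Since $\min\{C,r_{\min}\}>C_0$ for every $C>C_0$, under $R_{\rm LTE}\le\widehat{R}$ the cooperation payoff satisfies $R_{\rm LTE}-r_{\rm pay}<R_{\rm LTE}-(1-{\delta^{\rm LTE}})R_{\rm LTE}={\delta^{\rm LTE}}R_{\rm LTE}$, while the competition payoff in (\ref{equ:LTEpayoff}) equals ${\delta^{\rm LTE}}R_{\rm LTE}$. Thus the realized $\Pi^{\rm LTE}\le{\delta^{\rm LTE}}R_{\rm LTE}$ for almost every ${\bm r}$, and taking the expectation in (\ref{equ:expectedpayoff}) yields ${\bar \Pi}^{\rm LTE}(C)\le{\delta^{\rm LTE}}R_{\rm LTE}$ for all $C>C_0$. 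This establishes part (1): the global maximum is ${\delta^{\rm LTE}}R_{\rm LTE}$, attained on $[0,C_0]$, so $C^*$ may be any point of $[0,C_0]$.

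For parts (2) and (3) I would take $R_{\rm LTE}>\widehat{R}$, so $(1-{\delta^{\rm LTE}})R_{\rm LTE}>C_0$. Writing $G(C)\triangleq(1-F(r_X(C)))^K$, substitution into (\ref{equ:LTEpayoff:relow}) yields the factorization ${\bar \Pi}^{\rm LTE}(C)-{\delta^{\rm LTE}}R_{\rm LTE}=(1-G(C))\left[(1-{\delta^{\rm LTE}})R_{\rm LTE}-C\right]$ on $(C_0,r_{\min})$, where $1-G(C)$ is the (positive) cooperation probability for $C>C_0$. On the nonempty subinterval $\left(C_0,\min\{(1-{\delta^{\rm LTE}})R_{\rm LTE},r_{\min}\}\right)$ the bracket is positive, so some interior $C$ strictly beats the plateau and no point of $[0,C_0]$ is optimal. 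Existence of a maximizer follows from continuity of ${\bar \Pi}^{\rm LTE}$ (noted in Section \ref{subsec:stageI:2}) on a compact set: for part (2), $R_{\rm LTE}\le r_{\max}$ gives feasible set $[0,R_{\rm LTE}]$, so a maximizer lies in $(C_0,R_{\rm LTE}]$; for part (3), $R_{\rm LTE}>r_{\max}$ makes ${\bar \Pi}^{\rm LTE}$ constant on $[r_{\max},\infty)$ with value ${\bar \Pi}^{\rm LTE}(r_{\max})$, so the maximum over the compact $[0,r_{\max}]$ is at least this tail value and can be taken in $(C_0,r_{\max}]$.

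The step I expect to be the main obstacle is the pathwise payment bound $r_{\rm pay}\ge\min\{C,r_{\min}\}$: it must be verified uniformly across the three cases of (\ref{equ:rpay}) and across the distinct equilibrium forms (\ref{equ:simplyequilibrium}), (\ref{equ:equilibrium}), (\ref{equ:EQhighC}), in particular in the second-price case $\left|{\cal I}_{\min}\right|=1$ where $r_{\rm pay}$ is the minimum of $C$ and the \emph{other} APOs' bids, and one must argue that measure-zero type coincidences (e.g.\ a competitor of type exactly $r_{\min}$, or the type-$r_{\max}$ freedom in (\ref{equ:EQhighC})) do not affect the expectation. A secondary subtlety is confirming the continuity of ${\bar \Pi}^{\rm LTE}$ across the interface points $C_0$, $r_{\min}$, and $r_{\max}$ together with $r_X(C_0)=r_{\min}$, which underpins both the factorization used in parts (2)--(3) and the reduction of the flat tail $[r_{\max},\infty)$ to the single point $r_{\max}$.
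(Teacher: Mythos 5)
Your proposal is correct, but it reaches the result by a genuinely different route than the paper. The paper first proves an auxiliary lemma (its Lemma~\ref{lemma:local:etarmin}): part (a) shows $C_0=\frac{K-1+{\eta^{\rm APO}}}{K}r_{\min}$ is a \emph{local minimum} when $R_{\rm LTE}>\frac{K-1+{\eta^{\rm APO}}}{K(1-{\delta^{\rm LTE}})}r_{\min}$, by implicitly differentiating equation (\ref{equ:rX}) to get $\lim_{C\downarrow C_0}\frac{d r_X(C)}{dC}>0$ and hence a strictly positive right derivative of ${\bar \Pi}^{\rm LTE}$ at $C_0$; part (b) shows $C_0$ is a \emph{global maximum} when $R_{\rm LTE}\le\frac{K-1+{\eta^{\rm APO}}}{K(1-{\delta^{\rm LTE}})}r_{\min}$, by comparing ${\delta^{\rm LTE}}R_{\rm LTE}$ against the closed-form expected payoffs interval by interval. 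Theorem~\ref{theorem:optimalCcase} then follows with the same feasibility bookkeeping you do. You replace both halves of that lemma with arguments that avoid calculus and the explicit formulas (\ref{equ:LTEpayoff:typicalC})--(\ref{equ:rpay:multi}): for part (1), a pathwise bound $r_{\rm pay}\ge\min\{C,r_{\min}\}>C_0\ge(1-{\delta^{\rm LTE}})R_{\rm LTE}$ holding almost surely under each equilibrium form, so every cooperation realization pays strictly less than ${\delta^{\rm LTE}}R_{\rm LTE}$; for parts (2)--(3), the factorization ${\bar \Pi}^{\rm LTE}(C)-{\delta^{\rm LTE}}R_{\rm LTE}=\bigl(1-(1-F(r_X(C)))^K\bigr)\bigl[(1-{\delta^{\rm LTE}})R_{\rm LTE}-C\bigr]$ on $(C_0,r_{\min})$, which exhibits a nonempty subinterval where the plateau is strictly beaten, followed by continuity--compactness for existence. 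Your route is more elementary and arguably more robust: it needs neither differentiability of $r_X(\cdot)$ nor the implicit function theorem, whereas the paper's derivative computation buys the sharper structural fact that $C_0$ is a local minimum (not merely dominated), which the paper reuses in its discussion of the shape of ${\bar \Pi}^{\rm LTE}$. Two minor imprecisions worth fixing: your blanket claim that the feasible set is $[0,R_{\rm LTE}]$ when $R_{\rm LTE}<r_{\max}$ fails if $R_{\rm LTE}<C_0$, since on $[0,C_0]$ all APOs bid $``{\rm N}\textquotedblright$ almost surely and the constraint $b_{\max}(C)\le R_{\rm LTE}$ is vacuous there (this is exactly why the theorem can assert feasibility of all of $[0,C_0]$ in part (1) -- fortunately your part (1) argument never invokes this claim); and in part (2) the edge case $R_{\rm LTE}=r_{\max}$ needs the flat-tail reduction you only spell out for part (3).
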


{{
When $R_{\rm LTE}\le\frac{K-1+{\eta^{\rm APO}}}{K\left(1-{\delta^{\rm LTE}}\right)}r_{\min}$, the LTE provider does not have enough capacity to satisfy any APO's request. Specifically, $R_{\rm LTE}\le \frac{K-1+{\eta^{\rm APO}}}{K\left(1-{\delta^{\rm LTE}}\right)}r_{\min}$ is equivalent to $\left(1-{\delta^{\rm LTE}}\right)R_{\rm LTE}\le \frac{K-1+{\eta^{\rm APO}}}{K}r_{\min}$. Here, $\left(1-{\delta^{\rm LTE}}\right)R_{\rm LTE}$ stands for the additional increase in the LTE network's capacity when it works in the \emph{cooperation mode}. Based on (\ref{equ:equilibrium}), (\ref{equ:lowCbid}), (\ref{equ:simplyequilibrium}), and (\ref{equ:EQhighC}), $\frac{K-1+{\eta^{\rm APO}}}{K}r_{\min}$ is the lower bound of the data rate that any APO with type in $\left(r_{\min},r_{\max}\right]$ may request from the LTE provider. Therefore, when $R_{\rm LTE}\le\frac{K-1+{\eta^{\rm APO}}}{K\left(1-{\delta^{\rm LTE}}\right)}r_{\min}$, the additional gain in the LTE network's capacity under cooperation cannot cover the request from any APO, and the LTE provider sets $C^*\in \left[0,\frac{K-1+{\eta^{\rm APO}}}{K}r_{\min}\right]$ to work in the \emph{competition mode}.

When $\frac{K-1+{\eta^{\rm APO}}}{K\left(1-{\delta^{\rm LTE}}\right)}r_{\min}<R_{\rm LTE}\le r_{\max}$, the LTE network's capacity can cover the requests from the APOs that bid small values. Hence, the LTE provider chooses $C^*$ above $\frac{K-1+{\eta^{\rm APO}}}{K}r_{\min}$ to accept these APOs' bids. Meanwhile, the LTE provider has to choose $C^*$ no larger than $R_{\rm LTE}$, otherwise it does not have enough capacity to satisfy the APOs that bid large values. 

When $R_{\rm LTE}\!>\!\max\left\{r_{\max},\frac{K-1+{\eta^{\rm APO}}}{K\left(1-{\delta^{\rm LTE}}\right)}r_{\min}\right\}$, since the maximum possible bid from the APOs is $r_{\max}$, the LTE provider always has enough capacity to satisfy the APOs' requests. In this case, the LTE provider chooses $C^*$ from $\left(\frac{K-1+{\eta^{\rm APO}}}{K}r_{\min},r_{\max}\right]$, and $C^*$ is no longer constrained by the LTE throughput $R_{\rm LTE}$.}}

{{Next we discuss the choice of $C^*$ based on Theorem \ref{theorem:optimalCcase}. When $R_{\rm LTE}\!\le\!\frac{K-1+{\eta^{\rm APO}}}{K\left(1-{\delta^{\rm LTE}}\right)}r_{\min}$, Theorem \ref{theorem:optimalCcase} indicates that any value from interval $\!\left[0,\!\frac{K-1+{\eta^{\rm APO}}}{K}r_{\min}\!\right]$ is the optimal $C^*$ for a general distribution function $F\left(\cdot\right)$.}} However, when $R_{\rm LTE}>\frac{K-1+{\eta^{\rm APO}}}{K\left(1-{\delta^{\rm LTE}}\right)}r_{\min}$, it is difficult to characterize the closed-form expression of $C^*$ even under a specific function $F\left(\cdot\right)$. 
This is because (i) it is difficult to solve equations (\ref{equ:rT}) and (\ref{equ:rX}) and obtain the closed-form expressions of $r_T\left(C\right)$ and $r_X\left(C\right)$, respectively, and (ii) the expression of ${\bar \Pi}^{\rm LTE}\left(C\right)$ in (\ref{equ:LTEpayoff:typicalC}) is complicated and hard to analyze. Therefore, we determine the optimal $C^*$ numerically for $R_{\rm LTE}>\frac{K-1+{\eta^{\rm APO}}}{K\left(1-{\delta^{\rm LTE}}\right)}r_{\min}$. Specifically, we have the following observation from the simulation.
\begin{observation}\label{observation:0}
${\bar \Pi}^{\rm LTE}\left(C\right)$ is strictly unimodal for $C\in\left(\frac{K-1+{\eta^{\rm APO}}}{K}r_{\min},r_{\max}\right]$.
\end{observation}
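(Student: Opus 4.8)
The plan is to reduce strict unimodality to a \emph{single-crossing} property of the derivative: ${\bar \Pi}^{\rm LTE}\left(C\right)$ is strictly unimodal on $\left(\frac{K-1+{\eta^{\rm APO}}}{K}r_{\min},r_{\max}\right]$ exactly when $\frac{d{\bar \Pi}^{\rm LTE}}{dC}$ is positive on an initial subinterval and negative thereafter, i.e., it vanishes at most once and changes sign only from $+$ to $-$. Because the payoff has two different closed forms --- expression (\ref{equ:LTEpayoff:relow}) on $\left(\frac{K-1+{\eta^{\rm APO}}}{K}r_{\min},r_{\min}\right)$, through the implicit threshold $r_X\left(C\right)$, and expression (\ref{equ:LTEpayoff:typicalC}) on $\left[r_{\min},r_{\max}\right)$, through $r_T\left(C\right)$ and $\bar r_{\rm pay}\left(C\right)$ --- I would first establish the sign pattern on each piece separately, then use the already-known continuity of ${\bar \Pi}^{\rm LTE}$ at $C=r_{\min}$ to check that the derivative does not jump from negative back to positive across the junction, so that the two local pictures assemble into a single global crossing.

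Next I would compute $\frac{d{\bar \Pi}^{\rm LTE}}{dC}$ on each piece by implicit differentiation of the defining relations. Differentiating (\ref{equ:rX}) and (\ref{equ:rT}) with respect to $C$ gives $\frac{dr_X}{dC}=-\frac{\partial_C G}{\partial_r G}$ and the analogous expression for $\frac{dr_T}{dC}$, where $G$ denotes the left-hand side of the corresponding equation. For the $r_X$-piece the $C$-derivative collapses via the identity $\binom{K-1}{n}\frac{1}{n+1}=\frac{1}{K}\binom{K}{n+1}$ to $\partial_C G=\frac{1-\left(1-F\left(r_X\right)\right)^K}{K F\left(r_X\right)}>0$; the $r_T$-piece is similar but carries extra terms because $F\left(C\right)$ enters (\ref{equ:rT}) explicitly. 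Writing $u\left(C\right)\triangleq\left(1-F\left(r_X\left(C\right)\right)\right)^K$, the payoff (\ref{equ:LTEpayoff:relow}) becomes $R_{\rm LTE}-C+u\left(C\right)\left(C-\left(1-{\delta^{\rm LTE}}\right)R_{\rm LTE}\right)$, with derivative $-1+u\left(C\right)+u'\left(C\right)\left(C-\left(1-{\delta^{\rm LTE}}\right)R_{\rm LTE}\right)$; a parallel but lengthier reduction of (\ref{equ:LTEpayoff:typicalC}), after differentiating $\bar r_{\rm pay}\left(C\right)$ in (\ref{equ:rpay:multi}), handles the second piece. These expressions make the sign of the derivative depend only on $F$, $f$, the relevant threshold, and its implicit derivative.

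To close the argument I would pursue the classical route ``every stationary point is a strict local maximum'': show that $\frac{d{\bar \Pi}^{\rm LTE}}{dC}=0$ forces $\frac{d^2{\bar \Pi}^{\rm LTE}}{dC^2}<0$. Combined with continuity of the derivative and the boundary behaviour --- the payoff is eventually decreasing as $C\uparrow r_{\max}$, matching the flat large-$C$ value in (\ref{equ:LTEpayoff:largeC}), and increasing just above $\frac{K-1+{\eta^{\rm APO}}}{K}r_{\min}$ whenever $R_{\rm LTE}$ exceeds the threshold of Theorem \ref{theorem:optimalCcase} --- this yields at most one interior critical point and hence strict unimodality.

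\textbf{The main obstacle} is the last step for a \emph{general} distribution $F$: the second derivative involves $f'$ and the second derivatives of the implicit thresholds $r_X\left(C\right),r_T\left(C\right)$, and Assumption \ref{assumption:unique} (uniqueness of those thresholds) alone is too weak to pin down its sign. I therefore expect an airtight analytic proof to be available only for structured families --- notably the uniform distribution, where $r_X$ and $r_T$ become explicit and $\frac{d{\bar \Pi}^{\rm LTE}}{dC}$ reduces to a rational function amenable to direct sign analysis --- while for broader families such as the truncated normal the claim is established by evaluating $\frac{d{\bar \Pi}^{\rm LTE}}{dC}$ numerically over the relevant parameter ranges and confirming the single sign change. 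This is precisely why the result is stated as an \emph{Observation} drawn from simulation rather than as a theorem.
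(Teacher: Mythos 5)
Your proposal is \emph{correct in its conclusion} and, in the end, lands exactly where the paper does: the paper contains no proof of Observation \ref{observation:0} at all. The authors introduce it with ``we have the following observation from the simulation,'' support it only by numerically checking unimodality under the uniform and truncated normal distributions, and illustrate a single instance in Fig.~\ref{fig:simu:0}; they even state explicitly that the analytic route is blocked because (\ref{equ:rT}) and (\ref{equ:rX}) admit no closed-form solutions and (\ref{equ:LTEpayoff:typicalC}) is ``complicated and hard to analyze.'' So your final fallback --- confirming the single sign change of $\frac{d{\bar \Pi}^{\rm LTE}}{dC}$ numerically --- is essentially the paper's entire argument, and your reading of why this is an Observation rather than a theorem is accurate.

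What you add beyond the paper is the analytic scaffolding, and that part is sound as far as it goes: rewriting (\ref{equ:LTEpayoff:relow}) as $R_{\rm LTE}-C+u\left(C\right)\left(C-\left(1-{\delta^{\rm LTE}}\right)R_{\rm LTE}\right)$ with $u\left(C\right)=\left(1-F\left(r_X\left(C\right)\right)\right)^K$ is correct, and your binomial identity $\binom{K-1}{n}\frac{1}{n+1}=\frac{1}{K}\binom{K}{n+1}$ does collapse the $C$-partial of the left-hand side of (\ref{equ:rX}) to $\frac{1-\left(1-F\left(r_X\right)\right)^K}{K F\left(r_X\right)}>0$. You are also right about where the program stalls: the sign of the second derivative at a stationary point involves $f'$ and the curvature of the implicit thresholds $r_X\left(C\right)$, $r_T\left(C\right)$, and Assumption \ref{assumption:unique} gives no control over these. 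Two caveats on the scaffolding itself. First, even for the uniform distribution the paper proves nothing about unimodality (its uniform-distribution appendix, Appendices \ref{appendix:sec:K2}--\ref{appendix:sec:Kgeneral}, addresses only the uniqueness of $r_T$ and $r_X$, i.e.\ Assumption \ref{assumption:unique}), so completing your uniform-case analysis would actually exceed what the authors provide. Second, your boundary claim that the payoff is ``eventually decreasing as $C\uparrow r_{\max}$'' need not hold for all parameter values --- Theorem \ref{theorem:optimalCcase} allows $C^*=r_{\max}$, so strict unimodality on the closed right endpoint must permit a maximizer at $r_{\max}$ --- but since your argument is conceded to be incomplete anyway, this does not change the assessment: you are not missing anything the paper has.
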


\begin{figure*}[t]
  \centering
  \begin{minipage}[t]{.49\linewidth}
  \centering
  \includegraphics[scale=0.3]{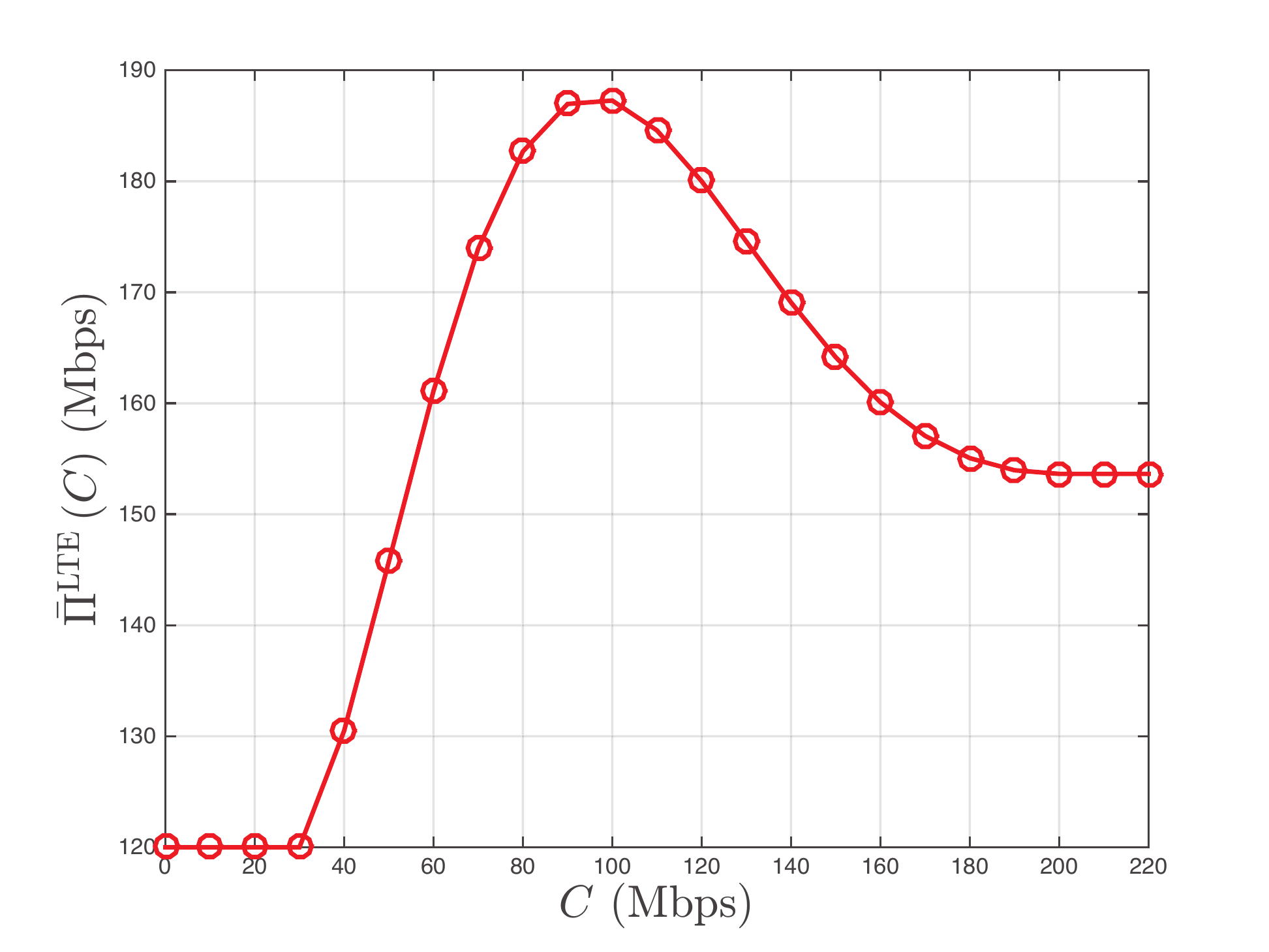}
  \vspace{-2mm}
  \caption{Example of Function ${\bar \Pi}^{\rm LTE}\left(C\right)$.}
  \label{fig:simu:0}
  \vspace{-5mm}
  \end{minipage}
  \begin{minipage}[t]{.49\linewidth}
  \centering
  \includegraphics[scale=0.3]{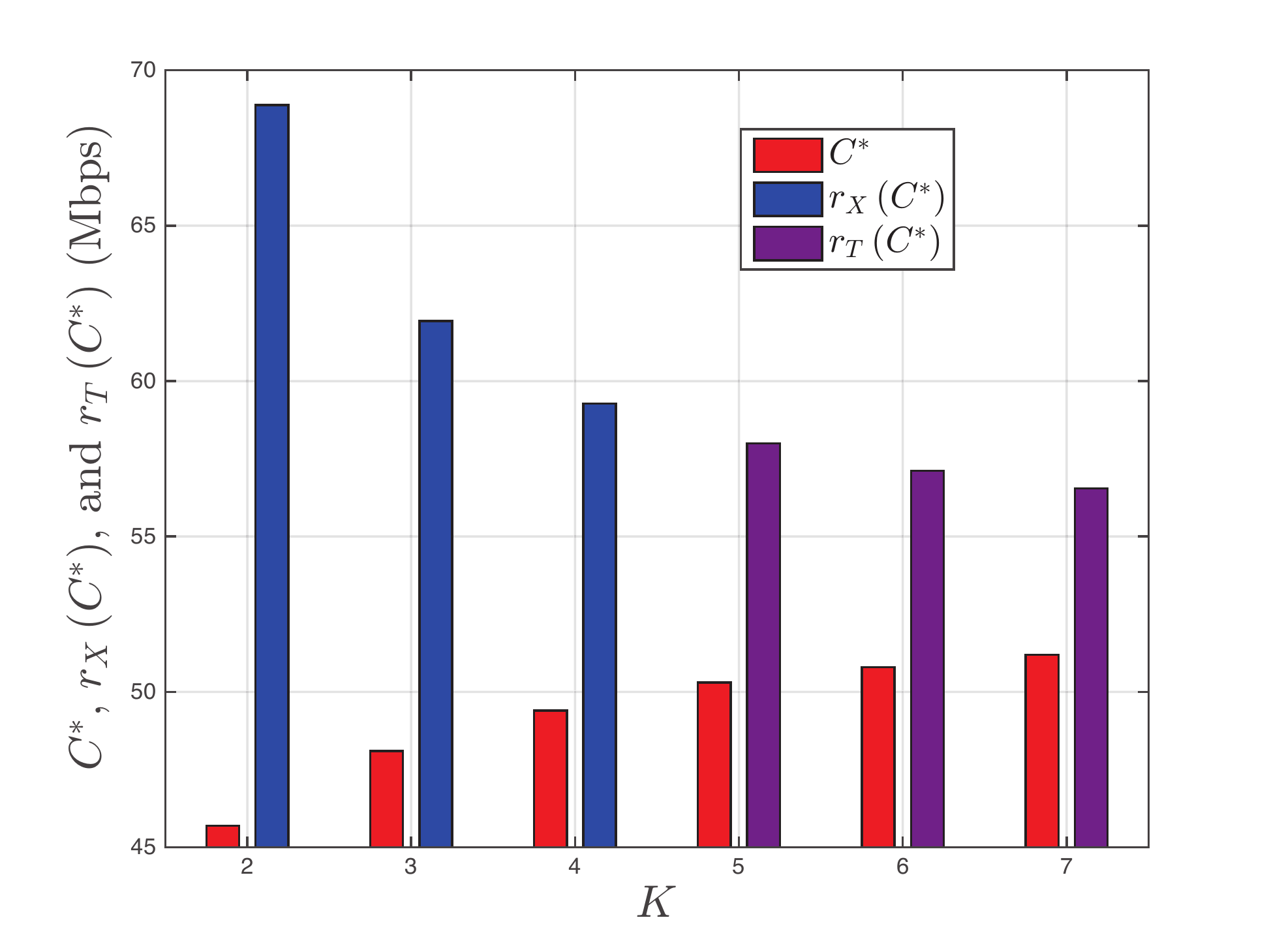}
  \vspace{-2mm}
  \caption{Impact of $K$ on LTE Provider's and APOs' Strategies.}
  \label{fig:simu:A}
  \vspace{-5mm}
  \end{minipage}
\end{figure*}
We have verified Observation \ref{observation:0} for the uniform distribution function $F\left(\cdot\right)$ and the truncated normal distribution function $F\left(\cdot\right)$. In Fig. \ref{fig:simu:0}, we illustrate an example of ${\bar \Pi}^{\rm LTE}\left(C\right)$, where $K=2$, ${\delta^{\rm LTE}}=0.4$, ${\eta^{\rm APO}}=0.3$, $R_{\rm LTE}=300{\rm~Mbps}$, and $r_k\in\left[50~{\rm Mbps},200~{\rm Mbps}\right]$ follows a truncated normal distribution.{\footnote{We choose $R_{\rm LTE}=300{\rm~Mbps}$ because the peak LTE throughput ranges from $250~{\rm Mbps}$ to $370~{\rm Mbps}$ based on \cite{canousing}. {{Moreover, we choose ${\eta^{\rm APO}}$ smaller than ${\delta^{\rm LTE}}$, because the degeneration of Wi-Fi's data rate due to the co-channel interference is usually heavier than that of the LTE, as we discussed in Section \ref{sec:model}.}}}} Based on Theorem \ref{theorem:optimalCcase} and Observation \ref{observation:0}, when $\frac{K-1+{\eta^{\rm APO}}}{K\left(1-{\delta^{\rm LTE}}\right)}r_{\min}<R_{\rm LTE}\le r_{\max}$ and $R_{\rm LTE}>\max\left\{r_{\max},\frac{K-1+{\eta^{\rm APO}}}{K\left(1-{\delta^{\rm LTE}}\right)}r_{\min}\right\}$, we can use the Golden Section method \cite{bertsekas1999nonlinear} on interval $\left(\frac{K-1+{\eta^{\rm APO}}}{K}r_{\min},R_{\rm LTE}\right]$ and interval $\left(\frac{K-1+{\eta^{\rm APO}}}{K}r_{\min},r_{\max}\right]$, respectively, to determine the optimal $C^*$.


\vspace{-0.3cm}
\section{Numerical Results}\label{sec:numerical}
In this section, we first investigate the impacts of the system parameters on the LTE's optimal reserve rate, the LTE's expected payoff, and the APOs' equilibrium strategies. 
{{Then we compare our auction-based spectrum sharing scheme with a state-of-the-art benchmark scheme. Specifically, we randomly pick the APOs, and implement both schemes. We compare several criteria (such as the LTE provider's payoff, the APOs' total payoff, and the social welfare) achieved by our auction-based scheme and the benchmark scheme.}}

\subsection{Influences of System Parameters}
\subsubsection{Influence of $K$} We first study the impact of the number of APOs $K$ on the LTE provider's and APOs' strategies. We choose $R_{\rm LTE}=95{\rm~Mbps}$, ${\delta^{\rm LTE}}=0.4$, and ${\eta^{\rm APO}}=0.3$, and assume that $r_k$, $k\in{\cal K}$, follows a truncated normal distribution. Specifically, we obtain the distribution of $r_k$ by truncating the normal distribution ${\cal N}\left(125~{\rm Mbps},2500~{{\rm Mbps}^2}\right)$ to interval $\left[50~{\rm Mbps},200~{\rm Mbps}\right]$. We change $K$ from $2$ to $7$, and determine the corresponding optimal reserve rate $C^*$ numerically based on the approach discussed in Section \ref{subsec:stageI:3}. 

We plot $C^*$ against $K$ in Fig. \ref{fig:simu:A}, and observe that $C^*$ increases with $K$. This is because that the probability of a particular APO being interfered by the LTE in the \emph{competition mode} decreases with the number of APOs. Hence, the APOs are less willing to cooperate with the LTE provider under a larger $K$, and the LTE provider needs to increase $C^*$ to attract the APOs.

In Fig. \ref{fig:simu:A}, we observe that $C^*\in\left(\frac{K-1+{\eta^{\rm APO}}}{K}r_{\min},r_{\min}\right)$ for $2\le K \le 4$. Based on (\ref{equ:simplyequilibrium}), in this case, APOs with types in $\left[r_{\min},r_X\left(C^*\right)\right)$ and $\left(r_X\left(C^*\right),r_{\max}\right]$ bid $C^*$ and $``{\rm N}\textquotedblright$, respectively. To study the impact of $K$ on the APOs' strategies, we plot $r_X\left(C^*\right)$ for $2\le K \le 4$ in Fig. \ref{fig:simu:A}. We observe that $r_X\left(C^*\right)$ decreases with $K$. This means that when $K$ increases from $2$ to $4$, more APOs bid $``{\rm N}\textquotedblright$ instead of $C^*$. On the other hand, we find that $C^*\in\left[r_{\min},r_{\max}\right)$ for $5\le K \le 7$. Based on (\ref{equ:equilibrium}), in this case, APOs with types in $\left[C^*,r_T\left(C^*\right)\right)$ and $\left(r_T\left(C^*\right),r_{\max}\right]$ bid $C^*$ and $``{\rm N}\textquotedblright$, respectively. We plot $r_T\left(C^*\right)$ for $5\le K \le 7$, and observe that $r_T\left(C^*\right)$ decreases with $K$. Since $C^*$ increases with $K$, it is easy to conclude that when $K$ increases from $5$ to $7$, fewer APOs bid $C^*$, and more APOs bid $``{\rm N}\textquotedblright$. Combining the observations for $2\le K \le 4$ and $5\le K \le 7$, we summarize that the increase of $K$ makes more APOs switch from bidding $C^*$ to bidding $``{\rm N}\textquotedblright$. The reason is that each APO has a smaller chance to be interfered by the LTE in the \emph{competition mode} under a larger $K$. Therefore, the APOs with large $r_k$ are less willing to cooperate with the LTE provider, and more APOs bid $``{\rm N}\textquotedblright$ instead of $C^*$.

We summarize the observations for Fig. \ref{fig:simu:A} as follows.
\begin{observation}\label{observation:1}
When the number of APOs increases, (i) the LTE provider's optimal reserve rate $C^*$ increases, and (ii) more APOs switch from bidding $C^*$ to bidding $``{\rm N}\textquotedblright$.
\end{observation}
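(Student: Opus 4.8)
The plan is to read Observation \ref{observation:1} as a pair of comparative-statics claims in the number of bidders $K$, layered on top of the equilibrium thresholds and payoff formulas already established. Throughout I would relax $K$ to a continuous parameter so that implicit differentiation is legitimate, then transfer the monotonicity back to integer $K$ (permissible because the binomial sums in (\ref{equ:rX}) and (\ref{equ:rT}) and the integrals defining ${\bar \Pi}^{\rm LTE}$ are smooth in $K$). The governing objects are the indifference thresholds $r_X(C)$ and $r_T(C)$ from Lemmas \ref{lemma:rX} and \ref{lemma:rT}, which split the APO types into those bidding $C$ and those bidding $``{\rm N}\textquotedblright$, together with the optimizer $C^*$ of ${\bar \Pi}^{\rm LTE}(C)$. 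Since the claim lives in the cooperation regime, I would restrict to $R_{\rm LTE}>\frac{K-1+{\eta^{\rm APO}}}{K\left(1-{\delta^{\rm LTE}}\right)}r_{\min}$ (cases (2) and (3) of Theorem \ref{theorem:optimalCcase}), where $C^*>\frac{K-1+{\eta^{\rm APO}}}{K}r_{\min}$.

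For part (ii) I would separate a \emph{direct} and an \emph{indirect} effect of $K$ on the fraction of APOs bidding $``{\rm N}\textquotedblright$, which equals $1-F\!\left(r_X(C^*)\right)$ (or $1-F\!\left(r_T(C^*)\right)$). The direct effect is that, holding $C$ fixed, the threshold decreases in $K$: differentiating the defining equation (\ref{equ:rX}) implicitly in $K$ should yield $\partial_K r_X<0$, with a transparent driver — conditional on all rivals bidding $``{\rm N}\textquotedblright$, a losing type-$r$ APO earns $\frac{K-1+{\eta^{\rm APO}}}{K}r$, so its competition-mode penalty $\frac{1-{\eta^{\rm APO}}}{K}r$ shrinks as $K$ grows and high types find bidding $C$ less attractive. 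The indirect effect enters through $C^*$ via the total derivative $\frac{d\,r_X(C^*)}{dK}=\partial_K r_X+(\partial_C r_X)\frac{dC^*}{dK}$, in which $\partial_C r_X>0$ (a larger reserve rate draws more types into cooperation) and $\frac{dC^*}{dK}>0$ by part (i). Because the two effects carry opposite signs, the crux is to show the direct effect dominates, which I would do by bounding $\partial_K r_X$ and $(\partial_C r_X)\,dC^*/dK$ from the same implicit system.

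For part (i) I would characterize an interior $C^*$ through $\frac{d{\bar \Pi}^{\rm LTE}}{dC}\big|_{C=C^*}=0$ (the boundary case in which $b_{\max}(C)\le R_{\rm LTE}$ binds gives $C^*=R_{\rm LTE}$, constant in $K$, so the strict increase is an interior phenomenon). This first-order condition is well posed because Observation \ref{observation:0} gives strict unimodality on $\left(\frac{K-1+{\eta^{\rm APO}}}{K}r_{\min},r_{\max}\right]$, hence $\partial^2{\bar \Pi}^{\rm LTE}/\partial C^2<0$ at $C^*$. Implicit differentiation in $K$ then gives $\frac{dC^*}{dK}=-\big(\partial^2{\bar \Pi}^{\rm LTE}/\partial C\,\partial K\big)\big/\big(\partial^2{\bar \Pi}^{\rm LTE}/\partial C^2\big)$, so that $\mathrm{sign}\!\left(\frac{dC^*}{dK}\right)=\mathrm{sign}\!\left(\partial^2{\bar \Pi}^{\rm LTE}/\partial C\,\partial K\right)$. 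Feeding the payoff forms (\ref{equ:LTEpayoff:relow}) and (\ref{equ:LTEpayoff:typicalC}) and the threshold derivatives of part (ii) into this cross-partial reduces part (i) to signing a single quantity, and the mechanism — a larger $K$ lowers each APO's interference probability $1/K$, weakens its cooperation incentive, and forces the buyer to raise $C^*$ — predicts a positive sign.

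The hard part is that neither $r_X(C)$ nor $r_T(C)$ has a closed form, since each is a root of a binomial-sum equation, so every derivative above must be pushed through implicit differentiation, and the decisive sign of $\partial^2{\bar \Pi}^{\rm LTE}/\partial C\,\partial K$ (and the direct-versus-indirect comparison in part (ii)) depends on the shape of $F$ in a way that need not be determinate for an arbitrary distribution. I therefore expect a fully distribution-free analytic proof to be unattainable and would instead verify the cross-partial's sign for the structured families of interest — the uniform and truncated normal laws — which matches the paper's choice to support Observation \ref{observation:1} numerically rather than by a general argument.
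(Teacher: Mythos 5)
Your proposal ends in the right place, and it is worth being explicit about why: Observation \ref{observation:1} is labeled an \emph{observation} precisely because the paper never proves it. The paper's entire support is the simulation behind Fig. \ref{fig:simu:A} --- $C^*$, $r_X\left(C^*\right)$ and $r_T\left(C^*\right)$ are computed numerically (via the Golden Section method, itself justified only by the numerically verified Observation \ref{observation:0}) for $K=2,\ldots,7$ under one truncated normal distribution and one fixed triple $\left(R_{\rm LTE},\delta^{\rm LTE},\eta^{\rm APO}\right)$, and the monotone patterns are read off the plots and explained by the same verbal interference-probability argument you give. Your comparative-statics scaffolding --- treating $K$ as continuous, signing $dC^*/dK$ through the cross-partial $\partial^2 {\bar \Pi}^{\rm LTE}/\partial C\,\partial K$ at an interior first-order condition, and splitting the motion of the thresholds into a direct effect $\partial_K r_X<0$ and an indirect effect $\left(\partial_C r_X\right) dC^*/dK>0$ --- is a genuinely different and more informative framing than anything in the paper. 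In particular, the paper's explanation of part (ii) silently conflates the total and direct effects: it argues that a larger $K$ weakens each APO's cooperation incentive, but never acknowledges that the induced rise in $C^*$ pushes the thresholds the opposite way, so part (ii) is really the claim that the direct effect dominates --- exactly the crux you isolate. Since you concede that the decisive signs cannot be settled distribution-free and defer to numerical verification for the uniform and truncated normal families, your proposal is consistent with (indeed slightly stronger than) the paper's treatment. Two caveats if you ever tried to complete the scaffolding: strict unimodality does not by itself yield $\partial^2 {\bar \Pi}^{\rm LTE}/\partial C^2<0$ at the maximizer; and ${\bar \Pi}^{\rm LTE}\left(C\right)$ is defined piecewise, with Fig. \ref{fig:simu:A} showing $C^*$ crossing the boundary $C=r_{\min}$ between $K=4$ and $K=5$, so a pure interior-FOC analysis needs care at that kink, where the closed form of the payoff (and hence of your cross-partial) changes from (\ref{equ:LTEpayoff:relow}) to (\ref{equ:LTEpayoff:typicalC}).
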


\begin{figure*}[t]
  \centering
  \begin{minipage}[t]{.32\linewidth}
  \centering
  \includegraphics[scale=0.3]{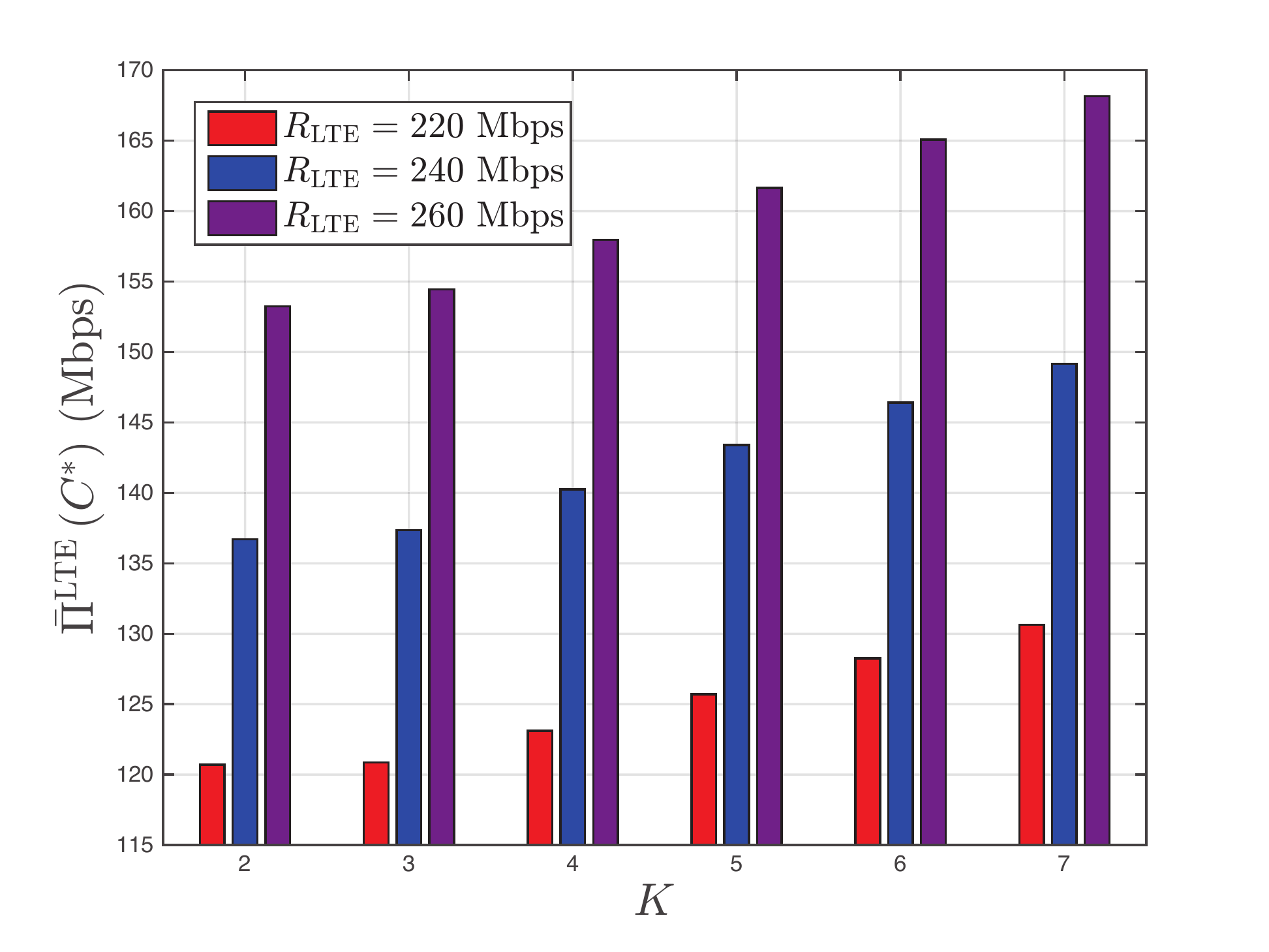}
  \vspace{-3mm}
   \caption{${\bar \Pi}^{\rm LTE}\left(C^*\right)$ (Large $R_{\rm LTE}$).}
  \label{fig:simu:B}
  \vspace{-5mm}
  \end{minipage}
  \begin{minipage}[t]{.32\linewidth}
  \includegraphics[scale=0.3]{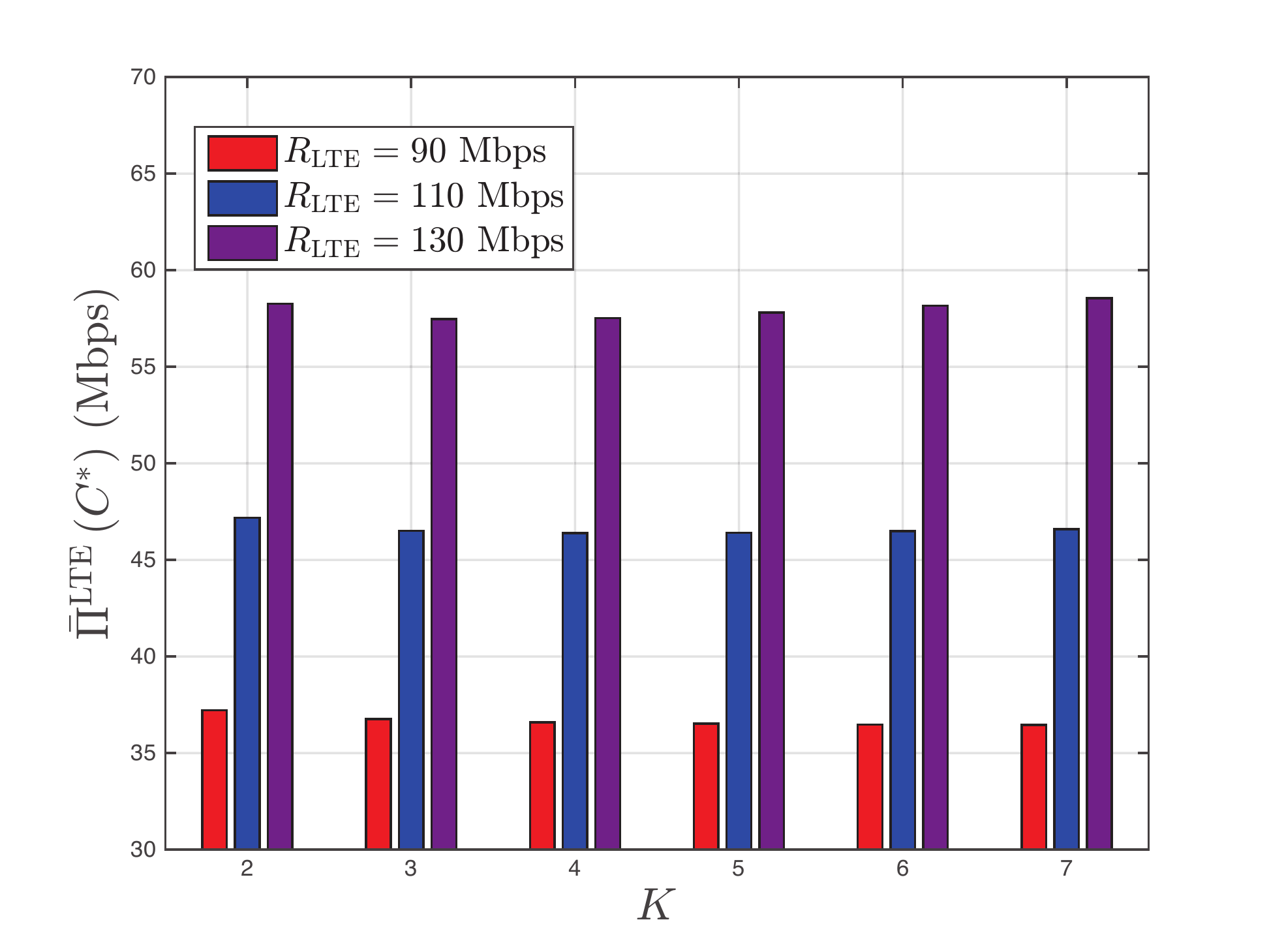}
  \vspace{-3mm}
  \caption{${\bar \Pi}^{\rm LTE}\left(C^*\right)$ (Small $R_{\rm LTE}$).}
  \label{fig:simu:C}
  \vspace{-5mm}
  \end{minipage}
  \begin{minipage}[t]{.32\linewidth}
  \centering
  \includegraphics[scale=0.3]{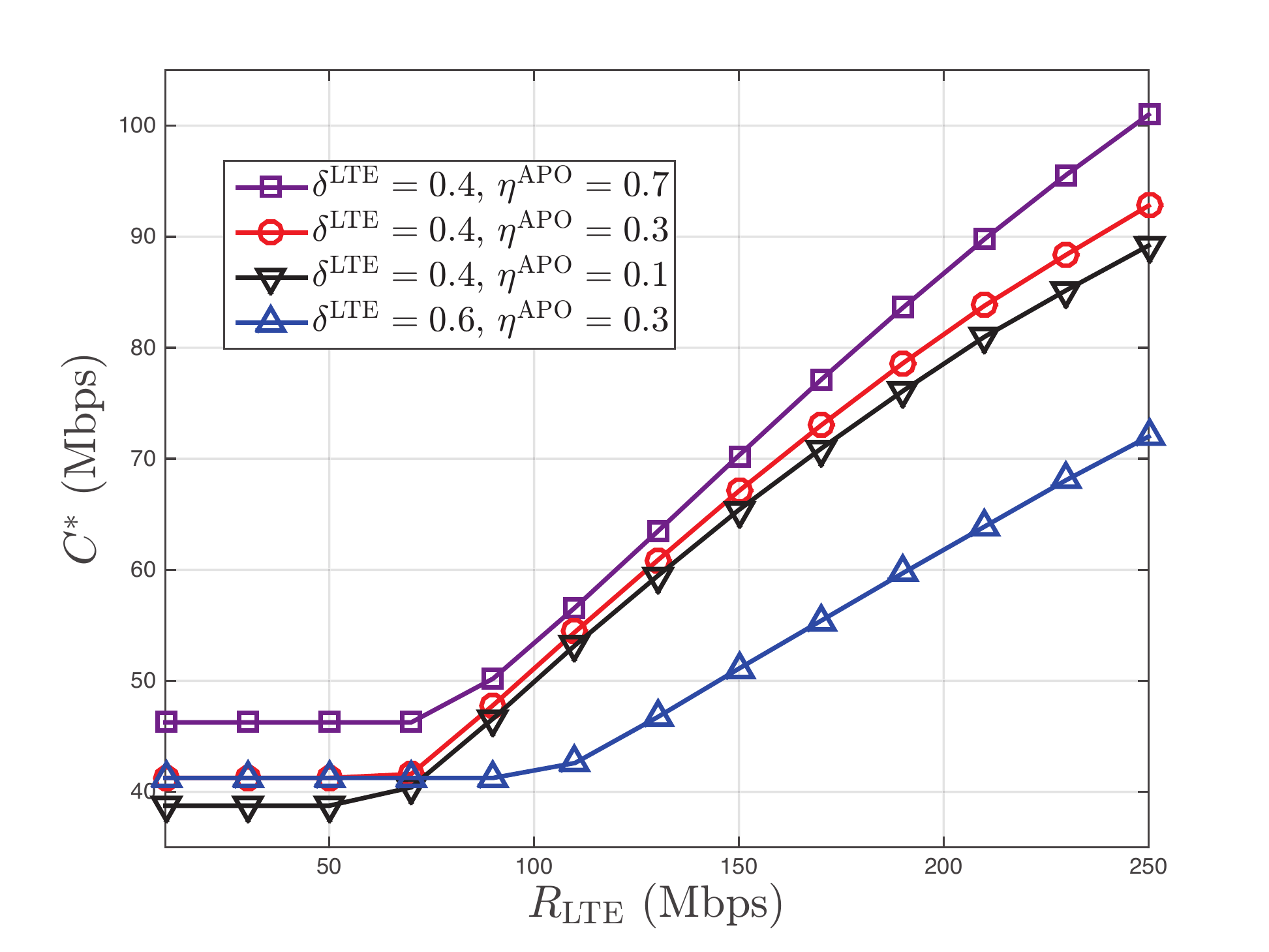}
  \vspace{-3mm}
  \caption{Impacts of $R_{\rm LTE}$, ${\delta^{\rm LTE}}$, and ${\eta^{\rm APO}}$ on $C^*$.}
  \label{fig:simu:D}
  \vspace{-5mm}
  \end{minipage}
\end{figure*}

Next we study the impact of $K$ on the LTE provider's expected payoff ${\bar \Pi}^{\rm LTE}\left(C^*\right)$. The settings of ${\delta^{\rm LTE}}$ and ${\eta^{\rm APO}}$, and the distribution of $r_k$ are the same as those in Fig. \ref{fig:simu:A}. We choose $R_{\rm LTE}=220{\rm~Mbps}$, $240{\rm~Mbps}$, and $260{\rm~Mbps}$, and plot the corresponding ${\bar \Pi}^{\rm LTE}\left(C^*\right)$ against $K$ in Fig. \ref{fig:simu:B}. We observe that ${\bar \Pi}^{\rm LTE}\left(C^*\right)$ increases with $K$ for these values of $R_{\rm LTE}$. Moreover, we choose $R_{\rm LTE}=90{\rm~Mbps}$, $110{\rm~Mbps}$, and $130{\rm~Mbps}$, and plot the corresponding ${\bar \Pi}^{\rm LTE}\left(C^*\right)$ against $K$ in Fig. \ref{fig:simu:C}. Different from Fig. \ref{fig:simu:B}, we find that ${\bar \Pi}^{\rm LTE}\left(C^*\right)$ does not significantly change with $K$ in Fig. \ref{fig:simu:C}. To understand the difference between Fig. \ref{fig:simu:B} and Fig. \ref{fig:simu:C}, we notice that the increase of $K$ has the following two opposite impacts on ${\bar \Pi}^{\rm LTE}\left(C^*\right)$: (i) the probability for the LTE provider to find an APO with a small bid increases, which potentially increases ${\bar \Pi}^{\rm LTE}\left(C^*\right)$; (ii) more APOs bid $``{\rm N}\textquotedblright$ instead of $C^*$ (Observation \ref{observation:1}), which potentially decreases ${\bar \Pi}^{\rm LTE}\left(C^*\right)$. In Fig. \ref{fig:simu:B}, the values of $R_{\rm LTE}$ are large, and the LTE provider can set large reserve rates $C^*$ to attract the APOs. In this situation, the interval of APO types that want to cooperate with the LTE provider is large, and impact (i) plays a dominant role. As a result, ${\bar \Pi}^{\rm LTE}\left(C^*\right)$ increases with $K$ in Fig. \ref{fig:simu:B}. On the other hand, the values of $R_{\rm LTE}$ are small in Fig. \ref{fig:simu:C}, and the LTE provider can only choose small reserve rates $C^*$. Hence, the interval of APO types that want to cooperate with the LTE provider is small. In this situation, impact (ii) becomes as important as impact (i). As a result, ${\bar \Pi}^{\rm LTE}\left(C^*\right)$ does not significantly change with $K$ in Fig. \ref{fig:simu:C}.

We summarize the following observations for Fig. \ref{fig:simu:B} and Fig. \ref{fig:simu:C}.
\begin{observation}\label{observation:2}
When the LTE provider has a large throughput $R_{\rm LTE}$, its expected payoff ${\bar \Pi}^{\rm LTE}\left(C^*\right)$ increases with $K$; otherwise, ${\bar \Pi}^{\rm LTE}\left(C^*\right)$ does not significantly change with $K$.
\end{observation}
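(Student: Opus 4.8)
The plan is to treat the two halves of Observation \ref{observation:2} separately, since they have very different characters: the large-$R_{\rm LTE}$ half admits a clean monotonicity proof, whereas the small-$R_{\rm LTE}$ half is inherently qualitative and is best supported by a sign decomposition together with the numerics already reported.

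For the large-$R_{\rm LTE}$ half, I would first argue that when $R_{\rm LTE}$ exceeds a threshold, the optimal reserve rate is pushed to the boundary $C^*=r_{\max}$, so that the LTE provider always operates in the cooperation mode. Concretely, I rewrite the region-III payoff (\ref{equ:LTEpayoff:typicalC}) as $\bar\Pi^{\rm LTE}(C)=R_{\rm LTE}-p(C)\,(1-\delta^{\rm LTE})R_{\rm LTE}-\bar r_{\rm pay}(C)$, where $p(C)\triangleq\left(1-F(r_T(C))\right)^K$ is the probability that all APOs bid $``{\rm N}\textquotedblright$. The term $\bar r_{\rm pay}(C)$ is bounded above by $r_{\max}$ uniformly in $C$, while the term $-p(C)(1-\delta^{\rm LTE})R_{\rm LTE}$ scales with $R_{\rm LTE}$ and is maximized precisely when $p(C)$ is minimized; since $p(C)$ is non-increasing in $C$ and vanishes at $C=r_{\max}$ (where all types bid truthfully), for $R_{\rm LTE}$ large enough this term dominates and forces $C^*=r_{\max}$. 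Evaluating the payoff there via (\ref{equ:LTEpayoff:largeC}) yields the closed form $\bar\Pi^{\rm LTE}(C^*)=R_{\rm LTE}-\mathbb{E}\!\left[r_{(2:K)}\right]$, where $r_{(2:K)}$ is the second-smallest of the $K$ i.i.d.\ types and $\mathbb{E}\!\left[r_{(2:K)}\right]=K(K-1)\int_{r_{\min}}^{r_{\max}} r f(r) F(r)(1-F(r))^{K-2}\,dr$.

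The monotonicity then reduces to the classical order-statistics fact that $\mathbb{E}\!\left[r_{(2:K)}\right]$ is strictly decreasing in $K$, which I would prove by a coupling argument: realize the $K+1$ types as the $K$ types used for index $K$ augmented by one extra independent draw. The second-smallest of the larger set is pointwise no larger than the second-smallest of the smaller set (any value exceeded by fewer than two elements of the subset is exceeded by fewer than two elements of the superset), and it is strictly smaller with positive probability because $f>0$ on $\left[r_{\min},r_{\max}\right]$. Taking expectations gives $\mathbb{E}\!\left[r_{(2:K+1)}\right]<\mathbb{E}\!\left[r_{(2:K)}\right]$, hence $\bar\Pi^{\rm LTE}(C^*)$ strictly increases in $K$. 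This settles the first half rigorously as a theorem valid for $R_{\rm LTE}$ above the stated threshold.

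For the small-$R_{\rm LTE}$ half, a fixed-$C$ decomposition is the right tool. In the relevant low range $C^*\in\left(\frac{K-1+\eta^{\rm APO}}{K}r_{\min},r_{\min}\right)$, I rewrite (\ref{equ:LTEpayoff:relow}) as $\bar\Pi^{\rm LTE}(C)=R_{\rm LTE}-C+q(C)^K\bigl(C-(1-\delta^{\rm LTE})R_{\rm LTE}\bigr)$, with $q(C)\triangleq 1-F(r_X(C))$ the per-APO probability of bidding $``{\rm N}\textquotedblright$. Increasing $K$ acts through two channels: the exponent (effect (i), which drives $q^K$ down since $q<1$) and the downward shift of the implicit threshold $r_X(C)$, which raises $q$ (effect (ii), from Observation \ref{observation:1}); I would show these move $q(C)^K$ in opposite directions, and that since $C-(1-\delta^{\rm LTE})R_{\rm LTE}$ is a small negative quantity when $R_{\rm LTE}$ is small, the whole correction term is small in magnitude and nearly $K$-invariant, leaving $\bar\Pi^{\rm LTE}(C^*)\approx R_{\rm LTE}-C^*$ essentially flat in $K$. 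The main obstacle is precisely this half: the threshold $r_X(C)$ is defined only implicitly through (\ref{equ:rX}), its dependence on $K$ has no closed form, and $C^*$ itself is not available in closed form, so a fully rigorous invariance statement is out of reach. I would therefore state the first half as a theorem and retain the second half as an observation, supporting it by the sign analysis above together with the numerical verification reported for the uniform and truncated-normal distributions in Fig.\ \ref{fig:simu:C}.
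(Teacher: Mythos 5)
Your treatment of the second half is faithful to what the paper actually does: Observation~2 is not proven there at all, but supported purely by Monte Carlo experiments (Figs.~\ref{fig:simu:B} and \ref{fig:simu:C}) plus the qualitative two-effect argument (larger $K$ raises the chance of finding a low bid but also pushes more APO types to bid $``{\rm N}\textquotedblright$, with the reserve rate $C^*$ determining which effect dominates). Your decomposition $\bar\Pi^{\rm LTE}(C)=R_{\rm LTE}-C+q(C)^K\bigl(C-(1-\delta^{\rm LTE})R_{\rm LTE}\bigr)$ of (\ref{equ:LTEpayoff:relow}) is algebraically correct and is a sharper articulation of that intuition, and deferring to numerics there is appropriate. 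Your order-statistics coupling showing $\mathbb{E}\bigl[r_{(2:K)}\bigr]$ strictly decreasing in $K$ is also correct.

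The genuine gap is in the first half, precisely at the step ``for $R_{\rm LTE}$ large enough this term dominates and forces $C^*=r_{\max}$.'' This is false. Differentiating (\ref{equ:LTEpayoff:typicalC}) with (\ref{equ:rpay:multi}) and cancelling terms gives $\frac{d\bar\Pi^{\rm LTE}}{dC}=K\bigl(1-F(r_T)\bigr)^{K-1}f(r_T)\,r_T'(C)\bigl[(1-\delta^{\rm LTE})R_{\rm LTE}-C\bigr]-KF(C)\bigl(1-F(C)\bigr)^{K-1}-\bigl(1-F(C)\bigr)^K+\bigl(1-F(r_T)\bigr)^K$. Near $C=r_{\max}$, the defining equation (\ref{equ:rT}) balances $\bigl(1-F(r_T)\bigr)^{K-1}\cdot r_{\max}\frac{1-\eta^{\rm APO}}{K}$ against its $n=K-1$ summand, which forces $1-F(r_T(C))\sim c\bigl(1-F(C)\bigr)^{K/(K-1)}$; consequently the benefit term (reduced competition probability) is of order $\bigl(1-F(C)\bigr)^{K+\frac{1}{K-1}}R_{\rm LTE}$ while the cost term (increased expected payment) is of order $K\bigl(1-F(C)\bigr)^{K-1}$. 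So for \emph{every} fixed $R_{\rm LTE}$ the derivative is strictly negative on a left neighborhood of $r_{\max}$: the maximizer is interior, $C^*<r_{\max}$, and $\bar\Pi^{\rm LTE}(C^*)$ strictly exceeds $R_{\rm LTE}-\mathbb{E}\bigl[r_{(2:K)}\bigr]$. Your closed form is therefore only a feasibility lower bound (valid when $R_{\rm LTE}\ge r_{\max}$, since $C=r_{\max}$ is then admissible), and monotonicity in $K$ of a lower bound does not imply monotonicity of the maximized payoff --- you would need to control the $K$-dependent gap, which your argument does not attempt. Two further points: the paper's simulations use $R_{\rm LTE}=220$--$260$~Mbps against $r_{\max}=200$~Mbps, a regime in which $C^*$ is well below $r_{\max}$ (cf.\ Fig.~\ref{fig:simu:D}), so even a salvaged asymptotic version of your theorem (with a threshold on $R_{\rm LTE}$ that, moreover, would depend on $K$) would not cover the regime the observation describes; and your auxiliary claim that $p(C)$ is non-increasing presumes $r_T(C)$ is nondecreasing in $C$, which the paper never establishes (Assumption~\ref{assumption:unique} gives only uniqueness). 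The first half should thus remain an observation supported by Fig.~\ref{fig:simu:B}, with $R_{\rm LTE}-\mathbb{E}\bigl[r_{(2:K)}\bigr]$ usable only as an increasing-in-$K$ lower bound, not as the value of $\bar\Pi^{\rm LTE}(C^*)$.
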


\subsubsection{Influences of $R_{\rm LTE}$, ${\delta^{\rm LTE}}$, and ${\eta^{\rm APO}}$} We investigate the impacts of parameters $R_{\rm LTE}$, ${\delta^{\rm LTE}}$, and ${\eta^{\rm APO}}$ on $C^*$. We choose $K=4$, and the distribution of $r_k$ is the same as that in Fig. \ref{fig:simu:A}. We consider four pairs of data rate discounting factors: $\left({\delta^{\rm LTE}},{\eta^{\rm APO}}\right)=\left(0.4,0.7\right),\left(0.4,0.3\right),\left(0.4,0.1\right)$, and $\left(0.6,0.3\right)$. For each pair of $\left({\delta^{\rm LTE}},{\eta^{\rm APO}}\right)$, we change $R_{\rm LTE}$ from $10{\rm~Mbps}$ to $250{\rm~Mbps}$, and determine the corresponding $C^*$ numerically. In Fig. \ref{fig:simu:D}, we plot $C^*$ against $R_{\rm LTE}$ under the different pairs of $\left({\delta^{\rm LTE}},{\eta^{\rm APO}}\right)$. 

Under all four settings, we observe that $C^*$ does not change with $R_{\rm LTE}$ when $R_{\rm LTE}$ is below $\frac{K-1+{\eta^{\rm APO}}}{K\left(1-{\delta^{\rm LTE}}\right)}r_{\min}$. In this case, the LTE provider does not have enough capacity to satisfy the APOs' requests. Based on Theorem \ref{theorem:optimalCcase}, it chooses a small reserve rate, and works in the \emph{competition mode}. When $R_{\rm LTE}$ is above $\frac{K-1+{\eta^{\rm APO}}}{K\left(1-{\delta^{\rm LTE}}\right)}r_{\min}$, $C^*$ increases with $R_{\rm LTE}$. This is because with a larger throughput $R_{\rm LTE}$, the LTE provider is able to allocate a larger data rate to the winning APO, and hence it increases the reserve rate $C^*$ to attract the APOs. 

With ${\delta^{\rm LTE}}=0.4$, we find that $C^*$ increases with ${\eta^{\rm APO}}$ (see the top three curves). This is because under a larger ${\eta^{\rm APO}}$, the APOs are less heavily interfered by the LTE, and hence are less willing to cooperate with the LTE provider. As a result, the LTE provider needs to increase its reserve rate to attract the APOs.

With ${\eta^{\rm APO}}=0.3$, we find that $C^*$ under ${\delta^{\rm LTE}}=0.4$ is no smaller than that under ${\delta^{\rm LTE}}=0.6$. Under a smaller ${\delta^{\rm LTE}}$, the LTE provider is more heavily affected by the interference from Wi-Fi. In this case, the LTE provider chooses a larger reserve rate $C^*$ to motivate the cooperation with the APOs. Furthermore, compared with ${\eta^{\rm APO}}$, we find that the difference in ${\delta^{\rm LTE}}$ leads to a larger difference in $C^*$, which shows that ${\delta^{\rm LTE}}$ has a larger impact on $C^*$ than ${\eta^{\rm APO}}$.

We summarize the observations in Fig. \ref{fig:simu:D} as follows.

\vspace{-0.1cm}

\begin{observation}\label{observation:mono}
The optimal reserve rate $C^*$ is non-decreasing in $R_{\rm LTE}$, increasing in ${\eta^{\rm APO}}$, and non-increasing in ${\delta^{\rm LTE}}$. Moreover, ${\delta^{\rm LTE}}$ has a larger impact on $C^*$ than ${\eta^{\rm APO}}$.
\end{observation}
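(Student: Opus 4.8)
The plan is to view $C^*$ as the maximizer of ${\bar\Pi}^{\rm LTE}(C)$, now regarded as a function of $C$ jointly with the parameters $R_{\rm LTE}$, ${\eta^{\rm APO}}$, and ${\delta^{\rm LTE}}$, and to obtain the three monotonicity claims by monotone comparative statics, leaving the final magnitude assertion to numerics. By Theorem \ref{theorem:optimalCcase}, the only nontrivial regime is $R_{\rm LTE}>\frac{K-1+{\eta^{\rm APO}}}{K\left(1-{\delta^{\rm LTE}}\right)}r_{\min}$; there Observation \ref{observation:0} gives strict unimodality of ${\bar\Pi}^{\rm LTE}$ on $\left(\frac{K-1+{\eta^{\rm APO}}}{K}r_{\min},r_{\max}\right]$, so an interior $C^*$ is pinned down by $\frac{d{\bar\Pi}^{\rm LTE}}{dC}=0$ (and otherwise by the active constraint $C^*=R_{\rm LTE}$, which is trivially non-decreasing in $R_{\rm LTE}$ and independent of the two discounting factors). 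Writing $G(C;\theta)\triangleq\frac{d{\bar\Pi}^{\rm LTE}}{dC}$, unimodality gives $\frac{\partial G}{\partial C}<0$ at $C^*$, so by the implicit function theorem $\mathrm{sign}\!\left(\frac{dC^*}{d\theta}\right)=\mathrm{sign}\!\left(\frac{\partial^2{\bar\Pi}^{\rm LTE}}{\partial C\,\partial\theta}\right)$ for each parameter $\theta$. The task thus reduces to signing three cross-partials on each of the two interior payoff forms (\ref{equ:LTEpayoff:relow}) and (\ref{equ:LTEpayoff:typicalC}).

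For $R_{\rm LTE}$ and ${\delta^{\rm LTE}}$ this is clean. Rewriting (\ref{equ:LTEpayoff:typicalC}) as ${\bar\Pi}^{\rm LTE}=R_{\rm LTE}\left[1-\left(1-{\delta^{\rm LTE}}\right)\left(1-F(r_T(C))\right)^{K}\right]-{\bar r}_{\rm pay}(C)$ and noting that both $r_T(C)$ (defined by (\ref{equ:rT})) and ${\bar r}_{\rm pay}(C)$ are independent of $R_{\rm LTE}$ and ${\delta^{\rm LTE}}$, direct differentiation yields $\frac{\partial^2{\bar\Pi}^{\rm LTE}}{\partial C\,\partial R_{\rm LTE}}=\left(1-{\delta^{\rm LTE}}\right)K\left(1-F(r_T)\right)^{K-1}f(r_T)\,r_T'(C)$ and $\frac{\partial^2{\bar\Pi}^{\rm LTE}}{\partial C\,\partial{\delta^{\rm LTE}}}=-R_{\rm LTE}K\left(1-F(r_T)\right)^{K-1}f(r_T)\,r_T'(C)$, which are of opposite sign. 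Hence both claims follow once I establish $r_T'(C)\ge 0$ (and the analogous $r_X'(C)\ge 0$ for region (\ref{equ:LTEpayoff:relow})), which I would obtain by implicitly differentiating (\ref{equ:rT}) and (\ref{equ:rX}) and using $f>0$ together with Assumption \ref{assumption:unique} (uniqueness of the root fixes the sign of the defining function's slope, hence the sign of $r_T'$). I would then verify that the interior comparative statics glue across the boundary $C=r_{\min}$ using the continuity of ${\bar\Pi}^{\rm LTE}$, so that the monotonicity holds globally and not just within one regime.

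The ${\eta^{\rm APO}}$ claim is harder, and I expect it to be the main obstacle. Here ${\eta^{\rm APO}}$ enters ${\bar\Pi}^{\rm LTE}$ only through $r_T(C)$ (and through ${\bar r}_{\rm pay}(C)$, which itself contains $r_T$) and through the regime boundary $\frac{K-1+{\eta^{\rm APO}}}{K}r_{\min}$, so $\frac{\partial^2{\bar\Pi}^{\rm LTE}}{\partial C\,\partial{\eta^{\rm APO}}}$ mixes $\frac{\partial r_T}{\partial{\eta^{\rm APO}}}$ with $r_T'(C)$ and is not sign-definite term by term. I would first extract $\mathrm{sign}\!\left(\frac{\partial r_T}{\partial{\eta^{\rm APO}}}\right)$ from (\ref{equ:rT}) --- the last summand $-\left(1-F(r)\right)^{K-1}\frac{K-1+{\eta^{\rm APO}}}{K}r$ is strictly decreasing in ${\eta^{\rm APO}}$, reflecting that a larger ${\eta^{\rm APO}}$ makes the \emph{competition mode} less costly and thus shrinks the cooperating interval --- and then combine this implicit derivative with the structure of ${\bar r}_{\rm pay}(C)$ to pin down the sign of the cross-partial. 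The final assertion, that ${\delta^{\rm LTE}}$ moves $C^*$ more than ${\eta^{\rm APO}}$, is genuinely quantitative; since (as the paper already observes) (\ref{equ:rT}) and (\ref{equ:rX}) admit no closed-form roots, I do not expect a clean analytic proof of $\left|\frac{dC^*}{d{\delta^{\rm LTE}}}\right|\ge\left|\frac{dC^*}{d{\eta^{\rm APO}}}\right|$ for general $F$, and would instead compare these sensitivities numerically, consistent with the paper treating this statement as a simulation-verified observation (Fig. \ref{fig:simu:D}).
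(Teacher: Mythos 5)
The paper offers no analytic proof of this statement at all: Observation \ref{observation:mono} is, in the paper, purely a summary of the numerical experiments behind Fig.~\ref{fig:simu:D} ($C^*$ computed by Golden Section search for four $({\delta^{\rm LTE}},{\eta^{\rm APO}})$ pairs as $R_{\rm LTE}$ sweeps upward), supported only by economic intuition; Section \ref{subsec:stageI:3} states explicitly that a closed-form treatment of $C^*$ is intractable. Your proposal therefore takes a genuinely different, more ambitious route for two of the four claims, and it is sound there. The rewriting of (\ref{equ:LTEpayoff:typicalC}) as $R_{\rm LTE}[1-(1-{\delta^{\rm LTE}})(1-F(r_T(C)))^K]-{\bar r}_{\rm pay}(C)$ is correct; $r_T(\cdot)$ and ${\bar r}_{\rm pay}(\cdot)$ indeed involve neither $R_{\rm LTE}$ nor ${\delta^{\rm LTE}}$, so your two cross-partials are right and automatically of opposite sign, and the whole question reduces, as you say, to $r_T'(C)\ge 0$ (and $r_X'(C)\ge 0$ on the regime (\ref{equ:LTEpayoff:relow})). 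That sign does follow from your implicit-differentiation plan: the defining function of (\ref{equ:rT}) is positive at $r=C$ and negative at $r=r_{\max}$ (this is exactly the paper's proof of Lemma \ref{lemma:rT}), so under Assumption \ref{assumption:unique} it crosses zero from above at its unique root, while its partial derivative with respect to $C$ is a sum of strictly positive terms whenever $r>C$; the same argument works for (\ref{equ:rX}). This upgrades the $R_{\rm LTE}$- and ${\delta^{\rm LTE}}$-monotonicity from simulation observations to (conditional) theorems, which the paper never attempts and which is real added value.

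Two caveats. First, as structured, your argument leans on Observation \ref{observation:0} plus the implicit function theorem, but strict unimodality is itself only verified numerically in the paper and in any case does not supply the strict second-order condition $\frac{\partial G}{\partial C}<0$ you invoke at $C^*$. Your own cross-partial computations already establish increasing differences in $(C,R_{\rm LTE})$ and decreasing differences in $(C,{\delta^{\rm LTE}})$, so a Topkis-style monotone-comparative-statics argument gives monotonicity of the argmax directly, with no unimodality, no differentiability of $C^*$, and it also handles the corner $C^*=R_{\rm LTE}$ and the constraint set of (\ref{equ:optABC}) (which expands with $R_{\rm LTE}$); that is the cleaner way to close your own argument. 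Second, for the ${\eta^{\rm APO}}$-monotonicity and for the claim that ${\delta^{\rm LTE}}$ has a larger impact than ${\eta^{\rm APO}}$, your proposal stops at a plan: the sign of the mixed partial is left unresolved (and is genuinely entangled, since ${\eta^{\rm APO}}$ moves both $r_T$ and the regime boundaries $\frac{K-1+{\eta^{\rm APO}}}{K}r_{\min}$), and the magnitude claim is conceded to numerics. This is not a gap \emph{relative to the paper}, which settles all four claims by simulation, but you should be clear that your write-up proves only the $R_{\rm LTE}$ and ${\delta^{\rm LTE}}$ parts analytically and reproduces the paper's purely numerical evidence for the rest.
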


\vspace{-0.5cm}
\subsection{Comparison with {{The Benchmark Scheme}}}
In this section, we compare our auction-based spectrum sharing scheme with a benchmark scheme. Given a set ${\cal K}$ of APOs, the two schemes work as follows:
\begin{itemize}
\item \emph{Our auction-based scheme:} First, the LTE provider determines $C^*$ numerically based on the approach in Section \ref{subsec:stageI:3}. Second, each APO $k\in{\cal K}$ submits its bid based on the equilibrium strategy $b^*\left(r_k,C^*\right)$ in Section \ref{sec:stageII:APO}. Third, the LTE provider determines its working mode, the winning APO, and the allocated rate based on the auction rule in Section \ref{subsec:auctionframe}. 

\item \emph{Benchmark scheme:} The LTE provider randomly shares a channel with one of the $K$ APOs.{\footnote{{{The existing studies focused on the LTE/Wi-Fi coexistence \cite{cano2015coexistence,zhangmodeling}, and there are no results studying the cooperation between the two types of networks. Hence, we represent the state-of-the-art solution by the benchmark scheme, where the LTE coexists with Wi-Fi. Since the LTE provider does not know the private information $r_k$, it cannot differentiate the channels. Therefore, in the benchmark scheme, the LTE provider will randomly pick a channel to coexist with the corresponding APO.}}}} 

\end{itemize}
For a particular set of APOs, we denote the LTE provider's payoff under our auction-based and the benchmark schemes as $\pi_{\rm a}^{\rm LTE}$ and $\pi_{\rm b}^{\rm LTE}$, respectively.{\footnote{Note that ${\bar \Pi}^{\rm LTE}\left(C^*\right)$ is the expectation of $\pi_{\rm a}^{\rm LTE}$ with respect to the APO types.}} Furthermore, we denote the APOs' total payoff under our auction-based and the benchmark schemes as $\pi_{\rm a}^{\rm APO}$ and $\pi_{\rm b}^{\rm APO}$, respectively. For a given set of APOs, we compute the relative performance gains of our auction-based scheme over the benchmark scheme in terms of the LTE's payoff and the APOs' total payoff as
\begin{align}
\rho^{\rm LTE} \triangleq \frac{\pi_{\rm a}^{\rm LTE}-\pi_{\rm b}^{\rm LTE}}{\pi_{\rm b}^{\rm LTE}}{\rm ~and~} \rho^{\rm APO} \triangleq \frac{\pi_{\rm a}^{\rm APO}-\pi_{\rm b}^{\rm APO}}{\pi_{\rm b}^{\rm APO}}.
\end{align}

\subsubsection{Performance on Average $\rho^{\rm LTE}$ and $\rho^{\rm APO}$} We investigate the average $\rho^{\rm LTE}$ and $\rho^{\rm APO}$. We consider four pairs of data rate discounting factors: $\left({\delta^{\rm LTE}},{\eta^{\rm APO}}\right)=\left(0.4,0.1\right),\left(0.4,0.3\right),\left(0.4,0.7\right)$, and $\left(0.6,0.3\right)$,{\footnote{{In a practical implementation, the values of $\delta^{\rm LTE}$ and $\eta^{\rm APO}$ depend on the applied coexistence mechanism (\emph{e.g.}, LBT or CSAT) and the corresponding settings (\emph{e.g.}, LTE off time in CSAT).}}} and change $R_{\rm LTE}$ from $30{\rm~Mbps}$ to $370{\rm~Mbps}$. The other settings are the same as those in Fig. \ref{fig:simu:D}. Given a pair of $\left({\delta^{\rm LTE}},{\eta^{\rm APO}}\right)$ and a particular value of $R_{\rm LTE}$, we randomly choose $r_k$, $k\in{\cal K}$, based on the truncated normal distribution, implement our auction-based scheme and the benchmark scheme separately, and record the corresponding values of $\rho^{\rm LTE}$ and $\rho^{\rm APO}$. For each pair of $\left({\delta^{\rm LTE}},{\eta^{\rm APO}}\right)$ and each value of $R_{\rm LTE}$, we run the experiment $20,000$ times, and obtain the corresponding average values of $\rho^{\rm LTE}$ and $\rho^{\rm APO}$.
\begin{figure*}[t]
  \centering
  \begin{minipage}[t]{.32\linewidth}
  \includegraphics[scale=0.3]{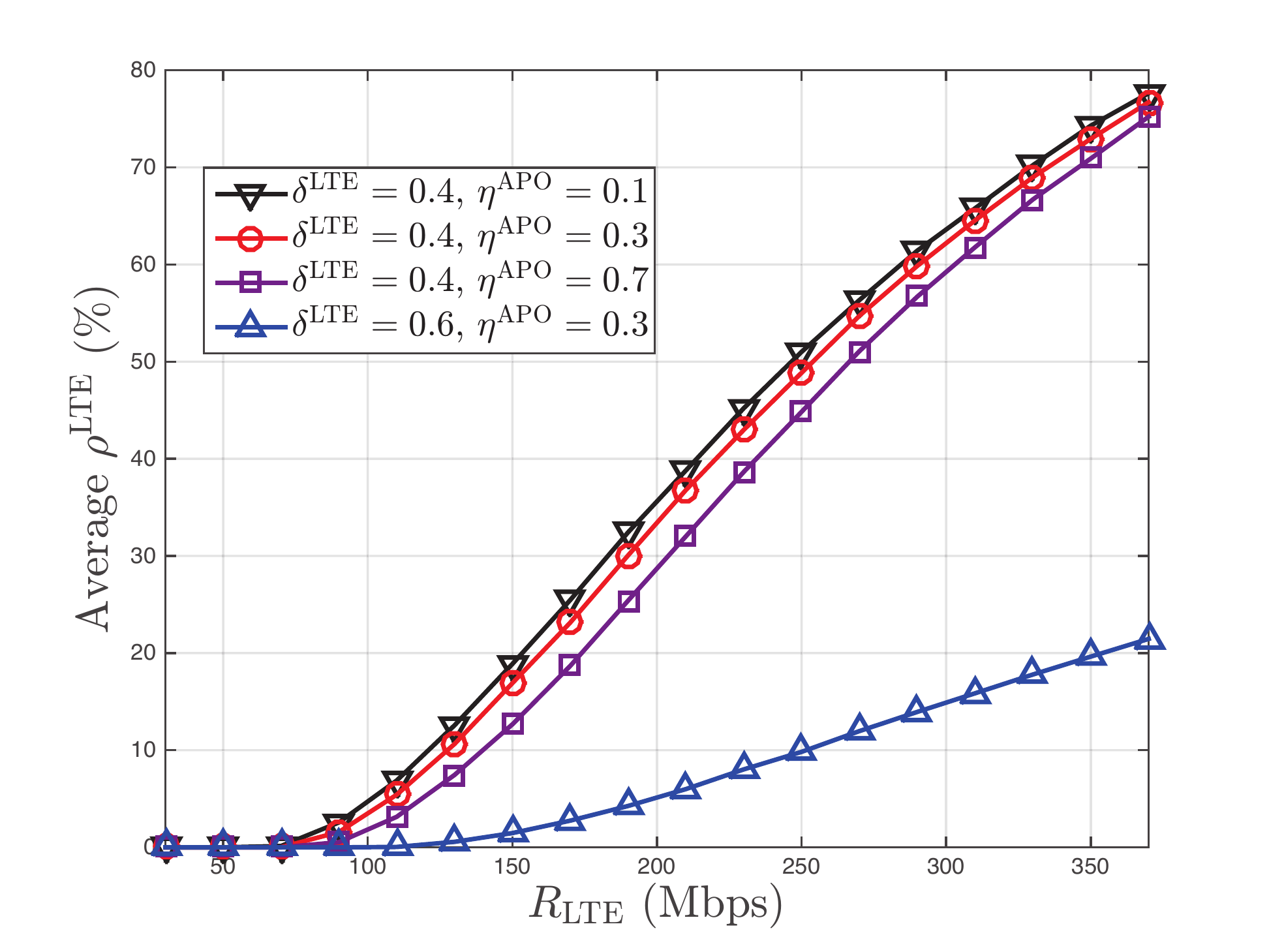}
  \caption{Comparison on LTE's Payoff.}
  \label{fig:simu:E}
  \vspace{-4mm}
  \end{minipage}
  \begin{minipage}[t]{.32\linewidth}
  \centering
  \includegraphics[scale=0.3]{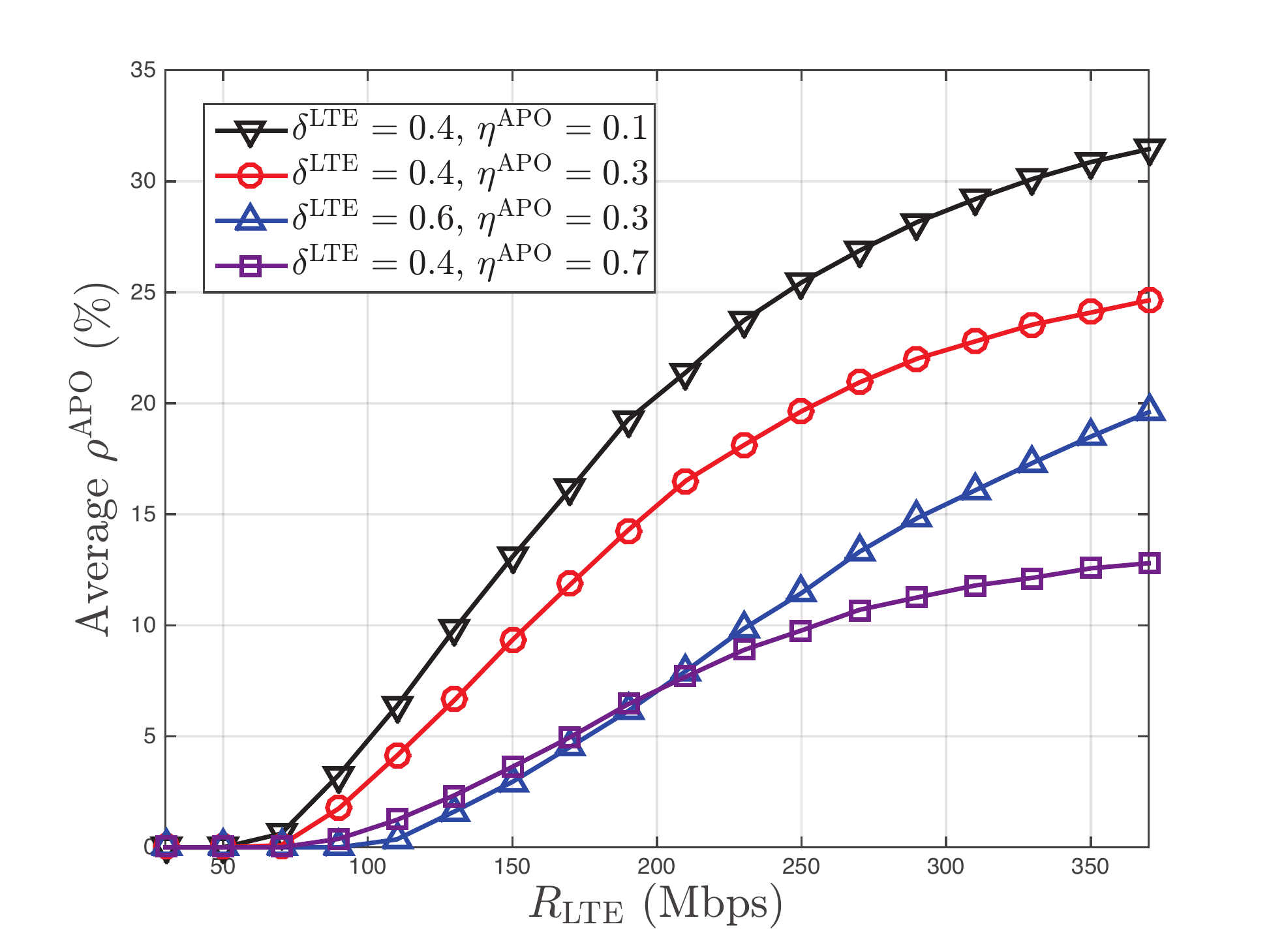}
  \caption{Comparison on APOs' Payoffs.}
  \label{fig:simu:F}
  \vspace{-4mm}
  \end{minipage}
  \begin{minipage}[t]{.32\linewidth}
  \centering
  \includegraphics[scale=0.3]{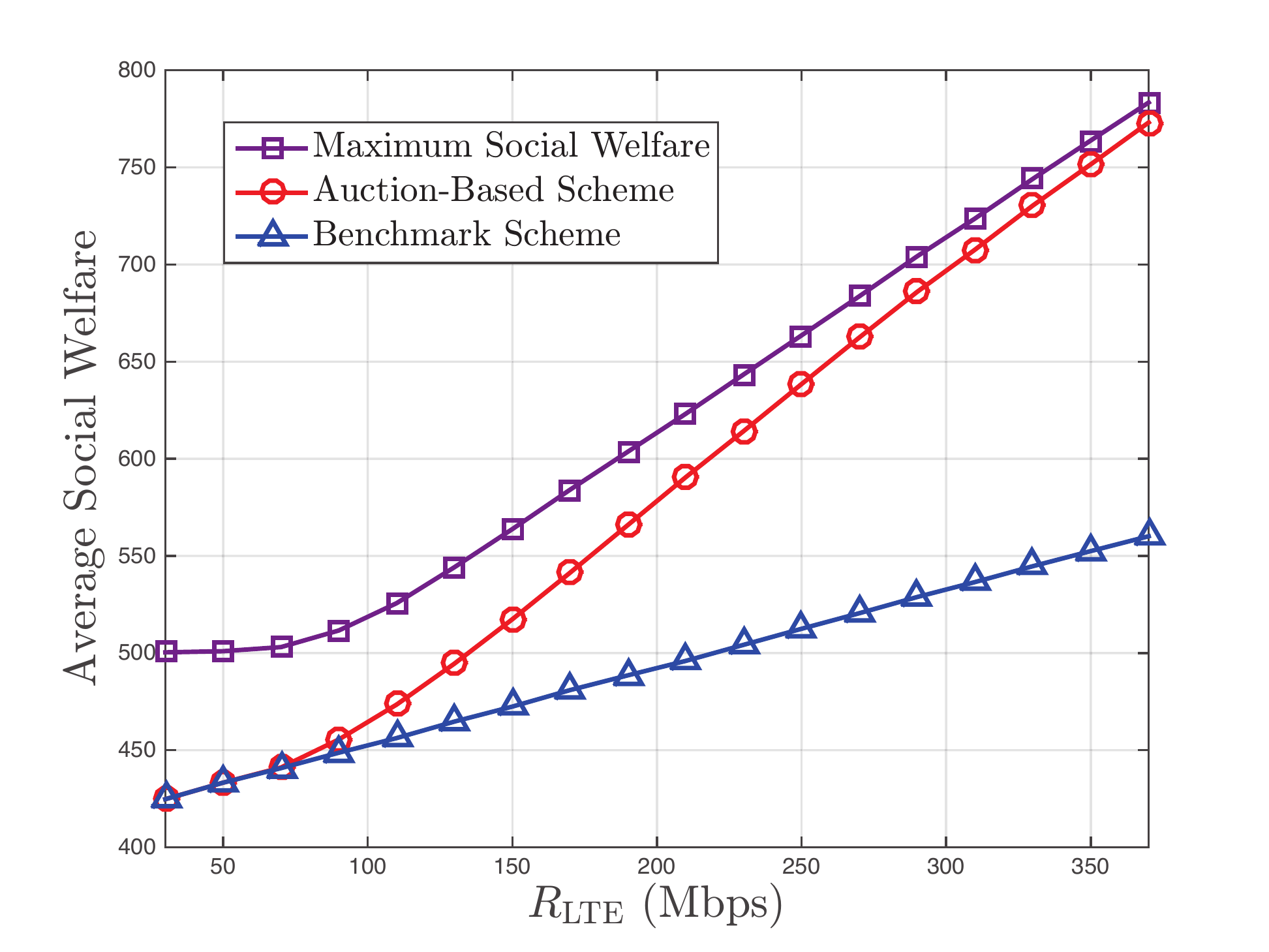}
  \caption{Comparison on Social Welfare.}
  \label{fig:simu:G}
  \vspace{-4mm}
  \end{minipage}
\end{figure*}

In Fig. \ref{fig:simu:E}, we plot the average $\rho^{\rm LTE}$ against $R_{\rm LTE}$ for different $\left({\delta^{\rm LTE}},{\eta^{\rm APO}}\right)$ pairs. First, we observe that the average $\rho^{\rm LTE}$ increases with $R_{\rm LTE}$. In particular, all the average $\rho^{\rm LTE}$ with ${\delta^{\rm LTE}}=0.4$ are above $70\%$ for $R_{\rm LTE}=370{\rm~Mbps}$ (the maximum LTE throughput according to \cite{canousing}). That is to say, our auction-based scheme's performance gain on the LTE provider's payoff is more significant for a larger $R_{\rm LTE}$. The reason is that a larger $R_{\rm LTE}$ enables the LTE provider to set a larger reserve rate, which increases the probability for the cooperation between the LTE provider and the APOs. Second, when ${\eta^{\rm APO}}=0.3$ and ${\delta^{\rm LTE}}$ increases from $0.4$ to $0.6$, the average $\rho^{\rm LTE}$ decreases significantly. Since a larger ${\delta^{\rm LTE}}$ implies that the coexistence with Wi-Fi reduces the LTE's payoff less significantly, the cooperation with Wi-Fi is less beneficial to the LTE provider, which decreases the average $\rho^{\rm LTE}$. Third, when ${\delta^{\rm LTE}}=0.4$ and ${\eta^{\rm APO}}$ changes from $0.1$ to $0.7$, the change in the average $\rho^{\rm LTE}$ is small. Hence, ${\eta^{\rm APO}}$ has a smaller impact on the average $\rho^{\rm LTE}$ comparing with ${\delta^{\rm LTE}}$. We summarize the observations in Fig. \ref{fig:simu:E} as follows.
\begin{observation}
Compared with the benchmark scheme, our auction-based scheme improves the LTE's payoff by 70\% on average under a large $R_{\rm LTE}$ and a small ${\delta^{\rm LTE}}$. Moreover, the performance gain is not sensitive to ${\eta^{\rm APO}}$.
\end{observation}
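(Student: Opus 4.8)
The plan is to reduce the statement about the \emph{average} relative gain $\rho^{\rm LTE}$ to the expected-payoff function ${\bar \Pi}^{\rm LTE}(C^*)$ already characterized in Section \ref{subsec:stageI:1}, and only then to invoke Monte-Carlo estimation. First I would observe that under the benchmark scheme the LTE provider always operates in the \emph{competition mode}, so by the second case of (\ref{equ:LTEpayoff}) its payoff is the deterministic constant $\pi_{\rm b}^{\rm LTE}={\delta^{\rm LTE}}R_{\rm LTE}$, independent of the realized types ${\bm r}$. Because the denominator in $\rho^{\rm LTE}$ is therefore a constant, averaging over the type realizations yields the clean identity
\begin{align}
\mathbb{E}_{\bm r}\!\left[\rho^{\rm LTE}\right]=\frac{{\bar \Pi}^{\rm LTE}\left(C^*\right)-{\delta^{\rm LTE}}R_{\rm LTE}}{{\delta^{\rm LTE}}R_{\rm LTE}}=\frac{{\bar \Pi}^{\rm LTE}\left(C^*\right)}{{\delta^{\rm LTE}}R_{\rm LTE}}-1.
\end{align}
This collapses the whole observation to understanding how the single ratio ${\bar \Pi}^{\rm LTE}(C^*)/({\delta^{\rm LTE}}R_{\rm LTE})$ depends on $R_{\rm LTE}$, ${\delta^{\rm LTE}}$, and ${\eta^{\rm APO}}$.

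Next I would establish the ``large $R_{\rm LTE}$'' claim through a rigorous lower bound. By Theorem \ref{theorem:optimalCcase}, once $R_{\rm LTE}>\max\{r_{\max},\frac{K-1+{\eta^{\rm APO}}}{K(1-{\delta^{\rm LTE}})}r_{\min}\}$, the point $C=r_{\max}$ is feasible for (\ref{equ:optABC}) (since $b_{\max}(r_{\max})\le r_{\max}\le R_{\rm LTE}$), hence ${\bar \Pi}^{\rm LTE}(C^*)\ge{\bar \Pi}^{\rm LTE}(r_{\max})$. At $C=r_{\max}$ the LTE provider cooperates with probability one, so by (\ref{equ:LTEpayoff:largeC}), ${\bar \Pi}^{\rm LTE}(r_{\max})=R_{\rm LTE}-E$, where $E\triangleq K(K-1)\int_{r_{\min}}^{r_{\max}} r f(r)F(r)(1-F(r))^{K-2}dr$ is the expected second-lowest type (the density of the $2$nd order statistic of $K$ i.i.d.\ draws), independent of $R_{\rm LTE}$, ${\delta^{\rm LTE}}$, and ${\eta^{\rm APO}}$. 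Substituting into the identity gives $\mathbb{E}_{\bm r}[\rho^{\rm LTE}]\ge \frac{1-{\delta^{\rm LTE}}}{{\delta^{\rm LTE}}}-\frac{E}{{\delta^{\rm LTE}}R_{\rm LTE}}$, which is increasing in $R_{\rm LTE}$ and tends to $\frac{1-{\delta^{\rm LTE}}}{{\delta^{\rm LTE}}}$ (equal to $150\%$ for ${\delta^{\rm LTE}}=0.4$). I would then plug in the numerical value of $E$ for the truncated normal to verify that already at $R_{\rm LTE}=370$ Mbps the bound exceeds $70\%$ for all three curves with ${\delta^{\rm LTE}}=0.4$ in Fig. \ref{fig:simu:E}.

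For the insensitivity to ${\eta^{\rm APO}}$, the crucial point is that the dominant term (\ref{equ:LTEpayoff:largeC}) does not contain ${\eta^{\rm APO}}$ at all; the parameter enters ${\bar \Pi}^{\rm LTE}$ only through the threshold $r_T(C)$ via (\ref{equ:rT}) (and through $r_X(C)$ in the sub-$r_{\min}$ regime, which is irrelevant once $C^*\ge r_{\min}$). In the large-$R_{\rm LTE}$ regime the cooperation probability is already near one, so the weight placed on the ${\eta^{\rm APO}}$-dependent \emph{competition-mode} term $(1-F(r_T(C)))^K{\delta^{\rm LTE}}R_{\rm LTE}$ in (\ref{equ:LTEpayoff:typicalC}) is negligible, keeping the ${\eta^{\rm APO}}\in\{0.1,0.3,0.7\}$ curves close together. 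By contrast, ${\delta^{\rm LTE}}$ scales the denominator of the identity directly, which is why raising it to $0.6$ visibly depresses the gain.

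The hard part will be the intermediate interval $[r_{\min},r_{\max})$: there $r_T(C)$ and $r_X(C)$ solve the transcendental equations (\ref{equ:rT})--(\ref{equ:rX}) with no closed form, so an exact analytical evaluation of ${\bar \Pi}^{\rm LTE}(C^*)$, and hence of $\mathbb{E}_{\bm r}[\rho^{\rm LTE}]$ away from the asymptotic bound, is intractable. I would therefore complement the lower-bound computation with a Monte-Carlo estimate: for each $({\delta^{\rm LTE}},{\eta^{\rm APO}})$ pair and each $R_{\rm LTE}$, draw the $K$ types i.i.d.\ from the truncated normal, obtain $C^*$ by the Golden-Section search of Section \ref{subsec:stageI:3}, evaluate the realized $\pi_{\rm a}^{\rm LTE}$ through the auction rule together with $b^*(\cdot,C^*)$ from Section \ref{sec:stageII:APO}, and average $\rho^{\rm LTE}$ over $20{,}000$ draws. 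The law of large numbers guarantees convergence of this sample average to $\mathbb{E}_{\bm r}[\rho^{\rm LTE}]$, so the empirical curves in Fig. \ref{fig:simu:E} both confirm the asymptotic computation and fill in the behaviour over the analytically intractable regime.
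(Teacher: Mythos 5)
Your proposal is correct, and it takes a genuinely different route from the paper: the paper supports this Observation purely empirically, by the Monte-Carlo experiment of Section~\ref{sec:numerical} ($20{,}000$ random draws of the types per parameter point, $C^*$ from the Golden-Section search, realized payoffs from the auction rule), and then reads the $70\%$ figure and the ${\eta^{\rm APO}}$-insensitivity directly off Fig.~\ref{fig:simu:E}; there is no analytical content behind it. Your final step reproduces exactly that experiment, but your two preceding reductions are new relative to the paper. First, because the benchmark payoff is the deterministic constant ${\delta^{\rm LTE}}R_{\rm LTE}$ (the discounting factor is channel-independent), the average gain collapses to the exact identity $\mathbb{E}_{\bm r}\left[\rho^{\rm LTE}\right]={\bar \Pi}^{\rm LTE}\left(C^*\right)/\left({\delta^{\rm LTE}}R_{\rm LTE}\right)-1$, which is legitimate by the paper's own footnote that ${\bar \Pi}^{\rm LTE}\left(C^*\right)$ is the expectation of $\pi_{\rm a}^{\rm LTE}$. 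Second, feasibility of $C=r_{\max}$ when $R_{\rm LTE}\ge r_{\max}$, combined with (\ref{equ:LTEpayoff:largeC}) and the second-order-statistic identification of the integral, yields the bound $\mathbb{E}_{\bm r}\left[\rho^{\rm LTE}\right]\ge \left(1-{\delta^{\rm LTE}}\right)/{\delta^{\rm LTE}}-E/\left({\delta^{\rm LTE}}R_{\rm LTE}\right)$, which is ${\eta^{\rm APO}}$-free and increasing in $R_{\rm LTE}$; this turns the ``large $R_{\rm LTE}$, small ${\delta^{\rm LTE}}$'' part of the Observation into a provable statement rather than a simulation readout, and also identifies the limiting gain $\left(1-{\delta^{\rm LTE}}\right)/{\delta^{\rm LTE}}=150\%$. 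Two caveats: (i) at the paper's settings ($K=4$, the stated truncated normal, $R_{\rm LTE}=370$~Mbps, ${\delta^{\rm LTE}}=0.4$) one gets $E\approx 110$--$115$~Mbps, so your bound is roughly $72$--$76\%$ --- it clears $70\%$ only barely, so the numerical evaluation of $E$ that you flag is essential to the argument, not a formality; (ii) the ${\eta^{\rm APO}}$-insensitivity does not actually follow from the bound being ${\eta^{\rm APO}}$-free, since the true gain sits above the bound and could a priori still move with ${\eta^{\rm APO}}$ --- your vanishing-weight heuristic for the competition-mode term is the right intuition, but, exactly as in the paper, that part is ultimately settled only by the simulation. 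In short, the paper's approach delivers the actual magnitudes in all regimes; yours additionally certifies that the headline $70\%$ is not a numerical artifact.
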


In Fig. \ref{fig:simu:F}, we plot the average $\rho^{\rm APO}$ against $R_{\rm LTE}$ for different $\left({\delta^{\rm LTE}},{\eta^{\rm APO}}\right)$ pairs. First, we observe that the average $\rho^{\rm APO}$ increases with $R_{\rm LTE}$. Similar as the explanation for $\rho^{\rm LTE}$, this is because a larger $R_{\rm LTE}$ leads to a larger reserve rate, and creates more cooperation opportunities between the LTE provider and the APOs. Second, the average $\rho^{\rm APO}$ is large when both ${\delta^{\rm LTE}}$ and ${\eta^{\rm APO}}$ are small. In this case, there is a heavy interference between the LTE and the APOs in the \emph{competition mode}, and the both of them want to avoid the interference through the cooperation. Therefore, our auction-based scheme is much more efficient, and achieves a large $\rho^{\rm APO}$. We summarize the observations in Fig. \ref{fig:simu:F} as follows.
\begin{observation}
Compared with the benchmark scheme, our auction-based scheme is most beneficial to the APOs for a large $R_{\rm LTE}$ and small ${\delta^{\rm LTE}}$ and ${\eta^{\rm APO}}$.
\end{observation}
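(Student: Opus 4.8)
The plan is to reduce the claim to a comparative-statics analysis of the ratio $\rho^{\rm APO}$ in its three arguments, after first writing down workable expressions for the two total APO payoffs. Under the benchmark scheme the LTE provider hits a uniformly random channel, so taking expectations over both the channel draw and the types gives the clean baseline $\mathbb{E}[\pi_{\rm b}^{\rm APO}]=(K-1+{\eta^{\rm APO}})\,\mathbb{E}[r]$, which depends on ${\eta^{\rm APO}}$ but not on $R_{\rm LTE}$ or ${\delta^{\rm LTE}}$. For the auction I would condition on the two equilibrium events: the \emph{competition event} $\mathcal{A}^c$ in which every APO bids $``{\rm N}\textquotedblright$, and the \emph{cooperation event} $\mathcal{A}$ in which at least one APO bids in $[0,C^*]$. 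On $\mathcal{A}^c$ each APO earns exactly its benchmark value $\frac{K-1+{\eta^{\rm APO}}}{K}r_k$ by (\ref{equ:APOpayoff}); on $\mathcal{A}$ exactly one APO (the selected winner $i$) earns $r_{\rm pay}$ while the remaining $K-1$ each earn their full standalone rate $r_k$, since the LTE no longer interferes. Subtracting the benchmark then yields, on every realization in $\mathcal{A}$, the gain $(r_{\rm pay}-r_i)+\frac{1-{\eta^{\rm APO}}}{K}\sum_{k}r_k$, whose second (externality) term is strictly positive. This decomposition is the workhorse of the whole argument.

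A useful warm-up, which I would record first, is positivity. Since bidding $``{\rm N}\textquotedblright$ is always feasible and replicates the benchmark in expectation (both the competition mode of the auction and the benchmark hit a uniform channel), while bidding $``{\rm N}\textquotedblright$ strictly improves on the benchmark whenever some other APO cooperates by (\ref{equ:APOpayoff}), each APO's equilibrium payoff weakly exceeds its benchmark payoff. Summing over APOs gives $\rho^{\rm APO}\ge0$, with strict inequality whenever cooperation has positive probability, so the question is genuinely one of \emph{magnitude}.

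For the $R_{\rm LTE}$ and ${\delta^{\rm LTE}}$ directions the denominator $\mathbb{E}[\pi_{\rm b}^{\rm APO}]$ is inert, so I only need the numerator (the expected cooperation gain) to move in the right direction. This gain is supported on $\mathcal{A}$, whose probability equals $1-(1-F(r_T(C^*)))^K$ or $1-(1-F(r_X(C^*)))^K$ depending on the interval of $C^*$. I would establish a monotonicity lemma that this cooperation probability is non-decreasing in $C^*$ (more types are pulled into $[0,C^*]$ as the reserve rate rises, i.e.\ the bidding thresholds of Theorems~\ref{theorem:equilibrium} and~\ref{theorem:middleC} move outward). Chaining this with Observation~\ref{observation:mono} — $C^*$ non-decreasing in $R_{\rm LTE}$ and non-increasing in ${\delta^{\rm LTE}}$ — and with Theorem~\ref{theorem:optimalCcase} (which guarantees $C^*$ leaves the competition regime once $R_{\rm LTE}$ is large enough) gives that $\rho^{\rm APO}$ is non-decreasing in $R_{\rm LTE}$ and non-increasing in ${\delta^{\rm LTE}}$, matching the claimed directions.

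The genuine obstacle is the ${\eta^{\rm APO}}$ direction, because it enters through two opposing channels. Directly, a smaller ${\eta^{\rm APO}}$ shrinks the denominator $(K-1+{\eta^{\rm APO}})\mathbb{E}[r]$ and enlarges the externality gain $\frac{1-{\eta^{\rm APO}}}{K}\sum_k r_k$, both of which raise $\rho^{\rm APO}$. Indirectly, however, ${\eta^{\rm APO}}$ also appears in the threshold equations (\ref{equ:rT}) and (\ref{equ:rX}) and, via Observation~\ref{observation:mono}, in $C^*$: a smaller ${\eta^{\rm APO}}$ lowers $C^*$ and hence the cooperation probability, which pushes $\rho^{\rm APO}$ down. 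I would argue that the direct channel dominates — the externality gain and the baseline both scale linearly in ${\eta^{\rm APO}}$, whereas the cooperation-probability effect is a second-order, implicit dependence routed through $C^*$ — but a fully closed-form comparison is out of reach, since (as noted after Theorem~\ref{theorem:optimalCcase}) $C^*$, $r_T(C^*)$ and $r_X(C^*)$ admit no closed forms. Consistent with how the other observations in this paper are validated, I would therefore close the ${\eta^{\rm APO}}$ case numerically over the uniform and truncated-normal families, confirming that the linear direct effect outweighs the implicit cooperation-probability effect across the relevant parameter range.
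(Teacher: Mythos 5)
You should first know that the paper does not prove this statement analytically at all: it is an Observation extracted purely from Monte Carlo simulation ($20{,}000$ random draws of the APO types per parameter point, Fig.~\ref{fig:simu:F}), accompanied only by the verbal intuition that a larger $R_{\rm LTE}$ enables a larger reserve rate and hence more cooperation opportunities, and that small ${\delta^{\rm LTE}}$ and ${\eta^{\rm APO}}$ make the co-channel interference in the \emph{competition mode} heavy, so that both sides want to escape it via cooperation. Your proposal is therefore a genuinely different, more analytical route, and its skeleton is sound: the benchmark baseline $\left(K-1+{\eta^{\rm APO}}\right)\mathbb{E}\left[r\right]$ is correct, the per-realization cooperation gain $\left(r_{\rm pay}-r_i\right)+\frac{1-{\eta^{\rm APO}}}{K}\sum_k r_k$ is exactly right algebraically, and your positivity argument (equilibrium payoff $\ge$ payoff of bidding $``{\rm N}\textquotedblright$ $\ge$ benchmark payoff, since the competition mode replicates the benchmark's uniform channel draw) is a clean formalization the paper never writes down. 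What your route buys is an explanation of \emph{why} the simulated curves move as they do; what the paper's route buys is that it asserts nothing beyond what the experiments show.

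Two of your steps deserve flags. First, your monotonicity lemma concerns only the cooperation \emph{probability}, but the numerator of $\rho^{\rm APO}$ is the expected \emph{gain}, whose event-wise term $\left(r_{\rm pay}-r_i\right)$ can be negative (a winner with $r_i>C^*$ bids $C^*$ and receives at most $C^*$), so monotonicity of the probability alone does not deliver monotonicity of the gain. This is patchable — at the marginal type $r_T\left(C\right)$ the indifference condition defining (\ref{equ:rT}) means a newly cooperating type loses nothing in expectation while every other APO strictly gains from the externality, and $r_{\rm pay}$ is itself non-decreasing in $C$ — but you need to say it. Second, your claimed sign of the indirect ${\eta^{\rm APO}}$ channel is not right as stated: ${\eta^{\rm APO}}$ also enters (\ref{equ:rT}) and (\ref{equ:rX}) directly, and at a \emph{fixed} $C$ a smaller ${\eta^{\rm APO}}$ raises $r_T\left(C\right)$ and $r_X\left(C\right)$ (more types prefer cooperating to risking interference), partially or wholly offsetting the fall in $C^*$ from Observation~\ref{observation:mono}; the net effect on the cooperation probability is ambiguous rather than adverse. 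Since you close the ${\eta^{\rm APO}}$ case numerically anyway — and since your $R_{\rm LTE}$ and ${\delta^{\rm LTE}}$ chains already rest on Observation~\ref{observation:mono} and Observation~\ref{observation:0}, both themselves simulation-based — your overall argument, like the paper's, is ultimately simulation-supported, which is the appropriate evidentiary standard for an Observation in this paper.
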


\vspace{-0.1cm}

\subsubsection{Performance on Social Welfare} We consider $\left({\delta^{\rm LTE}},{\eta^{\rm APO}}\right)=\left(0.4,0.3\right)$, and choose the same settings as Fig. \ref{fig:simu:E} and Fig. \ref{fig:simu:F} for the other parameters. In Fig. \ref{fig:simu:G}, we plot the average social welfares of the two schemes, and also show the average value of the maximum social welfare. To compute the maximum social welfare for a particular set of APOs, we assume that there is a \emph{centralized decision maker}, who allocates $K$ channels to the LTE provider and the $K$ APOs in a manner that maximizes the social welfare.{\footnote{Specifically, the \emph{centralized decision maker} can choose to: (i) keep the LTE idle, and allocate all channels to the APOs, (ii) keep one APO idle, and allocate all channels to the LTE and the remaining $K-1$ APOs, or (iii) let the LTE share one channel with one APO, and allocate the remaining channels to the remaining $K-1$ APOs.}} For each experiment, we randomly pick a set of APOs and record the social welfare achieved by the \emph{centralized decision maker}. We run the experiment $20,000$ times, and obtain the average value of the maximum social welfare.

When $R_{\rm LTE}$ increases, the social welfare gain of our auction-based scheme over the benchmark scheme increases, and the average social welfare under our auction-based scheme approaches the maximum social welfare. 
This is because when $R_{\rm LTE}$ is large, it is always good for the LTE to exclusively occupy a channel to maximize the social welfare. 
For our auction-based scheme, the increase of $R_{\rm LTE}$ improves the cooperation chance between the LTE and the APOs, and hence increases the probability for the LTE to exclusively occupy a channel.
The result in Fig. \ref{fig:simu:G} shows that in our auction-based scheme, even the LTE provider and APOs make decisions to maximize their own payoffs, and the LTE provider and each APO do not have the complete information on the other APOs' types, the eventual auction outcome leads to a close-to-optimal social welfare for a large $R_{\rm LTE}$. We summarize the observation in Fig. \ref{fig:simu:G} as follows.

\begin{observation}
Our auction-based scheme leads to a close-to-optimal social welfare when $R_{\rm LTE}$ is large.
\end{observation}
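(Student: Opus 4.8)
The plan is to make ``close-to-optimal'' precise as a vanishing \emph{relative} gap: writing $W^{\rm A}(R_{\rm LTE})$ for the auction's expected social welfare and $W^{\star}(R_{\rm LTE})$ for the expected maximum social welfare, I would show $W^{\rm A}/W^{\star}\to 1$ as $R_{\rm LTE}\to\infty$. The starting point is that the payment $r_{\rm pay}$ is an internal transfer and cancels in the welfare: in the \emph{cooperation mode} the winning APO $i$ leaves its channel, the LTE keeps $R_{\rm LTE}-r_{\rm pay}$ for its own users and hands $r_{\rm pay}$ to that APO's users, so the realized welfare is $R_{\rm LTE}+\sum_{k\ne i}r_k$, independent of $r_{\rm pay}$; in the \emph{competition mode} it is $\delta^{\rm LTE}R_{\rm LTE}+\eta^{\rm APO}r_j+\sum_{k\ne j}r_k$ for the randomly accessed channel $j$. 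For the maximal benchmark I would compare the planner's three options in the footnote and show that, pointwise in $\bm r$, option (ii) dominates once $R_{\rm LTE}\ge r_{\max}/(1-\delta^{\rm LTE})$: it beats option (i) because $R_{\rm LTE}\ge\min_k r_k$, and beats option (iii) because $(1-\delta^{\rm LTE})R_{\rm LTE}\ge\eta^{\rm APO}\min_k r_k$. Hence $W^{\star}(R_{\rm LTE})=R_{\rm LTE}+\mathbb{E}[\sum_k r_k]-\mathbb{E}[\min_k r_k]$ for all large $R_{\rm LTE}$.

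Next I would decompose the per-realization welfare gap. Conditioned on cooperation, the gap equals $r_i-\min_k r_k$, which is trivially bounded by $r_{\max}-r_{\min}$ (and is exactly $0$ whenever the lowest-type APO bids its own type and wins, which by (\ref{equ:equilibrium}) happens on the event $\min_k r_k\le C$); thus the cooperation contribution is always $O(1)$ in every regime of Theorem \ref{theorem:optimalCcase}. Conditioned on competition, however, the gap is $(1-\delta^{\rm LTE})R_{\rm LTE}-\min_k r_k+(1-\eta^{\rm APO})r_j$, which grows linearly in $R_{\rm LTE}$. Everything therefore hinges on showing that competition occurs with probability small enough to defeat this linear growth.

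The crucial step is the regime-free bound $R_{\rm LTE}\,\mathbb{P}(\text{competition})\le r_{\max}/(1-\delta^{\rm LTE})$, obtained purely from the LTE's Stage-I optimality. Writing its equilibrium payoff as $\bar\Pi^{\rm LTE}(C^{*})=R_{\rm LTE}-(1-\delta^{\rm LTE})R_{\rm LTE}\,\mathbb{P}(\text{competition})-\mathbb{E}[r_{\rm pay}\,\mathbf{1}(\text{cooperation})]$ and dropping the nonnegative payment term gives $\bar\Pi^{\rm LTE}(C^{*})\le R_{\rm LTE}-(1-\delta^{\rm LTE})R_{\rm LTE}\,\mathbb{P}(\text{competition})$. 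Since $C=r_{\max}$ is feasible for $R_{\rm LTE}\ge r_{\max}$ and, by Theorem \ref{theorem:highC} and (\ref{equ:LTEpayoff:largeC}), yields guaranteed cooperation with payoff $R_{\rm LTE}$ minus the expected second-lowest type, hence at least $R_{\rm LTE}-r_{\max}$, optimality of $C^{*}$ forces $R_{\rm LTE}-(1-\delta^{\rm LTE})R_{\rm LTE}\,\mathbb{P}(\text{competition})\ge R_{\rm LTE}-r_{\max}$, which rearranges to the claimed bound. The strength of this argument is that it never needs to locate $C^{*}$ among the intervals of Theorem \ref{theorem:optimalCcase}; it only compares against the single feasible point $C=r_{\max}$.

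Combining the pieces, the expected competition contribution is at most $(1-\delta^{\rm LTE})R_{\rm LTE}\,\mathbb{P}(\text{competition})+r_{\max}\le 2r_{\max}$, so $W^{\star}-W^{\rm A}$ is bounded by a constant independent of $R_{\rm LTE}$, while $W^{\star}\ge R_{\rm LTE}$; dividing yields $(W^{\star}-W^{\rm A})/W^{\star}\le \mathrm{const}/R_{\rm LTE}\to 0$, which is the asserted asymptotic optimality. I expect the main obstacle to be precisely the third step, i.e., converting the LTE's self-interested optimality into the quantitative smallness $R_{\rm LTE}\,\mathbb{P}(\text{competition})=O(1)$; once one notices the cancellation of $r_{\rm pay}$ in the welfare and the boundedness of the types, the remaining estimates are routine. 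A secondary point requiring care is phrasing ``close-to-optimal'' as a vanishing \emph{relative} gap, since the absolute gap and both welfares all diverge with $R_{\rm LTE}$.
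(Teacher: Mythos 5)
Your proposal is correct, but it takes a genuinely different route, because the paper never proves this statement analytically: in the paper it is a purely empirical Observation, supported by Monte Carlo simulation ($20{,}000$ draws per data point) in which the auction's average welfare is compared against a centralized planner choosing among the three options in the footnote, together with the qualitative intuition that a larger $R_{\rm LTE}$ permits a larger reserve rate and hence a higher cooperation probability. Your argument replaces this with an asymptotic theorem, and its pivot --- comparing the optimal reserve rate $C^*$ against the single feasible point $C=r_{\max}$, where Theorem \ref{theorem:highC} and (\ref{equ:LTEpayoff:largeC}) guarantee cooperation with payoff at least $R_{\rm LTE}-r_{\max}$, so that optimality combined with $\bar \Pi^{\rm LTE}\left(C^*\right)\le R_{\rm LTE}-\left(1-{\delta^{\rm LTE}}\right)R_{\rm LTE}\,\mathbb{P}\left({\rm competition}\right)$ forces $R_{\rm LTE}\,\mathbb{P}\left({\rm competition}\right)\le r_{\max}/\left(1-{\delta^{\rm LTE}}\right)$ --- appears nowhere in the paper and is the genuinely new idea; it neatly sidesteps locating $C^*$ among the cases of Theorem \ref{theorem:optimalCcase} or solving (\ref{equ:rT}) for $r_T\left(C\right)$. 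I verified the supporting steps: $r_{\rm pay}$ indeed cancels as an internal transfer, the planner's option (ii) dominates pointwise once $R_{\rm LTE}\ge r_{\max}/\left(1-{\delta^{\rm LTE}}\right)$, and the cooperation-mode misallocation is bounded by $r_{\max}-r_{\min}$ (note that when several APOs tie at bid $C$ the random winner need not be the minimum-type APO, so this gap can be strictly positive --- your constant bound covers this, but the side remark about zero gap should be restricted to the event $\min_k r_k\le C$). What each approach buys: the paper's simulations quantify the gap at practically relevant throughputs (up to $370$ Mbps) and exhibit the convergence visually for a specific truncated normal distribution, which your asymptotic constants (roughly $3r_{\max}$ of absolute gap) cannot; your derivation upgrades the Observation to a distribution-free result (needing Assumption \ref{assumption:unique} only so that the equilibrium and $\bar \Pi^{\rm LTE}\left(C\right)$ are well defined), makes ``close-to-optimal'' precise as a relative gap of order $1/R_{\rm LTE}$, and isolates the economic mechanism: the LTE provider's self-interested reserve-rate optimization already drives the competition probability to zero at rate $1/R_{\rm LTE}$, so no welfare-motivated intervention is needed.
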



\vspace{-0.5cm}


{{

\section{Practical Implementation and Model Extension}\label{sec:discussion}
In this section, we first discuss the practical implementation of our auction framework. In particular, we explain the approach for the LTE provider and APOs to exchange information (\emph{e.g.}, reserve rate and bids). 
Then we discuss some extensions of our model. Specifically, in Section \ref{subsec:discussion:APOshare}, we extend our model to the scenario where different APOs can share the same channel. In Section \ref{subsec:discussion:dumb}, we consider a scenario where some APOs' traffic cannot be onloaded to the LTE network. In Section \ref{subsec:discussion:multiLTE}, we extend our model to the scenario where there are multiple LTE providers. 

In a practical implementation, a centralized broker (\emph{e.g.}, a private company or a company designated by the government) can coordinate the interactions between the LTE provider and APOs \cite{iosifidis2015iterative}. 
Next we briefly introduce the centralized broker with an example from the TV white space networks, which are in the process of commercial trials in the US and UK. In the TV white space networks, a white space database operator (\emph{e.g.}, Google, Microsoft, and SpectrumBridge) serves as the broker to record and update the TV spectrum usage (by TV stations) as well as the secondary access (by non-TV devices) in the same area. 
{{Moreover, the broker controls the spectrum allocated to different secondary service providers to avoid the interference between the secondary service providers' networks.}} This shows that it is possible to coordinate the spectrum sharing of different networks through a broker, even if these networks belong to different operators {{and have overlapping coverages}}. In our auction framework, the LTE provider can announce the reserve rate to the broker at the beginning of each time slot. The APOs that are interested in participating in the auction can communicate with the broker to obtain the reserve rate information and submit their bids to the broker.{\footnote{{In particular, when the APOs have multiple equilibrium strategies $b^*$ under the reserve rate (\emph{i.e.}, $M>1$ or $L>1$), the broker can coordinate the APOs' selection of the equilibrium strategy. Intuitively, the broker will suggest the equilibrium strategy that maximizes the social welfare to the APOs. We are interested in studying the details of this problem in our future work.}}} Then the broker determines the winning APO based on our auction rule, and broadcasts this result to the LTE provider and APOs. With the broker's help, the LTE provider does not need to directly communicate with all surrounding APOs.

\subsection{Extension: Channel Sharing Among APOs}\label{subsec:discussion:APOshare}

In this section, we discuss the extension of our framework to the scenario where different APOs can share the same channel. In this scenario, the LTE provider still determines at most one winning APO in each auction. The major challenge is that when there are other APOs in the winning APO's channel, the LTE provider has to coexist with these remaining APOs (based on the coexistence mechanisms like LBT and CSAT) after onloading the winning APO's traffic. 
Therefore, we need to (i) extend the modeling of the LTE provider's payoff, the APOs' payoffs, and the APOs' types, and (ii) modify the auction rule. In the following, we briefly explain these two aspects.
    
For the modeling, we should first model the impact of the number of APOs in the same channel on the LTE provider's and the APOs' payoffs. Intuitively, the reductions in the LTE provider's and the APOs' payoffs are more severe when there are more APOs using the same channel. Second, we should model the multi-dimensional APO type. In Section \ref{sec:model}, we define the APO type as an APO's throughput without interference (\emph{i.e.}, $r_k$). Here, an APO's type should also include the information of the number of APOs in the same channel. In the equilibrium analysis, we can characterize the APOs' equilibrium strategies by a function that maps an APO type (\emph{i.e.}, throughput and number of APOs in the same channel) to a bid.
    
For the auction rule, the major modification is the rule of determining the winning APO. In Section \ref{sec:model}, the winning APO is always the APO with the lowest bid. However, when different APOs can share the same channel, such a rule is no longer optimal for the LTE provider. This is because the APO with the lowest bid may have many other APOs using the same channel, and hence the benefit for the LTE provider to cooperate with this APO may be small. Therefore, the LTE provider has to consider both the APOs' bids and the number of APOs in each channel to determine the winning APO.

\subsection{Extension: Complex APOs}\label{subsec:discussion:dumb}

In reality, some users' mobile devices, such as the laptops, do not have the LTE interfaces. The existence of these mobile devices prevents the corresponding APOs from participating in the auction and onloading all of their traffic to the LTE network. For ease of exposition, we use the \emph{simple} APOs to represent the APOs who can onload all of their traffic to the LTE network, and use the \emph{complex} APOs to represent the APOs who cannot onload all of their traffic to the LTE network. The \emph{complex} APOs will not participate in the auction and will simply use their original channels. In the following, we explain the impact of the consideration of \emph{complex} APOs on our analysis.

First, if all APOs occupy different channels (the assumption in Section \ref{sec:model}), our current analysis can be directly extended to the case where there are \emph{complex} APOs. Notice that even though the LTE provider can only cooperate with the \emph{simple} APOs in the cooperation mode, it can compete with both the \emph{simple} and \emph{complex} APOs in the competition mode. Therefore, the major change is that when the LTE provider works in the competition mode, the expected payoff of an APO depends on the number of all APOs (\emph{simple} and \emph{complex} APOs), instead of the number of APOs participating in the auction (\emph{simple} APOs). 
Second, if different APOs can share the same channel (the scenario in Section \ref{subsec:discussion:APOshare}), it will be much more challenging to consider the \emph{complex} APOs in the analysis. This is because the \emph{complex} APOs may coexist with the \emph{simple} APOs in the same channel. In this situation, we need to characterize a \emph{simple} APO's equilibrium strategy based on the number of \emph{complex} APOs as well as the number of \emph{simple} APOs in the APO's channel.


\subsection{Extension: Multiple LTE Providers}\label{subsec:discussion:multiLTE}

In this section, we discuss the extension of our framework to the scenario where there are multiple LTE providers. According to \cite{forum}, the LTE networks of different providers can well coexist with each other in the same unlicensed channel. Hence, when there are multiple LTE providers, the focus of our auction framework is still onloading the Wi-Fi APOs' traffic to the LTE networks.
 
\begin{figure}[t]
  \centering
  \includegraphics[scale=0.37]{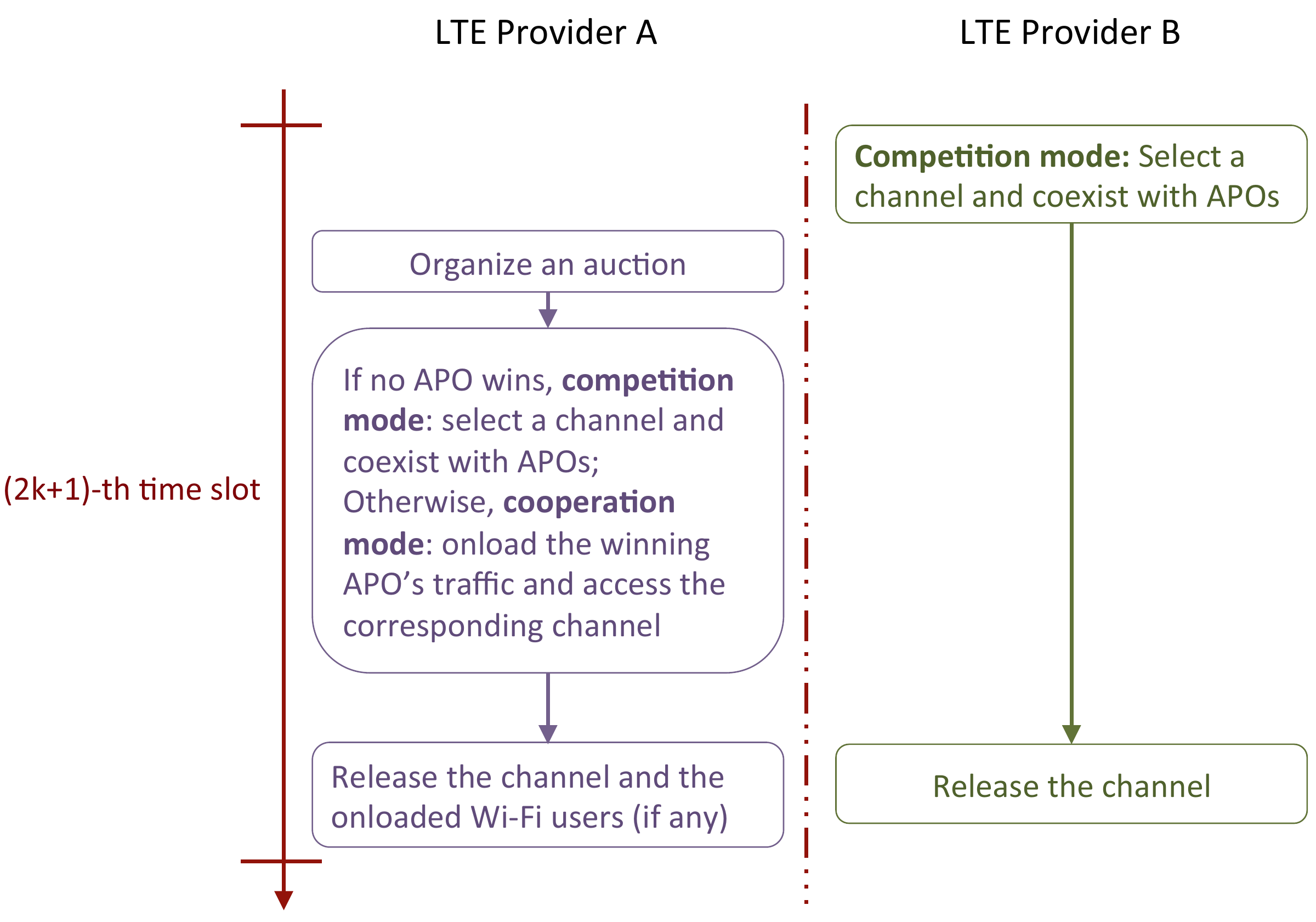}\\
  \caption{An Example of Two LTE Providers.}
  \label{fig:majorrevision}
  \vspace{-0.7cm}
\end{figure}   

When there are multiple LTE providers, they can take turns to organize the auctions, which can be managed by the centralized broker. Suppose that there are two LTE small cell networks in the same area, and they are owned by LTE provider A and LTE provider B, respectively. We illustrate a protocol in Figure \ref{fig:majorrevision}. 
During the odd number $\left(2k+1\right)$-th ($k\in\left\{0,1,\ldots\right\}$) time slot, LTE provider B directly operates in the competition mode, and LTE provider A can send a request to the broker and organize an auction. During the even number $\left(2k+2\right)$-th time slot, the two LTE providers switch their roles: LTE provider A operates in the competition mode, and LTE provider B can organize an auction. We can apply similar protocols to the situations with more than two LTE providers. 

Since the protocol designs for the $\left(2k+1\right)$-th time slot and the $\left(2k+2\right)$-th time slot are symmetric, next we only introduce the protocol design for the $\left(2k+1\right)$-th time slot. At the beginning of the $\left(2k+1\right)$-th time slot, LTE provider B chooses the competition mode, \emph{i.e.}, it selects a channel and coexists with the corresponding APOs. Then LTE provider A organizes an auction: when no APO wants to cooperate with LTE provider A, LTE provider A works in the competition mode, selects a channel, and coexists with the corresponding networks; otherwise, LTE provider A works in the cooperation mode, onloads the winning APO's traffic, and accesses the corresponding channel. At the end of the $\left(2k+1\right)$-th time slot, both LTE provider A and LTE provider B release the channels they use. In particular, LTE provider A also needs to release the onloaded Wi-Fi users if LTE provider A works in the cooperation mode during the $\left(2k+1\right)$-th time slot.

Next we discuss the challenges of analyzing the scenario with multiple LTE providers under the protocol we introduced above. Briefly speaking, when a particular LTE provider organizes an auction, it needs to consider the number of other LTE providers in each channel. This is because the benefit for the LTE provider to cooperate with an APO decreases with the number of other LTE providers using the same channel. In the analysis, we should characterize an APO's equilibrium strategy based on its throughput $r_k$ and the number of LTE providers in the same channel. Furthermore, the auctioneer should consider both the APOs' bids and the number of LTE providers in each channel to determine the winning APO. 
We provide a complete analysis of the scenario where there are multiple LTE providers in Section \ref{sec:supplementary} (supplementary materials).

\section{Conclusion}\label{sec:conclusion}
In this paper, we proposed a framework for LTE's coopetition with Wi-Fi in the unlicensed spectrum. We designed a reverse auction for the LTE provider to exclusively obtain the channel from the APOs by onloading their traffic. 
{{Compared with the existing LTE/Wi-Fi coexistence mechanisms like LBT and CSAT, our auction can potentially avoid the interference between the LTE and APOs.}}
The analysis of the auction is quite challenging as the designed auction involves positive allocative externalities. 
We characterized {{the unique form of the APOs' bidding strategies at the equilibrium}}, and analyzed the optimal reserve rate of the LTE provider. 
Numerical results showed that our framework benefits both the LTE provider and the APOs, and it achieves a close-to-optimal social welfare under a large LTE throughput. 
In our framework, the LTE provider announces the reserve rate and each APO then submits a bid at the beginning of each time slot, where the length of each time slot corresponds to several minutes. 
Compared with the existing LTE/Wi-Fi coexistence mechanisms, our auction framework leads to more signaling overhead. However, if the LTE provider and an APO agree to cooperate, there is no more need for the LTE to frequently sense the channel activities, which removes the related operational overhead during the rest of the time slot.{\footnote{{{For example, in the LBT mechanism, the LTE senses the channel status (busy or idle) every $20$ microseconds; in the CSAT mechanism, the LTE senses the Wi-Fi activity on a time scale of $100$ milliseconds to determine the length of LTE off time {\cite{Qualcomm}}.}}}} Therefore, although our framework generates more signaling overhead initially, it can potentially significantly save the sensing cost (\emph{e.g.}, power) and improve the payoffs of both LTE and Wi-Fi.

{{An interesting observation of our framework}} is that sometimes even if the cooperation mutually benefits the LTE provider and the APOs, these two types of networks do not reach an agreement on the cooperation. The reason is that our framework considers an incomplete information setting. For example, the LTE provider determines the reserve rate to maximize its expected payoff by considering the distribution of ${\bm r}$ (the vector of APOs' types) instead of the actual value of ${\bm r}$. For some ${\bm r}$, such a reserve rate may not be optimal to the LTE provider and can make the LTE provider lose some cooperation chances that mutually benefit both types of networks (we provide an example in Appendix \ref{appendix:sec:example}). Similarly, the incomplete information among the APOs can also lead to the same inefficiency problem. In our future work, we will consider other mechanisms (\emph{e.g.}, bargaining) for the LTE/Wi-Fi coopetition to reduce such an inefficiency.


 




\bibliographystyle{IEEEtran}
\bibliography{bare_conf}


\begin{IEEEbiography}
[{\includegraphics[width=1in,height=1.25in,clip,keepaspectratio]{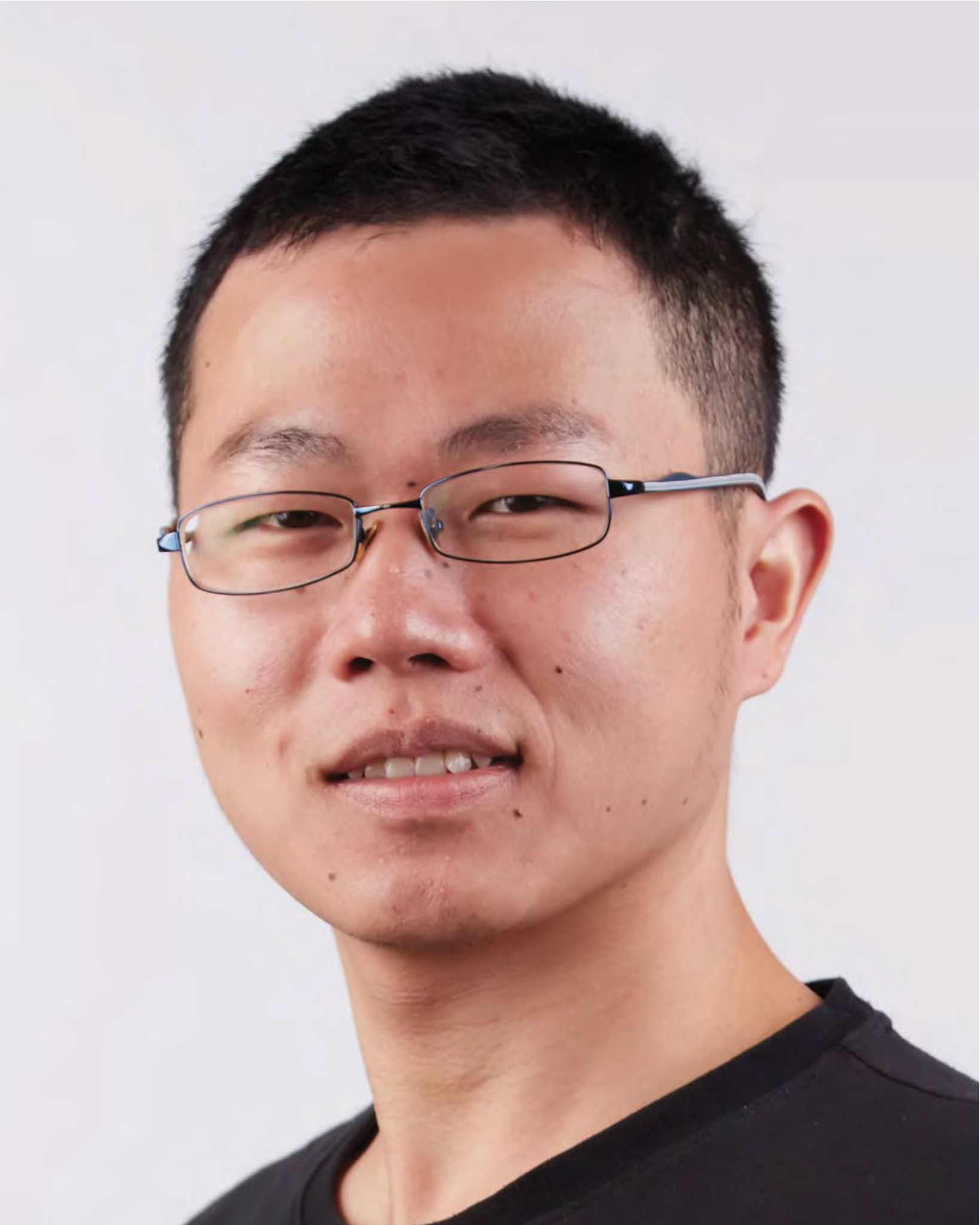}}]
{Haoran Yu} (S'14) received his Ph.D. degree at the Chinese University of Hong Kong in 2016. He was a visiting student in the Yale Institute for Network Science and the Department of Electrical Engineering at Yale University during 2015-2016. He is now a post-doctoral researcher in the Department of Information Engineering at the Chinese University of Hong Kong. His research interests lie in the field of wireless communications and network economics, with current emphasis on cellular/Wi-Fi integration, LTE in unlicensed spectrum, and economics of Wi-Fi networks. He was awarded the Global Scholarship Programme for Research Excellence by the Chinese University of Hong Kong. His paper in IEEE INFOCOM 2016 was selected as a Best Paper Award finalist and one of top 5 papers from 1600+ submissions.
\end{IEEEbiography}

\vspace{-1.55cm}

\begin{IEEEbiography}
[{\includegraphics[width=1in,height=1.25in,clip,keepaspectratio]{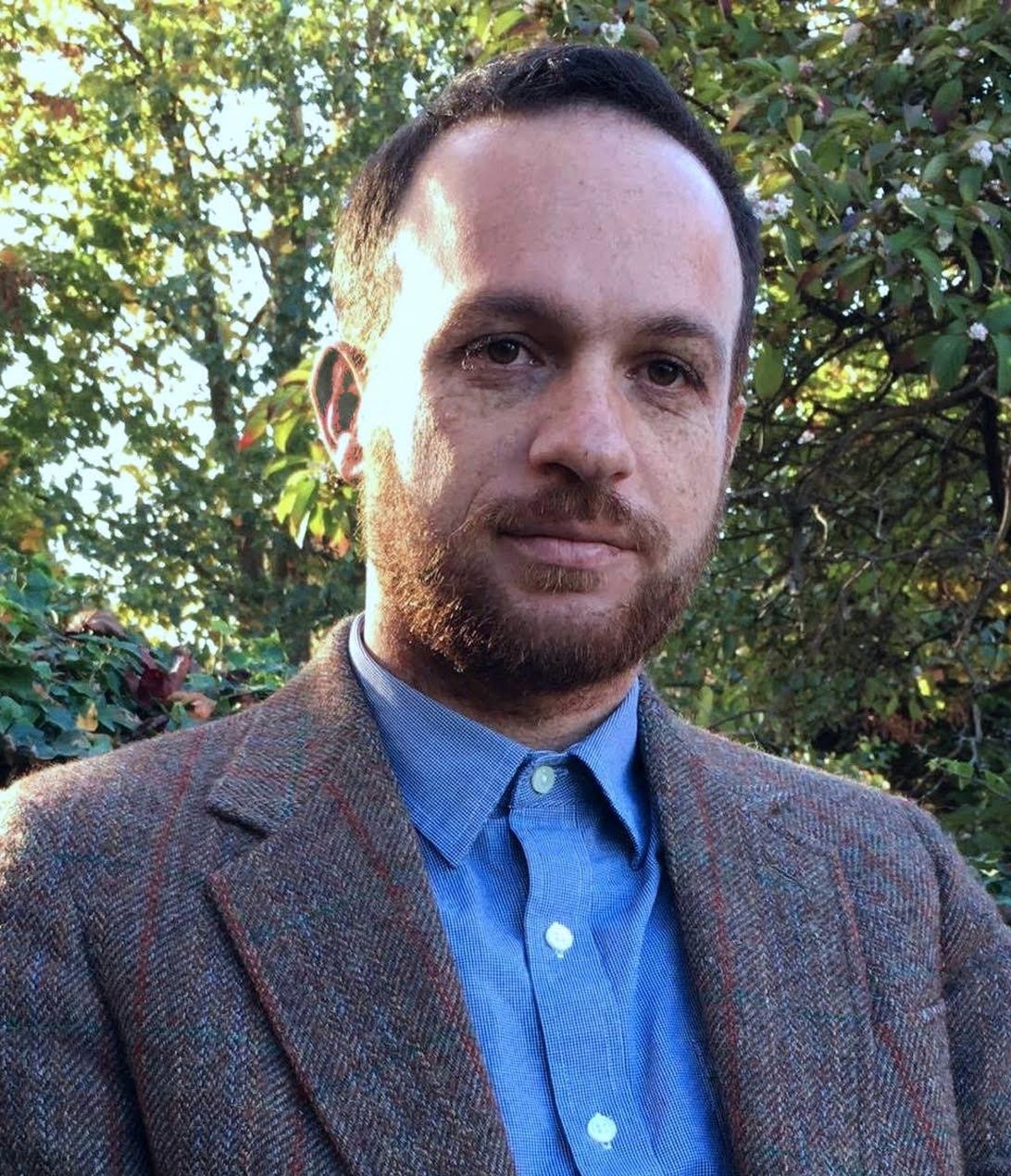}}]
{George Iosifidis} received the Diploma degree in electronics and telecommunications engineering from the Greek Air Force Academy in 2000, and the M.S. and Ph.D. degrees in electrical engineering from University of Thessaly, Greece, in 2007 and 2012, respectively. He worked as a post-doctoral researcher at CERTH, Greece, and Yale University, USA. He is currently the Ussher Assistant Professor in Future Networks with Trinity College Dublin, and also a Funded Investigator with the national research centre CONNECT in Ireland. His research interests lie in the broad area of wireless network optimization and network economics.
\end{IEEEbiography}

\vspace{-1.55cm}

\begin{IEEEbiography}[{\includegraphics[width=1in,height=1.25in,clip,keepaspectratio]{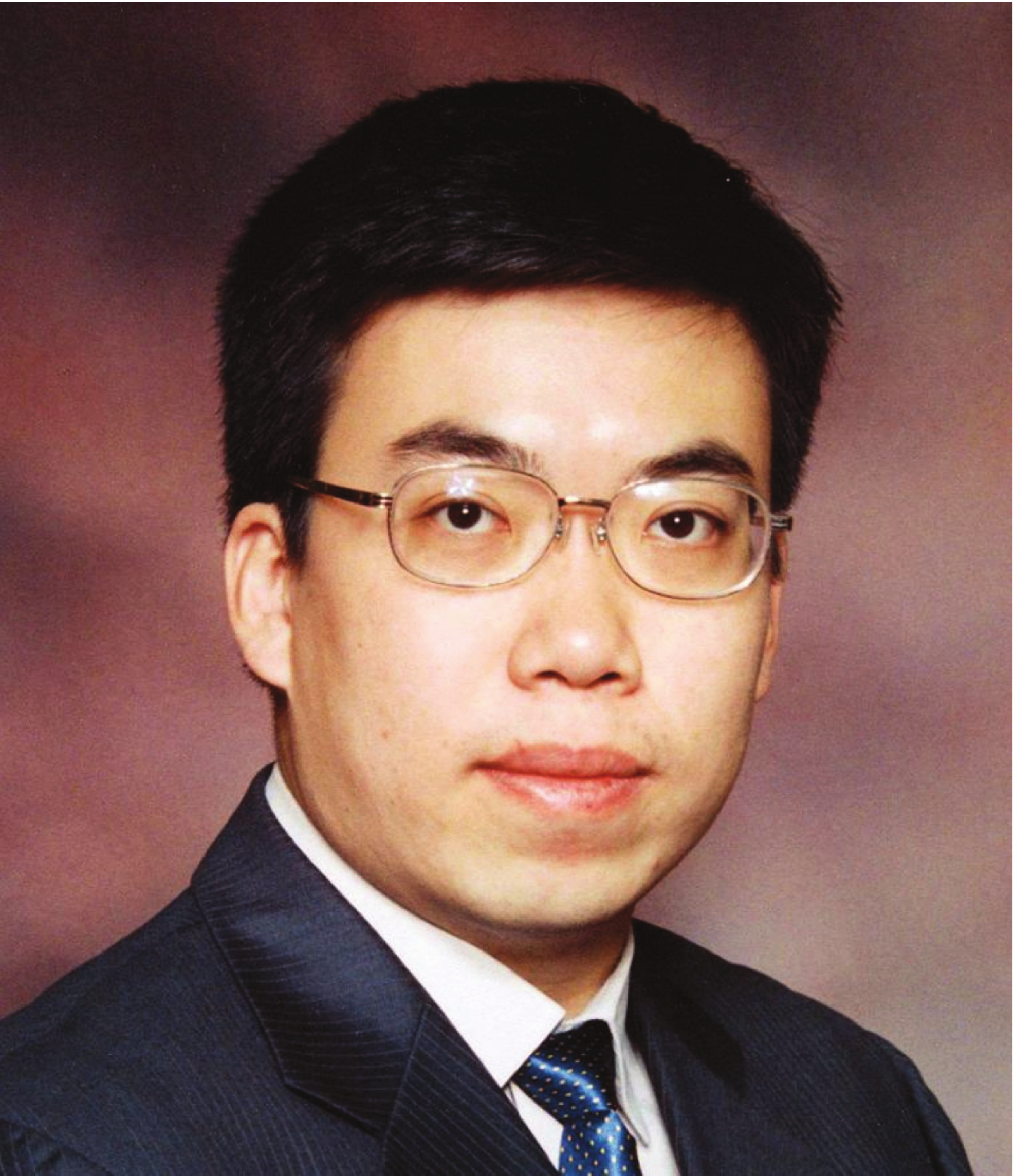}}]
{Jianwei Huang} (S'01-M'06-SM'11-F'16) is an Associate Professor and Director of the Network Communications and Economics Lab (ncel.ie.cuhk.edu.hk), in the Department of Information Engineering at the Chinese University of Hong Kong. He received the Ph.D. degree from Northwestern University in 2005 and worked as a Postdoc Research Associate in Princeton during 2005-2007. He is the co-recipient of 8 international Best Paper Awards, including IEEE Marconi Prize Paper Award in Wireless Communications in 2011. He has co-authored six books, including the first textbook on ``Wireless Network Pricing." He has served as an Associate Editor of IEEE Transactions on Cognitive Communications and Networking, IEEE Transactions on Wireless Communications, and IEEE Journal on Selected Areas in Communications - Cognitive Radio Series. He is the Vice Chair of IEEE ComSoc Cognitive Network Technical Committee and the Past Chair of IEEE ComSoc Multimedia Communications Technical Committee. He is a Fellow of IEEE, a Distinguished Lecturer of IEEE Communications Society, and a Thomson Reuters Highly Cited Researcher in Computer Science.
\end{IEEEbiography}

\vspace{-1.55cm}

\begin{IEEEbiography}
[{\includegraphics[width=1in,height=1.25in,clip,keepaspectratio]{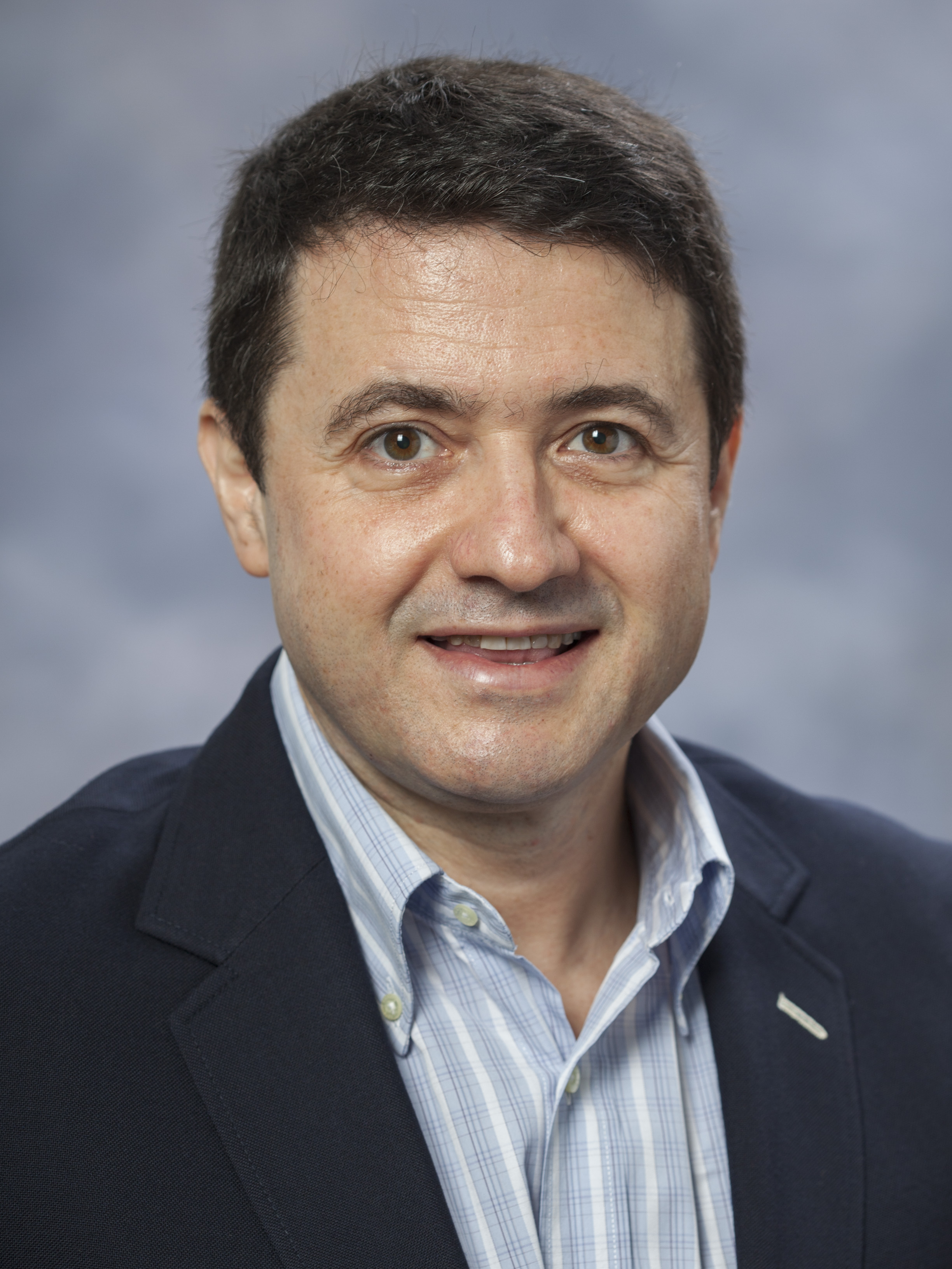}}]
{Leandros Tassiulas} (S'89-M'91-SM'05-F'07) is the John C. Malone Professor of Electrical Engineering and member of the Institute for Network Science at Yale University. His research interests are in the field of computer and communication networks with emphasis on fundamental mathematical models and algorithms of complex networks, architectures and protocols of wireless systems, sensor networks, novel internet architectures and experimental platforms for network research. His most notable contributions include the max-weight scheduling algorithm and the back-pressure network control policy, opportunistic scheduling in wireless, the maximum lifetime approach for wireless network energy management, and the consideration of joint access control and antenna transmission management in multiple antenna wireless systems. Dr. Tassiulas is a Fellow of IEEE (2007). His research has been recognized by several awards including the IEEE Koji Kobayashi computer and communications award (2016), the inaugural INFOCOM 2007 Achievement Award ``for fundamental contributions to resource allocation in communication networks'', the INFOCOM 1994 best paper award, a National Science Foundation (NSF) Research Initiation Award (1992), an NSF CAREER Award (1995), an Office of Naval Research Young Investigator Award (1997) and a Bodossaki Foundation award (1999). He holds a Ph.D. in Electrical Engineering from the University of Maryland, College Park (1991). He has held faculty positions at Polytechnic University, New York, University of Maryland, College Park, and University of Thessaly, Greece.
\end{IEEEbiography}

\newpage

\section{Appendix}

For ease of exposition, we define $b_{\min}^{-k}\triangleq \min_{j\in {\cal K},j\ne k}b_j$ as the minimum bid from the APOs excluding APO $k$.

{{
\subsection{Proof of Lemma \ref{lemma:rT}}
We define function $H\left(r\right)$ as the left hand side of equation (\ref{equ:rT}):
\begin{align}
\nonumber
& H\left(r\right)\triangleq  \left(1-F\left(r\right)\right)^{K-1}\left(C-\frac{K-1+{\eta^{\rm APO}}}{K}r\right)\\
& +\sum_{n=1}^{K-1}{\binom{K-1}{n} \left(F\left(r\right)-F\left(C\right)\right)^n\left(1-F\left(r\right)\right)^{K-1-n}\frac{C-r}{n+1}}.
\end{align}

{\bf Step 1:} we show that function $H\left(r\right)$ is continuous for $r\in\left(-\infty,\infty\right)$. 
Since $F\left(\cdot\right)$ is the cumulative distribution function of a continuous random variable, function $F\left(r\right)$ is continuous for $r\in\left(-\infty,\infty\right)$. In particular, $F\left(r\right)=0$ for $r\in\left(-\infty,r_{\min}\right]$ and $F\left(r\right)=1$ for $r\in\left[r_{\max},\infty\right)$. 
Hence, function $H\left(r\right)$ is continuous for $r\in\left(-\infty,\infty\right)$.

{\bf Step 2:} we compute the values of function $H\left(r\right)$ at points $r=C$ and $r=r_{\max}$. First, we compute $H\left(C\right)$ as follows:
\begin{align}
\nonumber
& H\left(C\right)=\left(1-F\left(C\right)\right)^{K-1}\left(C-\frac{K-1+{\eta^{\rm APO}}}{K}C\right)\\
\nonumber
& +\sum_{n=1}^{K-1}\!{\binom{K-1}{n}\! \left(F\left(C\right)-F\left(C\right)\right)^n\!\left(1-F\left(C\right)\right)^{K-1-n}\!\frac{C-C}{n+1}} \\
& =\left(1-F\left(C\right)\right)^{K-1}\frac{1-{\eta^{\rm APO}}}{K}C.\label{appendix:HC:a}
\end{align}
Since $F\left(\cdot\right)$ is the cumulative distribution function of a random variable, we have $F\left(C\right)\le1$. 
Next we prove $F\left(C\right)<1$ by contradiction. 
Suppose $F\left(C\right)=1$. Because $C\in\left[r_{\min},r_{\max}\right)$ (the assumption of Section III-B), we can always find a $\xi>0$ such that $C+\xi\in\left[r_{\min},r_{\max}\right]$. 
The cumulative distribution function is a non-decreasing function. 
Therefore, we have $F\left(C+\xi\right)\ge F\left(C\right)=1$. 
Together with $F\left(C+\xi\right)\le 1$ (property of cumulative distribution function $F\left(\cdot\right)$), we obtain $F\left(C+\xi \right)= 1$ and $F\left(C+\xi \right)-F\left(C\right)= 0$. 
Recall that $F\left(\cdot\right)$ and $f\left(\cdot\right)$ are the cumulative distribution function and probability density function of a random variable, respectively. We can write $F\left(C+\xi \right)-F\left(C\right)$ as
\begin{align}
F\left(C+\xi \right)-F\left(C\right)=\int_{C}^{C+\xi} f\left(r\right) dr.
\end{align}
Therefore, the result $F\left(C+\xi \right)-F\left(C\right)= 0$ contradicts with the fact that $f\left(r\right)>0$ for all $r\in\left[r_{\min},r_{\max}\right]$ (Section II). Hence, $F\left(C\right)=1$ does not hold, and we obtain $F\left(C\right)<1$.

Based on (\ref{appendix:HC:a}), $F\left(C\right)<1$, and $\eta^{\rm APO}\in\left(0,1\right)$, we conclude that $H\left(C\right)>0$. 

Then we compute $H\left(r_{\max}\right)$ as follows:
\begin{align}
\nonumber
& H\left(r_{\max}\right) \!=\! \left(1-F\left(r_{\max}\right)\right)^{K-1}\!\left(C\!-\frac{K-1+{\eta^{\rm APO}}}{K}r_{\max}\!\right)\!+\\
& \!\!\sum_{n=1}^{K-1}\!\!{\binom{K-1}{n} \!\left(\!F\left(r_{\max}\right)\!-\!F\left(C\right)\right)^n\!\left(\!1\!-\!F\left(r_{\max}\right)\right)^{K\!-1\!-n}\!\frac{C\!-\!r_{\max}}{n+1}}.
\end{align}
Because $F\left(r_{\max}\right)=1$, we can further simplify $H\left(r_{\max}\right)$ as
\begin{align}
\nonumber
H\left(r_{\max}\right) = \left(1-F\left(C\right)\right)^{K-1}\frac{C-r_{\max}}{K}. 
\end{align}
Since $F\left(C\right)<1$ and $C\in\left[r_{\min},r_{\max}\right)$, we conclude that $H\left(r_{\max}\right)<0$.

{\bf Step 3:} We have shown that function $H\left(r\right)$ is continuous for $r\in\left(-\infty,\infty\right)$ and $H\left(C\right)>0>H\left(r_{\max}\right)$. Based on the intermediate value theorem, there is at least one $r\in\left(C,r_{\max}\right)$ satisfying $H\left(r\right)=0$, which completes the proof. 
}}

\subsection{Proof of Theorem \ref{theorem:equilibrium}}
We consider the strategy for APO $k\in{\cal K}$. Assuming that all the other APOs adopt strategy $b^*\left(r_k,C\right)$ defined in (\ref{equ:equilibrium}), we show that choosing strategy $b^*\left(r_k,C\right)$ in (\ref{equ:equilibrium}) maximizes APO $k$'s payoff. We introduce the proofs in the following four parts.

{\bf Part I:} We assume that APO $k$'s type $r_k$ satisfies $r_k\in\left[r_{\min},C\right)$. We discuss the following two cases.

\emph{Case A:} $b_{\min}^{-k}\in\left[0,C\right]$. In this case, the LTE can always find an APO to cooperate with, and the analysis for APO $k$ is equivalent to that of a conventional second-price reverse auction (without allocative externalities). 
Hence, bidding type $r_k$ is APO $k$'s optimal strategy. 

In particular, we study the optimal bidding strategy for APO $k$ with $r_k=r_{\min}$. 
When all the other APOs adopt strategy $b^*\left(r_k,C\right)$ in (\ref{equ:equilibrium}), we can show that $b_{\min}^{-k}>r_{\min}$ with probability one. Therefore, if $r_k=r_{\min}$, bidding any value from set $\left[0,r_{\min}\right)$ and bidding $r_{\min}$ generate the same payoff to APO $k$. In other words, if $r_k=r_{\min}$, bidding any value from $\left[0,r_{\min}\right]$ is APO $k$'s optimal strategy.

\emph{Case B:} $b_{\min}^{-k}=``{\rm N}\textquotedblright$. If APO $k$ bids any value from set $\left[0,C\right]$ (for example, $r_k$), it obtains the same payoff, which equals the reserve rate $C$. If APO $k$ bids $``{\rm N} \textquotedblright$, its payoff will be $\frac{K-1+{\eta^{\rm APO}}}{K}r_k$. 
Since $C>r_k>\frac{K-1+{\eta^{\rm APO}}}{K}r_k$, bidding $r_k$ is one of APO $k$'s optimal strategy. 
In particular, if $r_k=r_{\min}$, bidding any value from set $\left[0,r_{\min}\right)$ is also optimal for APO $k$ (besides bidding $r_{\min}$).

Combining \emph{Case A} and \emph{Case B}, we conclude that when other APOs choose $b^*\left(r_k,C\right)$ defined in (\ref{equ:equilibrium}), it is optimal for APO $k$ with $r_k\in\left[r_{\min},C\right)$ to also choose $b^*\left(r_k,C\right)$ in (\ref{equ:equilibrium}).

{\bf Part II:} We assume that APO $k$'s type $r_k$ satisfies $r_k\in\left[C,r_T\left(C\right)\right)$. We compare bid $C$ with bid $``{\rm N} \textquotedblright$ and any bid smaller than $C$, respectively, and show that bid $C$ is optimal for APO $k$.

First, we compare bid $C$ with bid $``{\rm N} \textquotedblright$. We denote APO $k$'s expected payoff under bid $C$ as $\Pi_a$. Since all the other $K-1$ APOs choose $b^*\left(r_k,C\right)$ in (\ref{equ:equilibrium}), we compute $\Pi_a$ as (\ref{equ:pia}). 
We denote APO $k$'s expected payoff under bid $``{\rm N} \textquotedblright$ as $\Pi_b$, and compute it as (\ref{equ:pib}). 
We compute the difference between $\Pi_a$ and $\Pi_b$ as (\ref{equ:diffpiapib}). 
\begin{figure*}
\begin{align}
\nonumber
\Pi_a = & \left(1-\left(1-F\left(C\right)\right)^{K-1}\right)r_k+\left(1-F\left(r_T\left(C\right)\right)\right)^{K-1}C\\
&+\sum_{n=1}^{K-1}{\binom{K-1}{n} \left(F\left(r_T\left(C\right)\right)-F\left(C\right)\right)^n\left(1-F\left(r_T\left(C\right)\right)\right)^{K-1-n}\frac{C+nr_k}{n+1}}.\label{equ:pia}
\end{align}
\hrulefill
\begin{align}
\nonumber
\Pi_b =& \left(1-\left(1-F\left(C\right)\right)^{K-1}\right)r_k+\left(1-F\left(r_T\left(C\right)\right)\right)^{K-1}\frac{K-1+{\eta^{\rm APO}}}{K}r_k\\
& +\sum_{n=1}^{K-1}{\binom{K-1}{n} \left(F\left(r_T\left(C\right)\right)-F\left(C\right)\right)^n\left(1-F\left(r_T\left(C\right)\right)\right)^{K-1-n}r_k}.\label{equ:pib}
\end{align}
\hrulefill
\begin{align}
\nonumber
\Pi_a-\Pi_b=& \left(1-F\left(r_T\left(C\right)\right)\right)^{K-1}\left(C-\frac{K-1+{\eta^{\rm APO}}}{K}r_k\right)\\
&+\sum_{n=1}^{K-1}{\binom{K-1}{n} \left(F\left(r_T\left(C\right)\right)-F\left(C\right)\right)^n\left(1-F\left(r_T\left(C\right)\right)\right)^{K-1-n}\frac{C-r_k}{n+1}}.\label{equ:diffpiapib}
\end{align}
\hrulefill
\end{figure*}

It is easy to find that $\Pi_a-\Pi_b$ is strictly decreasing with $r_k$. Furthermore, based on the definition of $r_T\left(C\right)$, when $r_k=r_T\left(C\right)$, we have $\Pi_a-\Pi_b=0$. 
Since we assume that APO $k$'s type $r_k$ satisfies $r_k\in\left[C,r_T\left(C\right)\right)$, we have $\Pi_a>\Pi_b$. In other words, bidding $C$ generates a higher payoff to APO $k$ than bidding $``{\rm N} \textquotedblright$.

Second, we compare bid $C$ with any bid $x\in\left[0,C\right)$. It is easy to find that bid $C$ and bid $x$ generate a difference on APO $k$'s payoff only when $b_{\min}^{-k} \in\left[x,C\right]$. Since the other APOs bid according to $b^*\left(r_k,C\right)$ in (\ref{equ:equilibrium}), the probability for $b_{\min}^{-k}=x\in\left[0,C\right)$ is zero. Therefore, we do not need to consider the case where $b_{\min}^{-k}=x$. Next we discuss case $b_{\min}^{-k}\in\left(x,C\right)$ and case $b_{\min}^{-k}=C$, separately:
\begin{itemize}
\item When $b_{\min}^{-k}\in\left(x,C\right)$, bidding $C$ generates a payoff of $r_k$ to APO $k$, and bidding $x$ generates a payoff of $b_{\min}^{-k}$ to APO $k$. Since $b_{\min}^{-k}<C\le r_k$, in this case, bidding $C$ generates a higher payoff to APO $k$ than bidding $x$;
\item When $b_{\min}^{-k}=C$, APO $k$'s expected payoff under bid $C$ lies in interval $\left(C,r_k\right)$, and its payoff under bid $x$ equals $C$. Hence, in this case, bidding $C$ generates a higher greater than that under bid $x$.
\end{itemize}
To conclude, considering all cases, bidding $C$ generates a payoff higher than that under any bid smaller than $C$.

Therefore, when other APOs choose $b^*\left(r_k,C\right)$ in (\ref{equ:equilibrium}), it is optimal for APO $k$ with $r_k\in\left[C,r_T\left(C\right)\right)$ to also choose $b^*\left(r_k,C\right)$ in (\ref{equ:equilibrium}).

{\bf Part III:} We assume that APO $k$'s type $r_k=r_T\left(C\right)$. Based on a similar analysis as {\bf Part II}, we can show that bidding $C$ and bidding $``{\rm N} \textquotedblright$ generate the same expected payoff to APO $k$. Furthermore, these two bids weakly dominate any bid in $\left[0,C\right)$. 

Therefore, when other APOs choose $b^*\left(r_k,C\right)$ in (\ref{equ:equilibrium}), it is optimal for APO $k$ with $r_k=r_T\left(C\right)$ to also choose $b^*\left(r_k,C\right)$ in (\ref{equ:equilibrium}).

{\bf Part IV:} We assume that APO $k$'s type $r_k\in\left(r_T\left(C\right),r_{\max}\right]$. Based on a similar analysis as {\bf{Part II}}, we can show that bid $C$ weakly dominates any bid $x<C$. 

Next, we only need to compare bid $C$ and bid $``{\rm N} \textquotedblright$ for APO $k$. Similar as {\bf{Part II}}, we can compute $\Pi_a$ and $\Pi_b$ for APO $k$'s expected payoffs under bid $C$ and bid $``{\rm N} \textquotedblright$, respectively. From (\ref{equ:diffpiapib}), we can show that $\Pi_a < \Pi_b$ for $r_k\in\left(r_T\left(C\right),r_{\max}\right]$. Therefore, bidding $``{\rm N} \textquotedblright$ generates a higher payoff to APO $k$ than bidding $C$.

To conclude, when other APOs choose $b^*\left(r_k,C\right)$ in (\ref{equ:equilibrium}), it is optimal for APO $k$ with $r_k\in\left(r_T\left(C\right),r_{\max}\right]$ to also choose $b^*\left(r_k,C\right)$ in (\ref{equ:equilibrium}).

Summarizing {\bf Part I}, {\bf Part II}, {\bf Part III}, and {\bf Part IV}, we complete the proof.

\subsection{Preliminary Lemmas}\label{appendix:sec:preliminary}

In order to prove Theorem \ref{theorem:combine:unique}, we introduce some preliminary lemmas in this section. For all these lemma, we assume that $C$ is fixed and chosen from $\in\left[r_{\min},r_{\max}\right)$.

\begin{lemma} (Monotonicity)\label{lemma:monotonicity}
For a strategy function $\tilde b$ which constitutes an SBNE, if $r_L,r_H\in\left[r_{\min},r_{\max}\right]$, $r_L<r_H$ and ${\tilde b}\left(r_H,C\right)\ne``{\rm N} \textquotedblright$, we have ${\tilde b}\left(r_L,C\right)\ne``{\rm N} \textquotedblright$ and ${\tilde b}\left(r_L,C\right)\le {\tilde b}\left(r_H,C\right)$.
\end{lemma}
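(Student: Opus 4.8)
The plan is to run a revealed-preference (exchange) argument built on the observation that, once the rivals' strategy $\tilde b$ is fixed, an APO's expected payoff is \emph{affine} in its own type. Write $U(s\mid r)$ for the expected payoff of an APO of type $r$ that submits $s\in\left[0,C\right]\cup\left\{``{\rm N}\textquotedblright\right\}$ while every other APO plays $\tilde b$. Inspecting the three branches of (\ref{equ:APOpayoff}), in every realization the APO's payoff is one of $r$, $r_{\rm pay}$, $\frac{1}{|{\cal I}_{\min}|}r_{\rm pay}+\frac{|{\cal I}_{\min}|-1}{|{\cal I}_{\min}|}r$, or $\frac{K-1+{\eta^{\rm APO}}}{K}r$; in each case the coefficient multiplying $r$ and the additive part ($r_{\rm pay}$ and the tie multiplicity) are determined only by the rivals' bids, hence by ${\bm r}_{-k}$, and never by $r$ itself. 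Since types are independent, taking $\mathbb{E}_{{\bm r}_{-k}}$ gives $U(s\mid r)=\alpha(s)+\beta(s)\,r$, where $\beta(s)\in\left[0,1\right]$ is the expected weight placed on the APO's own rate and neither $\alpha(s)$ nor $\beta(s)$ depends on $r$.

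Set $b_L=\tilde b(r_L,C)$ and $b_H=\tilde b(r_H,C)$. Since $\tilde b$ is an SBNE, type $r_L$ weakly prefers $b_L$ to $b_H$ and type $r_H$ weakly prefers $b_H$ to $b_L$, i.e. $U(b_L\mid r_L)\ge U(b_H\mid r_L)$ and $U(b_H\mid r_H)\ge U(b_L\mid r_H)$. Substituting the affine form and adding the two inequalities, the $\alpha$-terms cancel and one obtains $(\beta(b_H)-\beta(b_L))(r_H-r_L)\ge 0$, so $\beta(b_H)\ge\beta(b_L)$ because $r_H>r_L$. It then remains to translate this inequality on $\beta$ into the claimed ordering of the bids, for which I would establish two monotonicity facts about $\beta$.

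For the $``{\rm N}\textquotedblright$ claim I compare $\beta(``{\rm N}\textquotedblright)$ with numerical bids. Letting $q$ be the probability that all $K-1$ rivals bid $``{\rm N}\textquotedblright$, a direct computation gives $\beta(``{\rm N}\textquotedblright)=1-q\frac{1-{\eta^{\rm APO}}}{K}$, whereas any numerical $s$ wins outright (weight $0$ on $r$) at least on the event that the rivals all bid $``{\rm N}\textquotedblright$, so $\beta(s)\le 1-q$; since $\frac{1-{\eta^{\rm APO}}}{K}<1$ (as ${\eta^{\rm APO}}\in(0,1)$ and $K\ge 2$), this yields $\beta(``{\rm N}\textquotedblright)\ge\beta(s)$, strict whenever $q>0$. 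Hence if $b_L=``{\rm N}\textquotedblright$ and $b_H$ were numerical we would get $\beta(b_L)>\beta(b_H)$, contradicting $\beta(b_H)\ge\beta(b_L)$, which forces $\tilde b(r_L,C)\ne``{\rm N}\textquotedblright$ (the residual case $q=0$, where essentially no rival bids $``{\rm N}\textquotedblright$, is handled separately and trivially). For numerical bids $\beta$ is non-decreasing: comparing $s<s'$, raising the bid only turns ``win'' events into ``tie'' or ``lose'' events, and the increment equals $\mathbb{E}\!\left[\tfrac{1}{|{\cal I}_{s}|}\mathds{1}\{b_{\min}^{-k}=s\}\right]+\mathbb{P}(s<b_{\min}^{-k}<s')+\mathbb{E}\!\left[\tfrac{|{\cal I}_{s'}|-1}{|{\cal I}_{s'}|}\mathds{1}\{b_{\min}^{-k}=s'\}\right]\ge 0$.

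With both facts I would finish by contradiction: if $b_L>b_H$ (both numerical), non-decreasingness gives $\beta(b_L)\ge\beta(b_H)$, which together with $\beta(b_H)\ge\beta(b_L)$ forces $\beta(b_L)=\beta(b_H)$; by the increment formula this requires that the rivals place no atom on $[b_H,b_L)$ and no mass on $(b_H,b_L]$. The main obstacle is exactly this degenerate flat case, and I expect to dispose of it by showing that on such a gap the \emph{entire} payoff coincides, $U(b_H\mid r)=U(b_L\mid r)$ for all $r$ (the constants $\alpha$ agree because the winning payment $r_{\rm pay}=\min\{C,b_{\min}^{-k}\}$ is unaffected by moving one's own bid across an interval carrying no rival mass), so the two bids are payoff-equivalent and the apparent violation of monotonicity is immaterial; conversely, as soon as the rivals have any atom in $[b_H,b_L]$ the increment is \emph{strictly} positive and $b_L>b_H$ is flatly contradicted. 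Combining the two numerical subcases with the $``{\rm N}\textquotedblright$ argument yields $\tilde b(r_L,C)\le\tilde b(r_H,C)$, completing the proof.
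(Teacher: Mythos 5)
Your core engine is the same as the paper's: writing the expected payoff as $\alpha(s)+\beta(s)r$ and adding the two SBNE inequalities for $r_L$ and $r_H$ is exactly the paper's step of adding its equations (19)--(20) and (24)--(25) (your $\beta(s)$ is $1-P_{\rm win}(s)$ in the paper's notation), and your treatment of the non-degenerate cases is sound. The problem is that both places where you defer to degeneracy are precisely where the paper's proof does its real work, and neither deferral holds up. First, the case $q=0$ in your $``{\rm N}\textquotedblright$ argument is not trivial: there you only get $\beta(``{\rm N}\textquotedblright)=1\ge\beta(s)$, so the exchange inequality is vacuous, and if in addition $P_{\rm win}(s)=0$ the two types' incentive constraints hold with equality, so no contradiction can be extracted from $r_L$ and $r_H$ alone. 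The paper's Case B ($D=1$) closes this by \emph{constructing an intermediate type} $r_M\in(r_L,r_H)$ whose equilibrium bid is numerical and wins with positive probability---an existence argument that exploits symmetry (with positive probability all $K$ APOs draw types in $(r_L,r_H)$, yet someone must hold the minimum bid)---and then plays $r_M$'s constraint against $r_L$'s to obtain $r_L\ge\Pi_{\rm win}(t)\ge r_M>r_L$. Some version of this argument is unavoidable.

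Second, and more seriously, your resolution of the flat case in Part II does not prove the lemma. The lemma asserts $\tilde b(r_L,C)\le\tilde b(r_H,C)$ for \emph{every} SBNE; showing that a violating pair $b_L>b_H$ would be payoff-equivalent does not make the violation ``immaterial''---if such an SBNE existed, the lemma as stated (and the downstream uniqueness result, Theorem 2, which uses the ordering to pin down the thresholds $r_B=C$ and $r_A=r_T(C)$) would simply be false. What actually rules out the flat case is again symmetry plus the intermediate types, which your proof never invokes: running your own exchange argument between $r_L$, $r_H$, and every $r\in(r_L,r_H)$ forces all intermediate types' bids to carry the same $\beta$, hence (outside the $``{\rm N}\textquotedblright$ sub-case, which is excluded when $q>0$) to lie in the interval $I$ on which $P_{\rm win}$ is constant. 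Since rivals play the same strategy, a positive mass of rival bids then sits inside $I$, and positive mass on $I$ is incompatible with $P_{\rm win}$ being constant there: an atom makes $P_{\rm win}$ drop strictly across it (tie-sharing gives the atom bid strictly more than any higher bid in $I$), and spread-out mass puts $b_{\min}^{-k}$ inside $I$ with positive probability, again breaking constancy. This is exactly the paper's $r_G$ argument (its ``Case A''/``Case B'' for the bid $u$); without it, your claim that ``no rival mass on the gap'' ends the matter leaves the lemma unproved in the only case that is actually hard.
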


\begin{proof} 
We introduce the proof in the following two parts.

{\bf Part I:} Now suppose there is an SBNE $\tilde b$, we first show that if $r_L<r_H$ and ${\tilde b}\left(r_H,C\right)\ne``{\rm N} \textquotedblright$, we have ${\tilde b}\left(r_L,C\right)\ne``{\rm N} \textquotedblright$. We prove it by contradiction, \emph{i.e.}, suppose there exist $r_H,r_L$ such that ${\tilde b}\left(r_H,C\right)\ne``{\rm N} \textquotedblright$ and ${\tilde b}\left(r_L,C\right)=``{\rm N} \textquotedblright$.

We denote $s\triangleq {\tilde b}\left(r_H,C\right)\in \left[0,C\right]$. Next we assume APO $k$'s type is $r_k=r_H$ and study its strategy. Since $\tilde b$ constitutes an SBNE and ${\tilde b}\left(r_H,C\right)=s$, if the other APOs choose strategy $\tilde b$, bidding $s$ is optimal to APO $k$.

Next we compute APO $k$'s expected payoff under bid $s$. We define $P_{\rm win}\left(s\right)$ as the probability that APO $k$ wins the auction with bid $s$, assuming the other $K-1$ APOs adopt strategy $\tilde b$. We also define $\Pi_{\rm win}\left(s\right)$ as APO $k$'s expected payoff when it wins the auction with bid $s$, assuming the other $K-1$ APOs adopt strategy $\tilde b$. 
Based on $P_{\rm win}\left(s\right)$ and $\Pi_{\rm win}\left(s\right)$, we compute APO $k$'s expected payoff under bid $s$ as ${P_{\rm win}\left(s\right)} \Pi_{\rm win}\left(s\right) + \left(1-{P_{\rm win}\left(s\right)}\right)r_H$.

Then we compute APO $k$'s expected payoff under bid $``{\rm N} \textquotedblright$. We define $D$ as the probability that $b_{\min}^{-k}\in\left[0,C\right]$, assuming the other $K-1$ APOs adopt strategy $\tilde b$. 
APO $k$'s expected payoff under bid $``{\rm N} \textquotedblright$ is computed as $D r_H+\left(1-D\right) \frac{K-1+{\eta^{\rm APO}}}{K}{r_H}$.

Since ${\tilde b}\left(r_H,C\right)=s$, we have the following relation:
\begin{align}
\nonumber
&{P_{\rm win}\left(s\right)} \Pi_{\rm win}\left(s\right) + \left(1-{P_{\rm win}\left(s\right)}\right)r_H\ge \\
& D r_H+\left(1-D\right) \frac{K-1+{\eta^{\rm APO}}}{K}{r_H}.\label{equ:rH}
\end{align}

By our assumption, we have ${\tilde b}\left(r_L,C\right)=``{\rm N} \textquotedblright$. Next we assume APO $k$'s type is $r_k=r_L$ and study its strategy. When APO $k$ bids $``{\rm N} \textquotedblright$, its expected payoff is $D r_L+\left(1-D\right) \frac{K-1+{\eta^{\rm APO}}}{K}{r_L}$; when APO $k$ bids $s$ (recall that $s= {\tilde b}\left(r_H,C\right)$), its expected payoff is ${P_{\rm win}\left(s\right)} \Pi_{\rm win}\left(s\right) + \left(1-{P_{\rm win}\left(s\right)}\right)r_L$. 

Since ${\tilde b}\left(r_L,C\right)=``{\rm N} \textquotedblright$, we have the following relation:
\begin{align}
\nonumber
& D r_L+\left(1-D\right) \frac{K-1+{\eta^{\rm APO}}}{K}{r_L} \ge \\
& {P_{\rm win}\left(s\right)} \Pi_{\rm win}\left(s\right) + \left(1-{P_{\rm win}\left(s\right)}\right)r_L.\label{equ:rL}
\end{align}

Adding (\ref{equ:rH}) and (\ref{equ:rL}), we have
\begin{align}
\left(1-\!{P_{\rm win}\left(s\right)-\!D-\!\left(1-\!D\right) \frac{K-\!1+\!{\eta^{\rm APO}}}{K}}\right)\left(r_H-\!r_L\right)\!\ge \!0.\label{equ:combine}
\end{align}

Recall that $1-D$ is the probability that $b_{\min}^{-k}=``{\rm N} \textquotedblright$, \emph{i.e.}, all the other $K-1$ APOs bid $``{\rm N} \textquotedblright$. 
If APO $k$ bids $s$ and all the other $K-1$ APOs bid $``{\rm N} \textquotedblright$, APO $k$ definitely wins. Hence, we have $P_{\rm win}\left(s\right)\ge 1-D$. Based on this, we further have the following inequality:
\begin{align}
\nonumber
& \left(1-{P_{\rm win}\left(s\right)-D-\left(1-D\right) \frac{K-1+{\eta^{\rm APO}}}{K}}\right)\left(r_H-r_L\right)\\ 
& \le -\left(1-D\right) \frac{K-1+{\eta^{\rm APO}}}{K}\left(r_H-r_L\right).\label{equ:relax} 
\end{align}

We discuss the following two cases:

\emph{Case A:} $D<1$. In this case, the right hand side of (\ref{equ:relax}) is smaller than $0$, which contradicts with (\ref{equ:combine}).

\emph{Case B:} $D=1$. In this case, we can always find an APO type $r_M\in\left(r_L,r_H\right)$ such that $t\triangleq{\tilde b}\left(r_M,C\right)\in\left[0,C\right]$ and $P_{\rm win}\left(t\right)>0$, where $P_{\rm win}\left(t\right)$ is defined as the probability of winning the auction under bid $t$ (assuming the other $K-1$ APOs adopt strategy $\tilde b$). 
We can prove this by contradiction. We assume that we cannot find such an APO type, then the probability of any APO with type in $\left(r_L,r_H\right)$ winning the auction under its bid given by strategy $\tilde b$ is zero (assuming the other $K-1$ APOs adopt strategy $\tilde b$). 
When we randomly pick the $K$ APOs according to the probability distribution $f\left(\cdot\right)$, there is a chance that all the APOs have types in interval $\left(r_L,r_H\right)$. Based on our result, in this case, when all the APOs adopt strategy $\tilde b$, there is no winner in the auction with probability one. 
On the other hand, $D=1$ implies that, when all the APOs adopt strategy $\tilde b$, these $K$ APOs bid from $\left[0,C\right]$ and there is a winner in the auction with probability one. Hence, there is a contradiction and we can always find an APO type $r_M$ as described above.

Similar as $\Pi_{\rm win}\left(s\right)$, we define $\Pi_{\rm win}\left(t\right)$ as the type-$r_M$ APO's payoff when it wins with bid $t$, assuming the other APOs adopt strategy $\tilde b$. For the APO with type $r_M$, ${\tilde b}\left(r_M,C\right)=t$. Therefore, we have:
\begin{align}
\nonumber
&{P_{\rm win}\left(t\right)} \Pi_{\rm win}\left(t\right) + \left(1-{P_{\rm win}\left(t\right)}\right)r_M\\
& \ge D r_M+\left(1-D\right) \frac{K-1+{\eta^{\rm APO}}}{K}{r_M}= r_M.
\end{align}

After arrangement, we have
\begin{align}
P_{\rm win}\left(t\right) r_M \le {P_{\rm win}\left(t\right)} \Pi_{\rm win}\left(t\right).
\end{align}
Because $P_{\rm win}\left(t\right)>0$, we obtain $r_M\le \Pi_{\rm win}\left(t\right)$. Since $r_M\in\left(r_L,r_H\right)$, we further have
\begin{align}
r_L< \Pi_{\rm win}\left(t\right).\label{equ:rL1}
\end{align}
Recall that we have assumed ${\tilde b}\left(r_L,C\right)=``{\rm N} \textquotedblright$. Hence, we have the following relation:
\begin{align}
\nonumber
& D r_L+\left(1-D\right) \frac{K-1+{\eta^{\rm APO}}}{K}{r_L} \\
& \ge {P_{\rm win}\left(t\right)} \Pi_{\rm win}\left(t\right) + \left(1-{P_{\rm win}\left(t\right)}\right)r_L.
\end{align}
After applying $D=1$ and $P_{\rm win}\left(t\right)>0$, we obtain
\begin{align} 
r_L \ge \Pi_{\rm win}\left(t\right).
\end{align}
This contradicts with (\ref{equ:rL1}).

Therefore, there are contradictions for both \emph{Case A} and \emph{Case B}. We conclude that, if $r_L<r_H$ and ${\tilde b}\left(r_H,C\right)\ne``{\rm N} \textquotedblright$, we have ${\tilde b}\left(r_L,C\right)\ne``{\rm N} \textquotedblright$, which completes the proof of {\bf Part I}.

{\bf Part II:} Now we show that, if $r_L<r_H$ and ${\tilde b}\left(r_H,C\right)\ne``{\rm N} \textquotedblright$, we have ${\tilde b}\left(r_L,C\right) \le {\tilde b}\left(r_H,C\right)$.

Similar as {\bf Part I}, we define $s\triangleq {\tilde b}\left(r_H,C\right)$, $P_{\rm win}\left(s\right)$, and $\Pi_{\rm win}\left(s\right)$. We also define $o\triangleq {\tilde b}\left(r_L,C\right)$, $P_{\rm win}\left(o\right)$, and $\Pi_{\rm win}\left(o\right)$. We prove $o\le s$ by contradiction, \emph{i.e.}, we assume that $o>s$.

For an APO with type $r_H$, we have the following relation:
\begin{align}
\nonumber
& {P_{\rm win}\left(s\right)} \Pi_{\rm win}\left(s\right) + \left(1-{P_{\rm win}\left(s\right)}\right)r_H \\
& \ge {P_{\rm win}\left(o\right)} \Pi_{\rm win}\left(o\right) + \left(1-{P_{\rm win}\left(o\right)}\right)r_H.\label{equ:partB:rH}
\end{align}

For an APO with type $r_L$, we have the following relation:
\begin{align}
\nonumber
&{P_{\rm win}\left(o\right)} \Pi_{\rm win}\left(o\right) + \left(1-{P_{\rm win}\left(o\right)}\right)r_L\\ 
& \ge {P_{\rm win}\left(s\right)} \Pi_{\rm win}\left(s\right) + \left(1-{P_{\rm win}\left(s\right)}\right)r_L.\label{equ:partB:rL}
\end{align}

Adding (\ref{equ:partB:rH}) and (\ref{equ:partB:rL}), we have:
\begin{align}
\left({P_{\rm win}\left(o\right)}-{P_{\rm win}\left(s\right)}\right)\left(r_H - r_L\right) \ge 0.
\end{align} 
From $r_H>r_L$, we have ${P_{\rm win}\left(o\right)}\ge {P_{\rm win}\left(s\right)}$. Since $s<o$, naturally we have ${P_{\rm win}\left(s\right)}\ge{P_{\rm win}\left(o\right)}$, \emph{i.e.}, given the fact that the other APOs use strategy $\tilde b$, the probability that an APO wins with bid $s$ is not smaller than the probability that it wins with a higher bid $o$. Hence, the only possibility is that ${P_{\rm win}\left(o\right)}= {P_{\rm win}\left(s\right)}$, which means the probability for a particular APO to bid between interval $\left[s,o\right]$ under strategy $\tilde b$ is zero.

We can find a type $r_G\in\left(r_L,r_H\right)$ such that $P_{\rm win}\left(u\right)\ne P_{\rm win}\left(s\right),P_{\rm win}\left(o\right)$, where $u\triangleq {\tilde b}\left(r_G,C\right)$ and $P_{\rm win}\left(u\right)$ is the probability that an APO of type $r_G$ wins the auction under bid $u$, assuming the other APOs choose strategy ${\tilde b}$. 
We prove this by contradiction. We suppose that for any APO type $r_G\in\left(r_L,r_H\right)$, we have $P_{\rm win}\left(u\right)=P_{\rm win}\left(s\right)=P_{\rm win}\left(o\right)$. 
When we randomly pick an APO according to the probability distribution $f\left(\cdot\right)$, there is a chance that the APO's type is from $\left(r_L,r_H\right)$. Based on our assumption, when it bids based on ${\tilde b}$, it should bid the same value with probability one. 
We denote this bid value as $v$. Furthermore, we can always find an APO type $r_F\in\left(r_L,r_H\right)$ such that ${\tilde b}\left(r_F,C\right)=v$. Based on our assumption, we have $P_{\rm win}\left(v\right)=P_{\rm win}\left(s\right)=P_{\rm win}\left(o\right)$. However, this contradicts with the fact that there is a positive chance that a randomly picked APO bids $v$ and $s\ne o$. Hence, we can always find an APO type $r_G$ as described above.

We discuss the following two cases.

\emph{Case A:} $u\in\left[0,o\right)$. We have the following relation for a type $r_G$ APO:
\begin{align}
\nonumber
& {P_{\rm win}\left(u\right)} \Pi_{\rm win}\left(u\right) + \left(1-{P_{\rm win}\left(u\right)}\right)r_G\\ 
& \ge {P_{\rm win}\left(o\right)} \Pi_{\rm win}\left(o\right) + \left(1-{P_{\rm win}\left(o\right)}\right)r_G.
\end{align}
After rearrangement, we have
\begin{align}
{P_{\rm win}\left(u\right)} \Pi_{\rm win}\left(u\right) \!- \!{P_{\rm win}\left(o\right)} \Pi_{\rm win}\left(o\right)  \ge \left(P_{\rm win}\left(u\right)\!-\!{P_{\rm win}\left(o\right)}\right)r_G.\label{equ:partB:rK}
\end{align}

Since $u<o$ and $P_{\rm win}\left(u\right)\ne P_{\rm win}\left(o\right)$, we have $P_{\rm win}\left(u\right)> P_{\rm win}\left(o\right)$. From (\ref{equ:partB:rK}) and $r_G>r_L$, we have
\begin{align}
\nonumber
& {P_{\rm win}\left(u\right)} \Pi_{\rm win}\left(u\right) - {P_{\rm win}\left(o\right)} \Pi_{\rm win}\left(o\right) \\
& > \left(P_{\rm win}\left(u\right)-{P_{\rm win}\left(o\right)}\right)r_L.
\end{align}

After rearrangement, we have
\begin{align}
\nonumber
& {P_{\rm win}\left(u\right)} \Pi_{\rm win}\left(u\right) + \left(1-{P_{\rm win}\left(u\right)}\right)r_L \\
&> {P_{\rm win}\left(o\right)} \Pi_{\rm win}\left(o\right) + \left(1-{P_{\rm win}\left(o\right)}\right)r_L.
\end{align}
This means an APO of type $r_L$ prefers the bid $u$ over $o$, which contradicts with the fact that $o={\tilde b}\left(r_L,C\right)$. Therefore, $u$ can not be in interval $\left[0,s\right)$.

\emph{Case B:} $u\in\left[o,C\right]$. We have the following relation for a type $r_G$ APO:
\begin{align}
\nonumber
&{P_{\rm win}\left(u\right)} \Pi_{\rm win}\left(u\right) + \left(1-{P_{\rm win}\left(u\right)}\right)r_G\\
&\ge {P_{\rm win}\left(s\right)} \Pi_{\rm win}\left(s\right) + \left(1-{P_{\rm win}\left(s\right)}\right)r_G.
\end{align}
After rearrangement, we have
\begin{align}
\left(P_{\rm win}\left(s\right)-{P_{\rm win}\left(u\right)}\right)r_G  \ge {P_{\rm win}\left(s\right)} \Pi_{\rm win}\left(s\right) -{P_{\rm win}\left(u\right)} \Pi_{\rm win}\left(u\right) .\label{equ:partB:rK2}
\end{align}
Since $u>s$ and $P_{\rm win}\left(u\right)\ne P_{\rm win}\left(s\right)$, we have $P_{\rm win}\left(u\right)<P_{\rm win}\left(s\right)$. From (\ref{equ:partB:rK2}) and $r_G<r_H$, we have
\begin{align}
\nonumber
& \left(P_{\rm win}\left(s\right)-{P_{\rm win}\left(u\right)}\right)r_H \\
& > {P_{\rm win}\left(s\right)} \Pi_{\rm win}\left(s\right) -{P_{\rm win}\left(u\right)} \Pi_{\rm win}\left(u\right).
\end{align}
After rearrangement, we have
\begin{align}
\nonumber
& {P_{\rm win}\left(u\right)} \Pi_{\rm win}\left(u\right) + \left(1-{P_{\rm win}\left(u\right)}\right)r_H \\
& > {P_{\rm win}\left(s\right)} \Pi_{\rm win}\left(s\right) + \left(1-{P_{\rm win}\left(s\right)}\right)r_H.
\end{align}
This means an APO of type $r_H$ prefers the bid $u$ over $s$, which contradicts with the fact that $s={\tilde b}\left(r_H,C\right)$. Therefore, $u$ can not be in interval $\left[o,C\right]$.

Based on \emph{Case A} and \emph{Case B}, we prove that $o\le s$. In other words, if $r_L<r_H$ and ${\tilde b}\left(r_H,C\right)\ne``{\rm N} \textquotedblright$, we have ${\tilde b}\left(r_L,C\right) \le {\tilde b}\left(r_H,C\right)$. This completes the proof of {\bf Part II}.

Combining {\bf Part I} and {\bf Part II}, we complete the proof of the lemma.
\end{proof}

\begin{lemma}\label{lemma:activity}
For a strategy function $\tilde b$ which constitutes an SBNE, we use $Q$ to denote the probability that an APO bids a value from set $\left[0,C\right]$ (\emph{i.e.}, not bid ``{\rm N} \textquotedblright) under strategy $\tilde b$. We have $Q<1$.
\end{lemma}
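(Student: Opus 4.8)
The plan is to argue by contradiction: assume $Q=1$, so that under $\tilde b$ every APO bids some value in $\left[0,C\right]$ with probability one, and the LTE therefore always operates in the \emph{cooperation mode} (there is always a winner). I will derive a contradiction by showing that, on one hand, high-type APOs can never be selected as the winner, while on the other hand an event of positive probability forces the winner to be a high type. Monotonicity (Lemma \ref{lemma:monotonicity}) is not needed for this argument.

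First I would establish the key optimality observation. Fix an APO $k$ with type $r_k=r>C$ and suppose the other $K-1$ APOs use $\tilde b$. Since $Q=1$, the others all bid in $\left[0,C\right]$ almost surely, so $b_{\min}^{-k}\in\left[0,C\right]$ with probability one. Consequently, deviating to the bid $``{\rm N}\textquotedblright$ makes APO $k$ lose with probability one, and by (\ref{equ:APOpayoff}) (the case $b_k>\min_{j}b_j$) yields exactly the payoff $r$. By contrast, any bid in $\left[0,C\right]$ yields the payoff $r$ whenever APO $k$ is not the selected winner, and yields the payment $r_{\rm pay}\le C<r$ whenever it is selected (see (\ref{equ:rpay})). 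Writing $P_{\rm win}\left(b\right)$ for the probability that APO $k$ receives the payment under bid $b$, its expected payoff under $b\in\left[0,C\right]$ is therefore at most $r$, with strict inequality whenever $P_{\rm win}\left(b\right)>0$. Since $Q=1$ forces $\tilde b\left(r,C\right)\in\left[0,C\right]$ for almost every $r$, and since $\tilde b$ is an SBNE, bidding $\tilde b\left(r,C\right)$ must be optimal; comparing with the $``{\rm N}\textquotedblright$ deviation then forces $P_{\rm win}\left(\tilde b\left(r,C\right)\right)=0$ for almost every $r\in\left(C,r_{\max}\right]$.

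Next I would translate this into a statement about the winner's type. Summing over the $K$ symmetric APOs, the probability that the selected winner has type in $\left(C,r_{\max}\right]$ equals $K\int_{C}^{r_{\max}} P_{\rm win}\left(\tilde b\left(r,C\right)\right)f\left(r\right)\,dr$, which is $0$ by the previous step. The hard part, and the crux of the argument, is to contradict this. Consider the event $E$ that all $K$ APOs draw types in $\left(C,r_{\max}\right]$; since $C<r_{\max}$ and $f>0$, the same reasoning as in the proof of Lemma \ref{lemma:rT} gives $F\left(C\right)<1$, hence $\Pr\left(E\right)=\left(1-F\left(C\right)\right)^{K}>0$. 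Conditioned on $E$, every APO still bids in $\left[0,C\right]$ almost surely (the $``{\rm N}\textquotedblright$-bidding types form a null set because $Q=1$), so $\min_{k}b_k\in\left[0,C\right]$ and the auction has a winner; that winner is necessarily one of these $K$ APOs and hence has type in $\left(C,r_{\max}\right]$. Therefore the probability that the winner's type lies in $\left(C,r_{\max}\right]$ is at least $\Pr\left(E\right)>0$, contradicting the value $0$ obtained above.

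This contradiction rules out $Q=1$, and since $Q\le 1$ always holds, we conclude $Q<1$, completing the proof. I expect the only delicate points to be the bookkeeping with tie-breaking (ensuring that ``not being the selected winner'' always yields $r$, whether APO $k$ loses outright or is a non-selected member of ${\cal I}_{\min}$, in which case the payment portion still carries a factor $\le C<r$) and the measure-zero caveats attached to $Q=1$; neither affects the probabilistic comparison, which is driven entirely by the strict gap $C<r$ enjoyed by every high type.
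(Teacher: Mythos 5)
Your proof is correct, and it closes the contradiction by a genuinely different mechanism than the paper. Both arguments share the same starting point: assume $Q=1$, observe that a deviation to $``{\rm N}\textquotedblright$ then gives a high-type APO (type $r>C$) exactly $r$, while any bid in $\left[0,C\right]$ gives at most $P_{\rm win}\,\Pi_{\rm win}+\left(1-P_{\rm win}\right)r$ with $\Pi_{\rm win}\le C<r$. The paper then argues \emph{locally}: it fixes a single type $r_Z\in\left(C,r_{\max}\right]$ bidding some $s\in\left[0,C\right]$ and invokes the monotonicity result (Lemma \ref{lemma:monotonicity}) to conclude that all higher types bid $``{\rm N}\textquotedblright$ or values no smaller than $s$, hence $P_{\rm win}\left(s\right)>0$, which makes the equilibrium inequality fail strictly for that one type. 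You run the implication in the opposite direction and then \emph{aggregate}: the equilibrium condition forces $P_{\rm win}\left(\tilde b\left(r,C\right)\right)=0$ for almost every high type, so the probability that the selected winner has type in $\left(C,r_{\max}\right]$ is zero; yet on the event that all $K$ types fall in $\left(C,r_{\max}\right]$ — which has probability $\left(1-F\left(C\right)\right)^{K}>0$ and on which, under $Q=1$, a winner must exist — the winner is necessarily a high type. The trade-off is clear: the paper's route is shorter given that Lemma \ref{lemma:monotonicity} is already in hand, whereas yours is self-contained — it needs no structural knowledge of $\tilde b$ beyond measurability and the nullity of the $``{\rm N}\textquotedblright$-bidding set, so the lemma would stand even without the monotonicity machinery. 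Your treatment of the delicate points is also sound: ties are absorbed by the $\frac{1}{\left|{\cal I}_{\min}\right|}$ weighting in (\ref{equ:APOpayoff}), which keeps the winning portion of the payoff bounded by $C<r$, and $F\left(C\right)<1$ follows from $C<r_{\max}$ and $f>0$ exactly as in the proof of Lemma \ref{lemma:rT}.
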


\begin{proof}
We prove it by contradiction: we assume that $Q=1$. Then we can always find APO $k$ with type $r_Z\in\left(C,r_{\max}\right]$ that bids a value from $\left[0,C\right]$. We denote $s\triangleq {\tilde b}\left(r_Z,C\right)$. Therefore, we have the following relation for APO $k$'s payoffs under bid $s$ and $``{\rm N} \textquotedblright$:
\begin{align}
{P_{\rm win}\left(s\right)} \Pi_{\rm win}\left(s\right) + \left(1-{P_{\rm win}\left(s\right)}\right)r_Z\ge r_Z,\label{equ:rZ}
\end{align}
where the definitions of $P_{\rm win}\left(s\right)$ and $\Pi_{\rm win}\left(s\right)$ can be found in the proof of Lemma \ref{lemma:monotonicity}. By Lemma \ref{lemma:monotonicity} and ${\tilde b}\left(r_Z,C\right)=s$, all APOs with types in $\left(r_Z,r_{\max}\right]$ will bid $``{\rm N}\textquotedblright$ or values no smaller than $s$. This means $P_{\rm win}\left(s\right)>0$. Furthermore, based on the payment rule, we have $\Pi_{\rm win}\left(s\right)\le C<r_Z$. Hence, we have
\begin{align}
{P_{\rm win}\left(s\right)} \Pi_{\rm win}\left(s\right) + \left(1-{P_{\rm win}\left(s\right)}\right)r_Z < r_Z,
\end{align}
which contradicts with (\ref{equ:rZ}). Therefore, $Q$ can only be smaller than $1$, which completes the proof.
\end{proof}

\begin{lemma}\label{lemma:rA}
For a strategy function $\tilde b$ which constitutes an SBNE, there exists $r_A\in\left(r_{\min},r_{\max}\right)$ such that ${\tilde b}\left(r,C\right)=``{\rm N} \textquotedblright$ for all $r\in\left(r_A,r_{\max}\right]$ and ${\tilde b}\left(r,C\right)\in\left[0,C\right]$ for all $r\in\left[r_{\min},r_{A}\right)$.
\end{lemma}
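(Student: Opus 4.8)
The plan is to define $r_A$ as the threshold that separates the types bidding $``{\rm N}\textquotedblright$ from those bidding in $\left[0,C\right]$, and then to read off all three claimed properties from this definition together with Lemmas \ref{lemma:monotonicity} and \ref{lemma:activity}. The first observation is purely structural. Lemma \ref{lemma:monotonicity} states that whenever $r_L<r_H$ and ${\tilde b}\left(r_H,C\right)\ne``{\rm N}\textquotedblright$ we must have ${\tilde b}\left(r_L,C\right)\ne``{\rm N}\textquotedblright$. Taking the contrapositive, the set ${\cal N}\triangleq\left\{r\in\left[r_{\min},r_{\max}\right]:{\tilde b}\left(r,C\right)=``{\rm N}\textquotedblright\right\}$ is upward closed: if $r\in{\cal N}$ and $r'>r$, then $r'\in{\cal N}$.

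I would then set $r_A\triangleq\inf{\cal N}$. For any $r>r_A$, the definition of the infimum yields some $r'\in{\cal N}$ with $r'<r$, and upward closedness gives $r\in{\cal N}$; hence ${\tilde b}\left(r,C\right)=``{\rm N}\textquotedblright$ for all $r\in\left(r_A,r_{\max}\right]$. Conversely, any $r<r_A$ lies strictly below $\inf{\cal N}$, so $r\notin{\cal N}$, giving ${\tilde b}\left(r,C\right)\in\left[0,C\right]$ for all $r\in\left[r_{\min},r_A\right)$. This settles the two structural claims, and it remains only to show $r_A\in\left(r_{\min},r_{\max}\right)$.

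For the upper bound I would invoke Lemma \ref{lemma:activity}: since the probability $Q$ of bidding in $\left[0,C\right]$ satisfies $Q<1$, the event ${\tilde b}\left(r,C\right)=``{\rm N}\textquotedblright$ has positive probability, and because $r_k$ has a density strictly positive on $\left[r_{\min},r_{\max}\right]$, the set ${\cal N}$ has positive Lebesgue measure; were $r_A=r_{\max}$ we would have ${\cal N}\subseteq\left\{r_{\max}\right\}$, a measure-zero set, a contradiction. The main step is the lower bound $r_A>r_{\min}$, which I would prove by contradiction. If $r_A=r_{\min}$, then by the structure just derived all types in $\left(r_{\min},r_{\max}\right]$ bid $``{\rm N}\textquotedblright$, so $Q=0$ and every APO bids $``{\rm N}\textquotedblright$ with probability one. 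Consider the interval $\left(r_{\min},\min\left\{r_{\max},\tfrac{KC}{K-1+{\eta^{\rm APO}}}\right\}\right)$, which is nonempty because $\tfrac{KC}{K-1+{\eta^{\rm APO}}}>C\ge r_{\min}$; by continuity of $F$ it has positive measure, and since the non-$``{\rm N}\textquotedblright$ types form a measure-zero set when $Q=0$, it contains a type $r$ with ${\tilde b}\left(r,C\right)=``{\rm N}\textquotedblright$. Bidding $``{\rm N}\textquotedblright$ yields $\tfrac{K-1+{\eta^{\rm APO}}}{K}r$, as all rivals abstain; deviating to any $x\in\left[0,C\right]$ makes this APO the sole finite bidder, so by the auction rule it wins and, since $r_{\rm pay}=\min\left\{C,``{\rm N}\textquotedblright\right\}=C$, it receives $C$. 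The defining inequality $r<\tfrac{KC}{K-1+{\eta^{\rm APO}}}$ guarantees $C>\tfrac{K-1+{\eta^{\rm APO}}}{K}r$, so this deviation strictly increases the payoff, contradicting that ${\tilde b}$ is an SBNE; hence $r_A>r_{\min}$.

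I expect the deviation argument for $r_A>r_{\min}$ to be the crux. The structural claims follow immediately from monotonicity, but ruling out the "everybody abstains" profile requires exhibiting a concrete profitable deviation and verifying, via the second-price payment rule, that the lone finite bidder's received rate is exactly the reserve rate $C$, which dominates the competition-mode payoff $\tfrac{K-1+{\eta^{\rm APO}}}{K}r$ for low types; the measure-theoretic step that pins down an actual $``{\rm N}\textquotedblright$-bidding type inside the relevant interval is the only place where the positivity of the density $f$ is essential.
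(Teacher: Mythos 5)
Your proof is correct and takes essentially the same approach as the paper's: both use the monotonicity lemma (Lemma \ref{lemma:monotonicity}) to obtain the threshold structure, use Lemma \ref{lemma:activity} (positive measure of types bidding $``{\rm N}\textquotedblright$) to rule out $r_A=r_{\max}$, and rule out $r_A=r_{\min}$ by the same deviation argument, namely that a type $r$ with $C>\frac{K-1+{\eta^{\rm APO}}}{K}r$ would profitably switch from $``{\rm N}\textquotedblright$ to a finite bid and collect the reserve rate $C$ as the lone finite bidder. Your write-up is merely more explicit about the measure-theoretic bookkeeping (defining $r_A$ as an infimum and verifying $r_{\rm pay}=C$ via the payment rule), but the substance matches the paper's proof.
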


\begin{proof}
According to Lemma \ref{lemma:activity}, there is a positive measure of types that bid $``{\rm N} \textquotedblright$ under $\tilde b$. Based on Lemma \ref{lemma:monotonicity}, the set of these APO types should have the form of $\left(r_A,r_{\max}\right]$ or $\left[r_A,r_{\max}\right]$, where $r_A\in\left[r_{\min},r_{\max}\right)$. 

Suppose $r_A=r_{\min}$, \emph{i.e.}, ${\tilde b}\left(r,C\right)=``{\rm N}\textquotedblright$ for all $r\in\left[r_{\min},r_{\max}\right]$. 
Since $C\ge r_{\min}>\frac{K-1+{\eta^{\rm APO}}}{K}r_{\min}$, we can find an APO type $\hat r\in\left(r_{\min},r_{\max}\right)$ such that $C>\frac{K-1+{\eta^{\rm APO}}}{K}{\hat r}$. Then such an APO can bid a value from $\left[0,C\right]$ and win a payoff of $C$. Since $C>\frac{K-1+{\eta^{\rm APO}}}{K}{\hat r}$, the obtained payoff is greater than the APO's payoff $\frac{K-1+{\eta^{\rm APO}}}{K}{\hat r}$ when it bids $``{\rm N} \textquotedblright$. In other words, the APO with type $\hat r$ will deviate from $``{\rm N} \textquotedblright$. This means $r_A$ can not be equal to $r_{\min}$.

Therefore, we have shown that, there exists $r_A\in\left(r_{\min},r_{\max}\right)$ such that ${\tilde b}\left(r,C\right)=``{\rm N} \textquotedblright$ for all $r\in\left(r_A,r_{\max}\right]$ and ${\tilde b}\left(r,C\right)\in\left[0,C\right]$ for all $r\in\left[r_{\min},r_{A}\right)$. 
\end{proof}

\begin{lemma}\label{lemma:rB}
For a strategy function $\tilde b$ which constitutes an SBNE, there exists $r_B\in\left[r_{\min},r_A\right)$ such that ${\tilde b}\left(r,C\right)=C$ for all $r\in\left(r_B,r_{A}\right)$ and ${\tilde b}\left(r,C\right)\in\left[0,C\right)$ for all $r\in\left[r_{\min},r_{B}\right)$.
\end{lemma}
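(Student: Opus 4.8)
The plan is to identify $r_B$ as the infimum of the set of types below $r_A$ that bid exactly $C$, and then read off the two boundary properties. Concretely, define $S\triangleq\left\{r\in\left[r_{\min},r_A\right):{\tilde b}\left(r,C\right)=C\right\}$. By Lemma \ref{lemma:rA} every type in $\left[r_{\min},r_A\right)$ bids a value in $\left[0,C\right]$ (never $``{\rm N}\textquotedblright$), so Lemma \ref{lemma:monotonicity} applies and gives ${\tilde b}\left(r,C\right)\le{\tilde b}\left(r',C\right)$ whenever $r<r'$ both lie in this range. Hence $S$ is an up-set: if $r\in S$ and $r<r'<r_A$, then $C={\tilde b}\left(r,C\right)\le{\tilde b}\left(r',C\right)\le C$ forces $r'\in S$. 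Setting $r_B\triangleq\inf S$ (well defined once $S$ is nonempty), the interval $\left(r_B,r_A\right)$ then lies in $S$, i.e. ${\tilde b}\left(r,C\right)=C$ there; and for $r<r_B$ the definition of the infimum gives $r\notin S$, so together with Lemma \ref{lemma:rA} we obtain ${\tilde b}\left(r,C\right)\in\left[0,C\right)$. Both boundary properties thus follow, and $r_B<r_A$ holds, once $S$ is shown to be nonempty.

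The substance of the lemma is therefore the claim $S\neq\emptyset$, which I would prove by contradiction. Suppose no type below $r_A$ bids $C$; combined with Lemma \ref{lemma:rA}, every type in $\left[r_{\min},r_A\right)$ bids in $\left[0,C\right)$ and every type in $\left(r_A,r_{\max}\right]$ bids $``{\rm N}\textquotedblright$, so in particular no type bids exactly $C$ (no atom at $C$). I would first record that $r_A>C$. Any type $r\le C$ strictly prefers bidding $r$ to bidding $``{\rm N}\textquotedblright$: bidding $r$ yields a payoff of at least $r$ in every realization (in particular $r_{\rm pay}=C\ge r$ when all rivals bid $``{\rm N}\textquotedblright$), whereas bidding $``{\rm N}\textquotedblright$ yields $\frac{K-1+{\eta^{\rm APO}}}{K}r<r$ on the positive-probability event that all rivals bid $``{\rm N}\textquotedblright$; hence no type $\le C$ bids $``{\rm N}\textquotedblright$ and $r_A\ge C$. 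The residual possibility $r_A=C$ is excluded because a type $C+\epsilon$ (small $\epsilon>0$) would strictly gain by switching from $``{\rm N}\textquotedblright$ to $C$: on the all-rivals-$``{\rm N}\textquotedblright$ event it then wins and collects $C$ instead of $\frac{K-1+{\eta^{\rm APO}}}{K}\left(C+\epsilon\right)$, a strict gain for small $\epsilon$ since $\frac{K-1+{\eta^{\rm APO}}}{K}<1$, while on all other events the payoff is unchanged (there is no atom at $C$ to create a tie).

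With $r_A>C$ in hand, I would pick a type $r\in\left(C,r_A\right)$, set $x\triangleq{\tilde b}\left(r,C\right)<C$, and exhibit a profitable deviation to the bid $C$; this is the heart of the argument, where the positive allocative externality does the work. Comparing bid $C$ against bid $x$ event by event in $b_{\min}^{-k}$: the two bids tie on $\left\{b_{\min}^{-k}<x\right\}$ (APO $k$ loses either way, payoff $r$) and on $\left\{b_{\min}^{-k}=``{\rm N}\textquotedblright\right\}$ (APO $k$ wins either way, payoff $C$), whereas on $\left\{b_{\min}^{-k}\in\left(x,C\right)\right\}$ bidding $x$ makes APO $k$ win and receive the low payment $b_{\min}^{-k}<C<r$, while bidding $C$ makes it lose and retain its own rate $r$, a strict improvement since it prefers to lose over winning cheaply. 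Thus bidding $C$ weakly dominates $x$ and strictly improves the payoff provided $\Pr\left[b_{\min}^{-k}\in\left(x,C\right)\right]>0$, contradicting the optimality of $x$ under the SBNE $\tilde b$. The one delicate point, and the main obstacle, is guaranteeing this event has positive probability: if $\tilde b$ is flat at $x$ up to $r_A$ (an atom of rivals' bids at $x$), the interval $\left(x,C\right)$ may carry no rival bids, and I would instead invoke the positive-probability tie event $\left\{b_{\min}^{-k}=x\right\}$, on which bidding $C$ yields $r$ while bidding $x$ yields only $\frac{1}{m}x+\frac{m-1}{m}r<r$ (with $m\ge2$ the number of tied lowest bidders); either way the deviation strictly pays, completing the contradiction and proving $S\neq\emptyset$.
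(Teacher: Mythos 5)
Your proof is correct, and it takes a genuinely leaner route than the paper's. The paper proceeds in two structural steps: Part I shows that any interval of types pooling on a common bid $X$ must have $X=C$, and Part II shows by contradiction (under the same hypothesis as yours) that some left-neighborhood $\left[r_A-\omega,r_A\right)$ bids $C$, splitting into the cases $r_A\le C$ and $r_A>C$; Part I is invoked inside Case B precisely to guarantee there is no atom of rival bids at the deviant's bid $w<C$, so that the profitable event is $b_{\min}^{-k}\in\left(w,C\right)$. You share the contradiction hypothesis and the core externality-driven deviation (a type $r\in\left(C,r_A\right)$ bidding $x<C$ switches to $C$, preferring to lose at payoff $r$ rather than win cheaply at $b_{\min}^{-k}<C<r$), but you differ in two ways that each buy something. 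First, the $\inf S$ / up-set construction replaces the paper's assembly of Part I, Part II, and monotonicity, and produces the threshold $r_B$ in one stroke. Second, you dispose of the atom problem directly: on a tie event $\left\{b_{\min}^{-k}=x\right\}$ the pooled payoff $\frac{1}{m}x+\frac{m-1}{m}r<r$ is still strictly worse than losing, so the deviation to $C$ stays strictly profitable; this makes the paper's Part I entirely unnecessary for this lemma. (Your two-step proof that $r_A>C$ under the hypothesis plays exactly the role of the paper's Case A, with the same underlying deviation from $``{\rm N}\textquotedblright$ to $C$.) The only point to tighten is the probability bookkeeping: state explicitly that $\Pr\left[b_{\min}^{-k}\in\left[x,C\right)\right]>0$, which holds because with positive probability at least one rival has type in $\left(r,r_A\right)$ — and such a rival bids in $\left[x,C\right)$ by monotonicity and the hypothesis — while all remaining rivals have types above $r_A$ and bid $``{\rm N}\textquotedblright$; your dichotomy then splits this event into its atom and non-atom parts, and both are covered by your two deviation computations.
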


\begin{proof} We explain the proof by the following two parts.

{\bf Part I:} We first prove that, if there exist $r_L,r_H\in\left[r_{\min},r_{\max}\right]$ and $r_L\ne r_H$ such that  ${\tilde b}\left(r,C\right)=X$ for all $r\in\left(r_{L},r_{H}\right)$, then $X=C$. We show this by contradiction, \emph{i.e.}, we assume that $X<C$. We discuss the following two cases.

\emph{Case A:} $X<r_H$. We can find ${\bar r}_H\in\left(r_L,r_H\right)$ such that $X<{\bar r}_H$. Furthermore, we can find $\epsilon>0$ such that $X+\epsilon <{\bar r}_H$ and $X+\epsilon <C$. Next we show that APO $k$ with type ${\bar r}_H$ has the incentive to deviate from bid $X$ to bid $X+\epsilon$. 
Note that bid $X$ and bid $X+\epsilon$ generate a difference to APO $k$ only when $b_{\min}^{-k}\in\left[X,X+\epsilon\right]$. In particular, we discuss the following two situations:
\begin{itemize}
\item When $b_{\min}^{-k}=X$, bidding $X$ generates an expected payoff with a value in $\left(X,{\bar r}_H\right)$, and bidding $X+\epsilon$ generates a payoff of ${\bar r}_H$. Because $X<{\bar r}_H$, in this case, bidding $X+\epsilon$ is strictly better;
\item When $b_{\min}^{-k}\in\left(X,X+\epsilon\right]$, bidding $X$ generates a payoff that is not greater than $X+\epsilon$, and bidding $X+\epsilon$ generates an expected payoff that is greater than $X+\epsilon$. Hence, in this case, bidding $X+\epsilon$ is strictly better.
\end{itemize}
Notice that, since ${\tilde b}\left(r,C\right)=X$ for all $r\in\left(r_{L},r_{H}\right)$, $b_{\min}^{-k}=X$ happens with a non-zero probability. Therefore, it is strictly better for the type-${\bar r}_H$ APO to deviate to bid $X+\epsilon$. This violates the assumption that ${\tilde b}\left(r,C\right)=X$ for all $r\in\left(r_{L},r_{H}\right)$.

\emph{Case B:} $X\ge r_H$. We can find ${\hat r}_H\in\left(r_L,r_H\right)$ such that $X>{\hat r}_H$. Furthermore, we can find $\xi>0$ such that  $X-\xi>{\hat r}_H$. Next we show that APO $k$ with type ${\hat r}_H$ has the incentive to deviate from bid $X$ to bid $X-\xi$. Note that bid $X$ and bid $X-\xi$ generate a difference to APO $k$ only when $b_{\min}^{-k}\in\left[X-\xi,X\right]$. In particular, we discuss the following two situations:
\begin{itemize}
\item When $b_{\min}^{-k}\in\left[X-\xi,X\right)$, bidding $X$ generates a payoff of ${\hat r}_H$, and bidding $X-\xi$ generates an expected payoff that is greater than ${\hat r}_H$. Hence, in this case, bidding $X-\xi$ is strictly better;
\item When $b_{\min}^{-k}=X$, bidding $X$ generates an expected payoff with a value in $\left({\hat r}_H,X\right)$, while bidding $X-\xi$ generates a payoff of $X$. Because $X>{\hat r}_H$, in this case, bidding $X-\xi$ is strictly better.
\end{itemize}
Notice that, since ${\tilde b}\left(r,C\right)=X$ for all $r\in\left(r_{L},r_{H}\right)$, $b_{\min}^{-k}=X$ happens with a non-zero probability. Therefore, it is strictly better for the type-${\hat r}_H$ APO to deviate to bid $X-\xi$. This violates the assumption that ${\tilde b}\left(r,C\right)=X$ for all $r\in\left(r_{L},r_{H}\right)$.

Combining \emph{Case A} and \emph{Case B}, we complete the proof of {\bf Part I}.

{\bf Part II:} We then prove that, there exists $\omega\in\left(0,r_A-r_{\min}\right]$ such that ${\tilde b}\left(r,C\right)=C$ for all $r\in\left[r_A-\omega,r_A \right)$. We show it by contradiction, \emph{i.e.}, we assume that for all $r<r_A$, we have ${\tilde b}\left(r,C\right)<C$ (note that from Lemma \ref{lemma:rA}, for all $r<r_A$, we have ${\tilde b}\left(r,C\right)\le C$).

We discuss the following two cases.

\emph{Case A:} $r_A\le C$. Naturally, we have $\frac{K-1+{\eta^{\rm APO}}}{K}r_A< C$ and we can find an APO type $r_U\in\left(r_A,r_{\max}\right)$ such that $\frac{K-1+{\eta^{\rm APO}}}{K}r_U<C$. According to Lemma \ref{lemma:rA}, we have ${\tilde b}\left(r_U,C\right)=``{\rm N} \textquotedblright$. Next we show that APO $k$ with type $r_U$ has an incentive to switch its bid from $``{\rm N} \textquotedblright$ to $C$. 

Note that bid $C$ and bid $``{\rm N} \textquotedblright$ generate a difference payoff to APO $k$ only when $b_{\min}^{-k}=C$ or $``{\rm N} \textquotedblright$. 
We do not need to consider the case where $b_{\min}^{-k}=C$, because all APO types in $\left[r_{\min},r_A\right)\cup\left(r_A,r_{\max}\right]$ do not bid $C$ based on our assumption, and the probability for an APO to be of type $r_A$ is zero. 
Therefore, we only need to consider $b_{\min}^{-k}=``{\rm N} \textquotedblright$. In this case, the payoffs of APO $k$ under bid $``{\rm N} \textquotedblright$ and bid $C$ are $\frac{K-1+{\eta^{\rm APO}}}{K}r_U$ and $C$, respectively. Since $C>\frac{K-1+{\eta^{\rm APO}}}{K}r_U$, in this case, bid $C$ is strictly better than bid $``{\rm N} \textquotedblright$. 
Based on Lemma \ref{lemma:rA}, there is a non-zero probability that $b_{\min}^{-k}=``{\rm N} \textquotedblright$. Therefore, we can conclude that APO $k$ with type $r_U$ has an incentive to switch its bid from $``{\rm N} \textquotedblright$ to $C$, which violates ${\tilde b}\left(r_U,C\right)=``{\rm N} \textquotedblright$.

\emph{Case B:} $r_A>C$. We can find an APO type $r_S\in\left(C,r_A\right)$. According to the assumption, we have $w\triangleq {\tilde b}\left(r_S,C\right)<C$. Next we show that APO $k$ with type $r_S$ has the incentive to switch from bidding $w$ to bidding $C$.
 
Note that bid $w$ and bid $C$ generate a difference payoff to APO $k$ only when $b_{\min}^{-k}\in\left[w,C\right]$. 
Furthermore, based on the conclusion in {\bf Part I} and Lemma \ref{lemma:monotonicity}, the probability that $b_{\min}^{-k}=w$ is zero. Based on the assumption in {\bf Part II}, the probability that $b_{\min}^{-k}=C$ is zero. Therefore, we do not need to consider the case where $b_{\min}^{-k}=w$ or $C$, and only need to compare bid $w$ and bid $C$ under the situation where $b_{\min}^{-k}\in\left(w,C\right)$. 
When $b_{\min}^{-k}\in\left(w,C\right)$, bid $w$ generates a payoff that is smaller than $C$, and bid $C$ generates a payoff of $r_S$. Since $r_S>C$, bid $C$ is strictly better than bid $w$ for APO $k$ with type $r_S$, which violates ${\tilde b}\left(r_S,C\right)=w$.

Based on \emph{Case A} and \emph{Case B}, the assumption that ${\tilde b}\left(r,C\right)<C$ for all $r<r_A$ does not hold. Therefore, we can find an APO type $r'\in\left[r_{\max},r_A\right)$ such that ${\tilde b}\left(r',C\right)=C$. Based on Lemma \ref{lemma:monotonicity}, we have ${\tilde b}\left(r,C\right)=C$ for all $r\in\left[r',r_A\right)$, where we complete the proof of {\bf Part II}.

Based on {\bf Part I}, {\bf Part II}, and Lemma \ref{lemma:monotonicity}, it is easy to show that there exists $r_B\in\left[r_{\min},r_A\right)$ such that ${\tilde b}\left(r,C\right)=C$ for all $r\in\left(r_B,r_{A}\right)$ and ${\tilde b}\left(r,C\right)\in\left[0,C\right)$ for all $r\in\left[r_{\min},r_{B}\right)$.
\end{proof}

\begin{lemma}\label{lemma:continuous}
For a strategy function $\tilde b$ which constitutes an SBNE, (1) ${\tilde b}\left(r_B,C\right)=C$; (2) ${\tilde b}\left(r,C\right)=r$ for all $r\in\left(r_{\min},r_B\right)$.
\end{lemma}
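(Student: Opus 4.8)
The plan is to prove the two claims in order, establishing (2) first and then using it, together with the structure from Lemma~\ref{lemma:rB}, to pin down the boundary value in (1). Throughout I fix a strategy $\tilde b$ that constitutes an SBNE and $C\in\left[r_{\min},r_{\max}\right)$, and I write $m\triangleq\left|{\cal I}_{\min}\right|$ for the size of a tie set. A preliminary observation I will use repeatedly: by Lemma~\ref{lemma:monotonicity} the function $\tilde b$ is nondecreasing on the types that do not bid $``{\rm N}\textquotedblright$, and by Part~I of the proof of Lemma~\ref{lemma:rB} it cannot be constant on any open subinterval unless that constant equals $C$. Since Lemma~\ref{lemma:rB} gives $\tilde b\left(r,C\right)\in\left[0,C\right)$ on $\left[r_{\min},r_B\right)$, these two facts force $\tilde b$ to be strictly increasing on $\left(r_{\min},r_B\right)$, so that $b_{\min}^{-k}$ places positive probability on every open subinterval of its image there.

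For part~(2) I would run the classical second-price (reverse) argument. Fix $r\in\left(r_{\min},r_B\right)$ and let $w\triangleq\tilde b\left(r,C\right)\in\left[0,C\right)$. Conditioning on $b_{\min}^{-k}$, a feasible bid $x$ makes APO $k$ the unique winner exactly when $x<b_{\min}^{-k}$, with payment $\min\left\{C,b_{\min}^{-k}\right\}$, and a loser with payoff $r$ when $x>b_{\min}^{-k}$. If $w<r$, deviating up to some $w'\in\left(w,\min\left\{r,C\right\}\right)$ strictly improves the payoff on the positive-probability event $b_{\min}^{-k}\in\left(w,w'\right)$, where APO $k$ switches from winning at a payment below $r$ to losing at payoff $r$; symmetrically, if $w>r$, deviating down to some $w''\in\left(r,w\right)$ strictly improves the payoff on $b_{\min}^{-k}\in\left(w'',w\right)$. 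Both contradict optimality, so $w=r$. The same computation applied to a hypothetical type above $C$ shows that no feasible truthful bid could exist for it, yielding the auxiliary fact $r_B\le C$.

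For part~(1), monotonicity first sandwiches the boundary value: since $r_B<r_A$, Lemma~\ref{lemma:rA} gives $\tilde b\left(r_B,C\right)\in\left[0,C\right]$, and Lemma~\ref{lemma:monotonicity} then yields $\tilde b\left(r_B,C\right)\ge\lim_{r\uparrow r_B}r=r_B$ together with $\tilde b\left(r_B,C\right)\le C$, so $\tilde b\left(r_B,C\right)\in\left[r_B,C\right]$. It therefore suffices to rule out $r_B<C$, and this is where the main difficulty lies and where the allocative-externality structure departs from the textbook auction: the flat region $\left(r_B,r_A\right)$ creates a positive-mass atom of bids at exactly $C$, so the event $\{b_{\min}^{-k}=C\}$ has positive probability and the tie-breaking rule of outcome~(b) must be handled with care. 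Assuming $r_B<C$, pick a type $r\in\left(r_B,\min\left\{r_A,C\right\}\right)$, which bids $C$ by Lemma~\ref{lemma:rB}. On $\{b_{\min}^{-k}=C\}$, bidding $C$ ties and yields $\tfrac{1}{m}C+\tfrac{m-1}{m}r<C$, whereas undercutting to any $w^{*}\in\left(r,C\right)$ wins outright at payment $C$ and yields $C$; on every other event the two bids give identical outcomes. Hence undercutting is a strictly profitable deviation, contradicting optimality of bidding $C$, so $r_B=C$, the sandwich collapses to $\tilde b\left(r_B,C\right)\in\left[C,C\right]$, and $\tilde b\left(r_B,C\right)=C$. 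I expect this atom-plus-tie-breaking step to be the crux, since it is precisely the interaction between the mass point at $C$ and the second-price payment rule that forces the boundary value, with no analogue in a standard second-price auction where truthful bidding is weakly dominant.
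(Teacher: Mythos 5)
Your proof of part (2) has a genuine gap, and it sits exactly where the paper's own proof does its hardest work. The deviation argument requires that the events $b_{\min}^{-k}\in\left(w,w'\right)$ (when $w=\tilde b\left(r,C\right)<r$) and $b_{\min}^{-k}\in\left(w'',w\right)$ (when $w>r$) have positive probability, and your justification --- strict monotonicity of $\tilde b$ on $\left(r_{\min},r_B\right)$ implies that $b_{\min}^{-k}$ charges every open subinterval of the \emph{image} --- does not deliver this: the deviation intervals need not meet the image at all. A strictly increasing function can have jump discontinuities, and nothing established so far (Lemma \ref{lemma:monotonicity} through Lemma \ref{lemma:rB}) rules them out; ruling them out is precisely what part (2) must accomplish, so you cannot implicitly assume continuity. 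Concretely, if $\tilde b\left(r,C\right)=w<r$ but $\lim_{t\downarrow r}\tilde b\left(t,C\right)\ge\min\left\{r,C\right\}$, then (by strict monotonicity) no other type bids in $\left(w,\min\left\{r,C\right\}\right)$, every event $b_{\min}^{-k}\in\left(w,w'\right)$ that you invoke is null, and type $r$ has no profitable deviation --- your argument produces no contradiction at that type. Such a configuration is in fact not an equilibrium, but the violation surfaces at \emph{other} types: monotonicity forces every type $t\in\left(w,r\right)$ to bid strictly below $w$, hence strictly below its own type, and at such a $t$ the types in $\left(t,r\right)$ do supply bids in $\left(\tilde b\left(t,C\right),t\right)$ with positive probability. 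Your proof is missing exactly this propagation step (or an equivalent). The paper avoids the issue differently: it takes left/right limits of $\tilde b$ at the point under study and compares the bid there against the \emph{actual bid} $\bar u=\tilde b\left(r_L,C\right)$ of a nearby type $r_L$, so that a positive measure of types (those in $\left(r_L,r_M\right)$) is guaranteed to bid strictly between the two bids being compared; the identity $\tilde b\left(r,C\right)=r$ then follows from both one-sided limits equaling $r$.

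On part (1), your route is genuinely different from the paper's and, modulo part (2), it is correct. The paper proves $\tilde b\left(r_B,C\right)=C$ directly by a case analysis on the order of $r_B$ and $C$, and only later (Lemma \ref{lemma:rBrA}) shows $r_B=C$; you instead prove $r_B=C$ first --- your undercut-the-atom-at-$C$ argument, comparing the tie payoff $\frac{1}{m}C+\frac{m-1}{m}r<C$ with the outright-win payoff $C$, is essentially the paper's own proof of the first part of Lemma \ref{lemma:rBrA} --- and then recover $\tilde b\left(r_B,C\right)=C$ from the sandwich $\tilde b\left(r_B,C\right)\in\left[r_B,C\right]$. That reorganization is legitimate and arguably cleaner, since the atom at $C$ and the tie-breaking rule are handled once rather than twice. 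But note that you call this atom-plus-tie-breaking step ``the crux''; it is not. The crux of this lemma is the possibility that a candidate equilibrium strategy is discontinuous, which is exactly what your part (2) assumes away.
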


\begin{proof} We show the proof by the following two parts.

{\bf Part I:} We first prove ${\tilde b}\left(r_B,C\right)=C$ by contradiction, \emph{i.e.}, we assume that ${\tilde b}\left(r_B,C\right)=X<C$. Based on Lemma \ref{lemma:monotonicity}, Lemma \ref{lemma:rA}, and Lemma \ref{lemma:rB}, we have ${\tilde b}\left(r,C\right)< X$ for $r\in\left[r_{\min},r_B\right)$, ${\tilde b}\left(r,C\right)= C$ for $r\in\left(r_B,r_A\right)$, and ${\tilde b}\left(r,C\right)=``{\rm N} \textquotedblright$ for $r\in\left(r_A,r_{\max}\right]$. 

We discuss the following three cases.

\emph{Case A: $r_B>C$.} We compare bid $X$ and bid $C$ for APO $k$ with type $r_B$. When $b_{\min}^{-k}<X$ or $b_{\min}^{-k}=``{\rm N} \textquotedblright$, bid $X$ and bid $C$ generate the same payoff to APO $k$. Furthermore, $b_{\min}^{-k}\in\left[X,C\right)$ with probability zero. We only need to consider the situation where $b_{\min}^{-k}=C$. In this situation, bid $X$ generates $C$ and bid $C$ generates an expected payoff that strictly lies between $r_B$ and $C$. Since $r_B>C$, bid $C$ generates a higher payoff than bid $C$, which violates ${\tilde b}\left(r_B,C\right)=X$.

\emph{Case B: $r_B=C$.} We can find a type ${\bar r}_B\in\left[r_{\min},r_B\right)$ such that ${\bar r}_B>X$. According to ${\tilde b}\left(r_B,C\right)=X$ and Lemma \ref{lemma:monotonicity}, we have $s\triangleq {\tilde b}\left({\bar r}_B,C\right)<X$. Next we compare bid $X$ and bid $s$ for APO $k$ with type ${\bar r}_B$. 
Similar as the explanation before, we only need to consider the situation where $b_{\min}^{-k}\in\left(s,X\right)$. In this situation, bid $X$ generates ${\bar r}_B$, and bid $s$ generates a payoff that is smaller than $X$. Since ${\bar r}_B > X$, we find bid $X$ is strictly better than bid $s$ for APO $k$ with type ${\bar r}_B$, which violates ${\tilde b}\left({\bar r}_B,C\right)=s$.

\emph{Case C: $r_B<C$.} We can find a type ${\hat r}_B\in\left(r_B,r_A\right)$ such that $C>{\hat r}_B$. According to Lemma \ref{lemma:rB}, ${\tilde b}\left({\hat r}_B,C\right)= C$. Next we compare bid $X$ and bid $C$ for APO $k$ with type ${\hat r}_B$. Similar as the explanation before, we only need to consider the situation where $b_{\min}^{-k}=C$. In this situation, bid $X$ generates $C$ and bid $C$ generates an expected payoff smaller than $C$. Therefore, bid $X$ is strictly better than bid $C$ for APO $k$ with type ${\hat r}_B$, which violates ${\tilde b}\left({\hat r}_B,C\right)=C$.

Based on \emph{Case A}, \emph{Case B}, and \emph{Case C}, we conclude that ${\tilde b}\left(r_B,C\right)=C$, which completes the proof of {\bf Part I}.

{\bf Part II:} We next show that ${\tilde b}\left(r,C\right)=r$ for all $r\in\left(r_{\min},r_B\right)$. We consider an APO with type $r_M\in\left(r_{\min},r_B\right)$ and denote ${x}\triangleq {\tilde b}\left(r_M,C\right)$. From Lemma \ref{lemma:monotonicity}, function ${\tilde b}\left(r,C\right)$ is increasing for $r\in\left(r_{\min},r_B\right)$. Hence, the left-hand limit of ${\tilde b}\left(r,C\right)$ at $r=r_M$ exists, and we denote $u\triangleq \lim\limits_{r\uparrow {r_M}}{{\tilde b}\left(r,C\right)}$ (recall that $C$ is assumed to be fixed). Next we prove $u=r_M$ by contradiction. We assume that $u>r_M$ or $u<r_M$, and consider the following two cases.

\emph{Case A:} $u>r_M$. Since the left-hand limit of ${\tilde b}\left(r,C\right)$ at $r=r_M$ exists and equals $u$, we can find a $\zeta>0$ such that ${\tilde b}\left(r,C\right)\in\left(r_M,u\right)$ for all $r\in\left(r_M-\zeta,r_M\right)$. We then pick an APO type $r_L$ from interval $\left(r_M-\zeta,r_M\right)$ and denote ${\bar u}\triangleq {\tilde b}\left(r_L,C\right)$. Naturally, we have ${\bar u}>r_M$. 
Based on the monotonicity of ${\tilde b}\left(r,C\right)$ in Lemma \ref{lemma:monotonicity} and Lemma \ref{lemma:rB}, we also have $x>{\bar u}$. 
Furthermore, APOs with type in $\left(r_L,r_M\right)$ have ${\tilde b}\left(r,C\right)\in\left({\bar u},u\right)$. Next we compare bid $x$ and bid $\bar u$ for APO $k$ with type $r_M$. 
Notice that these two bids generate a difference to APO $k$ only when $b_{\min}^{-k}\in\left[{\bar u},x\right]$. 
Furthermore, the probability for $b_{\min}^{-k}={\bar u}$ or $x$ is zero. Hence, we only need to consider the situation where $b_{\min}^{-k}\in\left({\bar u},x\right)$. In this situation, bid $x$ causes a payoff of $r_M$, and bid $\bar u$ causes a payoff that is larger than $\bar u$. Since we have ${\bar u}>r_M$, we conclude that bid $\bar u$ is strictly better than bid $x$, which violates ${\tilde b}\left(r_M,C\right)=x$.

\emph{Case B:} $u<r_M$. We can find an APO type $r_S\in\left(u,r_M\right)$ and denote ${v}\triangleq {\tilde b}\left(r_S,C\right)$. Since $u$ is the left-hand limit of ${\tilde b}\left(r,C\right)$ at $r=r_M$, we have $ {\tilde b}\left(r,C\right)\in\left(v,u\right)$ for all $r\in\left(r_S,r_M\right)$. Next we compare bid $v$ and bid $u$ for APO $k$ with type $r_S$. 
Notice that these two bids generate a difference to APO $k$ only when $b_{\min}^{-k}\in\left[{v},u\right]$. 
Furthermore, the probability for $b_{\min}^{-k}={v}$ or $u$ is zero. Hence, we only need to consider the situation where $b_{\min}^{-k}\in\left({v},u\right)$. In this situation, bid $v$ causes a payoff that is smaller than $u$, and bid $u$ causes a payoff of $r_S$. 
Since we have $r_S>u$, we conclude that bid $u$ is strictly better than bid $v$, which violates ${\tilde b}\left(r_S,C\right)=v$.

We conclude that $u=r_M$, \emph{i.e.}, the left-hand limit of function ${\tilde b}\left(r,C\right)$ at $r=r_M$ equals $r_M$. With the similar approach, we can show that the right-hand limit of function ${\tilde b}\left(r,C\right)$ at $r=r_M$ also equals $r_M$. Together with the monotonicity property in Lemma \ref{lemma:monotonicity}, we can easily obtain that ${\tilde b}\left(r,C\right)$ is continuous at point $r=r_M$ and ${\tilde b}\left(r_M,C\right)=r_M$. Because $r_M$ can be any point from interval $\left(r_{\min},r_B\right)$, we conclude that ${\tilde b}\left(r,C\right)=r$ for all $r\in\left(r_{\min},r_B\right)$, which completes the proof of {\bf Part II}.

Combining {\bf Part I} and {\bf Part II}, we prove the lemma.
\end{proof}

\begin{lemma}\label{lemma:rBrA}
For a strategy function $\tilde b$ which constitutes an SBNE, $r_B=C$ and $r_A=r_T\left(C\right)$, where $r_T\left(C\right)$ is defined in Lemma \ref{lemma:rT}.
\end{lemma}

\begin{proof} We explain the proof in the following two parts.

{\bf Part I:} We first prove $r_B=C$. According to Lemma \ref{lemma:continuous}, we can easily show that $r_B \le C$. Next we assume $r_B=X<C$. Based on Lemma \ref{lemma:rB} and Lemma \ref{lemma:continuous}, we have ${\tilde b}\left(r,C\right)=C$ for all $r\in\left(r_B,r_A\right)$, and ${\tilde b}\left(r,C\right)\in\left[0,X\right)$ for all $r\in\left(r_{\min},r_B\right)$. Since $r_B<C$, we can find an APO type $r_M$ such that $r_M\in\left(r_B,C\right)$. Naturally, we have ${\tilde b}\left(r_M,C\right)=C$. Next we compare bid $C$ and bid $X$ for APO $k$ with type $r_M$. 
Note that bid $C$ and bid $X$ generate a difference to APO $k$ only when $b_{\min}^{-k}\in\left[X,C\right]$. 
Furthermore, the probability that $b_{\min}^{-k}\in\left[X,C\right)$ is zero. Hence, we only need to consider the situation where $b_{\min}^{-k}=C$. 
In this situation, bid $C$ generates an expected payoff in $\left(r_M,C\right)$ and bid $X$ generates $C$. Since we have $C>r_M$, we conclude that bid $X$ is strictly better than bid $C$, which violates ${\tilde b}\left(r_M,C\right)=C$. Therefore, the only possible relation between $r_B$ and $C$ is that $r_B=C$. We complete the proof of {\bf Part I}.

{\bf Part II:} We then prove $r_A=r_T\left(C\right)$, where $r_T\left(C\right)$ is defined in Lemma \ref{lemma:rT}. We first denote the payoff of an arbitrary APO type $r\in\left(r_{\min},r_{\max}\right)$ under bid $C$ and bid $``{\rm N} \textquotedblright$ as $\Pi_1$ and $\Pi_2$, respectively. 
We compute $\Pi_1$ as (\ref{makeup:a}), and compute $\Pi_2$ as (\ref{makeup:b}). Furthermore, we denote their difference as $W\left(r\right)$, and compute it as (\ref{makeup:c}).

\begin{figure*}
\begin{align}
\nonumber
 \Pi_1 \triangleq & \left(1-\left(1-F\left(C\right)\right)^{K-1}\right)r +\left(1-F\left(r_A\right)\right)^{K-1}C\\
&+\sum_{n=1}^{K-1}{\binom{K-1}{n} \left(F\left(r_A\right)-F\left(C\right)\right)^n\left(1-F\left(r_A\right)\right)^{K-1-n}\frac{C+nr}{n+1}}.\label{makeup:a}
\end{align}
\hrulefill
\begin{align}
\nonumber
 \Pi_2 \triangleq & \left(1-\left(1-F\left(C\right)\right)^{K-1}\right)r+\left(1-F\left(r_A\right)\right)^{K-1}\frac{K-1+{\eta^{\rm APO}}}{K}r\\
&+\sum_{n=1}^{K-1}{
\binom{K-1}{n} \left(F\left(r_A\right)-F\left(C\right)\right)^n\left(1-F\left(r_A\right)\right)^{K-1-n}r},\label{makeup:b}
\end{align}
\hrulefill
\begin{align}
\nonumber
W\left(r\right) & \triangleq \Pi_1-\Pi_2\\
&=\left(1-F\left(r_A\right)\right)^{K-1}\left(C-\frac{K-1+{\eta^{\rm APO}}}{K}r\right)+\sum_{n=1}^{K-1}{\binom{K-1}{n} \left(F\left(r_A\right)-F\left(C\right)\right)^n\left(1-F\left(r_A\right)\right)^{K-1-n}\frac{C-r}{n+1}}.\label{makeup:c}
\end{align}
\hrulefill
\end{figure*}

According to Lemma \ref{lemma:rA} and Lemma \ref{lemma:rB}, ${\tilde b}\left(r,C\right)=C$ for all $r\in\left[r_B,r_A\right)$ and ${\tilde b}\left(r,C\right)=``{\rm N} \textquotedblright$ for all $r\in\left(r_A,r_{\max}\right]$. Hence, $W\left(r\right)\ge0$ for all $r\in\left[r_B,r_A\right)$ and $W\left(r\right)\le0$ for all $r\in\left(r_A,r_{\max}\right]$. It is easy to find that $W\left(r\right)$ is strictly decreasing and continuous for $r\in\left[r_B,r_{\max}\right]$. Hence, we have $W\left(r_A\right)=0$, which means $r_A$ needs to satisfy the following equation:
\begin{multline}
\!\!\sum_{n=1}^{K-1}\!{\binom{K-1}{n} \!\left(F\left(r_A\right)\!-\!F\left(C\right)\right)^n\left(1\!-\!F\left(r_A\right)\right)^{K-1-n}\frac{C\!-\!r_A}{n+1}}\\
+\left(1-F\left(r_A\right)\right)^{K-1}\left(C-\frac{K-1+{\eta^{\rm APO}}}{K}r_A\right)=0.
\end{multline}
Furthermore, from Lemma \ref{lemma:rA}, Lemma \ref{lemma:rB}, and $r_B=C$, we have $r_A\in\left(C,r_{\max}\right)$.

In Lemma \ref{lemma:rT}, we define $r_T\left(C\right)\in\left(C,r_{\max}\right)$ as the solution to 
\begin{multline}
\sum_{n=1}^{K-1}{\binom{K-1}{n} \left(F\left(r\right)-F\left(C\right)\right)^n\left(1-F\left(r\right)\right)^{K-1-n}\frac{C-r}{n+1}}\\
+\left(1-F\left(r\right)\right)^{K-1}\left(C-\frac{K-1+{\eta^{\rm APO}}}{K}r\right)=0.
\end{multline} 
Hence, $r_A=r_T\left(C\right)$, and we complete the proof of {\bf Part II}.

Combining {\bf Part I} and {\bf Part II}, we complete the proof of the lemma.
\end{proof}

\begin{lemma}\label{lemma:necessityall}
For a strategy function $\tilde b$ which constitutes an SBNE, it needs to satisfy the following conditions:\\
(1) ${\tilde b}\left(r_{\min},C\right)\in \left[0,r_{\min}\right]$;\\
(2) ${\tilde b}\left(r,C\right)=r$ for all $r\in\left(r_{\min},C\right)$;\\
(3) ${\tilde b}\left(r,C\right)=C$ for all $r\in\left[C,r_T\left(C\right)\right)$;\\
(4) ${\tilde b}\left(r_T\left(C\right),C\right)=C{~\rm or~}``{\rm N} \textquotedblright$;\\
(5) ${\tilde b}\left(r,C\right)=``{\rm N} \textquotedblright$ for all $r\in\left(r_T\left(C\right),r_{\max}\right]$.
\end{lemma}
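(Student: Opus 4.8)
The plan is to obtain the five conditions by assembling the structural results already proved in the preliminary Lemmas \ref{lemma:monotonicity}--\ref{lemma:rBrA}, after substituting the closed forms $r_B=C$ and $r_A=r_T\left(C\right)$ furnished by Lemma \ref{lemma:rBrA}. Concretely, I would first invoke Lemma \ref{lemma:rBrA} to eliminate the abstract thresholds $r_B$ and $r_A$ everywhere, so that the interval decomposition of Lemmas \ref{lemma:rA}, \ref{lemma:rB}, and \ref{lemma:continuous} is rephrased entirely in terms of $C$ and $r_T\left(C\right)$. Under this substitution, conditions (2), (3), and (5) are then essentially immediate restatements: condition (2) is part (2) of Lemma \ref{lemma:continuous} with $r_B=C$; condition (5) is Lemma \ref{lemma:rA} with $r_A=r_T\left(C\right)$; and condition (3) is obtained by combining part (1) of Lemma \ref{lemma:continuous}, which supplies the left endpoint value $\tilde b\left(r_B,C\right)=\tilde b\left(C,C\right)=C$, with Lemma \ref{lemma:rB}, which gives $\tilde b\left(r,C\right)=C$ on the open interval $\left(r_B,r_A\right)=\left(C,r_T\left(C\right)\right)$.

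The real work lies in pinning down the two boundary types $r_{\min}$ and $r_T\left(C\right)$, which the open-interval conclusions of the preliminary lemmas do not address. For condition (1), I would bound $\tilde b\left(r_{\min},C\right)$ from above by monotonicity: whenever $C>r_{\min}$, for every $r\in\left(r_{\min},C\right)$ condition (2) gives $\tilde b\left(r,C\right)=r\ne``{\rm N}\textquotedblright$, so Lemma \ref{lemma:monotonicity} yields $\tilde b\left(r_{\min},C\right)\le \tilde b\left(r,C\right)=r$; letting $r\downarrow r_{\min}$ gives $\tilde b\left(r_{\min},C\right)\le r_{\min}$. In the degenerate case $C=r_{\min}$ the same bound follows by applying Lemma \ref{lemma:monotonicity} to any $r\in\left(C,r_T\left(C\right)\right)$ with $\tilde b\left(r,C\right)=C=r_{\min}$. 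Together with the feasibility lower bound $\tilde b\left(r_{\min},C\right)\ge 0$, this gives $\tilde b\left(r_{\min},C\right)\in\left[0,r_{\min}\right]$.

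For condition (4), I would determine the value at the single point $r_T\left(C\right)=r_A$ by a squeeze argument. Fix any $r_L\in\left(C,r_T\left(C\right)\right)$, which is nonempty because $r_T\left(C\right)\in\left(C,r_{\max}\right)$ by Lemma \ref{lemma:rT}; condition (3) already gives $\tilde b\left(r_L,C\right)=C$. If $\tilde b\left(r_T\left(C\right),C\right)\ne``{\rm N}\textquotedblright$, then Lemma \ref{lemma:monotonicity} forces $C=\tilde b\left(r_L,C\right)\le \tilde b\left(r_T\left(C\right),C\right)$, while feasibility forces $\tilde b\left(r_T\left(C\right),C\right)\le C$, so the value is exactly $C$; otherwise it equals $``{\rm N}\textquotedblright$. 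Hence $\tilde b\left(r_T\left(C\right),C\right)\in\left\{C,``{\rm N}\textquotedblright\right\}$, which is condition (4).

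The main obstacle is therefore not any difficult estimate but the disciplined handling of these two measure-zero boundary points: the preliminary lemmas are silent about $r_{\min}$ and $r_T\left(C\right)$ themselves, so I must supply the one-sided limit argument and the squeeze argument above, each time relying only on the monotonicity of Lemma \ref{lemma:monotonicity} together with the feasibility constraint $\tilde b\in\left[0,C\right]\cup\left\{``{\rm N}\textquotedblright\right\}$, and verifying nonemptiness of the auxiliary intervals used. Collecting the five verified statements completes the proof.
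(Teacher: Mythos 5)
Your proof is correct and follows essentially the same route as the paper's: conditions (2), (3), and (5) are read off from Lemmas \ref{lemma:rA}, \ref{lemma:rB}, \ref{lemma:continuous}, and \ref{lemma:rBrA} after substituting $r_B=C$ and $r_A=r_T\left(C\right)$, while conditions (1) and (4) are settled at the two boundary types via the monotonicity of Lemma \ref{lemma:monotonicity}. Your write-up is in fact slightly more careful than the paper's, which dispatches the one-sided argument at $r_{\min}$ (including the degenerate case $C=r_{\min}$) and the squeeze at $r_T\left(C\right)$ as ``easy to show.''
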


\begin{proof}
Items (2), (3), and (5) can be directly obtained from Lemma \ref{lemma:rA}, Lemma \ref{lemma:rB}, Lemma \ref{lemma:continuous}, and Lemma \ref{lemma:rBrA}. For item (1), since ${\tilde b}\left(r,C\right)=r$ for all $r\in\left(r_{\min},C\right)$, it is natural that the bidder of type $r_{\min}$ should not bid a value that is larger than $r_{\min}$ (based on the monotonicity in Lemma \ref{lemma:monotonicity}). For item (4), since ${\tilde b}\left(r,C\right)=C$ for all $r\in\left[C,r_T\left(C\right)\right)$ and ${\tilde b}\left(r,C\right)=``{\rm N} \textquotedblright$ for all $r\in\left(r_T\left(C\right),r_{\max}\right]$, it is easy to show that the bidder of type $r_T\left(C\right)$ can only bid $C$ or $``{\rm N} \textquotedblright$ (from the monotonicity in Lemma \ref{lemma:monotonicity}).
\end{proof}

\subsection{Proof of Theorem \ref{theorem:combine:unique}}\label{appendix:sec:theorem2}
From Lemma \ref{lemma:necessityall}, the bidding strategy in (\ref{equ:equilibrium}) is the unique strategy that satisfies all conditions for an equilibrium strategy. Furthermore, from Theorem \ref{theorem:equilibrium}, the bidding strategy in (\ref{equ:equilibrium}) constitutes an SBNE. Therefore, it is the unique form of the bidding strategy under an SBNE.

\subsection{Proofs of Theorem \ref{theorem:lowC}, Theorem \ref{theorem:middleC}, and Theorem \ref{theorem:highC}}

Notice that in Section \ref{subsec:stageII:1}, we consider the situation where $C$ is from $\left[r_{\min},r_{\max}\right)$, and show the three-region structure of the bidding strategies: some APO types bid their types, some APO types bid the reserve rate, and some APO types bid $``\rm N \textquotedblright$. In Sections \ref{subsec:lowC}, \ref{subsec:stageII:3} and \ref{subsec:stageII:4}, we consider the situation where $C$ is from $C\in\left[0,r_{\min}\right)\cup\left[r_{\max},\infty\right)$. Since in this situation, the reserve rate is either very small or very large, among the three regions of APOs' strategies introduced in Section \ref{subsec:stageII:1}, only one or two will appear in this case. Hence, the equilibrium analysis in Sections \ref{subsec:lowC}, \ref{subsec:stageII:3} and \ref{subsec:stageII:4} can be treated as the special cases of the analysis in Section \ref{subsec:stageII:1}.

Therefore, the proofs of Theorem \ref{theorem:lowC}, Theorem \ref{theorem:middleC}, and Theorem \ref{theorem:highC} are similar to those of Theorem \ref{theorem:equilibrium} and Theorem \ref{theorem:combine:unique}. The details are omitted.

\subsection{Proof of Lemma \ref{lemma:rX}}
Let $H\left(r\right)$ be the left hand side of equation (\ref{equ:rX}). It is easy to find that $H\left(r\right)$ is continuous in $\left[r_{\min},r_{\max}\right]$. Furthermore, we have
\begin{align}
& H\left(r_{\min}\right)=C-\frac{K-1+{\eta^{\rm APO}}}{K}r_{\min}>0,\\
& H\left(r_{\max}\right)=\frac{C-r_{\max}}{K}<0.
\end{align}
Hence, there is at least one solution in $\left(r_{\min},r_{\max}\right)$ satisfying $H\left(r\right)=0$.

\subsection{Preliminary Lemma}

In order to prove Theorem \ref{theorem:optimalCcase}, we introduce a preliminary lemma in this section.

\begin{lemma}\label{lemma:local:etarmin}
(a) When $R_{\rm LTE}>\frac{K-1+{\eta^{\rm APO}}}{K\left(1-{\delta^{\rm LTE}}\right)}r_{\min}$, $C=\frac{K-1+{\eta^{\rm APO}}}{K}r_{\min}$ is a local minimum of ${\bar \Pi}^{\rm LTE}\left(C\right)$; (b) when $R_{\rm LTE}\le \frac{K-1+{\eta^{\rm APO}}}{K\left(1-{\delta^{\rm LTE}}\right)}r_{\min}$, $C=\frac{K-1+{\eta^{\rm APO}}}{K}r_{\min}$ is a global maximum of ${\bar \Pi}^{\rm LTE}\left(C\right)$.
\end{lemma}

\begin{proof}
We prove the lemma by the following two parts.

{\bf Part I:} We show that when $R_{\rm LTE}>\frac{K-1+{\eta^{\rm APO}}}{K\left(1-{\delta^{\rm LTE}}\right)}r_{\min}$, $C=\frac{K-1+{\eta^{\rm APO}}}{K}r_{\min}$ is a local minimum of ${\bar \Pi}^{\rm LTE}\left(C\right)$. First, it is easy to show that ${\bar \Pi}^{\rm LTE}\left(C\right)$ is continuous at point $C=\frac{K-1+{\eta^{\rm APO}}}{K}r_{\min}$. 
According to (\ref{equ:LTEpayoff:relow}), we can compute (\ref{equ:rightlimit}). 
We take the derivates of both sides of equation (\ref{equ:rX}) over $r$, and obtain (\ref{equ:rightlimit:follow}).

\begin{figure*}
\begin{align}
\lim_{C\downarrow {{\frac{K-1+{\eta^{\rm APO}}}{K}r_{\min}}}}\frac{d {\bar \Pi}^{\rm LTE}\left(C\right)}{d C}= \lim_{C\downarrow {{\frac{K-1+{\eta^{\rm APO}}}{K}r_{\min}}}} K f\left(r_{\min}\right) \frac{d r_X\left(C\right)}{d C} \left(\left(1-{\delta^{\rm LTE}}\right)R_{\rm LTE}-\frac{K-1+{\eta^{\rm APO}}}{K} {r_{\min}}\right).\label{equ:rightlimit}
\end{align}
\hrulefill
\begin{align}
\left(\frac{K-1+\eta^{\rm APO}}{K}+\frac{\left(K-1\right)}{2}f\left(r_{\min}\right){\frac{1-\eta^{\rm APO}}{K}r_{\min}}\right) \lim_{C\downarrow {{\frac{K-1+{\eta^{\rm APO}}}{K}r_{\min}}}}\frac{d r_X\left(C\right)}{d C} = 1.\label{equ:rightlimit:follow}
\end{align}
\hrulefill
\end{figure*}

Therefore, we have $\lim_{C\downarrow {{\frac{K-1+{\eta^{\rm APO}}}{K}r_{\min}}}}\frac{d r_X\left(C\right)}{d C}>0$. Based on this, condition $R_{\rm LTE}>\frac{K-1+{\eta^{\rm APO}}}{K\left(1-{\delta^{\rm LTE}}\right)}r_{\min}$, and (\ref{equ:rightlimit}), we conclude that $\lim_{C\downarrow \frac{K-1+{\eta^{\rm APO}}}{K}r_{\min}} \frac{d {{\bar \Pi}^{\rm LTE}}\left(\!C\right)}{dC}>0$. 
Furthermore, $\frac{d {\bar \Pi}^{\rm LTE}\left(C\right)}{d C}$ is continuous for $C\in\left(\frac{K-1+{\eta^{\rm APO}}}{K}r_{\min},r_{\min}\right)$. Hence, there exists an $\epsilon>0$ such that $\frac{d {\bar \Pi}^{\rm LTE}\left(C\right)}{d C}>0$ for all $C\in\left(\frac{K-1+{\eta^{\rm APO}}}{K}r_{\min},\frac{K-1+{\eta^{\rm APO}}}{K}r_{\min}+\epsilon\right)$. On the other hand, since ${\bar \Pi}^{\rm LTE}\left(C\right)$ is a constant for $C\in\left[0,\frac{K-1+{\eta^{\rm APO}}}{K}\!r_{\min}\right]$ based on (\ref{equ:LTEpayoff:smallC}), we have $\frac{d {\bar \Pi}^{\rm LTE}\left(C\right)}{d C}=0$ for $C\in\left[0,\frac{K-1+{\eta^{\rm APO}}}{K}r_{\min}\right)$. Therefore, we conclude that $C=\frac{K-1+{\eta^{\rm APO}}}{K}r_{\min}$ is the local minimum of function ${\bar \Pi}^{\rm LTE}\left(C\right)$. This completes the proof of {\bf Part I}.

{\bf Part II:} we show that when $R_{\rm LTE}\le \frac{K-1+{\eta^{\rm APO}}}{K\left(1-{\delta^{\rm LTE}}\right)}r_{\min}$, $C=\frac{K-1+{\eta^{\rm APO}}}{K}r_{\min}$ is a global maximum of ${\bar \Pi}^{\rm LTE}\left(C\right)$. Notice that when $C\in\left[0,\frac{K-1+{\eta^{\rm APO}}}{K}r_{\min}\right]$, ${\bar \Pi}^{\rm LTE}\left(C\right)={\delta^{\rm LTE}} R_{\rm LTE}$. We compare this value with the LTE's payoff under other intervals of $C$ as follows:

\emph{Case A:} $C\in\left(\frac{K-1+{\eta^{\rm APO}}}{K}r_{\min},r_{\min}\right)$. In this case, the LTE provider's payoff contains two parts: $\left(1-F\left(r_X\left(C\right)\right)\right)^2 {\delta^{\rm LTE}} R_{\rm LTE}$ and $\left(1-\left(1-F\left(r_X\left(C\right)\right)\right)^2 \right) \left(R_{\rm LTE}-C\right)$, and both of the coefficients (\emph{i.e.}, $\left(1-F\left(r_X\left(C\right)\right)\right)^2$ and $\left(1-\left(1-F\left(r_X\left(C\right)\right)\right)^2 \right)$) are positive. Because $R_{\rm LTE}\le \frac{K-1+{\eta^{\rm APO}}}{K\left(1-{\delta^{\rm LTE}}\right)}r_{\min}$, we have $R_{\rm LTE}-  \frac{K-1+{\eta^{\rm APO}}}{K}r_{\min} \le {\delta^{\rm LTE}} R_{\rm LTE}$. Moreover, we have $C>\frac{K-1+{\eta^{\rm APO}}}{K}r_{\min}$. Hence, we can obtain $R_{\rm LTE}- C < {\delta^{\rm LTE}} R_{\rm LTE}$, which means the LTE's payoff under $C\in\left(\frac{K-1+{\eta^{\rm APO}}}{K}r_{\min},r_{\min}\right)$ is always smaller than that under $C=\frac{K-1+{\eta^{\rm APO}}}{K}r_{\min}$;

\emph{Case B:} $C\in\left[r_{\min},r_{\max}\right)$. In this case, if there is a competition, LTE obtains ${\delta^{\rm LTE}} R_{\rm LTE}$, otherwise, it obtains the difference between $R_{\rm LTE}$ and the payment to APOs. In the latter case, the LTE's payoff is no larger than $R_{\rm LTE}-r_{\min}$. Since $r_{\min}>\frac{K-1+{\eta^{\rm APO}}}{K}r_{\min}\ge \left(1-\delta\right)R_{\rm LTE}$, the LTE's payoff in the latter case is smaller than ${\delta^{\rm LTE}} R_{\rm LTE}$, which is the payoff under $C=\frac{K-1+{\eta^{\rm APO}}}{K}r_{\min}$. In other words, the LTE's payoff under $C\in\left[r_{\min},r_{\max}\right)$ is always smaller than that under $C=\frac{K-1+{\eta^{\rm APO}}}{K}r_{\min}$;

\emph{Case C:} $C\in\left[r_{\max},\infty\right)$. In this case, based on a similar analysis as \emph{Case B} and \emph{Case C}, we can show that the LTE's payoff is also smaller than that under $C=\frac{K-1+{\eta^{\rm APO}}}{K}r_{\min}$.

Based on \emph{Case A}, \emph{Case B}, and \emph{Case C}, we prove {\bf Part II}.

Combing {\bf Part I} and {\bf Part II}, we complete the proof of the lemma.
\end{proof}

\subsection{Proof of Theorem \ref{theorem:optimalCcase}}
We discuss the optimal reserve rate in the following three cases:

\emph{Case A:} $R_{\rm LTE}\le\frac{K-1+{\eta^{\rm APO}}}{K\left(1-{\delta^{\rm LTE}}\right)}r_{\min}$. Based on Lemma \ref{lemma:local:etarmin}, in this case, the LTE provider achieves the highest payoff (\emph{i.e.}, ${\delta^{\rm LTE}} R_{\rm LTE}$) with $C\in\left[0,\frac{K-1+{\eta^{\rm APO}}}{K}r_{\min}\right]$. 
Because the LTE provider works in the competition mode and does not need to allocate rate to APOs, constraint $b_{\max}\left(C\right) \le R_{\rm LTE}$ in Problem (\ref{equ:optABC}) is automatically satisfied. Hence, the optimal $C^*$ can be any value from $\left[0,\frac{K-1+{\eta^{\rm APO}}}{K}r_{\min}\right]$;

\emph{Case B:} $\frac{K-1+{\eta^{\rm APO}}}{K\left(1-{\delta^{\rm LTE}}\right)}r_{\min}<R_{\rm LTE}\le r_{\max}$. Based on Lemma \ref{lemma:local:etarmin}, in this case, point $C=\frac{K-1+{\eta^{\rm APO}}}{K}r_{\min}$ is a local minimum, and the LTE provider will not consider any $C\in\left[0,\frac{K-1+{\eta^{\rm APO}}}{K}r_{\min}\right]$ as the optimal reserve rate. Therefore, the optimal $C^*$ should be chosen from set $\left(\frac{K-1+{\eta^{\rm APO}}}{K}r_{\min},\infty\right)$. Moreover, if $C^*$ is larger than $R_{\rm LTE}$, it is possible that an APO bids a value larger than $R_{\rm LTE}$, which implies the violation of constraint $b_{\max}\left(C\right) \le R_{\rm LTE}$. Hence, $C^*$ should be set no larger than $R_{\rm LTE}$. To conclude, $C^*$ can be chosen from $\left(\frac{K-1+{\eta^{\rm APO}}}{K}r_{\min},R_{\rm LTE}\right]$;

\emph{Case C:} $R_{\rm LTE}\!>\!\max\left\{r_{\max},\frac{K-1+{\eta^{\rm APO}}}{K\left(1-{\delta^{\rm LTE}}\right)}r_{\min}\right\}$. Similar as \emph{Case B}, the LTE provider will not consider any $C\in\left[0,\frac{K-1+{\eta^{\rm APO}}}{K}r_{\min}\right]$ as the optimal reserve rate. Furthermore, the LTE's payoff is a constant for $C\in\left[r_{\max},\infty\right)$. Hence, it is sufficient for the LTE to consider $C^*$ from set $\left(\frac{K-1+{\eta^{\rm APO}}}{K}r_{\min},r_{\max}\right]$. 
In this situation, because $R_{\rm LTE}\!>\!\max\left\{r_{\max},\frac{K-1+{\eta^{\rm APO}}}{K\left(1-{\delta^{\rm LTE}}\right)}r_{\min}\right\}$, the condition $C^*\le R_{\rm LTE}$ is automatically satisfied. 
Therefore, we conclude that $C^*$ can be chosen from $\left(\frac{K-1+{\eta^{\rm APO}}}{K}r_{\min},r_{\max}\right]$.

Based on \emph{Case A}, \emph{Case B}, and \emph{Case C}, we complete the proof of Theorem \ref{theorem:optimalCcase}.

{{
\subsection{Assumption \ref{assumption:unique} Under Uniform Distribution ($K=2$)}\label{appendix:sec:K2}
In this section, we prove that when $K=2$, Assumption \ref{assumption:unique} holds under the uniform distribution. 


{\bf Part A:} we prove that equation (\ref{equ:rT}) has a unique solution in $\left(C,r_{\max}\right)$. 
For the uniform distribution, the cumulative distribution function is $F\left(r\right)=\frac{r-r_{\min}}{r_{\max}-r_{\min}}$, where $r\in\left[r_{\min},r_{\max}\right]$. 
By using the expression of $F\left(r\right)$ and the fact that $K=2$, we can rewrite (\ref{equ:rT}) as
\begin{align}
\frac{r-C}{r_{\max}-r_{\min}}\frac{C-r}{2} +\frac{r_{\max}-r}{r_{\max}-r_{\min}}\left(C-\frac{1+{\eta^{\rm APO}}}{2}r\right)=0.
\end{align}
After rearrangement, we have
\begin{align}
\frac{\eta^{\rm APO}}{2}r^2-\frac{1+\eta^{\rm APO}}{2}r_{\max}r+r_{\max}C-\frac{C^2}{2}=0.\label{appendix:quadratic:a}
\end{align}
We define function $H\left(r\right)$ as the left hand side of (\ref{appendix:quadratic:a}):
\begin{align}
H\left(r\right)\triangleq \frac{\eta^{\rm APO}}{2}r^2-\frac{1+\eta^{\rm APO}}{2}r_{\max}r+r_{\max}C-\frac{C^2}{2}.
\end{align}
Since $H\left(r\right)$ is a quadratic function, we can show that its value decreases with $r$ for $r\in\left[r_{\min},\frac{1+\eta^{\rm APO}}{2\eta^{\rm APO}}r_{\max}\right)$. Because $C\in\left[r_{\min},r_{\max}\right)$ and $\frac{1+\eta^{\rm APO}}{2\eta^{\rm APO}}>1$, we have $r_{\min}\le C<r_{\max}<\frac{1+\eta^{\rm APO}}{2\eta^{\rm APO}}r_{\max}$. Therefore, $H\left(r\right)$ decreases with $r$ for $r\in\left[C,r_{\max}\right]$. 

When $r=C$ and $r=r_{\max}$, the values of function $H\left(r\right)$ are as follows:
\begin{align}
& H\left(C\right)=\frac{1-\eta^{\rm APO}}{2}\left(r_{\max}-C\right)C>0,\\
& H\left(r_{\max}\right)=-\frac{1}{2}\left(r_{\max}-C\right)^2<0.
\end{align}
Considering the fact that $H\left(r\right)$ is decreasing for $r\in\left[C,r_{\max}\right]$, we can conclude that there is a unique $r\in\left(C,r_{\max}\right)$ that satisfies equation $H\left(r\right)=0$. In other words, when $F\left(r\right)$ follows the uniform distribution and $K=2$, equation (\ref{equ:rT}) has a unique solution in $\left(C,r_{\max}\right)$.



{\bf Part B:} we prove that equation equation (\ref{equ:rX}) has a unique solution in $\left(r_{\min},r_{\max}\right)$. 
By using $F\left(r\right)=\frac{r-r_{\min}}{r_{\max}-r_{\min}}$ and the fact that $K=2$, we can rewrite (\ref{equ:rX}) as
\begin{align}
\frac{r-r_{\min}}{r_{\max}-r_{\min}}\frac{C-r}{2}+\frac{r_{\max}-r}{r_{\max}-r_{\min}}\left(C-\frac{1+\eta^{\rm APO}}{2}r\right)=0.
\end{align}
After rearrangement, we have
\begin{multline}
\frac{\eta^{\rm APO}}{2}r^2-\frac{C}{2}r+\frac{r_{\min}}{2}r-\frac{1+\eta^{\rm APO}}{2}r_{\max}r\\
+r_{\max}C-\frac{Cr_{\min}}{2}=0.\label{appendix:quadratic:b}
\end{multline}
We define function $J\left(r\right)$ as the left hand side of (\ref{appendix:quadratic:b}):
\begin{align}
\nonumber
J\left(r\right)\triangleq & \frac{\eta^{\rm APO}}{2}r^2-\frac{C}{2}r+\frac{r_{\min}}{2}r\\
& -\frac{1+\eta^{\rm APO}}{2}r_{\max}r+r_{\max}C-\frac{Cr_{\min}}{2}.
\end{align}
We can show that
\begin{align}
& J\left(r_{\min}\right)=\left({r_{\max}-r_{\min}}\right)\left(C-\frac{1+\eta^{\rm APO}}{2}r_{\min}\right)>0,\label{appendix:Jfunction:a}\\
& J\left(r_{\max}\right)=\frac{1}{2}\left(r_{\max}-r_{\min}\right)\left(C-r_{\max}\right)<0.\label{appendix:Jfunction:b}
\end{align}

Next we discuss two situations: $r_{\max}\le\frac{\left(1+\eta^{\rm APO}\right)r_{\max}+C-r_{\min}}{2\eta^{\rm APO}}$ and $r_{\max}>\frac{\left(1+\eta^{\rm APO}\right)r_{\max}+C-r_{\min}}{2\eta^{\rm APO}}$. 

When $r_{\max}\le\frac{\left(1+\eta^{\rm APO}\right)r_{\max}+C-r_{\min}}{2\eta^{\rm APO}}$, the quadratic function $J\left(r\right)$ is decreasing for $r\in\left[r_{\min},r_{\max}\right]$. Based on (\ref{appendix:Jfunction:a}) and (\ref{appendix:Jfunction:b}), we can conclude that there is a unique $r\in\left(r_{\min},r_{\max}\right)$ that satisfies equation $J\left(r\right)=0$. 

When $r_{\max}>\frac{\left(1+\eta^{\rm APO}\right)r_{\max}+C-r_{\min}}{2\eta^{\rm APO}}$, we can show that the quadratic function $J\left(r\right)$ is decreasing for $r\in\left[r_{\min},\frac{\left(1+\eta^{\rm APO}\right)r_{\max}+C-r_{\min}}{2\eta^{\rm APO}}\right)$ and increasing for $r\in\left(\frac{\left(1+\eta^{\rm APO}\right)r_{\max}+C-r_{\min}}{2\eta^{\rm APO}},r_{\max}\right]$. If $J\left(\frac{\left(1+\eta^{\rm APO}\right)r_{\max}+C-r_{\min}}{2\eta^{\rm APO}}\right)\ge0$, we have $J\left(r\right)\ge0$ for all $r\in\left[r_{\min},r_{\max}\right]$, which contradicts with (\ref{appendix:Jfunction:b}). Therefore, we have $J\left(\frac{\left(1+\eta^{\rm APO}\right)r_{\max}+C-r_{\min}}{2\eta^{\rm APO}}\right)<0$. Together with (\ref{appendix:Jfunction:a}) and the fact that $J\left(r\right)$ is decreasing for $r\in\left[r_{\min},\frac{\left(1+\eta^{\rm APO}\right)r_{\max}+C-r_{\min}}{2\eta^{\rm APO}}\right)$, we can show that there is a unique $r\in\left(r_{\min},\frac{\left(1+\eta^{\rm APO}\right)r_{\max}+C-r_{\min}}{2\eta^{\rm APO}}\right)$ that satisfies equation $J\left(r\right)=0$. 
Moreover, since $J\left(r\right)$ is increasing for $r\in\left(\frac{\left(1+\eta^{\rm APO}\right)r_{\max}+C-r_{\min}}{2\eta^{\rm APO}},r_{\max}\right]$, $J\left(\frac{\left(1+\eta^{\rm APO}\right)r_{\max}+C-r_{\min}}{2\eta^{\rm APO}}\right)<0$, and $J\left(r_{\max}\right)<0$, function $J\left(r\right)$ is always smaller than zero for $r\in\left[\frac{\left(1+\eta^{\rm APO}\right)r_{\max}+C-r_{\min}}{2\eta^{\rm APO}},r_{\max}\right)$. 
That is to say, there is a unique $r\in\left(r_{\min},r_{\max}\right)$ that satisfies equation $J\left(r\right)=0$.

From the results of the two situations, we can conclude that when $F\left(r\right)$ follows the uniform distribution and $K=2$, equation (\ref{equ:rX}) has a unique solution in $\left(r_{\min},r_{\max}\right)$. 

Combining {\bf Part A} and {\bf Part B}, we complete the proof.
}}

\subsection{Assumption \ref{assumption:unique} Under Uniform Distribution and Truncated Normal Distribution (General $K$)}\label{appendix:sec:Kgeneral}
In this section, we show that Assumption \ref{assumption:unique} holds under the uniform distribution and truncated normal distribution for a general $K$ through simulation. We show the following four cases:

\begin{figure*}[t]
  \centering
  \begin{minipage}[t]{.48\linewidth}
  \includegraphics[scale=0.3]{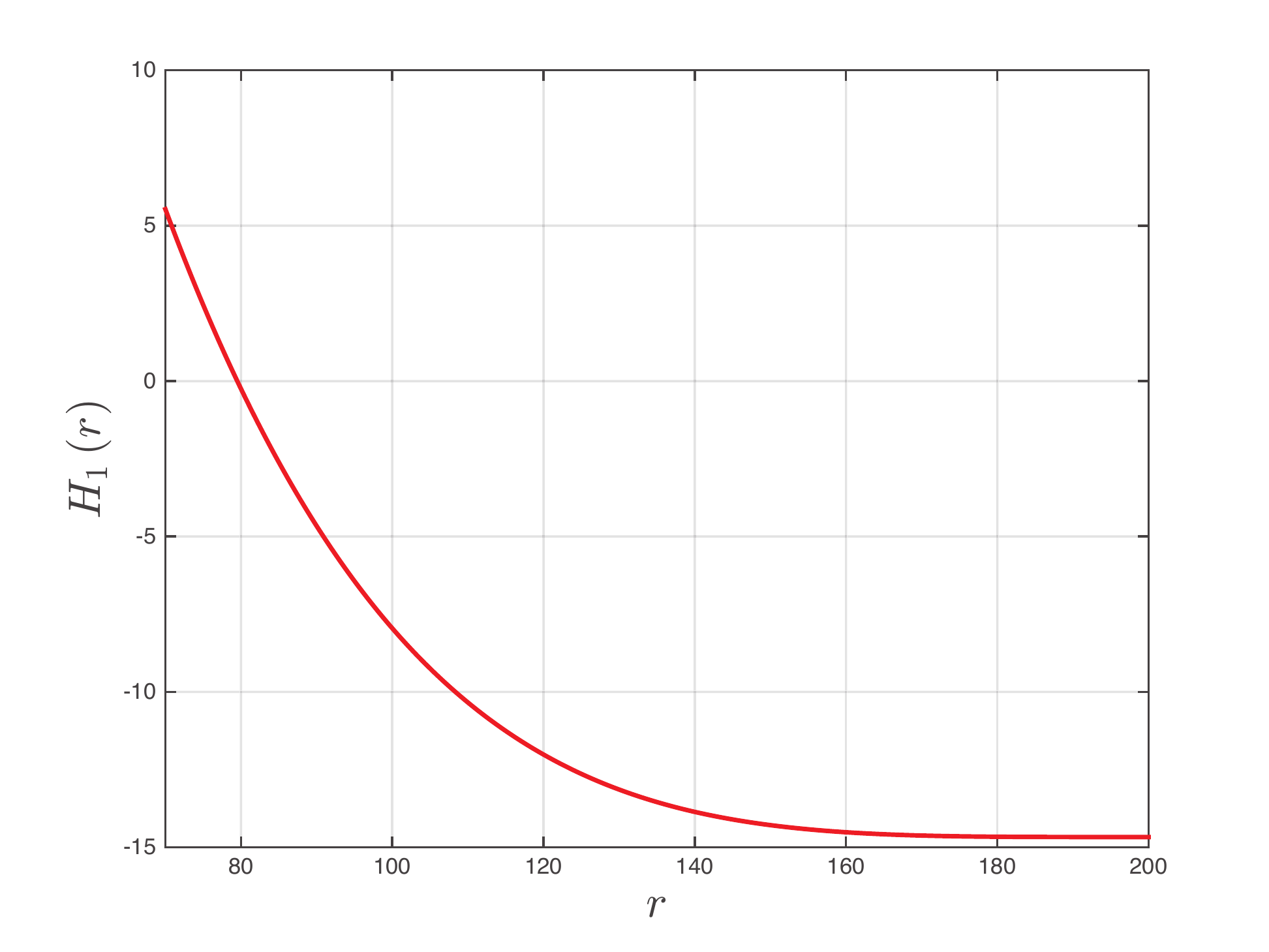}
  \caption{Uniqueness of $r_T\left(C\right)$ Under Uniform Distribution.}
  \label{fig:app:1}
  \vspace{-4mm}
  \end{minipage}
  \begin{minipage}[t]{.48\linewidth}
  \centering
  \includegraphics[scale=0.3]{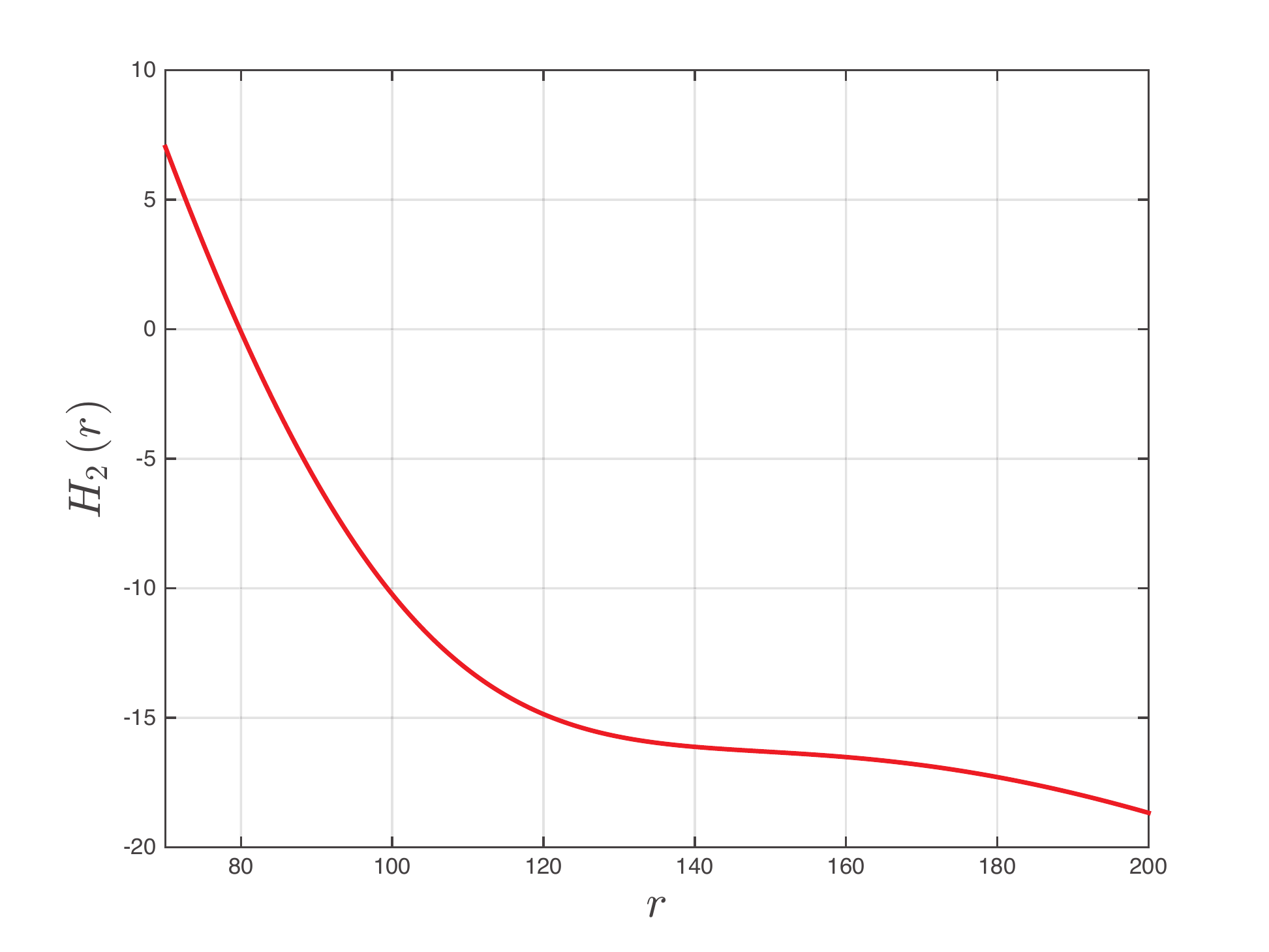}
  \caption{Uniqueness of $r_T\left(C\right)$ Under Truncated Normal Distribution.}
  \label{fig:app:2}
  \vspace{-4mm}
  \end{minipage}
\end{figure*}
\begin{figure*}[t]
  \centering
  \begin{minipage}[t]{.48\linewidth}
  \includegraphics[scale=0.3]{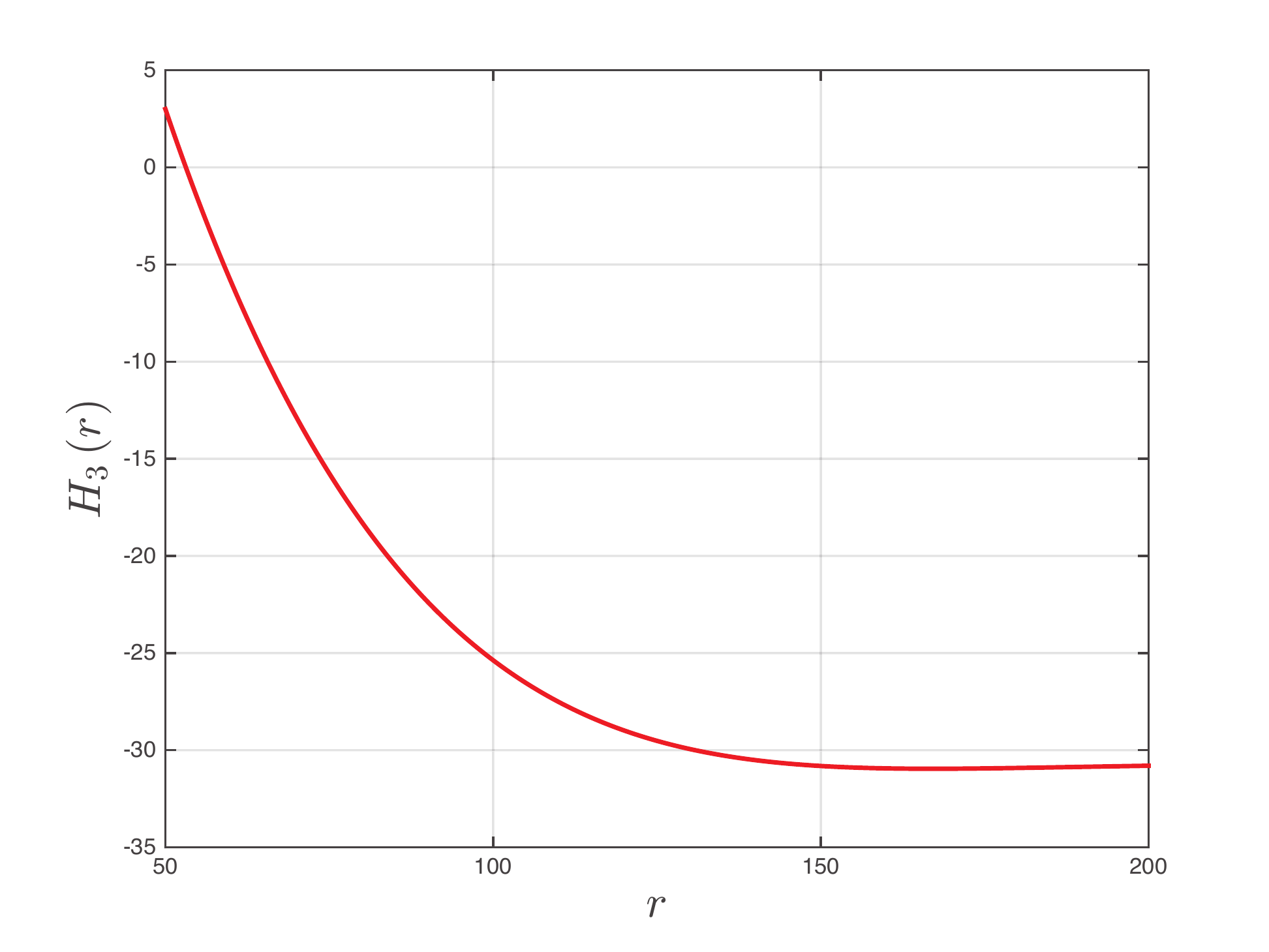}
  \caption{Uniqueness of $r_X\left(C\right)$ Under Uniform Distribution.}
  \label{fig:app:3}
  \vspace{-4mm}
  \end{minipage}
  \begin{minipage}[t]{.48\linewidth}
  \centering
  \includegraphics[scale=0.3]{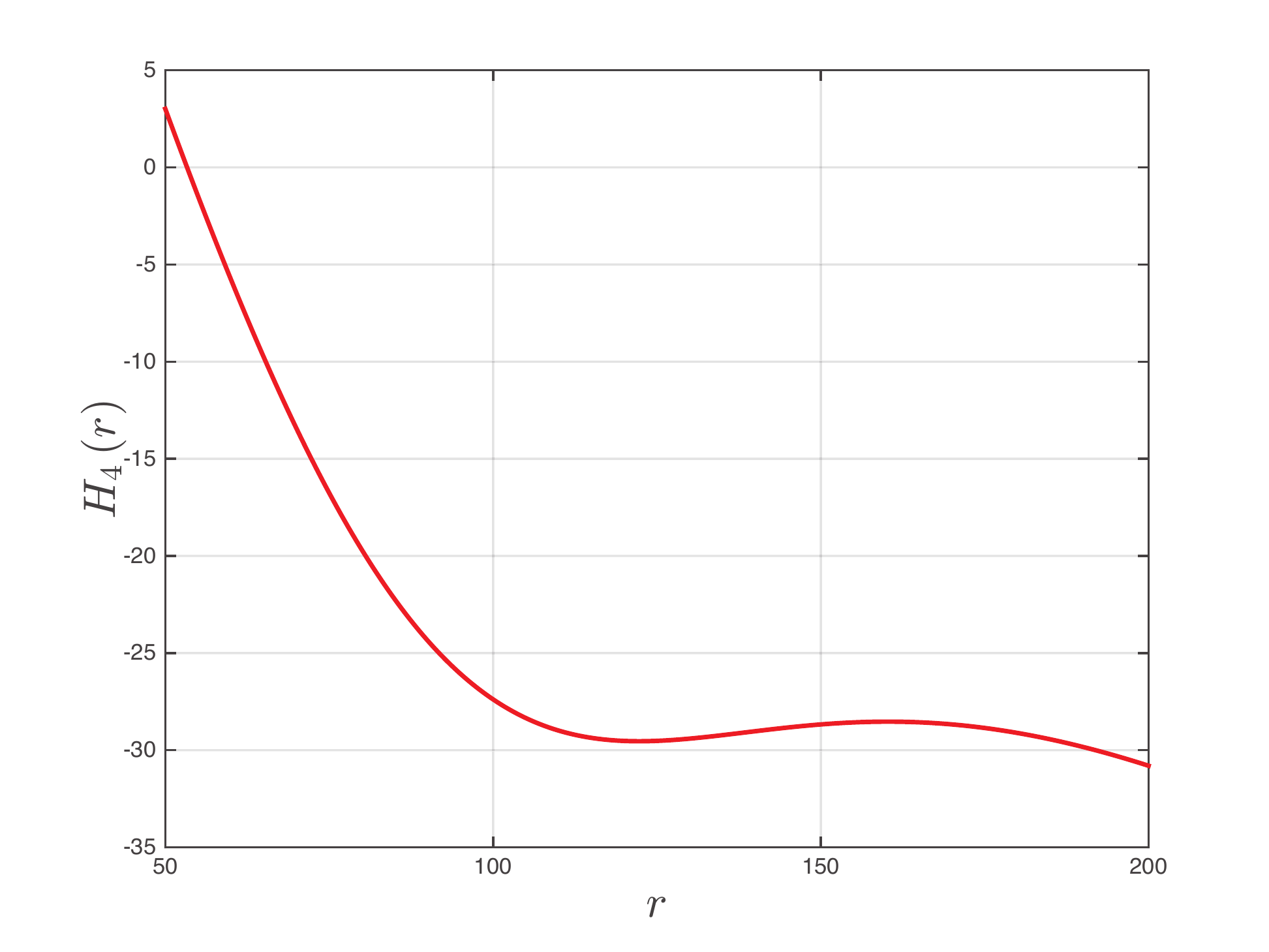}
  \caption{Uniqueness of $r_X\left(C\right)$ Under Truncated Normal Distribution.}
  \label{fig:app:4}
  \vspace{-4mm}
  \end{minipage}
\end{figure*}

\emph{Case A:} Uniqueness of $r_T\left(C\right)$ under the uniform distribution. We choose $K=5$, $\eta^{\rm APO}=0.3$, $C=70{\rm~Mbps}$, and $r_k\sim {\cal U}\left[50{\rm~Mbps},200{\rm~Mbps}\right]$ for all $k\in{\cal K}$. We denote the left-hand-side of equation (\ref{equ:rT}) as $H_1\left(r\right)$, and plot it against $r$ in Fig. \ref{fig:app:1}. We can find that there is a unique solution for $H_1\left(r\right)=0$. In other words, there is a unique $r_T\left(C\right)$ in this case;

\emph{Case B:} Uniqueness of $r_T\left(C\right)$ under the truncated normal distribution. We choose $K=5$, $\eta^{\rm APO}=0.3$, and $C=70{\rm~Mbps}$. We obtain the distribution of $r_k$, $k\in{\cal K}$, by truncating the normal distribution ${\cal N}\left(125~{\rm Mbps},2500~{{\rm Mbps}^2}\right)$ to interval $\left[50~{\rm Mbps},200~{\rm Mbps}\right]$. 
We denote the left-hand-side of equation (\ref{equ:rT}) as $H_2\left(r\right)$, and plot it against $r$ in Fig. \ref{fig:app:2}. We can find that there is a unique solution for $H_2\left(r\right)=0$. In other words, there is a unique $r_T\left(C\right)$ in this case;

\emph{Case C:} Uniqueness of $r_X\left(C\right)$ under the uniform distribution. We choose $K=5$, $\eta^{\rm APO}=0.3$, $C=46{\rm~Mbps}$, and $r_k\sim {\cal U}\left[50{\rm~Mbps},200{\rm~Mbps}\right]$ for all $k\in{\cal K}$. We denote the left-hand-side of equation (\ref{equ:rX}) as $H_3\left(r\right)$, and plot it against $r$ in Fig. \ref{fig:app:3}. We can find that there is a unique solution for $H_3\left(r\right)=0$. In other words, there is a unique $r_X\left(C\right)$ in this case;

\emph{Case D:} Uniqueness of $r_T\left(C\right)$ under the truncated normal distribution. We choose $K=5$, $\eta^{\rm APO}=0.3$, and $C=46{\rm~Mbps}$. We obtain the distribution of $r_k$, $k\in{\cal K}$, by truncating the normal distribution ${\cal N}\left(125~{\rm Mbps},2500~{{\rm Mbps}^2}\right)$ to interval $\left[50~{\rm Mbps},200~{\rm Mbps}\right]$. 
We denote the left-hand-side of equation (\ref{equ:rX}) as $H_4\left(r\right)$, and plot it against $r$ in Fig. \ref{fig:app:4}. We can find that there is a unique solution for $H_4\left(r\right)=0$. In other words, there is a unique $r_X\left(C\right)$ in this case.


\subsection{Computation of ${\bar r}_{\rm pay}\left(C\right)$}\label{appendix:sec:rpay}
We first compute the probability distribution of $b_{\min}^{-k}$. We denote the CDF of $b_{\min}^{-k}$ as $G\left(\cdot\right)$ and compute it as
\begin{align}
G\left(r\right)=1-\left(1-F\left(r\right)\right)^{K-1},r\in\left[r_{\min},r_{\max}\right].\label{equ:Gr}
\end{align}
Hence, we denote the PDF of $b_{\min}^{-k}$ as $g\left(\cdot\right)$ and compute it as
\begin{align}
g\left(r\right)=\frac{d G\left(r\right)}{dr}=\left(K-1\right)f\left(r\right)\left(1-F\left(r\right)\right)^{K-2},r\in\left[r_{\min},r_{\max}\right].\label{equ:gr}
\end{align}

We focus on the expected payment received by APO $k$. APO $k$ wins the auction under the following three cases:
\begin{itemize}
\item $r_k\in\left[r_{\min},C\right)$ and $b_{\min}^{-k}\in\left[r_k,C\right)$. In this case, APO $k$ receives $b_{\min}^{-k}$ from the LTE;
\item $r_k\in\left[r_{\min},C\right)$ and $b_{\min}^{-k}=C$ or $``{\rm N} \textquotedblright$. In this case, APO $k$ receives $C$ from the LTE;
\item $r_k\in\left[C,r_T\left(C\right)\right]$ and $b_{\min}^{-k}=C$ or $``{\rm N} \textquotedblright$. In this case, the expected payment that APO $k$ receives depends on the number of APOs bidding $C$.
\end{itemize}
Based on this discussion, we can compute the expected payment that APO $k$ receives as (\ref{equ:app:long}). Furthermore, it is easy to find that relation (\ref{equ:help}) holds. Based on $G\left(r\right)$ in (\ref{equ:Gr}), $g\left(r\right)$ in (\ref{equ:gr}), and (\ref{equ:help}), we rewrite (\ref{equ:app:long}) as (\ref{equ:APOk:rpay}). 
Notice that (\ref{equ:APOk:rpay}) shows the expected payment that APO $k$ receives from the LTE. Since we consider $K$ APOs, we can compute ${\bar r}_{\rm pay}\left(C\right)$ as
\begin{figure*}
\begin{multline}
\int_{r_{\min}}^{C} r g\left(r\right) F\left(r\right) dr +CF\left(C\right)\left(1-G\left(C\right)\right) +\\
\left(F\left(r_T\left(C\right)\right)-F\left(C\right)\right)\sum_{n=0}^{K-1}{\binom{K-1}{n} \left(F\left(r_T\right)-F\left(C\right)\right)^n\left(1-F\left(r_T\right)\right)^{K-1-n}\frac{C}{n+1}}.\label{equ:app:long}
\end{multline}
\hrulefill
\begin{multline}
\frac{1}{K}C\left(\left(1-F\left(C\right)\right)^K-\left(1-F\left(r_T\left(C\right)\right)\right)^K\right)=\\
\left(F\left(r_T\left(C\right)\right)-F\left(C\right)\right)\sum_{n=0}^{K-1}{\binom{K-1}{n} \left(F\left(r_T\right)-F\left(C\right)\right)^n\left(1-F\left(r_T\right)\right)^{K-1-n}\frac{C}{n+1}}.\label{equ:help}
\end{multline}
\hrulefill
\begin{multline}
\left(K-1\right) \int_{r_{\min}}^{C} r f\left(r\right) F\left(r\right)  \left(1-F\left(r\right)\right)^{K-2} dr +CF\left(C\right)\left(1-F\left(C\right)\right)^{K-1} \\
+\frac{1}{K}C\left(\left(1-F\left(C\right)\right)^K-\left(1-F\left(r_T\left(C\right)\right)\right)^K\right).\label{equ:APOk:rpay}
\end{multline}
\hrulefill
\end{figure*}
\begin{align}
\nonumber
{\bar r}_{\rm pay}\left(C\right)\triangleq & K\left(K-1\right) \int_{r_{\min}}^{C} r f\left(r\right) F\left(r\right)  \left(1-F\left(r\right)\right)^{K-2} dr \\
\nonumber
&+KCF\left(C\right)\left(1-F\left(C\right)\right)^{K-1} \\
&+C\left(\left(1-F\left(C\right)\right)^K-\left(1-F\left(r_T\left(C\right)\right)\right)^K\right).
\end{align}



\subsection{Example About Inefficiency in Auction}\label{appendix:sec:example}
We present an example to show that even if the cooperation mutually benefits the LTE provider and the APOs, these two types of networks may not reach an agreement on the cooperation. 

In the example, $R_{\rm LTE}=95{\rm~Mbps}$, $K=4$, $\delta^{\rm LTE}=0.4$, $\eta^{\rm APO}=0.3$, and ${r_k}=64{\rm~Mbps}$ for all $k\in{\cal K}$. Notice that APO $k$'s type $r_k$ is not known by the LTE and APOs, while the distribution of $r_k$ is the common knowledge. We assume that $r_k$ follows a truncated normal distribution, and we obtain the distribution of $r_k$ by truncating the normal distribution ${\cal N}\left(125~{\rm Mbps},2500~{{\rm Mbps}^2}\right)$ to interval $\left[50~{\rm Mbps},200~{\rm Mbps}\right]$.

For this example, based on the numerical method mentioned in Section \ref{subsec:stageI:3}, we can compute the LTE's optimal reserve rate as $C^*=49.4{\rm~Mbps}$. Furthermore, we can compute $r_X\left(C^*\right)=59.3~{\rm Mbps}$, and conclude that the APOs with types in $\left[50~{\rm Mbps},59.3~{\rm Mbps}\right]$ bid $C^*$, and the APOs with types in $\left(59.3~{\rm Mbps},200~{\rm Mbps}\right]$ bid $``{\rm N} \textquotedblright$ (based on the analysis in Section{\ref{subsec:stageII:3}}). Since in our example, ${r_k}=64{\rm~Mbps}$ for all $k\in{\cal K}$, all of these four APOs bid $``{\rm N} \textquotedblright$. In other words, the LTE cannot find any APO to cooperate with, and it will eventually work in the competition mode. In this situation, the LTE's payoff is $\delta R_{\rm LTE}=38~{\rm Mbps}$ and APO $k$'s ($k\in{\cal K}$) expected payoff is $\frac{K-1+\eta^{\rm APO}}{K}r_k=52.8~{\rm Mbps}$.

Next we consider the situation where the LTE changes its reserve rate from $C^*=49.4{\rm~Mbps}$ to ${\bar C}=55{\rm~Mbps}$. When ${\bar C}=55{\rm~Mbps}$, we can find $r_T\left({\bar C}\right)=65.8{\rm~Mbps}$ based on the analysis in Section{\ref{subsec:stageII:1}. That is to say, the APOs with types in $\left[50~{\rm Mbps},55~{\rm Mbps}\right]$ bid their types, the APOs with types in $\left(55~{\rm Mbps},65.8{\rm~Mbps}\right]$ bid $\bar C$, and the APOs with types in $\left(65.8~{\rm Mbps},200~{\rm Mbps}\right]$ bid $``{\rm N} \textquotedblright$ (based on the analysis in Section{\ref{subsec:stageII:1}}). 
In our example, ${r_k}=64{\rm~Mbps}$ for all $k\in{\cal K}$. 
Hence, when the LTE chooses its reserve rate as ${\bar C}=55{\rm~Mbps}$, all of these four APOs bid $\bar C$. As a result, the LTE will randomly pick one APO to cooperate with. In this situation, the LTE's payoff is $R_{\rm LTE}-{\bar C}=40{\rm~Mbps}$ and APO $k$'s ($k\in{\cal K}$) expected payoff is $\frac{1}{K}{\bar C}+\frac{K-1}{K} r_k=61.75~{\rm Mbps}$. We find that when the LTE changes its reserve rate from $C^*=49.4{\rm~Mbps}$ to ${\bar C}=55{\rm~Mbps}$, the LTE switches from the competition mode to the cooperation mode, and both the LTE and APOs obtain higher payoffs.

To summarize, under reserve rate $55{\rm~Mbps}$, the LTE and APOs can cooperate with each other. However, due to the incomplete information in the auction, the LTE determines its reserve rate by only considering the probability distribution of APOs' types ${\bm r}$. As a result, the LTE will choose $C^*=49.4{\rm~Mbps}$ as its reserve rate, which makes the LTE lose a cooperation chance that mutually benefits both the LTE and APOs.



\section{Supplementary Material: Multi-LTE Analysis}\label{sec:supplementary}

We extend our model and analysis to a more general scenario, where there are multiple LTE providers in the system. Based on our discussion in Section \ref{subsec:discussion:multiLTE} of the paper, the LTE providers take turns to organize the auctions. Without loss of generality, we consider the case where a particular LTE provider organizes an auction. 

Note that we focus on the cooperation between the LTE and APOs, and do not consider the cooperation among the LTE providers. LTE-U Forum evaluated the performance of two LTE networks (owned by different providers) when they share the same unlicensed channel, and the results showed that the interference between the LTE networks is much lower than that between the LTE and Wi-Fi APOs \cite{forum}.


\subsection{Basic Settings}

We consider a general system model illustrated in Fig. \ref{fig:system:multiLTE}, where LTE provider $0$ is the auctioneer and negotiates with all the APOs. We define ${\cal K}_S\triangleq \left\{1,2,\ldots,K_S\right\}$ ($K_S\ge 2$) and ${\cal K}_A\triangleq \left\{K_S+1,K_S+2,\ldots,K_S+K_A\right\}$ ($K_A\ge 2$). Specifically, APO $k\in{\cal K}_S$ \emph{shares} channel $k$ with LTE provider $k$, and APO $k\in{\cal K}_A$ occupies channel $k$ \emph{alone}. 
Hence, there are a total of $K_S+1$ LTE providers in the system: LTE provider $0$ organizes the auction, and the remaining $K_S$ LTE providers share the channels with the APOs from set ${\cal K}_S$.{\footnote{{Here, we assume that each APO from set ${\cal K}_S$ shares channel $k$ with only one LTE provider. This assumption helps us simplify the presentation of the analysis (\emph{e.g.}, reduce the number of notations). The technique that we develop in this section can be easily applied to the scenario where there are multiple LTE providers coexisting with an APO from set ${\cal K}_S$.}}}

\begin{figure}[h]
  \centering
  \includegraphics[scale=0.4]{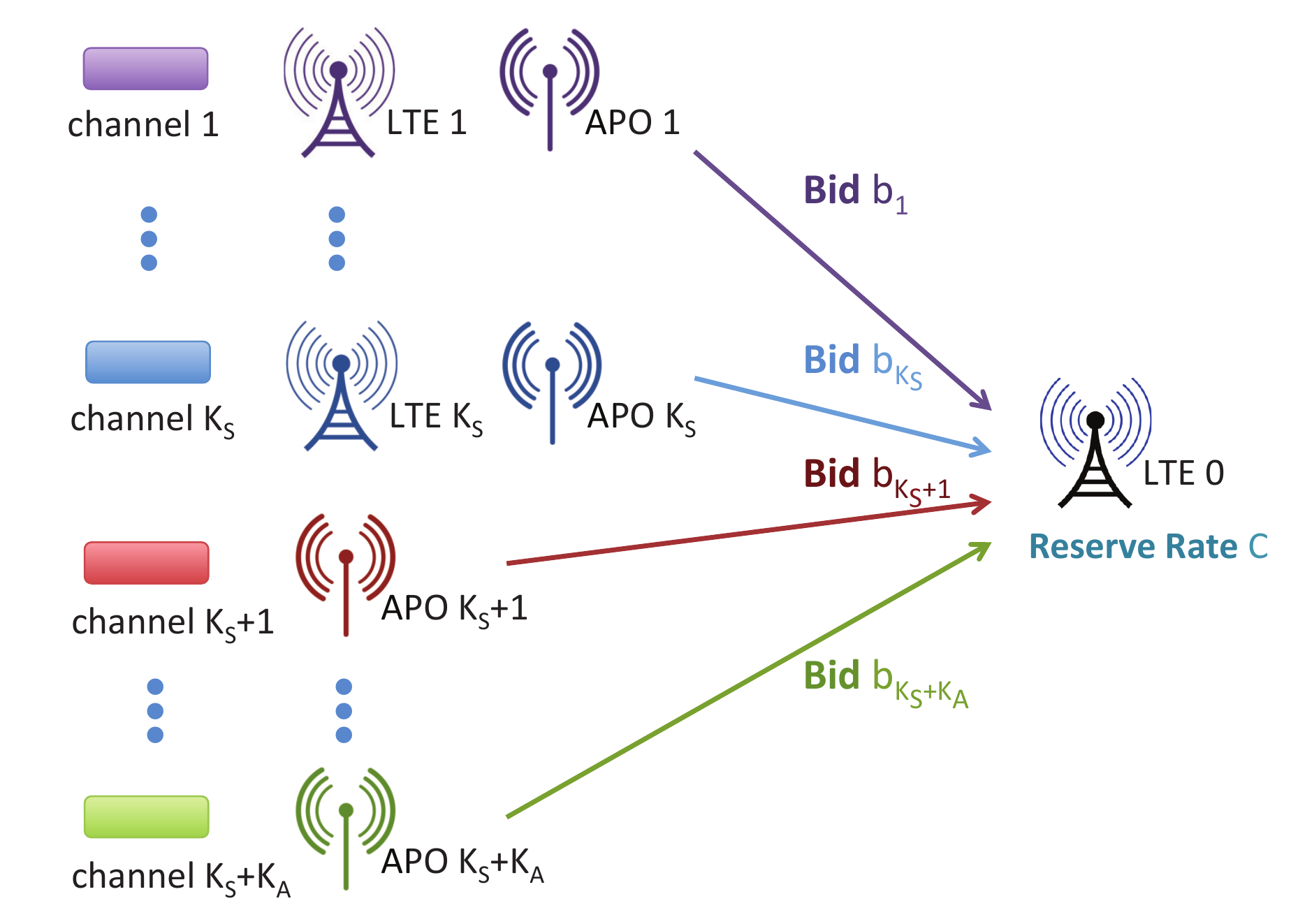}\\
  \caption{Generalized System Model.}
  \label{fig:system:multiLTE}
  \vspace{-0.2cm}
\end{figure}  

{\bf APOs' Rates:} {{We use $r_k$ to denote the throughput that APO $k\in{\cal K}_S\cup{\cal K}_A$ can achieve to serve its users when it \emph{exclusively} occupies a channel (without the interference from the LTE networks).}} 
{{The value of $r_k$ is the private information of APO $k$. The LTE providers and the other APOs only know the probability distribution of $r_k$. Specifically, we assume that $r_k$ is a continuous random variable drawn from interval $\left[r_{\min},r_{\max}\right]$ ($r_{\min},r_{\max}\ge0$), and follows a probability distribution function (PDF) $f\left(\cdot\right)$ and a cumulative distribution function (CDF) $F\left(\cdot\right)$. Moreover, we assume that $f\left(\cdot\right)>0$ for all $r\in\left[r_{\min},r_{\max}\right]$.}} 

{\bf LTE provider $0$'s Dual Modes:} We assume that LTE provider $0$ achieves a channel independent throughput of $R_{\rm LTE}>0$ when it \emph{exclusively} occupies a channel (without the interference from the other LTE providers and APOs). Note that we do not need to know other LTE providers' throughputs, as we focus on the interactions between LTE provider $0$ and APOs in the auction. 
LTE provider $0$ can operate its network in one of the following modes:

-\emph{Competition mode:} LTE provider $0$ randomly chooses channel $k\in{\cal K}_A$ with an equal probability and coexists with APO $k$ in that channel. 
The co-channel interference decreases both the data rates of LTE provider $0$ and APO $k$. We use ${\delta^{\rm LTE}}\in\left(0,1\right)$ and ${\eta^{\rm APO}}\in\left(0,1\right)$ to denote LTE provider $0$'s and APO $k$'s data rate discounting factors, respectively. 
Notice that in the competition mode, LTE provider $0$ will only access a channel from set ${\cal K}_A$, and will not access a channel from set ${\cal K}_S$. 
This is because any channel from set ${\cal K}_S$ is shared by an APO and an LTE provider. Compared with accessing a channel from set ${\cal K}_A$, accessing a channel from set ${\cal K}_S$ incurs more interference to LTE provider $0$. 

-\emph{Cooperation mode:} LTE provider $0$ reaches an agreement with APO $k\in{\cal K}_S\cup{\cal K}_A$, accesses channel $k$, and allocates $r_{\rm pay}$ to the onloaded APO $k$'s traffic. Specifically, if LTE provider $0$ cooperates with APO $k\in{\cal K}_A$, LTE provider $0$ will fully occupy channel $k$. However, if LTE provider $0$ cooperates with APO $k\in{\cal K}_S$, LTE provider $0$ will share channel $k$ with LTE provider $k$. In this situation, the co-channel interference decreases the data rates of both LTE providers, and we use $\theta^{\rm LTE}\in\left(0,1\right)$ to denote the LTE providers' data rate discounting factor.

\subsection{Second-Price Reverse Auction Design}\label{revision:auctionframe}
We design a second-price reverse auction, where LTE provider $0$ is the auctioneer (buyer) and the $K_S+K_A$ APOs are the bidders (sellers of onloading opportunities). 

{\bf Reserve Rate and Bids:} {{In Stage I of the auction,}} LTE provider $0$ announces its reserve rate $C\in\left[0,\infty\right)$. {{In Stage II of the auction,}} after observing the reserve rate $C$, APO $k\in{\cal K}_S\cup{\cal K}_A$ submits a bid $b_k$. We define the vector of all APOs' bids as ${\bm b}\triangleq \left(b_k,\forall k\in{\cal K}_S\cup{\cal K}_A \right)$. 
Specifically, APO $k\in{\cal K}_S$ can submit a bid $b_k\in\left[0,C-\left(1-\theta^{\rm LTE}\right)R_{\rm LTE}\right]\cup\left\{``\rm N \textquotedblright\right\}$, and APO $k\in{\cal K}_A$ can submit a bid $b_k\in\left[0,C\right]\cup\left\{``\rm N \textquotedblright\right\}$. 

Note that APO $k\in{\cal K}_S$ can only submit $C-\left(1-\theta^{\rm LTE}\right)R_{\rm LTE}$ as its highest bid except $``\rm N \textquotedblright$. This is because the benefit for LTE provider $0$ of accessing channel $k\in{\cal K}_S$ is lower than the benefit of accessing channel a channel from set ${\cal K}_A$, due to the interference from LTE provider $k$ in channel $k$. 
When LTE provider $0$ onloads the users of APO $k\in{\cal K}_S$ and accesses channel $k$, LTE provider $0$ can achieve a data rate of $\theta^{\rm LTE}R_{\rm LTE}$ (due to the interference from LTE provider $k$). When LTE provider $0$ onloads the users of an APO from set ${\cal K}_A$ and accesses the corresponding channel, LTE provider $0$ can achieve a data rate of $R_{\rm LTE}$. Hence, the difference of LTE provider $0$'s benefits of accessing channel $k\in{\cal K}_S$ and a channel from set ${\cal K}_A$ is $\left(1-\theta^{\rm LTE}\right)R_{\rm LTE}$. As a result, LTE provider $0$ will not be willing to cooperate with APO $k\in{\cal K}_S$ if APO $k$'s bid is greater than $C-\left(1-\theta^{\rm LTE}\right)R_{\rm LTE}$. 

Since the benefits for LTE provider $0$ of accessing a channel from set ${\cal K}_S$ and set ${\cal K}_A$ are different, LTE provider $0$ needs to normalize the bid vector $\bm b$ to a virtual bid vector $\tilde{\bm b}\triangleq \left(\tilde b_k,\forall k\in{\cal K}_S\cup{\cal K}_A \right)$ to fairly compare all APOs' requests. 
Specifically, for $k\in{\cal K}_A$, we define $\tilde b_k \triangleq b_k$; for $k\in{\cal K}_S$, we define $\tilde b_k$ as 
\begin{align}
\nonumber
&\tilde b_k \triangleq \\
& \left\{\begin{array}{ll}
{\!b_k\!+\!\left(1\!-\!\theta^{\rm LTE}\right)\!R_{\rm LTE},} &  {{\rm if~} b_k\in \left[0,C\!-\!\left(1\!-\!\theta^{\rm LTE}\right)\!R_{\rm LTE}\right]\!,}\\
{``\rm N \textquotedblright,} & {{\rm if~}b_k=``\rm N \textquotedblright.}
\end{array} \right.
\end{align}
Hence, we have $\tilde b_k\in \left[\left(1-\theta^{\rm LTE}\right)R_{\rm LTE},C\right]\cup\left\{``\rm N \textquotedblright\right\}$ for $k\in{\cal K}_S$. LTE provider $0$ utilizes the virtual bid vector $\tilde{\bm b}$ to determine the auction outcome. 


{\bf Auction Outcome:} Next we discuss the auction outcome based on the different values of $\tilde{\bm b}$ and $C$. For ease of exposition, we define the comparison between $``\rm N \textquotedblright$ and any virtual bid $\tilde b_k$, $k\in{\cal K}_S\cup{\cal K}_A$, as
\begin{align}
\min\left\{{``\rm N \textquotedblright},\tilde b_k\right\}=\left\{\begin{array}{ll}
{\tilde b_k,} & {\rm if~}{\tilde b_k\in\left[0,C\right],}\\
{{``\rm N \textquotedblright},} & {{\rm if~}{\tilde b_k={``\rm N \textquotedblright}}.}
\end{array} \right.
\end{align} 
Furthermore, we define ${\cal I}_{\min}$ as the set of APOs with the minimum virtual bid:
\begin{align}
{\cal I}_{\min}\triangleq \left\{i\in{\cal K}_S\cup{\cal K}_A:i=\arg\min_{k\in{\cal K}_S\cup{\cal K}_A}{\tilde b_k} \right\}.
\end{align}

The auction has the following possible outcomes:

(a) When $\left|{\cal I}_{\min}\right|=1$,{\footnote{{Note that condition $\left|{\cal I}_{\min}\right|=1$ implies that $\min_{k\in{\cal K}_S\cup{\cal K}_A}{\tilde b_k} \in \left[0,C\right]$, as we have $K_S+K_A>1$ APOs.}}} then APO $i= \arg\min_{k\in{\cal K}_S\cup{\cal K}_A}{\tilde b_k}$ is the winner. LTE provider $0$ works in the \emph{cooperation mode} and accesses channel $i$. If $i\in{\cal K}_S$, LTE provider $0$ serves APO $i$'s users with a rate $r_{\rm pay}=\min\left\{C,\min_{k\ne i,k\in{\cal K}_S\cup{\cal K}_A}{\tilde b}_k\right\}-\left(1-\theta^{\rm LTE}\right)R_{\rm LTE}$; if $i\in{\cal K}_A$, LTE provider $0$ serves APO $i$'s users with a rate $r_{\rm pay}=\min\left\{C,\min_{k\ne i,k\in{\cal K}_S\cup{\cal K}_A}{\tilde b}_k\right\}$;

(b) When $\min_{k\in{\cal K}_S\cup{\cal K}_A}{\tilde b_k} \in \left[0,C\right]$ and $\left|{\cal I}_{\min}\right|>1$, LTE provider $0$ works in the \emph{cooperation mode}, randomly chooses an APO from set ${\cal I}_{\min}$ with the probability $\textstyle \frac{1}{\left|{\cal I}_{\min}\right|}$, and accesses that corresponding channel. If the chosen APO is from set ${\cal K}_S$, LTE provider $0$ serves the chosen APO's users with a rate $r_{\rm pay}=\min_{k\in{\cal K}_S\cup{\cal K}_A}{\tilde b_k}-\left(1-\theta^{\rm LTE}\right)R_{\rm LTE}$; if the chosen APO is from set ${\cal K}_A$, LTE provider $0$ serves the chosen APO's users with a rate $r_{\rm pay}=\min_{k\in{\cal K}_S\cup{\cal K}_A}{\tilde b_k}$;

(c) When $\min_{k\in{\cal K}_S\cup{\cal K}_A}{\tilde b_k} =``\rm N \textquotedblright$, LTE provider $0$ works in the \emph{competition mode}, randomly chooses channel $k\in{\cal K}_A$ with the probability $\textstyle\frac{1}{K_A}$, and shares the channel with APO $k$.

\subsection{LTE provider $0$'s Payoff}
We define LTE provider $0$'s payoff as the data rate that it can allocate to its own users. Based on the summary of auction outcome in the last subsection, we can compute LTE provider $0$'s payoff as a function of the virtual bid vector $\tilde {\bm b}$ and $C$ in (\ref{revision:equ:LTEpayoff}). 

Note that the expression of $\Pi^{\rm LTE}\left(\tilde {\bm b},C\right)$ in (\ref{revision:equ:LTEpayoff}) captures both the situation that an APO from set ${\cal K}_S$ becomes the winner and the situation that an APO from set ${\cal K}_A$ becomes the winner. For example, when $\left|{\cal I}_{\min}\right|=1$ and APO $k\in{\cal K}_S$ becomes the winner, LTE provider $0$'s payoff is computed as
\begin{align}
\nonumber
& \Pi^{\rm LTE}\left(\tilde {\bm b},C\right)\\
\nonumber
& =\theta^{\rm LTE} R_{\rm LTE}\!-\!\left(\min\!\left\{\!C,\min_{k\ne i,k\in{\cal K}_S\cup{\cal K}_A}{\tilde b}_k\!\right\}\!-\!\left(1\!-\!\theta^{\rm LTE}\right)\!R_{\rm LTE}\!\right)\\
& =R_{\rm LTE}-\min\left\{C,\min_{k\ne i,k\in{\cal K}_S\cup{\cal K}_A}{\tilde b}_k\right\}.\label{revision:equ:consistent}
\end{align}
From this example, we show that equation (\ref{revision:equ:LTEpayoff}) has already captured the situation that an APO from set ${\cal K}_S$ becomes the winner. 

\begin{figure*}
\begin{align}
\Pi^{\rm LTE}\left(\tilde {\bm b},C\right)= \left\{\begin{array}{ll}
{R_{\rm LTE}-\min\left\{C,\min_{k\ne i,k\in{\cal K}_S\cup{\cal K}_A}{\tilde b}_k\right\},} & {\rm if~}{\left|{\cal I}_{\min}\right|=1,}\\
{R_{\rm LTE}-\min_{k\in{\cal K}_S\cup{\cal K}_A}{\tilde b_k},} & {\rm if~}{\min_{k\in{\cal K}_S\cup{\cal K}_A}{\tilde b_k} \in \left[0,C\right]{~\rm and~}\left|{\cal I}_{\min}\right|>1,}\\
{{\delta^{\rm LTE}} R_{\rm LTE},} & {\rm if~}{\min_{k\in{\cal K}_S\cup{\cal K}_A}{\tilde b_k} =``\rm N \textquotedblright.}
\end{array} \right.\label{revision:equ:LTEpayoff}
\end{align}
\hrulefill
\begin{align}
\Pi^{\rm APO}_{k}\left(\left({\tilde b}_{k},{\bm {\tilde b}}_{S,-k}\right),{\bm {\tilde b}}_{A},C\right)= \left\{\begin{array}{ll}
{\frac{1}{\left|{\cal I}_{\min}\right|}\!\left(\min\left\{C,\min_{n\ne k,n\in{\cal K}_S\cup{\cal K}_A}{\tilde b}_n\right\}-\!\left(1\!-\!\theta^{\rm LTE}\right)\!R_{\rm LTE}\!\right)} & {\rm if~}{\tilde b_k\!= \min_{n\in{\cal K}_S\cup{\cal K}_A}{\tilde b_n}\!\in\!\left[0,C\right],}\\
{+\frac{\left|{\cal I}_{\min}\right|-1}{\left|{\cal I}_{\min}\right|}{\eta^{\rm APO}r_k},} & {}\\
{\eta^{\rm APO}r_k,} & {\rm otherwise.}
\end{array} \right.\label{revision:equ:APO0payoff}
\end{align}
\hrulefill
\begin{align}
\Pi^{\rm APO}_{k}\left({\tilde {\bm b}}_S,\left({\tilde b}_k,{\tilde {\bm b}}_{A,-k}\right),C\right)= \left\{\begin{array}{ll}
{r_k,} & {\rm if~}{\tilde b_k>\min_{n\in{\cal K}_S\cup{\cal K}_A}{\tilde b_n},}\\
{\frac{1}{\left|{\cal I}_{\min}\right|}{\min\left\{C,\min_{n\ne k,n\in{\cal K}_S\cup{\cal K}_A}{\tilde b}_n\right\}}+\frac{\left|{\cal I}_{\min}\right|-1}{\left|{\cal I}_{\min}\right|}r_k,} & {\rm if~}{\tilde b_k\!=\! \min_{n\in{\cal K}_S\cup{\cal K}_A}{\tilde b_n}\!\in\left[0,C\right],}\\
{\frac{K_A-1+{\eta^{\rm APO}}}{K_A}r_k,} & {\rm if~}{ \min_{n\in{\cal K}_S\cup{\cal K}_A}{\tilde b_n}=``{\rm N} \textquotedblright.}
\end{array} \right.\label{revision:equ:APOkpayoff}
\end{align}
\hrulefill
\end{figure*}


\subsection{APOs' Payoffs}
We define the payoff of APO $k\in{\cal K}_S\cup{\cal K}_A$ as the data rate that its users receive: when APO $k$ cooperates with LTE provider $0$, these users are served by LTE provider $0$; otherwise, they are served by APO $k$. 

Based on the auction design, we summarize the expected payoff of APO $k\in{\cal K}_S$ as (\ref{revision:equ:APO0payoff}). Variable ${\tilde b}_k$ is APO $k$'s virtual bid. We use ${\bm {\tilde b}}_{S,-k}\triangleq \left(\tilde b_n,n\ne k,n\in{\cal K}_S\right)$ to represent the virtual bids of the remaining APOs in set ${\cal K}_S$ (other than APO $k$), and use ${\bm {\tilde b}}_{A}\triangleq \left(\tilde b_n,n\in{\cal K}_A\right)$ to represent the virtual bids of the APOs in set ${\cal K}_A$. 

Equation (\ref{revision:equ:APO0payoff}) summarizes two possible situations: (a) when $\tilde b_k\!=\! \min_{n\in{\cal K}_S\cup{\cal K}_A}{\tilde b_n}\!\in\left[0,C\right]$, LTE provider $0$ cooperates with APO $k$ and one of the other APOs with the minimum virtual bid with the probability ${\frac{1}{\left|{\cal I}_{\min}\right|}}$ and the probability $1-{\frac{1}{\left|{\cal I}_{\min}\right|}}$ ($1\le {\left|{\cal I}_{\min}\right|}\le K_S+K_A$), respectively. In the former case, APO $k$'s users are onloaded to LTE provider $0$ and receive rate $\min\left\{C,\min_{n\ne k,n\in{\cal K}_S\cup{\cal K}_A}{\tilde b}_n\right\}-\!\left(1\!-\!\theta^{\rm LTE}\right)\!R_{\rm LTE}$. In the latter case, APO $k$'s users are served by APO $k$ and receive rate $\eta^{\rm APO}r_k$; (b) otherwise, LTE provider $0$ does not access channel $k$, and APO $k$'s users are served by APO $k$ and receive rate $\eta^{\rm APO}r_k$.

Based on the auction design, we summarize the expected payoff of APO $k\in{\cal K}_A$ as (\ref{revision:equ:APOkpayoff}). We use ${\bm {\tilde b}}_{S}\triangleq \left(\tilde b_n,n\in{\cal K}_S\right)$ to represent the virtual bids of the APOs in set ${\cal K}_S$. Variable ${\tilde b}_k$ is APO $k$'s virtual bid, and we use ${\bm {\tilde b}}_{A,-k}\triangleq \left(\tilde b_n,n\ne k,n\in{\cal K}_A\right)$ to represent the virtual bids of the remaining APOs in set ${\cal K}_A$ (other than APO $k$). 

Equation (\ref{revision:equ:APOkpayoff}) summarizes three possible situations: (a) when $\tilde b_k>\min_{n\in{\cal K}_S\cup{\cal K}_A}{\tilde b_n}$, LTE provider $0$ accesses a channel from one of the APOs (other than APO $k$) with the minimum virtual bid. As a result, APO $k$ can exclusively occupy its own channel $k$, and serve its users with rate $r_k$; (b) when ${\tilde b_k\!=\! \min_{n\in{\cal K}_S\cup{\cal K}_A}{\tilde b_n}\!\in\left[0,C\right]}$, LTE provider $0$ cooperates with APO $k$ and one of the other APOs with the minimum virtual bid with the probability ${\frac{1}{\left|{\cal I}_{\min}\right|}}$ and the probability $1-{\frac{1}{\left|{\cal I}_{\min}\right|}}$ ($1\le {\left|{\cal I}_{\min}\right|}\le K_S+K_A$), respectively. Hence, APO $k$'s users receive rate $\min\left\{C,\min_{n\ne k,n\in{\cal K}_S\cup{\cal K}_A}{\tilde b}_n\right\}$ and rate $r_k$ with the probability ${\frac{1}{\left|{\cal I}_{\min}\right|}}$ and the probability $1-{\frac{1}{\left|{\cal I}_{\min}\right|}}$, respectively; (c) when $ \min_{n\in{\cal K}_S\cup{\cal K}_A}{\tilde b_n}=``{\rm N} \textquotedblright$, there is no winner in the auction, and LTE provider $0$ randomly chooses a channel from set ${\cal K}_A$ to coexist with the corresponding APO. With the probability ${\frac{1}{K_A}}$, APO $k$ coexists with LTE provider $0$ and has a data rate of ${{\eta^{\rm APO}}}r_k$; with the probability $1-\frac{1}{K_A}$, APO $k$ has a data rate of $r_k$ by exclusively occupying channel $k$. In this case, the expected data rate that APO $k$'s users receive is $\frac{K_A-1+{\eta^{\rm APO}}}{K_A}r_k$. 

For the parameters and distributions that characterize the APOs, $r_k$ is the private information of APO $k\in{\cal K}_S\cup{\cal K}_A$, and the remaining information, \emph{i.e.}, $K_S,K_A,r_{\min},r_{\max},f\left(\cdot\right),F\left(\cdot\right),$ and ${\eta^{\rm APO}}$, is publicly known to all the APOs and LTE provider $0$. For the parameters that characterize LTE provider $0$, $\theta^{\rm LTE}$ will be announced by LTE provider $0$ to all APOs, as $\theta^{\rm LTE}$ is included in the auction rule ($\theta^{\rm LTE}$ affects the feasibilities of the bids of APOs from set ${\cal K}_S$). For the remaining parameters that characterize LTE provider $0$, \emph{i.e.}, $R_{\rm LTE}$ and ${\delta^{\rm LTE}}$, as we will see in later sections, they will not affect the APOs' strategies. Therefore, they can be either made known or kept unknown to the APOs.

{{Next we analyze the auction by backward induction. In Section \ref{revision:sec:stageII:APO}, we analyze the APOs' equilibrium strategies in Stage II, given LTE provider $0$'s reserve rate $C$ in Stage I. In Section \ref{revision:sec:stageI:LTE}, we analyze LTE provider $0$'s equilibrium reserve rate $C^*$ in Stage I by anticipating the APOs' equilibrium strategies in Stage II. In Section \ref{revision:sec:simulation}, we provide the simulation results for the comparison between our auction-based scheme and the benchmark scheme.}}

\subsection{{{Stage II: APOs' Equilibrium Bidding Strategies}}}\label{revision:sec:stageII:APO}

In this section, we assume that the reserve rate $C$ of LTE provider $0$ in Stage I is given, and analyze the APOs' equilibrium strategies in Stage II. Note that studying the APOs' strategies in terms of vector $\bm b$ and vector $\tilde{\bm b}$ is equivalent. For ease of exposition, we characterize the APOs' equilibrium strategies in terms of $\tilde{\bm b}$.

\subsubsection{{Definition of Symmetric Bayesian Nash Equilibrium}}
We focus on the symmetric Bayesian Nash equilibrium (SBNE), which is defined as follows. 
\begin{definition}
Under a reserve rate $C$, bidding strategy functions $\tilde b_S^*\left(r,C\right)$ and $\tilde b_A^*\left(r,C\right)$, $r\in\left[r_{\min},r_{\max}\right]$, constitute a symmetric Bayesian Nash equilibrium if and only if (\ref{revision:equ:defineEQ:S}) holds for all $d_k\in \left[\left(1-\theta^{\rm LTE}\right)R_{\rm LTE},C\right]\cup\left\{{``{\rm N}\textquotedblright}\right\}$, all $r_k\in  \left[r_{\min},r_{\max}\right]$, and all $k\in{\cal K}_S$; and (\ref{revision:equ:defineEQ:A}) holds for all $d_k\in \left[0,C\right]\cup\left\{{``{\rm N}\textquotedblright}\right\}$, all $r_k\in  \left[r_{\min},r_{\max}\right]$, and all $k\in{\cal K}_A$.
\end{definition}

In (\ref{revision:equ:defineEQ:S}), we use $\tilde b_S^*\left(r_k,C\right)$ to represent the equilibrium bidding strategy of APO $k\in{\cal K}_S$, and use vector function ${ {{\bm{\tilde b}}_{S,-k}^*}}\left({\bm r}_{S,-k},C\right)\triangleq \left({\tilde b}_{S}^*\left(r_n,C\right),n\ne k,n\in{\cal K}_S\right)$ to represent the equilibrium bidding strategies of the remaining APOs in set ${\cal K}_S$ (other than APO $k$). We also use vector function ${ {{\bm{\tilde b}}_{A}^*}}\left({\bm r}_{A},C\right)\triangleq \left({\tilde b}_{A}^*\left(r_n,C\right),n\in{\cal K}_A\right)$ to represent the equilibrium bidding strategies of the APOs in set ${\cal K}_A$.

\begin{figure*}
\begin{align}
\nonumber
& \mathbb{E}_{{\bm r}_{S,-k},{\bm r}_{A}}\left\{\Pi^{\rm APO}_k\left(\left(\tilde b_S^*\left(r_k,C\right),{ {{\bm {\tilde b}}_{S,-k}^*}\left({\bm r}_{S,-k},C\right)}\right),{ {{\bm {\tilde b}}_{A}^*}\left({\bm r}_{A},C\right)},C\right)|r_k\right\}\\
&\ge \mathbb{E}_{{\bm r}_{S,-k},{\bm r}_{A}}\left\{\Pi^{\rm APO}_k\left(\left(d_k,{ {{\bm{\tilde b}}_{S,-k}^*}\left({\bm r}_{S,-k},C\right)}\right),{ {{\bm{\tilde b}}_{A}^*}\left({\bm r}_{A},C\right)},C\right)|r_k\right\},\label{revision:equ:defineEQ:S}
\end{align}
\hrulefill
\begin{align}
\nonumber
& \mathbb{E}_{{\bm r}_{S},{\bm r}_{A,-k}}\left\{\Pi^{\rm APO}_k\left({ {{\bm{\tilde b}}_{S}^*}\left({\bm r}_{S},C\right)},\left(\tilde b_A^*\left(r_k,C\right),{ {{\bm{\tilde b}}_{A,-k}^*}\left({\bm r}_{A,-k},C\right)}\right),C\right)|r_k\right\}\\
&\ge \mathbb{E}_{{\bm r}_{S},{\bm r}_{A,-k}}\left\{\Pi^{\rm APO}_k\left({ {{\bm{\tilde b}}_{S}^*}\left({\bm r}_{S},C\right)},\left(d_k,{ {{\bm{\tilde b}}_{A,-k}^*}\left({\bm r}_{A,-k},C\right)}\right),C\right)|r_k\right\},\label{revision:equ:defineEQ:A}
\end{align}
\hrulefill
\end{figure*}

The left hand side of inequality (\ref{revision:equ:defineEQ:S}) stands for the expected payoff of APO $k\in{\cal K}_S$ when it bids $\tilde b_S^*\left(r_k,C\right)$. The expectation is taken with respect to ${\bm r}_{S,-k}\triangleq \left(r_n,\forall n\ne k,n\in{\cal K}_S\right)$ and ${\bm r}_A\triangleq \left(r_n,n\in{\cal K}_A\right)$, which denote the types of the remaining APOs in set ${\cal K}_S$ (other than APO $k$) and all APOs in set ${\cal K}_A$, respectively. The vectors ${\bm r}_{S,-k}$ and ${\bm r}_A$ are unknown to APO $k$. Inequality (\ref{revision:equ:defineEQ:S}) implies that APO $k\in{\cal K}_S$ cannot improve its expected payoff by unilaterally changing its bid from $\tilde b_S^*\left(r_k,C\right)$ to any $d_k\in \left[\left(1-\theta^{\rm LTE}\right)R_{\rm LTE},C\right]\cup\left\{{``{\rm N}\textquotedblright}\right\}$.

In (\ref{revision:equ:defineEQ:A}), we use vector function ${ {{\bm{\tilde b}}_{S}^*}}\left({\bm r}_{S},C\right)\triangleq \left({\tilde b}_{S}^*\left(r_n,C\right),n\in{\cal K}_S\right)$ to represent the equilibrium bidding strategies of the APOs in set ${\cal K}_S$. Moreover, we use $\tilde b_A^*\left(r_k,C\right)$ to represent the equilibrium bidding strategy of APO $k\in{\cal K}_A$, and use vector function ${ {{\bm{\tilde b}}_{A,-k}^*}}\left({\bm r}_{A,-k},C\right)\triangleq \left({\tilde b}_{A}^*\left(r_n,C\right),n\ne k,n\in{\cal K}_A\right)$ to represent the equilibrium bidding strategies of the remaining APOs in set ${\cal K}_A$ (other than APO $k$). 

The left hand side of inequality (\ref{revision:equ:defineEQ:A}) stands for the expected payoff of APO $k\in{\cal K}_A$ when it bids $\tilde b_A^*\left(r_k,C\right)$. The expectation is taken with respect to ${\bm r}_{S}\triangleq \left(r_n,\forall n\in{\cal K}_S\right)$ and ${\bm r}_{A,-k}\triangleq \left(r_n,n\ne k,n\in{\cal K}_A\right)$, which denote the types of all APOs in set ${\cal K}_S$ and the remaining APOs in set ${\cal K}_A$ (other than APO $k$), respectively. The vectors ${\bm r}_{S}$ and ${\bm r}_{A,-k}$ are unknown to APO $k$. Inequality (\ref{revision:equ:defineEQ:A}) implies that APO $k\in{\cal K}_A$ cannot improve its expected payoff by unilaterally changing its bid from $\tilde b_A^*\left(r_k,C\right)$ to any $d_k\in \left[0,C\right]\cup\left\{{``{\rm N}\textquotedblright}\right\}$.



\subsubsection{Equilibrium Strategies $\tilde b_S^*\left(r,C\right)$ and $\tilde b_A^*\left(r,C\right)$} Next we characterize the APOs' equilibrium strategy functions $\tilde b_S^*\left(r,C\right)$ and $\tilde b_A^*\left(r,C\right)$. First, we introduce the following lemmas.
\begin{lemma}\label{revision:lemma:rX} 
The following equation admits at least one solution $r$ in $\left(r_{\min},r_{\max}\right)$:
\begin{multline}
\sum_{n=1}^{K_A-1}{\binom{K_A-1}{n} F^n\left(r\right)\left(1-F\left(r\right)\right)^{K_A-1-n}\frac{C-r}{n+1}}+\\
\left(1-F\left(r\right)\right)^{K_A-1}\left(C-\frac{K_A-1+{\eta^{\rm APO}}}{K_A}r\right)=0.\label{revision:equ:rX}
\end{multline}
We denote the solutions $r$ in $\left(r_{\min},r_{\max}\right)$ as $r_l^x\left(C\right)$ with $1\le l\le L$, where $L$ is the number of solutions.
\end{lemma}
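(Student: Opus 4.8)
The plan is to establish existence of a root by the intermediate value theorem, following the argument used for Lemma~\ref{lemma:rX} verbatim with $K$ replaced by $K_A$. First I would let $H\left(r\right)$ denote the left-hand side of equation~(\ref{revision:equ:rX}),
\begin{align}
\nonumber
H\left(r\right)\triangleq & \left(1-F\left(r\right)\right)^{K_A-1}\left(C-\frac{K_A-1+{\eta^{\rm APO}}}{K_A}r\right)\\
& +\sum_{n=1}^{K_A-1}{\binom{K_A-1}{n} F^n\left(r\right)\left(1-F\left(r\right)\right)^{K_A-1-n}\frac{C-r}{n+1}}.
\end{align}
Since $F\left(\cdot\right)$ is the CDF of a continuous random variable, $F$ is continuous on $\left[r_{\min},r_{\max}\right]$; as $H$ is a finite sum of products of continuous functions, it is continuous on that interval as well.

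Next I would evaluate $H$ at the two endpoints, where the binomial sum collapses. At $r=r_{\min}$ we have $F\left(r_{\min}\right)=0$, so every term of the sum carries a factor $F^n\left(r_{\min}\right)=0$ for $n\ge 1$ and vanishes, leaving
\begin{align}
H\left(r_{\min}\right)=C-\frac{K_A-1+{\eta^{\rm APO}}}{K_A}r_{\min}.
\end{align}
At $r=r_{\max}$ we have $F\left(r_{\max}\right)=1$, so $1-F\left(r_{\max}\right)=0$; every term with exponent $K_A-1-n>0$ vanishes and only the $n=K_A-1$ term of the sum survives, giving
\begin{align}
H\left(r_{\max}\right)=\frac{C-r_{\max}}{K_A}.
\end{align}

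In the regime of $C$ relevant to this equation (the counterpart of the interval used in Lemma~\ref{lemma:rX}, namely $C>\frac{K_A-1+{\eta^{\rm APO}}}{K_A}r_{\min}$ and $C<r_{\max}$), these evaluations yield $H\left(r_{\min}\right)>0>H\left(r_{\max}\right)$. The intermediate value theorem then furnishes at least one $r\in\left(r_{\min},r_{\max}\right)$ with $H\left(r\right)=0$, which is the desired solution. There is no genuine analytic obstacle here, since the argument is structurally identical to the single-LTE case; the only steps requiring care are verifying the collapse of the binomial sum at each endpoint and confirming that the two sign conditions on $C$ indeed hold in the relevant range, after which the existence claim is immediate.
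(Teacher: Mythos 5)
Your proof is correct and follows essentially the same route as the paper's own argument for the single-LTE counterpart (Lemma~\ref{lemma:rX}): define $H(r)$ as the left-hand side, note continuity, compute $H\left(r_{\min}\right)=C-\frac{K_A-1+\eta^{\rm APO}}{K_A}r_{\min}>0$ and $H\left(r_{\max}\right)=\frac{C-r_{\max}}{K_A}<0$ in the relevant regime $C\in\left(\frac{K_A-1+\eta^{\rm APO}}{K_A}r_{\min},r_{\min}\right)$, and invoke the intermediate value theorem. Your explicit verification of how the binomial sum collapses at the endpoints is a detail the paper leaves implicit, but the substance is identical.
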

\begin{lemma}\label{revision:lemma:rT} 
The following equation admits at least one solution $r$ in $\left(C,r_{\max}\right)$:
\begin{multline}
\!\!\sum_{n=1}^{K_A-1}\!{\binom{K_A-1}{n} \!\left(F\left(r\right)\!-\!F\left(C\right)\right)^n\left(1\!-\!F\left(r\right)\right)^{K_A-1-n}\frac{C-r}{n+1}} \\
+\left(1-F\left(r\right)\right)^{K_A-1}\left(C-\frac{K_A-1+{\eta^{\rm APO}}}{K_A}r\right)=0.\label{revision:equ:rT}
\end{multline}
We denote the solutions $r$ in $\left(C,r_{\max}\right)$ as $r_m^t\left(C\right)$ with $1\le m\le M$, where $M$ is the number of solutions.
\end{lemma}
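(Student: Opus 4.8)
The plan is to follow verbatim the strategy used in the proof of Lemma \ref{lemma:rT}, because equation (\ref{revision:equ:rT}) is the exact analogue of (\ref{equ:rT}) with the single replacement $K\mapsto K_A$. I denote by $H(r)$ the left-hand side of (\ref{revision:equ:rT}) and establish a root via the intermediate value theorem on $(C,r_{\max})$, working in the regime $C\in[r_{\min},r_{\max})$ that makes this interval nonempty (the analogue of the Section III-B assumption used for $r_T$).

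First I would argue that $H$ is continuous on $(-\infty,\infty)$: since $F(\cdot)$ is the CDF of a continuous random variable it is continuous (with $F\equiv 0$ on $(-\infty,r_{\min}]$ and $F\equiv 1$ on $[r_{\max},\infty)$), and $H$ is a finite sum of products of continuous functions of $F(r)$ with the affine terms $C-r$ and $C-\frac{K_A-1+\eta^{\rm APO}}{K_A}r$. Next I would evaluate the two endpoints. At $r=C$ every summand carries the factor $(F(C)-F(C))^n=0$ for $n\ge 1$, so the sum vanishes and only the last term survives, giving $H(C)=(1-F(C))^{K_A-1}\frac{1-\eta^{\rm APO}}{K_A}C$. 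At $r=r_{\max}$ we have $F(r_{\max})=1$, so the factor $(1-F(r))^{K_A-1-n}$ kills every summand except $n=K_A-1$ and also annihilates the final term, leaving $H(r_{\max})=(1-F(C))^{K_A-1}\frac{C-r_{\max}}{K_A}$.

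To convert these into sign statements I must show $F(C)<1$ strictly; this follows, exactly as in Lemma \ref{lemma:rT}, by contradiction: if $F(C)=1$ then choosing $\xi>0$ with $C+\xi\le r_{\max}$ forces $\int_C^{C+\xi}f(r)\,dr=0$, contradicting $f>0$ on $[r_{\min},r_{\max}]$. With $F(C)<1$, $\eta^{\rm APO}\in(0,1)$, $C>0$, and $C<r_{\max}$, I conclude $H(C)>0>H(r_{\max})$. Continuity of $H$ together with this sign change then yields, by the intermediate value theorem, at least one $r\in(C,r_{\max})$ with $H(r)=0$, which is the claimed solution.

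There is essentially no hard step here: the only points requiring care are the strict inequality $F(C)<1$ and, relatedly, confirming $C>0$ (which holds whenever $r_{\min}>0$, so that $H(C)$ is strictly positive rather than merely nonnegative), together with the clean cancellation of the binomial sum at the two endpoints. Given the one-to-one correspondence with Lemma \ref{lemma:rT}, the cleanest writeup simply records the two endpoint values $H(C)$ and $H(r_{\max})$ and invokes the identical continuity/IVT argument rather than reproducing it in full. The companion claim of Lemma \ref{revision:lemma:rX} is handled the same way, except that the summand factor is $F^n(r)$ rather than $(F(r)-F(C))^n$, so the endpoints are taken at $r=r_{\min}$ and $r=r_{\max}$ and one instead checks $H(r_{\min})=C-\frac{K_A-1+\eta^{\rm APO}}{K_A}r_{\min}>0$ (valid on the interval $C\in(\frac{K_A-1+\eta^{\rm APO}}{K_A}r_{\min},r_{\min})$) and $H(r_{\max})=\frac{C-r_{\max}}{K_A}<0$, mirroring the proof of Lemma \ref{lemma:rX}.
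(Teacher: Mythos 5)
Your proposal is correct and takes essentially the same approach as the paper: the paper omits an explicit proof of this supplementary lemma precisely because it is the proof of Lemma \ref{lemma:rT} with $K$ replaced by $K_A$ --- define $H$, establish continuity, show $H\left(C\right)>0>H\left(r_{\max}\right)$ via $F\left(C\right)<1$ (proved by contradiction from $f>0$), and invoke the intermediate value theorem on $\left(C,r_{\max}\right)$. Your additional care about the strictness $C>0$ and the endpoint cancellations matches the paper's computation exactly, so nothing further is needed.
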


Based on the definitions of $r_l^x\left(C\right)$ and $r_m^t\left(C\right)$, we introduce the APOs' equilibrium strategies. For any reserve rate $C\in\left[0,\infty\right)$, functions (\ref{revision:APOequilibrium:S}) and (\ref{revision:APOequilibrium:A}) constitute an SBNE, where $r_X\left(C\right)$ belongs to the set of $\left\{r_l^x\left(C\right),1\le l \le L\right\}$ and $r_T\left(C\right)$ belongs to the set of $\left\{r_m^t\left(C\right),1\le m \le M\right\}$.

\begin{figure*}
\begin{align}
& {\tilde b_S^*}\left(r_k,C\right)=\left\{\begin{array}{ll}
{\eta^{\rm APO}r_k+\left(1-\theta^{\rm LTE}\right)R_{\rm LTE},} & {{\rm if~}{ r_k\in\left[r_{\min},\frac{C-\left(1-\theta^{\rm LTE}\right)R_{\rm LTE}}{\eta^{\rm APO}}\right]},} \\ 
``{\rm N}\textquotedblright, & {{\rm if~} r_k\in\left(\frac{C-\left(1-\theta^{\rm LTE}\right)R_{\rm LTE}}{\eta^{\rm APO}},r_{\max}\right],}
\end{array} \right.\label{revision:APOequilibrium:S}
\end{align}
\hrulefill
\begin{align}
& {\tilde b_A^*}\left(r_k,C\right)=\left\{\begin{array}{ll}
{``{\rm N} \textquotedblright,} & {{\rm if~}{C\in\left[0,\frac{K_A-1+\eta^{\rm APO}}{K_A}r_{\min}\right]},}\\
{C,} & {{\rm if~}{C\in\left(\frac{K_A-1+\eta^{\rm APO}}{K_A}r_{\min},r_{\min}\right)}{\rm~and~}{ r_k\in\left[r_{\min},r_X\left(C\right)\right]},}\\
{``{\rm N} \textquotedblright,} & {{\rm if~}{C\in\left(\frac{K_A-1+\eta^{\rm APO}}{K_A}r_{\min},r_{\min}\right)}{\rm~and~}{r_k\in\left(r_X\left(C\right),r_{\max}\right]},}\\
{r_k,} & {\rm if~}{C\in\left[r_{\min},r_{\max}\right)}{\rm~and~}{r_k\in\left[r_{\min},C\right],}\\
C, & {{\rm if~}{C\in\left[r_{\min},r_{\max}\right)}{\rm~and~}{ r_k\in\left(C,r_T\left(C\right)\right]},}\\
``{\rm N} \textquotedblright, & {{\rm if~} {C\in\left[r_{\min},r_{\max}\right)}{\rm~and~}{r_k\in\left(r_T\left(C\right),r_{\max}\right]},}\\
{r_k,} & {{\rm if~}{C\in\left[r_{\max},\infty\right),}}
\end{array} \right.\label{revision:APOequilibrium:A}
\end{align}
\hrulefill
\begin{align}
& {{\tilde b_A^{\prime}}}\left(r_k,C\right)=\left\{\begin{array}{ll}
{``{\rm N} \textquotedblright,} & {{\rm if~}{C\in\left[0,\frac{K_A-1+\eta^{\rm APO}}{K_A}r_{\min}\right]},}\\
{C,} & {{\rm if~}{C\in\left(\frac{K_A-1+\eta^{\rm APO}}{K_A}r_{\min},r_{\min}\right)}{\rm~and~}{ r_k\in\left[r_{\min},r_X\left(C\right)\right]},}\\
{``{\rm N} \textquotedblright,} & {{\rm if~}{C\in\left(\frac{K_A-1+\eta^{\rm APO}}{K_A}r_{\min},r_{\min}\right)}{\rm~and~}{r_k\in\left(r_X\left(C\right),r_{\max}\right]},}\\
{r_k,} & {\rm if~}{C\in\left[r_{\min},r_{\max}\right)}{\rm~and~}{r_k\in\left[r_{\min},C\right],}\\
C, & {{\rm if~}{C\in\left[r_{\min},r_{\max}\right)}{\rm~and~}{ r_k\in\left(C,r_T\left(C\right)\right]},}\\
``{\rm N} \textquotedblright, & {{\rm if~} {C\in\left[r_{\min},r_{\max}\right)}{\rm~and~}{r_k\in\left(r_T\left(C\right),r_{\max}\right]},}\\
{r_k,} & {{\rm if~}{C\in\left[r_{\max},\infty\right){\rm~and~}r_k\in\left[r_{\min},r_{\max}\right),}}\\
{{\rm any~value~in}\left[r_{\max},C\right]\!\cup\!\left\{\!``{\rm N}\textquotedblright\!\right\},} & {{\rm if~}{C\in\left[r_{\max},\infty\right){\rm~and~}r_k=r_{\max},}}
\end{array} \right.\label{revision:APOequilibrium:A:prime}
\end{align}
\hrulefill
\end{figure*}

From (\ref{revision:APOequilibrium:S}) and (\ref{revision:APOequilibrium:A}), we observe that only the APOs from set ${\cal K}_A$ with some types $r_k$ may bid the reserve rate $C$. That is to say, the allocative externalities only exist among the APOs from set ${\cal K}_A$. This is because LTE provider $0$ only accesses the channels used by the APOs from set ${\cal K}_A$ under the competition mode. In other words, LTE provider $0$ will only interfere with the APOs from set ${\cal K}_A$ under the competition mode, which pushes the APOs from set ${\cal K}_A$ with some types to bid the reserve rate.

Note that there exist other equilibrium strategy functions that deviate from (\ref{revision:APOequilibrium:S}) or (\ref{revision:APOequilibrium:A}) at some special regions and also constitute an SBNE. For example, if we revise $ {{\tilde b_A^{*}}}\left(r_k,C\right)$ as function ${{\tilde b_A^{\prime}}}\left(r_k,C\right)$ in (\ref{revision:APOequilibrium:A:prime}), then functions ${{\tilde b_S^{*}}}\left(r_k,C\right)$ and ${{\tilde b_A^{\prime}}}\left(r_k,C\right)$ also constitute an SBNE. However, we can easily show that functions ${{\tilde b_S^{*}}}\left(r_k,C\right)$ and ${{\tilde b_A^{\prime}}}\left(r_k,C\right)$ generate the same auction outcome (\emph{e.g.}, APOs' expected payoffs) as functions ${{\tilde b_S^{*}}}\left(r_k,C\right)$ and ${{\tilde b_A^{*}}}\left(r_k,C\right)$. Therefore, we can focus on functions ${{\tilde b_S^{*}}}\left(r_k,C\right)$ and ${{\tilde b_A^{*}}}\left(r_k,C\right)$, and analyze LTE provider $0$'s equilibrium strategy accordingly.

\subsection{{{Stage I: LTE Provider $0$'s Reserve Rate}}}\label{revision:sec:stageI:LTE}

In this section, we analyze LTE provider $0$'s optimal reserve rate by anticipating APOs' equilibrium strategies in Stage II. We first make the following assumption on the CDF of an APO's type, which implies that $r_X\left(C\right)$ and $r_T\left(C\right)$ in (\ref{revision:APOequilibrium:A}) are unique.
\begin{assumption}\label{revision:assumption:unique}
Under the cumulative distribution function $F\left(\cdot\right)$, (a) equation (\ref{revision:equ:rX}) has a unique solution in $\left(r_{\min},r_{\max}\right)$, {i.e.}, $L=1$, and (b) equation (\ref{revision:equ:rT}) has a unique solution in $\left(C,r_{\max}\right)$, {i.e.}, $M=1$.
\end{assumption}

We define LTE provider $0$'s expected payoff as
\begin{align}
{\bar \Pi}^{\rm LTE}\left(C\right) \triangleq \mathbb{E}_{{\bm r}_S,{\bm r}_A}\left\{{\Pi^{\rm LTE}\!\left(\left({\bm{\tilde b}}_S^*\left({{\bm r}_S},C\right),{\bm {\tilde b}}_A^*\left({{\bm r}_A},C\right)\right),C\right)}\!\right\}.\label{revision:equ:expectedpayoff}
\end{align}
Recall that ${\bm r}_S$ and ${\bm r}_A$ are the types of the APOs from set ${\cal K}_S$ and set ${\cal K}_A$, respectively. Moreover, ${ {{\bm{\tilde b}}_{S}^*}}\left({\bm r}_{S},C\right)=\left({\tilde b}_{S}^*\left(r_n,C\right),n\in{\cal K}_S\right)$ and ${ {{\bm{\tilde b}}_{A}^*}}\left({\bm r}_{A},C\right)=\left({\tilde b}_{A}^*\left(r_n,C\right),n\in{\cal K}_A\right)$ are the equilibrium strategies of the APOs from set ${\cal K}_S$ and set ${\cal K}_A$, respectively. Specifically, ${ {{\bm{\tilde b}}_{S}^*}}\left({\bm r}_{S},C\right)$ is determined based on (\ref{revision:APOequilibrium:S}), and ${ {{\bm{\tilde b}}_{A}^*}}\left({\bm r}_{A},C\right)$ is determined based on (\ref{revision:APOequilibrium:A}). 

The LTE provider determines the optimal reserve rate by solving
\begin{align}
& \max_{C\in\left[0,\infty\right)} {~\bar \Pi}^{\rm LTE}\left(C\right)\label{revision:equ:optABC:a}\\
&{\rm s.t.}\! \max\!\left\{\tilde b_{S,\max}\left(C\right)\!-\!\left(1\!-\!\theta^{\rm LTE} \right) R_{\rm LTE},\tilde b_{A,\max}\left(C\right)\right\}\!\le\! R_{\rm LTE},\label{revision:equ:optABC:b}
\end{align}
where we define 
\begin{align}
\nonumber
& \tilde b_{S,\max}\left(C\right) \triangleq \\
& \max \!\left\{ \tilde b_S^*\left(r_k,C\right)\!\in\!\left[\left(1\!-\!\theta^{\rm LTE}\right)R_{\rm LTE},\!C\right]\!:{r_k\!\in\left[r_{\min},r_{\max}\right]} \right\},\\
& \tilde b_{A,\max}\left(C\right) \triangleq \max \left\{ \tilde b_A^*\left(r_k,C\right)\in\left[0,C\right]:{r_k\in\left[r_{\min},r_{\max}\right]} \right\}.
\end{align}
Here, $\tilde b_{S,\max}\left(C\right)$ is the maximum possible virtual bid (except $``{\rm N}\textquotedblright$) from the APOs from set ${\cal K}_S$ at the SBNE under $C$, and $\tilde b_{A,\max}\left(C\right)$ is the maximum possible virtual bid (except $``{\rm N}\textquotedblright$) from the APOs from set ${\cal K}_A$ at the SBNE under $C$. Constraints (\ref{revision:equ:optABC:b}) ensures that LTE provider $0$ has enough capacity to satisfy the bid from the winning APO. 

\begin{figure}[h]
  \centering
  \includegraphics[scale=0.35]{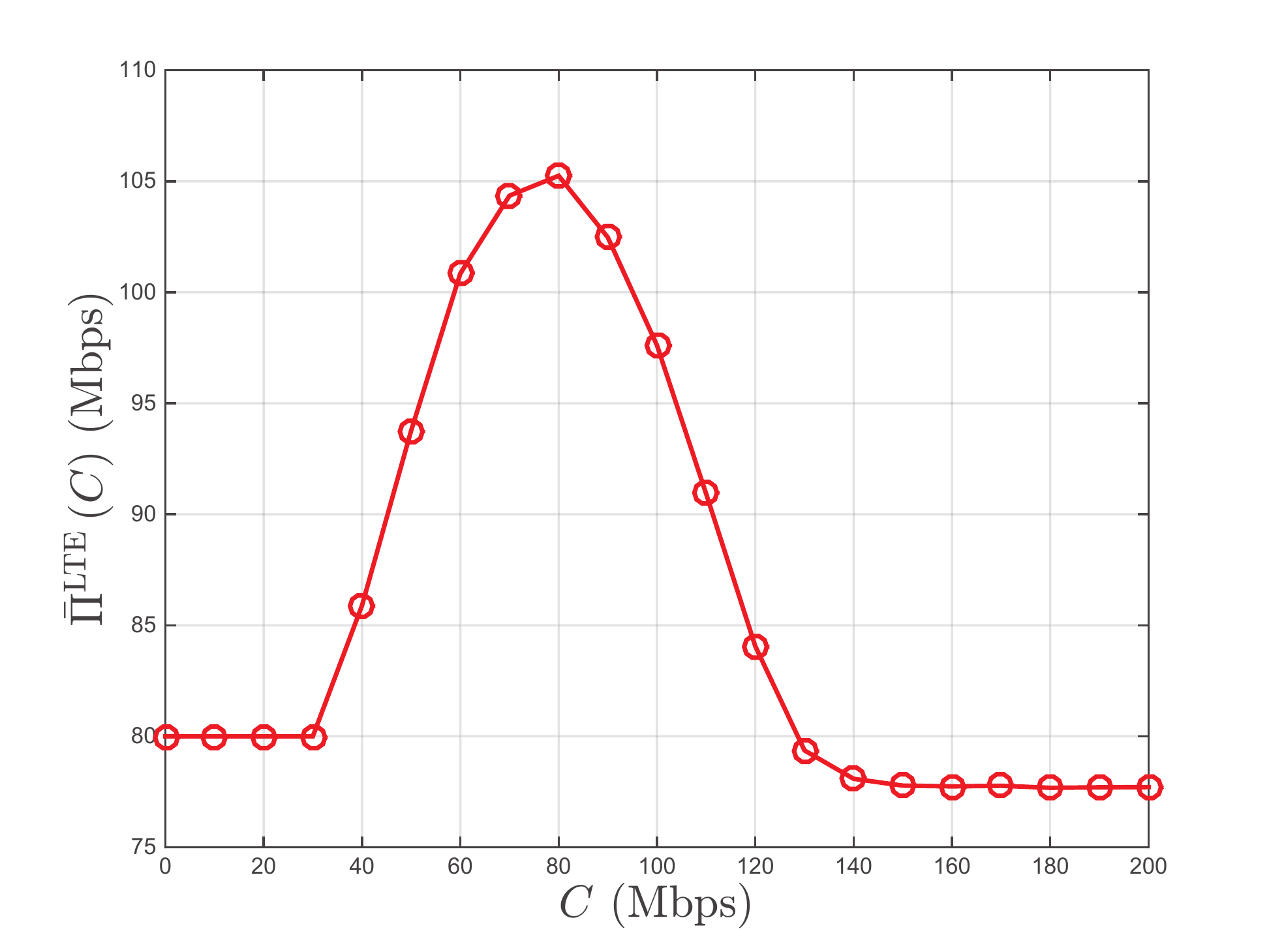}\\
  \caption{Example of Function ${\bar \Pi}^{\rm LTE}\left(C\right)$ (Multi-LTE Case).}
  \label{fig:simu:multiLTE}
  \vspace{-0.2cm}
\end{figure} 

We can solve problem (\ref{revision:equ:optABC:a})-(\ref{revision:equ:optABC:b}) through numerical methods. 
In Fig. \ref{fig:simu:multiLTE}, we illustrate an example of ${\bar \Pi}^{\rm LTE}\left(C\right)$, where we choose $K_S=4$, $K_A=2$, ${\delta^{\rm LTE}}=0.4$, ${\eta^{\rm APO}}=0.3$, $\theta^{\rm LTE}=0.5$, and $R_{\rm LTE}=200{\rm~Mbps}$, and obtain the distribution of $r_k$ by truncating the normal distribution ${\cal N}\left(125~{\rm Mbps},2500~{{\rm Mbps}^2}\right)$ to interval $\left[50~{\rm Mbps},200~{\rm Mbps}\right]$. We observe that ${\bar \Pi}^{\rm LTE}\left(C\right)$ is a unimodal function, and we can apply the Golden Section method to determine the optimal $C^*$.

\subsection{Numerical Results}\label{revision:sec:simulation}
In this section, we compare our auction-based spectrum sharing scheme with a state-of-the-art benchmark scheme for the multi-LTE scenario. Specifically, we randomly pick the APOs, and implement both schemes. Given the APO set ${\cal K}_S$ and APO set ${\cal K}_A$, the two schemes work as follows:
\begin{itemize}
\item \emph{Our auction-based scheme:} First, LTE provider $0$ determines $C^*$ numerically based on Section \ref{revision:sec:stageI:LTE}. Second, each APO $k\in{\cal K}_S$ submits its bid based on the equilibrium strategy $\tilde b_S^*\left(r_k,C^*\right)$ in (\ref{revision:APOequilibrium:S}), and each APO $k\in{\cal K}_A$ submits its bid based on the equilibrium strategy $b_A^*\left(r_k,C^*\right)$ in (\ref{revision:APOequilibrium:A}). Third, LTE provider $0$ determines its working mode, the winning APO, and the allocated data rate based on the auction rule in Section \ref{revision:auctionframe}. 
\item \emph{Benchmark scheme:} LTE provider $0$ randomly picks an APO from set ${\cal K}_A$, and shares the corresponding channel with the APO.
\end{itemize}
For particular APO sets ${\cal K}_S$ and ${\cal K}_A$, we denote LTE provider $0$'s payoff under our auction-based and the benchmark schemes as $\pi_{\rm a}^{\rm LTE}$ and $\pi_{\rm b}^{\rm LTE}$, respectively. Furthermore, we denote all $K_S+K_A$ APOs' total payoffs under our auction-based and the benchmark schemes as $\pi_{\rm a}^{\rm APO}$ and $\pi_{\rm b}^{\rm APO}$, respectively. For particular APO sets ${\cal K}_S$ and ${\cal K}_A$, we compute the relative performance gains of our auction-based scheme over the benchmark scheme in terms of the LTE's payoff and the APOs' total payoff as
\begin{align}
\rho^{\rm LTE} \triangleq \frac{\pi_{\rm a}^{\rm LTE}-\pi_{\rm b}^{\rm LTE}}{\pi_{\rm b}^{\rm LTE}}{\rm ~and~} \rho^{\rm APO} \triangleq \frac{\pi_{\rm a}^{\rm APO}-\pi_{\rm b}^{\rm APO}}{\pi_{\rm b}^{\rm APO}}.
\end{align}

We consider four pairs of $\left({\delta^{\rm LTE}},{\eta^{\rm APO}}\right)$: $\left(0.4,0.1\right),\left(0.4,0.3\right),\left(0.4,0.7\right)$, and $\left(0.6,0.3\right)$, and change $R_{\rm LTE}$ from $30{\rm~Mbps}$ to $370{\rm~Mbps}$. Moreover, we choose $K_S=2$, $K_A=2$, $\theta^{\rm LTE}=0.5$, and obtain the distribution of $r_k$, $k\in{\cal K}_S\cup{\cal K}_A$, by truncating the normal distribution ${\cal N}\left(125~{\rm Mbps},2500~{{\rm Mbps}^2}\right)$ to interval $\left[50~{\rm Mbps},200~{\rm Mbps}\right]$. We illustrate the simulation scenario in Fig. \ref{fig:simu:multiLTEscenario}.

\begin{figure}[h]
  \centering
  \includegraphics[scale=0.4]{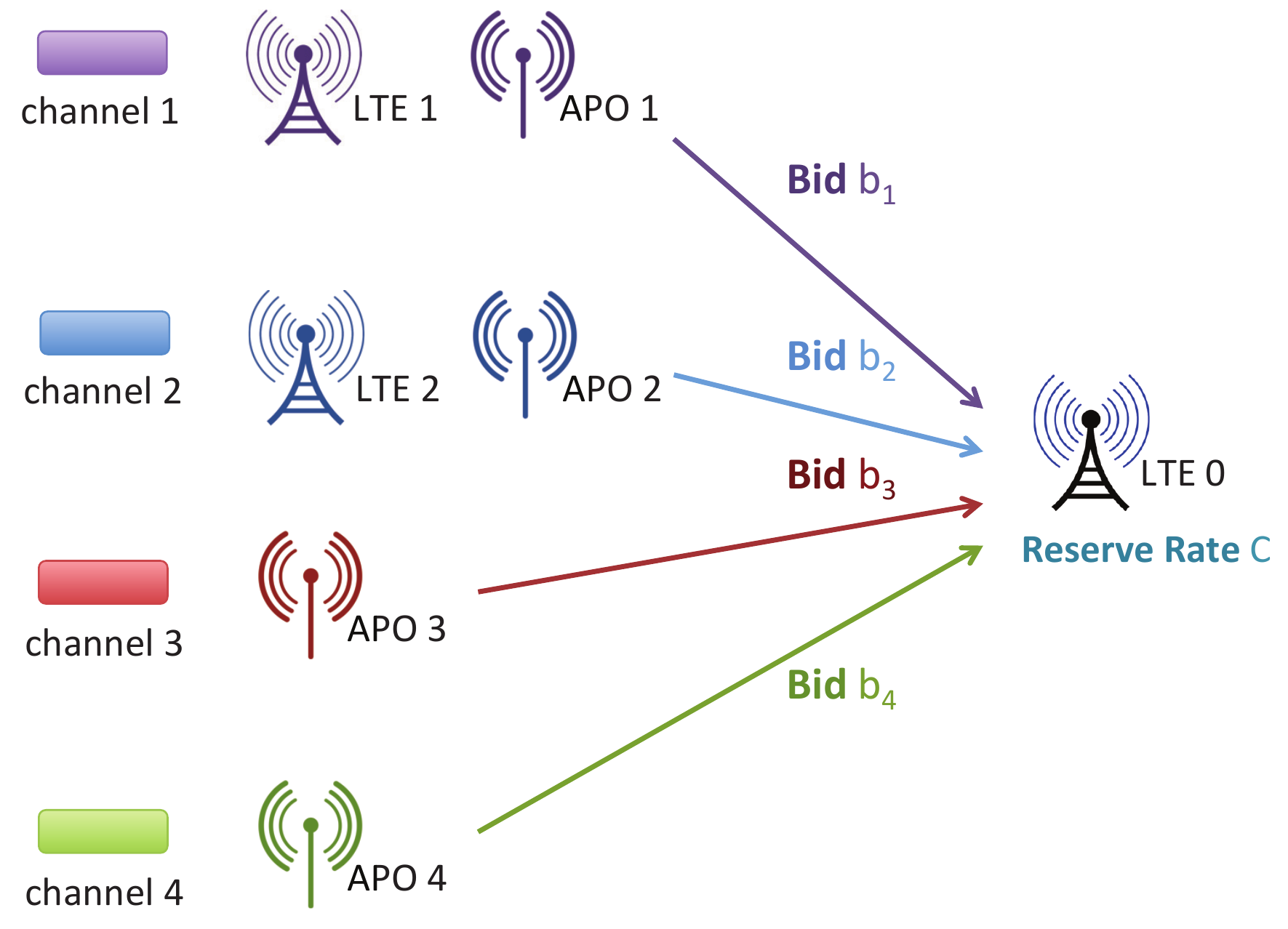}\\
  \caption{{In the simulation, we choose $K_S=2$ and $K_A=2$. There are three LTE providers and four APOs in the system: LTE provider $0$ organizes the auction, LTE provider $1$ coexists with APO $1$ in channel $1$, LTE provider $2$ coexists with APO $2$ in channel $2$, and APOs $3$ and $4$ occupy channels $3$ and $4$ alone, respectively.}}
  \label{fig:simu:multiLTEscenario}
  \vspace{-0.2cm}
\end{figure}  

Given a pair of $\left({\delta^{\rm LTE}},{\eta^{\rm APO}}\right)$ and a particular value of $R_{\rm LTE}$, we randomly choose $r_k$, $k\in{\cal K}_S\cup{\cal K}_A$, based on the truncated normal distribution, implement our auction-based scheme and the benchmark scheme separately, and record the corresponding values of $\rho^{\rm LTE}$ and $\rho^{\rm APO}$. For each pair of $\left({\delta^{\rm LTE}},{\eta^{\rm APO}}\right)$ and each value of $R_{\rm LTE}$, we run the experiment $20,000$ times, and obtain the corresponding average values of $\rho^{\rm LTE}$ and $\rho^{\rm APO}$.

\begin{figure}[t]
  \centering
  \includegraphics[scale=0.35]{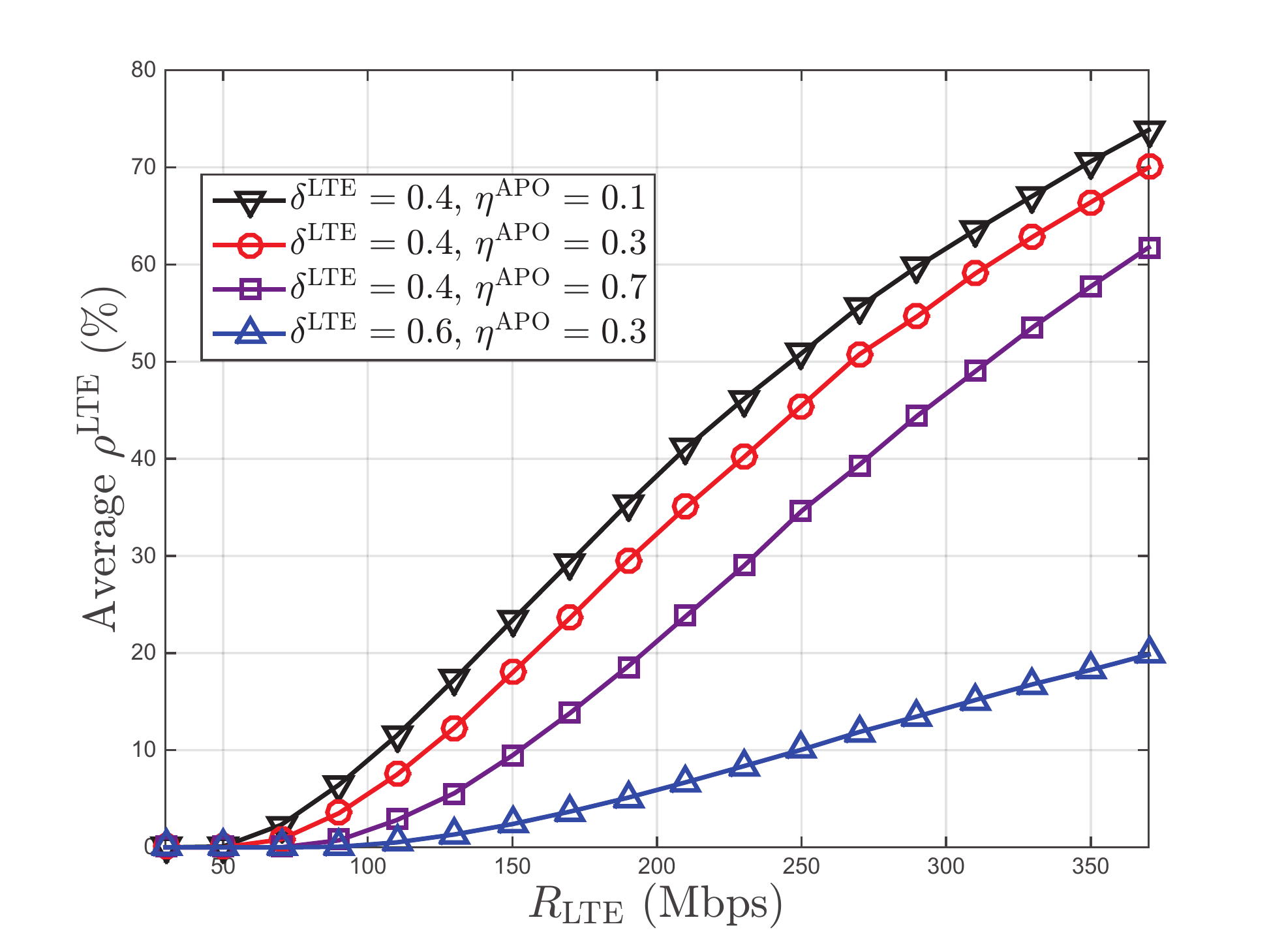}
  \caption{Comparison on LTE's Payoff (Multi-LTE Case).}
  \label{revision:simu:rhoLTE}
  \vspace{-4mm}
\end{figure} 
 

In Fig. \ref{revision:simu:rhoLTE}, we plot the average $\rho^{\rm LTE}$ against $R_{\rm LTE}$ for different $\left({\delta^{\rm LTE}},{\eta^{\rm APO}}\right)$ pairs. 
When $R_{\rm LTE}=370{\rm~Mbps}$ and ${\delta^{\rm LTE}}=0.4$, we observe that all the average $\rho^{\rm LTE}$ (\emph{i.e.}, the black, red, and purple curves) are between $60\%\sim75\%$. Notice that for the simulation of the single-LTE scenario, we also have four APOs, but each of them occupies a channel alone. We have provided the results of $\rho^{\rm LTE}$ for the single-LTE scenario in Fig. \ref{fig:simu:E} of our paper, where all the average $\rho^{\rm LTE}$ with $R_{\rm LTE}=370{\rm~Mbps}$ and ${\delta^{\rm LTE}}=0.4$ are around $70\%$. Comparing Fig. \ref{revision:simu:rhoLTE} with Fig. \ref{fig:simu:E}, we conclude that our auction can significantly improve the LTE's payoff for the multi-LTE scenario as well as the single-LTE scenario. 
We summarize the observations in Fig. \ref{revision:simu:rhoLTE} as follows.
\begin{observation}\label{revision:observation:LTE}
For the multi-LTE scenario, our auction-based scheme improves the LTE's payoff by more than 60\% on average under a large $R_{\rm LTE}$ and a small ${\delta^{\rm LTE}}$, compared with the benchmark scheme.
\end{observation}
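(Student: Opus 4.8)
The plan is to establish this observation \emph{empirically} through Monte Carlo simulation, since Observation \ref{revision:observation:LTE} is a statement about the \emph{average} relative gain $\rho^{\rm LTE}$ taken over randomly drawn APO type profiles, rather than a closed-form inequality. First I would fix the structural parameters of the multi-LTE scenario exactly as in Fig. \ref{fig:simu:multiLTEscenario} ($K_S=2$, $K_A=2$, $\theta^{\rm LTE}=0.5$) and fix $F\left(\cdot\right)$ to be the truncated normal distribution on $\left[r_{\min},r_{\max}\right]=\left[50,200\right]$ Mbps; I would first confirm numerically that this $F\left(\cdot\right)$ satisfies Assumption \ref{revision:assumption:unique}, so that $r_X\left(C\right)$ and $r_T\left(C\right)$ in (\ref{revision:APOequilibrium:A}) are well defined and the equilibrium strategies (\ref{revision:APOequilibrium:S})--(\ref{revision:APOequilibrium:A}) have unique expressions. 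I would then sweep $R_{\rm LTE}$ from $30$ to $370$ Mbps across the four pairs $\left(\delta^{\rm LTE},\eta^{\rm APO}\right)\in\left\{\left(0.4,0.1\right),\left(0.4,0.3\right),\left(0.4,0.7\right),\left(0.6,0.3\right)\right\}$, and for each configuration compute the average $\rho^{\rm LTE}$ over $20{,}000$ independent trials.

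The core per-configuration computation has two nested layers. The outer layer determines the optimal reserve rate $C^*$ by solving problem (\ref{revision:equ:optABC:a})--(\ref{revision:equ:optABC:b}). To evaluate the objective ${\bar \Pi}^{\rm LTE}\left(C\right)$ in (\ref{revision:equ:expectedpayoff}) at any candidate $C$, I would first solve the threshold equations (\ref{revision:equ:rX}) and (\ref{revision:equ:rT}) by one-dimensional root finding (Lemmas \ref{revision:lemma:rX} and \ref{revision:lemma:rT} guarantee at least one root, and Assumption \ref{revision:assumption:unique} guarantees uniqueness), substitute the resulting equilibrium virtual bids from (\ref{revision:APOequilibrium:S})--(\ref{revision:APOequilibrium:A}) into the LTE payoff (\ref{revision:equ:LTEpayoff}), and integrate against $F\left(\cdot\right)$. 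Because Fig. \ref{fig:simu:multiLTE} indicates that ${\bar \Pi}^{\rm LTE}\left(C\right)$ is unimodal on the relevant interval, I would locate $C^*$ with the Golden Section method, exactly as in the single-LTE case of Section \ref{revision:sec:stageI:LTE}.

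The inner layer is the Monte Carlo loop that realizes $\pi_a^{\rm LTE}$. In each trial I would draw $r_k$ i.i.d.\ from $F\left(\cdot\right)$ for every $k\in{\cal K}_S\cup{\cal K}_A$, map each type to its virtual bid $\tilde b_k$ via $\tilde b_S^*\left(r_k,C^*\right)$ or $\tilde b_A^*\left(r_k,C^*\right)$, apply the auction-outcome rules (a)--(c) to identify the winner and the payment $r_{\rm pay}$, and read off $\pi_a^{\rm LTE}$ from (\ref{revision:equ:LTEpayoff}). The benchmark payoff is immediate: since LTE provider $0$ simply coexists with a random channel from ${\cal K}_A$, we have $\pi_b^{\rm LTE}=\delta^{\rm LTE}R_{\rm LTE}$ deterministically, so all sampling variance resides in $\pi_a^{\rm LTE}$. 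Averaging $\rho^{\rm LTE}=\left(\pi_a^{\rm LTE}-\pi_b^{\rm LTE}\right)/\pi_b^{\rm LTE}$ over the $20{,}000$ draws yields the curves of Fig. \ref{revision:simu:rhoLTE}, and the final step is to read off the values at $R_{\rm LTE}=370$ Mbps with $\delta^{\rm LTE}=0.4$ and verify that the three curves (over $\eta^{\rm APO}\in\left\{0.1,0.3,0.7\right\}$) all lie above $60\%$, while also cross-checking against Fig. \ref{fig:simu:E} to confirm the gain is comparable to the single-LTE case.

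The main obstacle is the triply nested numerics: each Golden Section step in the search for $C^*$ requires re-solving the transcendental equations (\ref{revision:equ:rX}) and (\ref{revision:equ:rT}), and this entire optimization must be performed once per $\left(\delta^{\rm LTE},\eta^{\rm APO},R_{\rm LTE}\right)$ configuration before the $20{,}000$-trial averaging. Controlling the root-finding tolerance so that errors in $r_X\left(C\right)$ and $r_T\left(C\right)$ do not contaminate the $\rho^{\rm LTE}$ estimate, and ensuring the unimodality of ${\bar \Pi}^{\rm LTE}\left(C\right)$ (which is only observed, not proved) actually holds for every swept configuration so that Golden Section returns the true $C^*$, are the delicate points; I would guard against the latter by occasionally replacing the Golden Section search with a fine grid search and confirming the two agree.
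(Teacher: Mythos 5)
Your proposal is correct and follows essentially the same route as the paper: the observation is supported purely by the Monte Carlo experiment you describe (the same $K_S=K_A=2$, $\theta^{\rm LTE}=0.5$, truncated normal types on $\left[50,200\right]$ Mbps, Golden Section search for $C^*$ justified by the observed unimodality of ${\bar \Pi}^{\rm LTE}\left(C\right)$ in Fig.~\ref{fig:simu:multiLTE}, $20{,}000$ trials per $\left(\delta^{\rm LTE},\eta^{\rm APO},R_{\rm LTE}\right)$ configuration), with the $60\%$--$75\%$ figures read off Fig.~\ref{revision:simu:rhoLTE} at $R_{\rm LTE}=370$ Mbps and $\delta^{\rm LTE}=0.4$ and cross-checked against the single-LTE results of Fig.~\ref{fig:simu:E}. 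Your two refinements --- noting that $\pi_{\rm b}^{\rm LTE}=\delta^{\rm LTE}R_{\rm LTE}$ is deterministic so all sampling variance sits in $\pi_{\rm a}^{\rm LTE}$, and guarding the unverified unimodality with an occasional grid search --- are sound additions that do not change the method.
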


In Fig. \ref{revision:simu:rhoAPO}, we plot the average $\rho^{\rm APO}$ against $R_{\rm LTE}$ for different $\left({\delta^{\rm LTE}},{\eta^{\rm APO}}\right)$ pairs. When $R_{\rm LTE}=370{\rm~Mbps}$, ${\delta^{\rm LTE}}=0.4$, and $\eta^{\rm APO}=0.1$, we observe that the corresponding average $\rho^{\rm APO}$ is around $60\%$. In contrast, Fig. \ref{fig:simu:F} of our paper shows the results of $\rho^{\rm APO}$ for the single-LTE scenario, where the average $\rho^{\rm APO}$ with $R_{\rm LTE}=370{\rm~Mbps}$, ${\delta^{\rm LTE}}=0.4$, and $\eta^{\rm APO}=0.1$ is around $32\%$. Therefore, the improvement of the APOs' total payoff under our auction-based scheme for the multi-LTE scenario can be greater than that for the single-LTE scenario. This is because compared with the single-LTE scenario, the APOs are more severely interfered by the LTE networks in the multi-LTE scenario. Under the benchmark scheme, the APOs' total payoff (\emph{i.e.}, ${\pi_{\rm b}^{\rm APO}}$) for the multi-LTE scenario is smaller than that for the single-LTE scenario. Therefore, when implementing our auction-based scheme and allowing the LTE and APOs to cooperate with each other, the percentage of the improvement of the APOs' total payoff (\emph{i.e.}, $\rho^{\rm APO} \triangleq \frac{\pi_{\rm a}^{\rm APO}-\pi_{\rm b}^{\rm APO}}{\pi_{\rm b}^{\rm APO}}$) for the multi-LTE scenario will be greater than that for the single-LTE scenario.   
We summarize the observations in Fig. \ref{revision:simu:rhoAPO} as follows.
\begin{observation}\label{revision:observation:APO}
The improvement of the APOs' total payoff under our auction-based scheme for the multi-LTE scenario can be greater than that for the single-LTE scenario.
\end{observation}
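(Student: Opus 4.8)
The plan is to justify this observation by isolating the one quantity that admits a clean closed-form comparison, namely the benchmark total APO payoff $\pi_{\rm b}^{\rm APO}$, and then arguing that the numerator of $\rho^{\rm APO}$ does not shrink in proportion. Because the claim is existential (``can be greater than''), it suffices to establish the inequality in a representative regime and confirm it numerically, rather than to prove a universal statement. I would therefore organize the argument around the denominator of $\rho^{\rm APO}\triangleq\frac{\pi_{\rm a}^{\rm APO}-\pi_{\rm b}^{\rm APO}}{\pi_{\rm b}^{\rm APO}}$, which is the part that can be handled analytically.

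First I would write the benchmark payoff explicitly for both scenarios at a fixed type realization, with the expectation taken only over the benchmark's random channel choice. In the single-LTE benchmark with $K=K_S+K_A$ APOs on separate channels, LTE provider $0$ coexists with a uniformly random channel, giving $\pi_{\rm b}^{\rm APO}=\sum_{k}\frac{K-1+{\eta^{\rm APO}}}{K}r_k$. In the multi-LTE benchmark, each APO $k\in{\cal K}_S$ is already interfered by its coexisting LTE provider and hence always receives ${\eta^{\rm APO}}r_k$, while LTE provider $0$ coexists with a uniformly random channel drawn from ${\cal K}_A$ only; this yields $\pi_{\rm b}^{\rm APO}=\sum_{k\in{\cal K}_S}{\eta^{\rm APO}}r_k+\sum_{k\in{\cal K}_A}\frac{K_A-1+{\eta^{\rm APO}}}{K_A}r_k$. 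These expressions follow directly from the ``otherwise'' branch of (\ref{revision:equ:APO0payoff}) and the competition branch of (\ref{revision:equ:APOkpayoff}).

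Next I would compare the two benchmark payoffs term by term. Rewriting $\frac{K_A-1+{\eta^{\rm APO}}}{K_A}=1-\frac{1-{\eta^{\rm APO}}}{K_A}$ shows this factor is increasing in $K_A$, so for $K_A\le K$ it is no larger than $\frac{K-1+{\eta^{\rm APO}}}{K}$; moreover ${\eta^{\rm APO}}<\frac{K-1+{\eta^{\rm APO}}}{K}$, and the gap is largest when ${\eta^{\rm APO}}$ is small, since the shared-channel APOs then lose the most. Hence the multi-LTE benchmark payoff is strictly smaller, pinning the denominator of $\rho^{\rm APO}$ as smaller in the multi-LTE case. I would then argue the numerator does not collapse: for a large $R_{\rm LTE}$, Theorem \ref{theorem:optimalCcase} and the equilibrium structure of (\ref{revision:APOequilibrium:S})--(\ref{revision:APOequilibrium:A}) make cooperation likely, and a cooperating APO (including a shared-channel one) escapes the interference loss embedded in $\pi_{\rm b}^{\rm APO}$, so $\pi_{\rm a}^{\rm APO}-\pi_{\rm b}^{\rm APO}$ stays of comparable magnitude across scenarios while the denominator is smaller for multi-LTE. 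Since only existence is claimed, I would close by fixing the regime of Fig. \ref{revision:simu:rhoAPO} (large $R_{\rm LTE}$, ${\delta^{\rm LTE}}=0.4$, ${\eta^{\rm APO}}=0.1$, $K_S=K_A=2$) and checking the inequality directly against Fig. \ref{fig:simu:F}.

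The main obstacle is that $\pi_{\rm a}^{\rm APO}$ has no clean closed form: it depends on the scenario-specific optimal reserve rate $C^*$, on the thresholds $r_X\left(C^*\right)$ and $r_T\left(C^*\right)$, and on the virtual-bid normalization $\theta^{\rm LTE}$ used for the ${\cal K}_S$ APOs. Consequently a fully universal inequality $\rho^{\rm APO}_{\rm multi}>\rho^{\rm APO}_{\rm single}$ is out of analytical reach; the rigorous content is confined to the denominator comparison, and the existential claim is ultimately settled by the numerical evidence, which is precisely why this result is stated as an Observation rather than a Theorem.
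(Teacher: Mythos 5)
Your proposal is correct and follows essentially the same route as the paper: the paper also justifies this Observation by noting that the benchmark denominator $\pi_{\rm b}^{\rm APO}$ is smaller in the multi-LTE scenario (since the ${\cal K}_S$ APOs are always interfered, receiving ${\eta^{\rm APO}}r_k$), arguing the cooperation gain in the numerator remains comparable, and then settling the existential claim by comparing the numerics of Fig.~\ref{revision:simu:rhoAPO} (about $60\%$) against Fig.~\ref{fig:simu:F} (about $32\%$) at $R_{\rm LTE}=370$~Mbps, ${\delta^{\rm LTE}}=0.4$, ${\eta^{\rm APO}}=0.1$. Your explicit term-by-term inequality between $\sum_{k\in{\cal K}_S}{\eta^{\rm APO}}r_k+\sum_{k\in{\cal K}_A}\frac{K_A-1+{\eta^{\rm APO}}}{K_A}r_k$ and $\sum_{k}\frac{K-1+{\eta^{\rm APO}}}{K}r_k$ merely makes rigorous the denominator comparison that the paper states verbally.
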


Combing Observations \ref{revision:observation:LTE} and \ref{revision:observation:APO}, we conclude that our auction-based scheme can significantly improve both the LTE's and APOs' payoffs for the multi-LTE scenario. In other words, our auction-based scheme works well for the multi-LTE scenario.

\begin{figure}[t]
  \centering
  \includegraphics[scale=0.35]{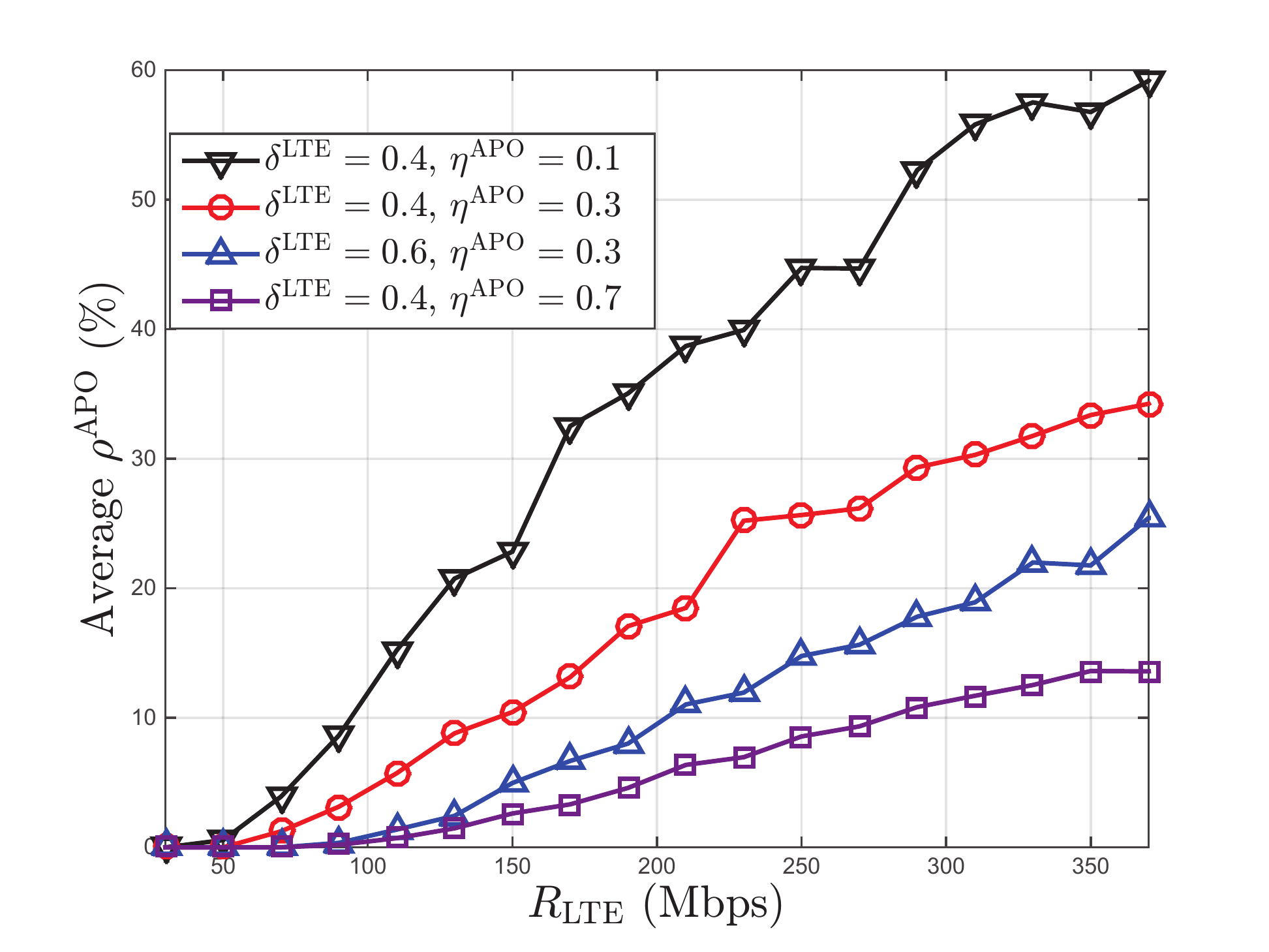}
  \caption{Comparison on APOs' Payoffs (Multi-LTE Case).}
  \label{revision:simu:rhoAPO}
  \vspace{-4mm}
\end{figure}

\end{document}